\documentclass[12pt, reqno, fleqn]{article}
\usepackage[margin=1in]{geometry}
\usepackage{lmodern}
\usepackage[T1]{fontenc} 
\usepackage{natbib}
\usepackage[figuresright]{rotating}
\usepackage{float}
\usepackage{graphicx}
\usepackage{epstopdf}
\usepackage{url}
\usepackage[colorlinks,citecolor=blue,linkcolor=blue,urlcolor=red,pagebackref]{hyperref}
\urlstyle{same}
\usepackage{amsmath,amsthm}
\usepackage{longtable}
\usepackage[dvipsnames]{xcolor}
\usepackage[classfont = sanserif, langfont = roman, funcfont = italic]{complexity}
\usepackage{amssymb}
\usepackage[yyyymmdd,hhmmss]{datetime}
\usepackage{algpseudocode}
\usepackage{algorithm}
\usepackage{orcidlink}

\newtheorem{theorem}{Theorem}
\newtheorem*{theorem1}{Theorem 1}
\newtheorem*{theorem2}{Theorem 2}
\newtheorem*{theorem3}{Theorem 3}
\newtheorem*{theorem4}{Theorem 4}
\newtheorem{lemma}{Lemma}
\newtheorem{claim}{Claim}

\newtheorem{corollary}{Corollary}

\theoremstyle{definition}

\theoremstyle{remark}
\newtheorem{definition}{Definition}
\newtheorem*{definition1}{Definition 1}

\providecommand{\keywords}[1]
{
  \small	
  \textbf{Keywords:} #1
}

\newclass{\SHARPP}{\#P}
\newclass{\PPOLY}{P/Poly}
\newclass{\rETH}{RETH}
\newclass{\ETH}{ETH}
\newclass{\SETH}{SETH}
\newclass{\rSETH}{RSETH}

\newlang{\OV}{OV}
\newlang{\HAMPATH}{HAMPATH}
\newlang{\HAMCYCLE}{HAMCYCLE}
\newlang{\CLIQUE}{CLIQUE}
\newlang{\MULT}{MULT}
\newlang{\DLP}{DLP}
\newlang{\dHAMCYCLE}{dHAMCYCLE}
\newlang{\COL}{COLORING}
\newlang{\HALF}{HALFCLIQUE}

\newfunc{\HCY}{HCY}
\newfunc{\HCL}{HCL}
\newfunc{\PER}{PER}
\newfunc{\GPR}{GPR}
\newfunc{\MLP}{MLP}
\newfunc{\Aut}{Aut}

\begin{document}

\title {Hardness Amplification via Group Theory}                      

\author{Tejas Nareddy\orcidlink{0009-0007-7032-6654}\footnote{Department of Computer Science and Information Systems, Birla Institute of Technology and Science, Pilani, Pilani-333031, Rajasthan, I\textsc{ndia}. Email: \texttt{f20211462@pilani.bits-pilani.ac.in}.} \and 
Abhishek Mishra\orcidlink{0000-0002-2205-0514}\footnote{Department of Computer Science and Information Systems, Birla Institute of Technology and Science, Pilani, Pilani-333031, Rajasthan, I\textsc{ndia}. Email: \texttt{abhishek.mishra@pilani.bits-pilani.ac.in}.}}

\maketitle
\thispagestyle{empty}

\begin{abstract}

We employ elementary techniques from group theory to show that, in many cases, counting problems on graphs are almost as hard to solve in a small number of instances as they are in all instances. Specifically, we show the following results.

\begin{enumerate}

\item \cite{Boix2019} showed in FOCS 2019 that given an algorithm $A$ computing the number of $k$-cliques modulo $2$ that is allowed to be wrong on at most a $\delta = O \left( 1 / ( \log k )^{k \choose 2} \right)$-fraction of $n$-vertex simple undirected graphs in time $T_A(n)$, we have a randomized algorithm that, in $O \left( n^2 + T_A(nk) \right)$-time, computes the number of $k$-cliques modulo $2$ on any $n$-vertex graph with high probability. \cite{Goldreich2020} improved the error tolerance to a fraction of $ \delta = 2^{-k^2}$, making $2^{O \left( k^2 \right)}$-queries to the average-case solver in $O \left( n^2 \right)$-time. Both works ask if any improvement in the error tolerance is possible. In particular, \cite{Goldreich2020} asks if, for every constant $\delta < 1 / 2$, there is an $\tilde{O} \left( n^2 \right)$-time randomized reduction from computing the number of $k$-cliques modulo $2$ with a success probability of greater than $2 / 3$ to computing the number of $k$-cliques modulo $2$ with an error probability of at most $\delta$.

In this work, we show that for almost all choices of the $\delta 2^{n \choose 2}$ corrupt answers within the average-case solver, we have a reduction taking $\tilde{O} \left( n^2 \right)$-time and tolerating an error probability of $\delta$ in the average-case solver for any constant $\delta < 1 / 2$. By ``almost all'', we mean that if we choose, with equal probability, any subset $S \subset \{0,1\}^{n  \choose 2}$ with $|S| = \delta2^{n  \choose 2}$, then with a probability of $1-2^{-\Omega \left( n^2 \right)}$, we can use an average-case solver corrupt on $S$ to obtain a probabilistic algorithm.

\item Inspired by the work of \cite{Goldreich2018} in FOCS 2018 to take the weighted versions of the graph counting problems, we prove that if the $\textit{Randomized Exponential Time Hypothesis} (\rETH$) is true, then for a prime $p = \Theta \left( 2^n \right)$, the problem of counting the number of unique Hamiltonian cycles modulo $p$ on $n$-vertex directed multigraphs and the problem of counting the number of unique half-cliques modulo $p$ on $n$-vertex undirected multigraphs, both require exponential time to compute correctly on even a $1 / 2^{n/\log n}$-fraction of instances. Meanwhile, simply printing $0$ on all inputs is correct on at least a $\Omega \left( 1 / 2^n \right)$-fraction of instances.

\end{enumerate}

\end{abstract}

\keywords{Fine-Grained Complexity; Rare-Case Hardness; Worst-Case to Rare-Case Reduction; Hamiltonian Cycles; Half Cliques; Multigraphs; Property Testing; Group Theory; $k$-Cliques.}

\newpage

\pagenumbering{roman}
\setcounter{page}{1}

\maketitle

\tableofcontents

\newpage

\pagenumbering{arabic}
\setcounter{page}{1}

\section{Introduction}
\label{section:1}

Average-case complexity or typical-case complexity is the area of complexity theory concerning the difficulty of computational problems on not just worst-case inputs but on the majority of inputs \citep{Ben1992, Bogdanov2021, Levin1986}. The theory of worst-case hardness, specifically in the context of proving conditional or unconditional lower bounds for a function $f$, attempts to prove that for every fast algorithm $A$, there is an input $x$ such that the output of the algorithm $A$ on the input $x$ differs from $f(x)$. However, for applications such as cryptography \citep{Impagliazzo1005}, it is not just sufficient that $f$ is hard on some input. It should have a good probability of being hard on an easily samplable distribution of inputs.

Possibly the most famous and instructive example of this is the \textit{Discrete Logarithm Problem} ($\DLP$), which has been proved by \cite{Blum1982} to be intractable for polynomial-time algorithms to compute correctly on even a $1 / h(n)$-fraction of inputs for any polynomial $h$, for all sufficiently large $n$, if it is intractable for polynomial-time randomized algorithms in the worst-case. Given a generator $g$ of a field $\mathbb{Z}_p$ of prime size and an input $y \in \mathbb{Z}_p$, the $\DLP$ asks to find $x \in \mathbb{Z}_{p}$ such that $g^x \equiv y \pmod p$. Suppose that there is an algorithm $A$ computing the $\DLP$ correctly on a $1/h(n)$-fraction of inputs; then, for any $y \in \mathbb{Z}_p$, we can attempt to compute $x$ as follows: Pick a random $x^\prime \in \mathbb{Z}_p$ and ask $A$ to compute $x^{\prime \prime}$ such that $g^{x^{\prime \prime}} \equiv yg^{x^\prime} \pmod p$; we verify if this is true, and if so, we return $x = x^{\prime \prime} - x^\prime$, else we repeat. We will likely obtain the correct answer in $h(n)$ repetitions with a probability of at least $1 - 1 / e$, giving us a polynomial-time randomized algorithm for the problem. \cite{Blum1982} use this to construct pseudorandom generators \citep{Vadhan2012}, algorithms that generate random-looking bits. Algorithms that generate pseudorandomness have many applications: For the quick deterministic simulation of randomized algorithms, for generating random-looking strings for secure cryptography, for zero-knowledge proofs \citep{Goldreich1991, 10.5555/1202577}, and many others.

This hardness result for the $\DLP$ is a special case of rare-case hardness, a term coined by \cite{Goldreich2018}, which refers to computational problems where algorithms with a specific time complexity cannot be correct on even an $o(1)$-fraction of inputs. \cite{Cai1999} proved more such hardness results for the permanent, building on a long line of work, showing intractability results for computing the permanent \citep{Valiant1979, Gemmell1992, Feige1996}.

Most average-case hardness and rare-case hardness results are shown, similar to the case of the $\DLP$, by proving that the tractability of some function $f$ on a small fraction of inputs implies the tractability over all inputs of some function $h$ that is conjectured to be intractable. Most existing techniques use error correction over polynomials to achieve such hardness amplifications \citep{Ball2017, Lund1992, Gemmell1992}. Recently, \citet{Asadi2022} introduced tools from additive combinatorics to prove rare-case hardness results for matrix multiplication and streaming algorithms, revealing new avenues for complexity-theoretic research. A more recent application of additive combinatorics shows that in quantum computing, all linear problems have worst-case to average-case reductions \citep{Asadi2024}.

In this paper, we intend to show that group theory is a powerful tool for
achieving hardness amplification for graph problems. We emphasize that we are far from the first to apply group theory to graphs in the context of theoretical computer science \citep{Babai2006, Luks1982}. The breakthrough quasipolynomial-time algorithm of \cite{Babai2016} for the graph isomorphism problem is a tour de force in the application of group theory to graph theoretic computational problems. Our thesis is that group theory is also a powerful tool in the theory of average-case and rare-case complexity for graph problems.

\subsection{Counting $k$-Cliques Modulo $2$}
\label{section:1.1}

One area that has gained much attention in the past few decades is the paradigm of ``hardness within $\P$'' \citep{Williams2015}. In particular, for practical reasons, it is not just important to us that a problem is in $\P$, but also that it is ``quickly computable'', in one sense of the phrase. Traditionally, in complexity theory, ``quickly computable'' has been used interchangeably with polynomial-time computable. However, considering the vast data sets of today, with millions of entries, $O \left( n^{15} \right)$-time complexity algorithms are not practical. Many works have gone into showing conditional lower bounds as well as tight algorithms \citep{Abboud2014, Williams2019, Williams2018}.

Another practical application, which arguably motivated many works on fine-grained average-case hardness, starting with that of \cite{Ball2017}, is the idea of a proof of work. A proof of work \citep{Dwork1993} is, informally, a protocol where a prover intends to prove to a verifier that they have expended some amount of computational power. The textbook example for an application is combatting spam emails: Forcing a sender to expend some resource for every email sent makes spamming uneconomical. The idea is to have a prover compute an $f(x)$ for some function $f$, which both the prover and verifier know, and an input $x$ of the verifier's choice. In particular, the verifier needs a distribution of inputs for which $f$ is expected to take some minimum amount of time to compute for any algorithm. Another condition is that the distribution should be easy to sample from $f$ and should not be too hard. We do not want to make sending emails impossible. This is where average-case fine-grained hardness for problems computable in polynomial-time comes into the picture. Many other such works have explored these ideas further \citep{Enric2019, Mina2020, Goldreich2018, Asadi2022}.

We also emphasize that these works are not only important for practical reasons but also give insights into the structure of $\P$ itself.

\subsubsection{Background}
\label{section:1.1.1}

One problem that has become a central figure in this paradigm of ``hardness within $\P$'' is the problem of counting $k$-cliques for a fixed $k$, or $k$ fixed as a parameter. Many works have explored the fine-grained complexity of variants of this problem and many others \citep{Mina2020, Goldreich2018, Goldreich2023}.

One interesting direction is to explore how the difficulty of computing the number of $k$-cliques modulo $2$ correctly on some fraction of graphs (say $0.75$) relates to the complexity of computing this number correctly for all inputs with high probability. This is simultaneously a counting problem, a class of problems for which many average-case hardness results are known, and a decision problem, where finding worst-case to average-case reductions is more complicated. \cite{Boix2019} explore this question for general hypergraphs, and for the case of simple undirected graphs, they show that if there is an algorithm $A$ computing the number of $k$-cliques modulo $2$ correctly on over a $1-\Omega \left( 1 / (\log k)^{k \choose 2} \right)$-fraction of instances, then we can obtain an algorithm that computes the number of $k$-cliques modulo $2$ correctly on all inputs in $O\bigg((\log k)^{k \choose 2}(T_{A}(nk) + (nk)^2)\bigg)$-time, where $T_A(m)$ is the time taken by $A$ on input graphs with $m$ vertices. They do this by reducing the problem of counting the number of $k$-cliques on $n$ vertices graphs to the problem of counting $k$-cliques on $k$-partite graphs where each partition has $n$ vertices. The $k$-partite setting is reduced to the problem of computing a low-degree polynomial, where the average-case hardness is obtained.

\cite{Goldreich2020}\footnote{They call it $t$-cliques, possibly to emphasize that $t$ is a parameter.} improves the error tolerance from
\begin{equation*}
O \left( (\log k)^{-{k \choose 2}} \right) = 2^{-\Omega \left( k^2 \log \log k \right)}
\end{equation*}
to $2^{-k^2}$ and simplifies the reduction. They make $2^{O \left( k^2 \right)}$ queries and use $O \left( n^2 \right)$-time. More specifically, they construct a new polynomial such that one of the evaluations gives us our answer of interest. They use the crucial insight that the sum of a low-degree polynomial (degree less than the number of variables) over all inputs in $\mathbb{Z}_2$ is $0$, and hence summing over all inputs other than the one of interest gives us our answer. They make $2^{k \choose 2}-1$ correlated queries, each of which is uniformly distributed over the set of all simple undirected $n$ vertex graphs. Using the union bound, if the fraction of incorrect instances in the average-case solver is $2^{-k^2}$, the probability of error for the reduction is bounded by $2^{k \choose 2} / 2^{k^2} = o(1)$.

Both \cite{Boix2019} and \cite{Goldreich2020} ask whether there are similar worst-case to average-case reductions tolerating more substantial error. In particular, \cite{Goldreich2020} asks whether there is a randomized reduction taking $\tilde{O}(n^2)$ time from computing the number of $k$-cliques modulo $2$ on any graph with a success probability of larger than $2 / 3$ to computing the number of $k$-cliques modulo $2$ on a $(1 / 2 + \epsilon)$-fraction of instances for any arbitrary constant $\epsilon > 0$.

\subsubsection{Our Results}
\label{section:1.1.2}

First, we define a random experiment $O^{H_n}_{c}$.

\begin{definition}
\label{def:1}

\textit{\textbf{The Random Experiment $O^{H_n}_{c}$.} \\
Given any set $\mathbb{D}$ and a function $H_n: \{\, 0,1 \,\}^{n \choose 2} \to \mathbb{D}$ defined over $n$-vertex simple undirected graphs that is invariant under graph isomorphism, the random experiment $O^{H_n}_{c}$ selects a set $S \subset \{\, 0,1 \,\}^{n \choose 2}$ of size $c 2^{n \choose 2}$ with uniform probability and gives an oracle $O$ that correctly answers queries for computing $H_n$ on the set $S$. The other answers of $O$ can be selected adversarially, randomly, or to minimize the time complexity,  $T_O$, of the fastest deterministic algorithm implementing it.}

\end{definition}

In Section \ref{section:8}, we will prove the following results, crucially relying on these functions or problems being invariant under graph isomorphism. Our results hold regardless of $O$'s answers to $\overline{S}$.

\begin{theorem}
\label{thm:1}

For any $k \in \mathbb{N}$ (not necessarily a constant), given an $\epsilon = \omega \left( n^{3/2} / \sqrt{n!} \right)$, given an oracle $O$ sampled from $O^{H_n}_{1 / 2 + \epsilon}$, where $H_n: \{\, 0,1 \,\}^{n \choose 2} \to \mathbb{D}$ is any function defined over $n$-vertex undirected simple graphs that is invariant under graph isomorphism and can be computed in $O \left( n^{8 + o(1)} / \epsilon^{4 + o(1)} \right)$-time given the number of $k$-cliques in the graph, then with a probability of at least $1 - 2^{-\Omega \left( n^2 \right)}$ over the randomness of $O^{H_n}_{1 / 2 + \epsilon}$, we have an algorithm that, with access to $O$ computes $H_n$ with a high probability in time $O \left( \left( n^{8 + o(1)} / \epsilon^{2 + o(1)} + T_{O} \right) / \epsilon^2 \right)$, where $T_O$ is the time complexity of a hypothetical algorithm simulating the oracle $O$.

\end{theorem}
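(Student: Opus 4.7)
The plan is to leverage the isomorphism-invariance of $H_n$ by \emph{orbit sampling}. Let $S_n$ act on $\{0,1\}^{\binom{n}{2}}$ by vertex relabelling; the orbit $\mathrm{Orb}(G) = \{\sigma(G) : \sigma \in S_n\}$ has size $n!/|\Aut(G)|$ and $H_n$ is constant on it. On input $G$ the natural algorithm is to draw $t = \Theta(1/\epsilon^2)$ uniformly random permutations $\sigma_i \in S_n$, query $O$ on each $\sigma_i(G)$, and return the majority. A standard Chernoff bound shows this recovers $H_n(G)$ with probability at least $2/3$ whenever $O$ is correct on at least a $(1/2 + \epsilon/2)$-fraction of $\mathrm{Orb}(G)$.

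The technical heart is a concentration-plus-union-bound argument over the random set $S$. For a fixed orbit $\mathcal{O}$ of size $m$, Serfling's hypergeometric tail inequality gives
\[
\Pr_S\!\bigl[\,|S\cap\mathcal{O}|/m < 1/2 + \epsilon/2\,\bigr] \;\le\; \exp\!\bigl(-\Omega(\epsilon^2 m)\bigr).
\]
Since at most $2^{\binom{n}{2}}/M$ orbits have size at least $M$, a union bound gives failure probability $(2^{\binom{n}{2}}/M)\exp(-\Omega(\epsilon^2 M))$, which is $2^{-\Omega(n^2)}$ whenever $\epsilon^2 M = \Omega(n^2)$. The hypothesis $\epsilon = \omega(n^{3/2}/\sqrt{n!})$ is calibrated precisely so that there exists a threshold $M$ with $n^2/\epsilon^2 \le M \le n!/n$; for this $M$ the orbit-sampling routine provably succeeds on every graph with $|\Aut(G)| \le n$, which by classical results on asymmetric graphs accounts for all but a $2^{-\Omega(n)}$-fraction of $n$-vertex graphs.

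The main obstacle I anticipate is the residual class of graphs with $|\Aut(G)| > n$, whose orbits are too short for concentration to apply, so the naive majority vote can fail. My plan is to detect a large automorphism group in preprocessing --- for instance via partition refinement producing a canonical labelling, or by explicitly constructing a few generators of $\Aut(G)$ --- and then invoke a dedicated symmetry-aware routine that computes $\#K(G)$ directly, e.g.\ via Burnside-style orbit counting on $\Aut(G)$ acting on the $k$-subsets of $V$. Once $\#K(G)$ is in hand, the hypothesis on $H_n$ converts it into $H_n(G)$ within $O(n^{8+o(1)}/\epsilon^{4+o(1)})$ time. Fitting this symmetric-case subroutine into the stated budget, particularly when $k$ is allowed to grow with $n$ so that the budget is only polynomial, is the most delicate part of the proof.

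Putting everything together, the total runtime is $O((n^2+T_O)/\epsilon^2)$ for the $\Theta(1/\epsilon^2)$ random permutations and oracle queries, plus the $O(n^{8+o(1)}/\epsilon^{4+o(1)})$ contribution from the symmetric-case handling and the final $\#K \mapsto H_n$ conversion, matching the claimed bound. The overall success probability is $1 - 2^{-\Omega(n^2)}$ over the random choice of $S$, and the algorithm's internal success probability is high by the Chernoff bound on the majority vote.
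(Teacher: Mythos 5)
Your overall architecture (orbit‑sample the amplifier, a Chernoff/hypergeometric concentration bound for each orbit, a union bound over orbits, and a direct fallback for highly symmetric graphs) matches the paper's, and the concentration and union‑bound computation is essentially correct. Where the proposal genuinely breaks down is in the symmetric‑case subroutine and the test that routes an input into it.

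First, you have drawn the wrong cutoff. The union bound covers every orbit of size at least $M = \Theta(n^2/\epsilon^2)$, i.e.\ every graph with $|\Aut(G)| \le n!\,\epsilon^2/\Theta(n^2)$; the residual class is only orbits of size $O(n^2/\epsilon^2)$, not ``$|\Aut(G)| > n$.'' By declaring $|\Aut(G)| > n$ the residual, you are committing to handle graphs whose orbit can be as large as $(n-1)!$ by a direct routine, and there is no way to do that within an $O(n^{8+o(1)}/\epsilon^{4+o(1)})$ budget. The correct residual is exactly the one whose orbits are polynomially small (size $O(n^2/\epsilon^2)$), and exploiting that smallness is the whole point. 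Second, your detection step is GI‑flavoured and not known to be polynomial time: partition refinement does not in general compute $|\Aut(G)|$, and producing generators of $\Aut(G)$ is as hard as graph isomorphism. What the paper actually does is far more elementary: sample $\Theta(n\,t(n))$ uniformly random $\pi\in S_n$, count how many fix the adjacency matrix, and threshold the count (Algorithm~8, Lemma~21); this estimates whether the orbit is larger or smaller than $\approx t(n)$ in $O(n^3 t(n))$ time with Chernoff guarantees. Third, your ``Burnside on $\Aut(G)$'' counter works with the group, which is \emph{large} in the symmetric case, so the averaging sum is exponential. The paper instead enumerates the \emph{orbit}, which is small in that case, by rejection‑sampling random permutations until (w.h.p.) all $|\mathcal{C}_n|$ isomorphic adjacency matrices are seen; it then counts how often the first $k$ vertices induce a clique and scales by $\binom{n}{k}/|\mathcal{C}_n|$ (Algorithm~9, Claims~2--4 of Lemma~25). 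Swapping ``group'' for ``orbit'' is the key idea you are missing: both the size test and the clique count should be phrased in terms of the orbit, which is exactly the object that shrinks when $\Aut(G)$ grows, and this is what makes the whole thing fit into the stated time bound for every $k$. With the threshold corrected and these two subroutines swapped in, your proof would close.
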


Informally, this says that for almost all subsets $S$ of $\{\, 0,1 \,\}^{n \choose 2}$ with $|S| = (1 / 2 + \epsilon)2^{n \choose 2}$, an algorithm that computes $H_n$ correctly on the set $S$ is nearly as hard, computationally speaking, as computing $H_n$ correctly on all instances with a randomized algorithm with high probability. Due to work of \cite{Feigenbaum1993}, and \cite{Bogdanov2006}, under the assumption that the $\PH$ does not collapse, this is a near-optimal result for non-adaptive\footnote{This is when the inputs of the queries we make to $O$ do not depend on any of the answers. Another interpretation is that the inputs to query on $O$ must be decided before making any queries to it.} querying when no other assumptions are made of $H_n$. In particular, when applied to the $\NP$-complete problem of deciding whether a simple undirected graph with $n$ vertices has a clique of size $\lfloor n / 2 \rfloor$, $\HALF$, we show a non-adaptive polynomial-time reduction from computing $\HALF$ over any instance to computing $\HALF$ correctly on $S$ for almost all $S \subset \{\, 0,1 \,\}^{n \choose 2}$ with $|S| = (1 / 2 + \epsilon)2^{n \choose 2}$ for any $\epsilon = 1 / \poly(n)$. If this reduction can be extended to show this for all $S$, instead of almost all, $\PH$ would collapse to the third level \citep{Feigenbaum1993, Bogdanov2006}.

We also show the following result, making progress on the open problem of \cite{Goldreich2020}.

\begin{theorem}
\label{thm:2}

Given any constants $k > 2$ and $\epsilon > 0$, with a probability of at least $1 - 2^{-\Omega \left( n^2 \right)}$ over the randomness of sampling $O$ from $O^{H_n}_{1 / 2 + \epsilon}$, where $H_n$ is the function counting the number of $k$-cliques modulo $2$ in an $n$-vertex undirected simple graph, we have an $\tilde{O} \left( n^2 \right)$-time randomized reduction from counting $k$-cliques modulo $2$ on all instances to counting $k$-cliques modulo $2$ correctly over the $1 / 2 + \epsilon$-fraction of instances required of $O$. Moreover, this reduction has a success probability of greater than $2 / 3$.

\end{theorem}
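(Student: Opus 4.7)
The plan is to exploit isomorphism invariance of $H_n$ together with the fact that $H_n$ takes only two values, which will let us sidestep the low-degree polynomial machinery of Theorem~\ref{thm:1} in favour of a direct majority vote over random relabellings of $G$. On input $G$, the reduction will first perform an $O(n^2)$-time structural test to detect whether $G$ is one of a small family of exceptional, highly symmetric inputs (to be specified below), returning a precomputed parity in that case. Otherwise it will sample $t = \Theta(1/\epsilon^2) = O(1)$ independent uniform permutations $\pi_1,\ldots,\pi_t \in S_n$, form each relabelled adjacency matrix $\pi_i(G)$ in $O(n^2)$ time, query $O$ on each $\pi_i(G)$, and output the majority of the $t$ returned bits. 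The preprocessing cost will be $O(tn^2) = \tilde O(n^2)$ and the number of oracle calls will be $O(1)$.

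Correctness of the non-exceptional branch will reduce to a concentration statement over the draw of $S$. Because $H_n$ is isomorphism-invariant, each query $\pi_i(G)$ is a uniformly random element of the orbit $\mathrm{Iso}(G) \subseteq \{\,0,1\,\}^{\binom{n}{2}}$, and the oracle returns $H_n(G)$ precisely when $\pi_i(G) \in S$. I would prove that with probability at least $1 - 2^{-\Omega(n^2)}$ over the random subset $S$, every orbit $I$ with $|I| \geq Cn^2$ (for a suitably large constant $C = C(\epsilon)$) satisfies $|S \cap I| \geq (1/2 + \epsilon/2)|I|$. For a fixed such orbit, $|S \cap I|$ is hypergeometric with mean $(1/2+\epsilon)|I|$, so its lower tail is bounded by $\exp(-\Omega(\epsilon^2|I|))$; since the number of orbits of size exactly $m$ is at most $2^{\binom{n}{2}}/m$, summing this bound over $m \geq Cn^2$ yields total failure probability $2^{\binom{n}{2}} \cdot \exp(-\Omega(\epsilon^2 C n^2)) = 2^{-\Omega(n^2)}$ once $C$ is taken large enough (which is a constant, since $\epsilon$ is). Conditioning on this good event, each of the $t$ independent queries returns $H_n(G)$ with probability at least $1/2 + \epsilon/2$, and a standard Chernoff/Hoeffding bound over $t = \Theta(1/\epsilon^2)$ trials will drive the majority vote to correctness with probability strictly above $2/3$.

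The main obstacle will be the exceptional branch: inputs $G$ whose orbit has size below $Cn^2$, for which the concentration just described gives no guarantee. I plan to show that, for each fixed polynomial bound on the orbit size, elementary group theory produces only a short, easily enumerated list of isomorphism types --- $K_n$, $\overline{K_n}$, $K_n - e$, $K_{1,n-1}$, and a handful of analogous highly symmetric graphs --- each of whose $k$-clique parity admits a closed-form expression (e.g.\ $\binom{n}{k} \bmod 2$ for $K_n$) computable in $O(\log n)$ time. Each such family will be detectable by an $O(n^2)$-time inspection of the degree sequence together with a bounded number of further invariants, so the exceptional branch runs deterministically in $\tilde O(n^2)$ time and contributes no error, while the non-exceptional branch succeeds with probability strictly above $2/3$ as required; combining the two cases yields the claimed reduction.
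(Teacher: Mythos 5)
Your proposal follows the paper's proof exactly: handle graphs whose orbit under $S_n$ is small (equivalently, whose automorphism group has size $\omega(n!/n^3)$) via a direct $\tilde O(n^2)$-time structural classification with a closed-form $k$-clique parity, and for all other graphs combine a union bound over the $2^{-\Omega(n^2)}$-many ``bad'' draws of $S$ (showing every orbit of size at least $\Theta(n^2/\epsilon^2)$ retains a $1/2+\epsilon/2$ fraction of correct answers) with a majority vote over $\Theta(1/\epsilon^2)$ random relabellings. The only part your sketch leaves implicit is the actual enumeration of the exceptional graphs, which is the bulk of the paper's technical work (its Lemma~\ref{lemma:23} derives twelve isomorphism types through a degree-partition case analysis), but the architecture of your reduction is the same as the paper's.
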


Whereas \cite{Goldreich2020} asks whether, for every $\epsilon > 0$, there is a randomized reduction in $\tilde{O} \left( n^2 \right)$-time, with a success probability of larger than $2/3$, from computing the number of $k$-cliques modulo $2$ to computing the number of $k$-cliques modulo $2$ correctly on any subset $S \subset \{\, 0,1 \,\}^{n \choose 2}$ with $|S| = (1 / 2 + \epsilon) 2^{n \choose 2}$. We answer in the affirmative, not for all such subsets $S$, but for almost all such subsets $S$. We stress that while our result significantly improves the error tolerance from the previous state of the art of $2^{-k^2}$ of \cite{Goldreich2020} to $1/2-\epsilon$ for ``almost all $S$''-type results, the error tolerance of \cite{Goldreich2020} is still state of the art for ``all $S$.'' This introduces a tradeoff between error tolerance and universality of instances, where we have a sharp gain in error tolerance at the cost of universality of instances.

\subsection{Worst-Case to Rare-Case Reductions for Multigraph Counting Problems}
\label{section:1.2}

It has been believed for decades that there are no polynomial-time algorithms for $\NP$-hard problems, at least with sufficient faith in the conjecture that $\P \neq \NP$. In the past decade, efforts have been made to determine the exact complexities of $\NP$-hard problems. The world of fine-grained complexity attempts to create a web of reductions analogous to those of made by Karp reductions \citep{Karp1972}, except the margins are ``fine''. One such connection, proved by \cite{Williams2005} is that if the \textit{Orthogonal Vectors} ($\OV$) problem has an $n^{2-\epsilon}$-time algorithm for dimension $d = \omega(\log n)$, then the \textit{Strong Exponential Time Hypothesis} ($\SETH$) \citep{Calabro2009} is false.

Not only do minor algorithmic improvements under the framework of fine-grained complexity imply faster algorithms for many other problems, they can also prove structural lower bounds. \cite{Williams2013}, in pursuit of the answer to the question, ``What if every $\NP$-complete problem has a slightly faster algorithm?,'' proved that faster than obvious satisfiability algorithms for different classes of circuits imply lower bounds for that class. Soon after, by showing that there is a slightly better than exhaustive search for $\ACC$ circuits\footnote{More concretely, \cite{Williams2014} showed that for $\ACC$ circuits of depth $d$ and size $2^{n^{\epsilon}}$ for any $0 < \epsilon < 1$, there is a satisfiability algorithm taking $2^{n-n^{\delta}}$ time for some $\delta > 0$ depending on $\epsilon$ and $d$.}, \cite{Williams2014} showed that $\NEXP \not\subset \ACC$.

Currently, the fastest algorithm for computing the number of Hamiltonian cycles on digraphs takes $O^* \left( 2^{n - \Omega(\sqrt{n})} \right)$-time due to \cite{Li2023}. Some other algorithmic improvements upon $O^{*}(2^{n})$ (including the parameterized cases) are due to \cite{Bjorklund2019A}, \cite{Bjorklund2019B}, and \cite{Bjorklund2016}.

\subsubsection{Background}
\label{section:1.2.1}

\cite{Cai1999} proved that the permanent of an $n \times n$ matrix over $\mathbb{Z}_p$, a polynomial that, in essence counts the number of cycle covers of a multigraph modulo $p$ is as hard to evaluate correctly on a $1 / \poly(n)$-fraction of instances in polynomial-time as it is to evaluate over all instances in polynomial-time. Here, they used the list decoder of \cite{Sudan1996} to show a worst-case to rare-case reduction: A reduction using a polynomial-time algorithm that evaluates the polynomial on an $o(1)$-fraction of instances to construct a polynomial-time randomized algorithm that computes the permanent over this field over any input with high probability.

\cite{Goldreich2018} consider the problem of counting the number of $t$-cliques in an undirected multigraph. A $t$-clique is a complete subgraph of $t$ vertices ($K_t$). They give an $\left( \tilde{O} \left( n^2 \right), 1 / \polylog(n) \right)$-worst-case to rare-case reduction from counting $t$-cliques in $n$-vertex undirected multigraphs to counting $t$-cliques in undirected multigraphs generated according to a specific probability distribution. That is, given an oracle $O$ that can correctly count the number of $t$-cliques in undirected multigraphs generated according to a certain probability distribution on at least a $1 / \polylog(n)$-fraction of instances, in $\tilde{O} \left( n^2 \right)$-time, using the oracle $O$, we can count the number of $t$-cliques correctly in $n$-vertex undirected multigraphs with a success probability of at least $2 / 3$. Combined with the work of \cite{Valiant1979} and \cite{Cai1999}, they also show $1 / 2^{o(n)}$-hardness for computing the permanent in a setup similar to the one in our work.

Given a constant-depth circuit $C_L$ for verifying an $\NP$-complete language $L$, \cite{Tejas2024} created a generalized certificate counting function, $f^\prime_{L, p}: \mathbb{Z}_p^{n + 2 n^c} \to \mathbb{Z}_p$, where $p$ is a prime and $n^c$ is the certificate size for $L$. Further, using an appropriate set of functions, $f^\prime_{L, p}$, they prove that for all $\alpha > 0$, there exists a $\beta > 0$ such that the set of functions $f^{\prime \prime}_{L, \beta}$ is $1 / n^\alpha$-rare-case hard to compute under various complexity-theoretic assumptions.

There are two observations in the works of \cite{Tejas2024}.

\begin{enumerate}

\item The set of functions, $f^{\prime \prime}_{L, \beta}$, is artificially generated using the circuit $C_L$.

\item Proving $f^\prime_{L, p}$ to be rare-case hard for any $\NP$-complete language $L$ seems infeasible using their work.

\end{enumerate}

\subsubsection{Our Results}
\label{section:1.2.2}

In contrast to the above observations, our contributions in this paper are as follows.

\begin{enumerate}

\item From the problem description itself, we construct a generalized certificate counting polynomials, $f^\prime_{L, p}$, for two natural $\NP$-complete languages, which look more natural as compared to the above ``artificially'' generated functions, $f^{\prime \prime}_{L, \beta}$. The first problem counts the number of Hamiltonian cycles in a directed multigraph over $\mathbb{Z}_p$. The second problem counts the number of $\lfloor n / 2 \rfloor$-cliques in $n$-vertex undirected multigraphs over $\mathbb{Z}_p$.

\item We prove rare-case hardness results for the above two ``natural'' problems ($f^\prime_{L, p}$) for a prime $p = \Theta \left( 2^{n} \right)$ by exploiting their algebraic and combinatorial structures. 

\end{enumerate}

Assuming the \textit{Randomized Exponential Time Hypothesis} ($\rETH$) \citep{Dell2014}, the conjecture that any randomized algorithm for $3\SAT$ on $n$ variables requires $2^{\gamma n}$ time for some $\gamma > 0$, we show the following results.

\begin{theorem}
\label{thm:3}

Unless $\rETH$ is false, counting the number of unique Hamiltonian cycles modulo $p$ on an $n$-vertex directed multigraph requires $2^{\gamma n}$-time for some $\gamma > 0$ even to compute correctly on a $1 / 2^{n/\log n}$-fraction of instances for a prime, $p = \Theta \left( 2^{n} \right)$.

\end{theorem}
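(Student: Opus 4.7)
The plan is to establish a worst-case to rare-case reduction for counting Hamiltonian cycles modulo $p$ on $n$-vertex directed multigraphs (with edge multiplicities in $\mathbb{Z}_p$) and then contradict $\rETH$ via standard reductions. Suppose, for contradiction, that an algorithm $A$ runs in time $2^{o(n)}$ and correctly computes $\#\HCY(G) \bmod p$ on at least a $1/2^{n/\log n}$-fraction of such inputs. I would use $A$ to obtain a $2^{o(n)}$-time randomized algorithm computing $\#\HCY \bmod p$ on an arbitrary worst-case multigraph, and then compose this with the classical Karp reduction $3\SAT \to \HAMCYCLE$ sharpened by a Valiant--Vazirani-style isolation step that produces $O(n)$-vertex instances with $\#\HCY \in \{0, 1\}$; since $p \geq 2$, the value $\#\HCY \bmod p$ determines the $3\SAT$ answer, yielding a $2^{o(n)}$-time randomized algorithm for $3\SAT$ and contradicting $\rETH$.

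The algebraic starting point is that, when $W \in \mathbb{Z}_p^{n \times n}$ is viewed as an $n^2$-tuple of formal variables,
\[
\#\HCY(W) \;=\; \sum_{\sigma \in C_n} \prod_{i=1}^{n} W_{\sigma(i),\, \sigma(i+1)},
\]
with $C_n$ the set of cyclic permutations of $[n]$, is a homogeneous polynomial of degree exactly $n$. Given a worst-case input $G$, I would sample a uniformly random $R \in \mathbb{Z}_p^{n \times n}$ and consider the univariate restriction $f(t) := \#\HCY(G + tR) \bmod p$, which has degree at most $n$ and satisfies $f(0) = \#\HCY(G) \bmod p$. For every fixed $t \neq 0$, the matrix $G + tR$ is uniformly distributed in $\mathbb{Z}_p^{n \times n}$ (since $p$ is prime and $tR$ is uniform for $t \in \mathbb{Z}_p^{\ast}$), so $A$'s answer on $G + tR$ is correct with probability at least $\tau := 1/2^{n/\log n}$ over the random choice of $R$.

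Next, I would query $A$ at $m = n \cdot 2^{2n/\log n} \cdot \omega(1)$ distinct nonzero values of $t$. By a Chernoff bound, with probability $1 - 2^{-\Omega(n)}$ over $R$, at least $\tau m / 2$ of the answers are correct; since $\tau \gg \sqrt{n/m}$, the Guruswami--Sudan list-decoder for Reed--Solomon codes of degree $n$ recovers, in time $\mathrm{poly}(m) = 2^{o(n)}$, a list of $L = 2^{o(n)}$ candidate polynomials that contains $f$. To isolate $f(0)$ from this list, I would repeat the entire procedure with $\ell$ independent random directions $R_1, \ldots, R_\ell$ and intersect the resulting candidate sets at $t = 0$; since the candidate values live in $\mathbb{Z}_p$ with $p = \Theta(2^n)$ and each list has size $2^{o(n)}$, any fixed spurious value survives the intersection with probability at most $(L/p)^{\ell - 1} = 2^{-\Omega(\ell n)}$, while the true value survives with probability $1 - \ell \cdot 2^{-\Omega(n)}$. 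The total runtime is $m \cdot 2^{o(n)} + \mathrm{poly}(m) = 2^{o(n)}$, as required.

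The main obstacle I anticipate is the identification step: the spurious candidate polynomials across different random lines are not manifestly independent, since each list is a function of the oracle's answers on the corresponding line, which itself depends on the sampled direction. A clean union bound therefore requires showing that, conditioned on the true value $f(0)$, the spurious candidates are near-uniformly spread across $\mathbb{Z}_p$. I expect to address this either by a conditioning argument that decouples the lines, by exploiting the $S_n$-symmetry of $\#\HCY$ under simultaneous permutation of rows and columns of $W$ to randomize the spurious candidates, or by increasing $\ell$ to $\Theta(\log n)$ so the intersection bound goes through under only weak dependence. A secondary concern is ensuring that the $3\SAT \to \HAMCYCLE$ reduction composed with Valiant--Vazirani has only $O(n)$ blow-up in the vertex count (after applying the sparsification lemma to the $3\SAT$ instance first), so that the $2^{\gamma n}$ lower bound for $3\SAT$ transfers to a $2^{\gamma' n}$ lower bound for $\#\HCY \bmod p$ on graphs with exactly $n$ vertices.
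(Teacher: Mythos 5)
Your reduction skeleton is reasonable, but the identification step as described has a genuine and unfixable gap, and it is precisely the gap that the paper's group-theoretic machinery exists to close. Consider an adversarial oracle $A$ that equals $\HCY_{n,p}(W)+c$ for a fixed nonzero constant $c$ on a $1-\tau$-fraction of inputs and $\HCY_{n,p}(W)$ on the remaining $\tau$-fraction. On \emph{every} random line $G+tR$, the univariate list decoder will return both the true restriction $f(t)=\HCY_{n,p}(G+tR)$ and the shifted polynomial $f(t)+c$, since the latter agrees with $A$ on almost all of the line. Both have constant term at $t=0$, namely $f(0)$ and $f(0)+c$, and both survive intersection across all $\ell$ lines no matter how large $\ell$ is. No conditioning argument, weak-dependence bound, or choice of $\ell=\Theta(\log n)$ can rule out this example; the spurious value is deterministic. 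Your second suggested fix --- ``exploiting the $S_n$-symmetry of $\#\HCY$'' --- is pointing at the right idea but left undeveloped, and it cannot even be implemented in your framework because a univariate restriction to a random line destroys the multivariate structure that the symmetry tests rely on.

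The paper's route is different in exactly the two places where your proposal is weak. First, instead of Valiant--Vazirani isolation, the paper uses a Schwartz--Zippel random-weighting reduction (Lemma~\ref{lemma:7}): assign random nonzero weights in $\mathbb{Z}_p$ to existing edges and weight $0$ to non-edges, so that $\HCY_{n,p}$ is identically zero when there is no Hamiltonian cycle and nonzero with probability $\geq 1-n/(p-1)$ when there is one, with no blowup in vertex count. Second --- and this is the crux --- the paper invokes the \emph{multivariate} STV list decoder (Lemma~\ref{lemma:2}) to obtain $O(1/\epsilon)$ oracle machines, each computing a genuine multivariate polynomial of degree $\leq n$ on $\mathbb{Z}_p^{n^2}$, and then runs structural tests to single out the machine computing $\HCY_{n,p}$: the GenPerm test (Algorithm~\ref{alg:1}, using the pigeonhole principle) rejects any polynomial not in the generalized-permanent space $\mathbb{V}_{\GPR_{n,p}}$ --- in particular it immediately kills $\HCY_{n,p}+c$ since zeroing a row no longer yields zero; the IsoPerm test (Algorithm~\ref{alg:2}, via Lagrange's theorem) rejects any function not invariant under graph isomorphism; and the NoCycle tests (Algorithm~\ref{alg:3}, via the conjugacy-class basis of Lemma~\ref{lemma:9}) rejects any function placing weight on cycle covers other than $n$-cycles. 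It is this multivariate, property-testing identification --- not a numerical intersection of constant terms --- that makes the amplification go through. You should replace your identification step with a mechanism of this kind before the argument is sound.
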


\begin{theorem}
\label{thm:4}

Unless $\rETH$ is false, counting the number of unique cliques of size $\lfloor n /2 \rfloor$ modulo $p$ on an $n$-vertex undirected multigraph requires $2^{\gamma n}$-time for some $\gamma > 0$ even to compute correctly on a $1 / 2^{n/\log n}$-fraction of instances for a prime, $p = \Theta(2^{n})$.

\end{theorem}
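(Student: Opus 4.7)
The plan is to parallel the proof of Theorem \ref{thm:3} with the half-clique polynomial in place of the Hamiltonian-cycle polynomial. Define the multigraph half-clique polynomial
\[
f'_{\HALF,p}(x) \;=\; \sum_{\substack{S\subseteq [n]\\ |S|=\lfloor n/2\rfloor}} \ \prod_{\{i,j\}\subseteq S} x_{ij}
\]
in the $\binom{n}{2}$ edge-multiplicity variables over $\mathbb{F}_p$, where $p=\Theta(2^n)$. Its total degree is $D=\binom{\lfloor n/2\rfloor}{2}=\Theta(n^2)$. Restricted to $\{0,1\}$-inputs it counts the $\lfloor n/2\rfloor$-cliques in the underlying simple graph, and on general multigraphs over $\mathbb{F}_p$ its value modulo $p$ is precisely the ``unique cliques'' count in the theorem statement.

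The first step is worst-case hardness under $\rETH$. I would compose the standard parsimonious reduction from counting $3$-SAT to clique (on $3m$ vertices with target clique size $m$) with a padding step joining a universal clique of $m$ additional vertices, so that the target clique becomes exactly half of the total vertex set (total $4m$ vertices, target $2m$). Because $p \gg \binom{n}{\lfloor n/2\rfloor}$, reduction modulo $p$ preserves the count exactly. Hence a worst-case algorithm for $f'_{\HALF,p}$ on $\Theta(n)$-vertex inputs yields one for $\#3$-SAT, and $\rETH$ forces $f'_{\HALF,p}$ to require $2^{\gamma_0 n}$-time in the worst case for some $\gamma_0>0$.

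The second step is the rare-case to worst-case reduction via line interpolation and list decoding. Given a rare-case oracle $O$ correct on a $\delta=1/2^{n/\log n}$-fraction of inputs in $\mathbb{F}_p^{\binom{n}{2}}$ and a worst-case target $g$, I would pick $r\in\mathbb{F}_p^{\binom{n}{2}}$ uniformly at random, form the line $\ell(t)=g+tr$, and query $O$ at each of the $p-1$ nonzero $t$. The expected number of correct responses along the line is $\delta p \approx 2^{n-n/\log n}$, so by a Markov-type argument an $\Omega(\delta)$-fraction of random lines carry at least $\delta p /2$ correct responses. Since $q(t):=f'_{\HALF,p}(\ell(t))$ is a univariate polynomial of degree $D=\Theta(n^2)$ and $\delta p /2$ vastly exceeds the Guruswami--Sudan list-decoding threshold $\sqrt{Dp}=2^{n/2+O(\log n)}$, in time $\poly(p)=2^{O(n)}$ the decoder returns a short list of candidate polynomials containing $q$; repeating with $O(1/\delta)$ fresh random lines and intersecting the evaluations at $t=0$ pins down the correct value $q(0)=f'_{\HALF,p}(g)$ with high probability. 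The total overhead is $2^{O(n)}$ oracle queries, so a rare-case algorithm running in $T_O=2^{o(n)}$ would collapse the worst case to $2^{O(n)}$, contradicting the $\rETH$ lower bound once the leading constants are balanced.

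The main obstacle, as I see it, is the tight constant-factor accounting between the $\rETH$-exponent $\gamma_0$ extracted from the worst-case hardness chain and the multiplicative $2^{O(n)}$ overhead of the list-decoding reduction: the padding must stay within a constant blowup, and the decoder's exponent in $n$ must stay strictly below $\gamma_0$, so that the resulting rare-case lower bound $2^{\gamma n}$ has $\gamma>0$. Modulo this bookkeeping, the small degree $D=\Theta(n^2)$ and the large modulus $p=\Theta(2^n)$ make the list-decoding step go through in direct parallel with the directed-multigraph Hamiltonian-cycle case of Theorem \ref{thm:3}.
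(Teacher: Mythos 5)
Your proposal has a genuine gap in the step that the paper treats as its central technical contribution: after list decoding, you must \emph{identify which candidate polynomial is the true one}, and the ``intersect the evaluations at $t=0$ across $O(1/\delta)$ random lines'' idea does not accomplish this. List decoding a line through the worst-case point $g$ returns a short list of degree-$D$ univariate polynomials; the correct restriction $q$ is always in the list, but nothing prevents another polynomial (for instance, one agreeing with the corrupt oracle $O$ on a $\delta$-fraction of points) from being in the list on a constant fraction of lines and producing a \emph{consistent} wrong value at $t=0$. Intersection therefore does not reduce the candidate set to a singleton. The paper resolves this exactly where your argument hand-waves: it runs the STV list decoder \emph{globally} (Lemma~\ref{lemma:2}) to obtain $O(1/\epsilon)$ oracle machines, each computing some degree-$\binom{\lfloor n/2 \rfloor}{2}$ polynomial, and then runs a sequence of algebraic property tests --- $\text{MultiLinear}$ (Lemma~\ref{lemma:15}), $\text{IsoMult}$ (Lemma~\ref{lemma:16}), and $\text{Is}\HCL$ (Lemma~\ref{lemma:17}) --- that exploit multilinearity, invariance under $S_n$-action via Lagrange's theorem, and the combinatorial structure of the half-clique subgraph counting basis (Lemma~\ref{lemma:14}) to single out the machine computing $\HCL_{n,p}$ with error $2^{-n^2}$. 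Without some selection mechanism of this kind, your rare-case-to-worst-case reduction is incomplete.

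A secondary issue is your worst-case hardness chain. The paper does not use a parsimonious reduction from $\#3\SAT$: the standard $3\SAT\to\CLIQUE$ reduction is not parsimonious (a clause with multiple true literals yields multiple cliques for a single assignment), so your stated route is wrong as written. The paper instead uses a \emph{randomized} reduction from the decision problem $\HALF$ to $\HALF_p$ (Lemma~\ref{lemma:8}, via Schwartz--Zippel), together with the $\ETH$-hardness of $\HALF$ obtained by composing the Sparsification Lemma~\ref{lemma:6} with the linear-size Karp reduction $3\SAT\to\CLIQUE\to\HALF$ of Appendix~\ref{appendix:A}. The Sparsification Lemma is not optional here: without it you cannot guarantee the $3\SAT$ instances have $O(n)$ clauses, and hence cannot bound the vertex blowup linearly, which is what lets a $2^{o(n)}$-time half-clique counter give a $2^{o(n)}$-time $3\SAT$ algorithm. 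Your proposal omits this entirely.
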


Meanwhile, for both problems, simply printing $0$ all the time, without even reading the input, in this setting yields the correct answer on at least an $\Omega \left( 1 / 2^{n} \right)$-fraction of instances.

Using ``weighted'' versions of counting problems in \cite{Goldreich2018} inspired our choice to use multigraphs. By ``unique,'' in the example of triangle counting, we take a choice of three vertices and compute the number of ``unique'' triangles between them by multiplying the ``edge weights.'' This multiplicative generalization, precisely to count the number of choices of one edge between any two vertices in our subgraph structure, is what we mean when we say ``unique cliques'' or ``unique Hamiltonian cycles.''

Our results extend the results obtained for the permanent \citep{Valiant1979, Feige1996, Cai1999, Dell2014, Bjorklund2019B, Li2023} to these two problems. This provides heuristic evidence that significantly improving algorithms for these two problems might be infeasible. Under $\rETH$, one needs exponential time to marginally improve the $O(1)$-time algorithm of always printing $0$.

\subsection{Techniques}
\label{section:1.3}

This paper's central unifying theme is using group theoretic arguments to obtain our results. In particular, a small subset of elementary arguments gives great mileage for both results. Moreover, the arguments made in both results complement each other in the following ways.

\begin{enumerate}

\item For the hardness amplification achieved for counting $k$-cliques modulo $2$ on an $n$-vertex simple undirected graph, the most important tool for us from the theory of group actions, is the \textit{Orbit Stabilizer Theorem} (Lemma \ref{lemma:4}). In the context of graphs, we can interpret this as saying that the automorphism group of a simple undirected $n$-vertex graph $U_n$, $\Aut \left( U_n \right)$, the subgroup of $S_n$ such that permuting the vertices and edges of $U_n$ by a permutation $\pi \in \Aut \left( U_n \right)$ conserves the adjacency matrix of $U_n$, is related to the isomorphism class $\mathcal{C}_n$ of distinct\footnote{We say that two $n$-vertex graphs are different if their adjacency matrices are different.} graphs isomorphic to $U_n$ as $\left| \Aut \left( U_n \right) \right| \left| \mathcal{C}_n \right| = n!$. As will be seen in section \ref{section:1.3.1}, this is the most important insight for us, along with the result of \cite{Polya1937} and \cite{Erdos1963} that almost all graphs have a trivial automorphism group.

\item For our results obtained for counting problems on multigraphs, our objects of algebraic study are the functions themselves. Our protagonists, weighted counting functions on multigraphs, form a vector space, also a group under addition. The space of weighted counting functions that are invariant under graphs isomorphism forms a subspace. We provide a valuable classification of these functions based on conjugacy class structure, and use these results. In particular, the intention of our usage of group theory here is to, given oracle access to a function $f$ under some constraints, find out if it computes our function of interest. First, for both problems, we test whether $f$ is a counting function on multigraphs, then we check if it is invariant under graph isomorphism, and finally, check whether $f$ is our function of interest.

\end{enumerate}

\subsubsection{For Counting $k$-Cliques Modulo $2$}
\label{section:1.3.1}

The following ``key ideas'' are helpful to keep in mind while going through the technical details of this work.

\subsubsection*{The Fraction of Correct Answers Over Large Isomorphism Classes is Usually not Too Far From the Expected Fraction of Correct Answers Over the Oracle}

When we sample $O$ from $O^{H_n}_{1 / 2 + \epsilon}$, we can think of the correct fraction of instances as being distributed over the isomorphism class partitions of $O$. While our proof in Section \ref{section:8.1} formalizes this fact using the Chernoff bounds \citep{Mitzenmacher2005}, as is usually the case with tail bounds, the intuitive picture to have in mind is the central limit theorem. As $n$ grows, for large isomorphism classes $\mathcal{C}_n$, the random variable representing the fraction of correct instances resembles a normal distribution centered at $1 / 2 + \epsilon$. As $n$ grows, for sufficiently large isomorphism classes, almost all the weight of the distribution is concentrated between $1 / 2 + \epsilon / 2$ and $1 / 2 + 3 \epsilon/ 2$. The fact that the weight in the region $[0, 1/2+\epsilon]$ is small, in fact exponentially low, is useful to us. In fact, using the union bound, we show that for sufficiently large $n$, all sufficiently large (say $\left |\mathcal{C}_n \right| \geq n^3$) isomorphism classes have a correctness fraction greater than $1 / 2 + \epsilon$ over $O$. 

\subsubsection*{Almost All Graphs Belong to Isomorphism Classes of the Largest Possible Size}

In their work, \cite{Polya1937} and \cite{Erdos1963} showed that almost all $n$-vertex undirected simple graphs have a trivial automorphism group. More specifically, if we randomly sample a graph $U_n$ uniformly from the set of all undirected simple graphs with $n$ vertices, with a probability of $1 - {n \choose 2} 2^{-n - 2} (1 + o(1))$, $\left| \Aut \left( U_n \right) \right| = 1$. Due to our version of the orbit stabilizer theorem (Lemma \ref{lemma:4}), this means that almost all graphs belong to an isomorphism class of size $n!$.

\subsubsection*{Graphs With Very Large Automorphism Groups are Easy to Count Cliques Over}

One can imagine that with a graph whose automorphism group is of ``almost full size,'' perhaps when seen on a logarithmic scale, counting $k$-cliques is easy. With a highly symmetric graph, if we have a $k$-clique, we have many others in predictable positions. It is also likely that the number of $k$-cliques in this graph is represented by a small arithmetic expression consisting of binomial coefficients. For progress on the problem of \cite{Goldreich2020}, for sufficiently large $n$, we classify all graphs with $n$ vertices whose automorphism group is of size $\omega \left( n! / n^3 \right)$. For sufficiently large $n$, there are only twelve non-isomorphic graphs of this type. All of them either have an independent set with $n - 2$ vertices or a clique containing $n - 2$ vertices. Also, six of these classes have zero $k$-cliques for $k > 2$, five have the number of $k$-cliques described by an arithmetic expression containing one binomial coefficient, and only one has its $k$-clique count as the difference between two binomial coefficients.

Keeping this intuition in mind, the paradigm for our reduction is as follows:

\begin{enumerate}

\item Check if our graph, $U_n$, belongs to an isomorphism class that is large enough to have good probabilistic guarantees of having a $1 / 2 + \epsilon / 2$-fraction of correctness over the randomness of $O^{H_n}_{1 / 2 + \epsilon}$. In particular, this size threshold grows as $\Theta \left( n^2 / \epsilon^2 \right)$.

\item If the isomorphism class is large enough, then with a very high probability over the randomness of $O^{H_n}_{1 / 2 + \epsilon}$, this class has at least a $1 / 2 + \epsilon / 2$-fraction of correctness over $O$. We sample random permutations $\pi$ from $S_n$ and permute the vertices and edges of $U_n$ accordingly to obtain a graph $U^\prime_n$ isomorphic to $U_n$. We query $O$ on the input $U^\prime_n$ and note down the answer. We repeat this process $O \left( 1 / \epsilon^2 \right)$ times and take the majority answer. Due to the Chernoff bound, once again, if we do have a $1 / 2 + \epsilon / 2$-fraction of correctness within the isomorphism class for $O$, this is correct with high probability over the randomness of the algorithm.

\item If the isomorphism class is small, the graph is highly symmetric, and we count the number of $k$-cliques ourselves.

\end{enumerate}

We execute this paradigm differently for a constant $\epsilon > 0$ and for an $\epsilon$ varying as a function of $n$.

\textbf{For a Constant $\epsilon > 0$.} When this is the case, notice that our critical threshold for isomorphism class size is $O \left( n^2 \right)$. Due to the orbit stabilizer theorem (Lemma \ref{lemma:4}) for graphs, this means that the automorphism group of every graph $U_n$ with isomorphism class size $O \left( n^2 \right)$ has $ \left| \Aut \left( U_n \right) \right| = \Omega \left( n! / n^2 \right) = \omega \left( n! / n^3 \right)$. In Section \ref{section:8.2.1}, we will prove in Lemma \ref{lemma:23} that for sufficiently large $n$, the following are the only kinds of graphs with automorphism group of size $\omega \left( n! / n^3 \right)$.

\begin{enumerate}

\item $K_n$ and its complement.

\item $K_n$ with one edge missing and its complement.

\item $K_{n - 1}$ with an isolated vertex and its complement.

\item $K_{n - 1}$ with one vertex of degree $1$ adjacent to it and its complement.

\item $K_{n - 2}$ with two isolated vertices and its complement.

\item $K_{n - 2}$ with two vertices adjacent to each other and its complement.

\end{enumerate}

In $\tilde{O} \left( n^2 \right)$-time, by checking each case, we can tell whether $U_n$ is isomorphic to any of these graphs and quickly compute the number of $k$-cliques if so. If $U_n$ is not isomorphic to any of these, then, due to the orbit stabilizer theorem for graphs (Lemma \ref{lemma:4}), its isomorphism class size is above the critical threshold, and we can query on $O$ for answers.

\textbf{For an $\epsilon$ Varying as a Function of $n$.} When $\epsilon$ varies as a function of $n$, the procedure here varies since obtaining a complete classification of graphs whose automorphism class is above the size threshold is impractical. Let $t(n) = O \left( n^2 / \epsilon^2 \right)$ be the threshold isomorphism class size in this case. We estimate whether the automorphism class of $U_n$ is larger than $n! / t(n)$ or smaller than $n! / t(n)^{1 + \alpha}$ for some $\alpha > 0$. We can do this by taking $nt(n)$ random permutations $\pi$ from $S_n$ and counting how often permuting the vertices and edges of the graph $U_n$ as specified by $\pi$ gives us the same adjacency list as $U_n$. If the automorphism group is larger than $n! / t(n)$, then with high probability, this count is larger than $n / 2$. If the automorphism group size is smaller than $n! / t(n)^{1 + \alpha}$, then this is very likely to be less than $n / 2$; hence, we decide based on comparing this number to $n / 2$.

The algorithm to count $k$-cliques on the symmetric case is also different since we no longer have a convenient classification of graphs anymore. In particular, we first attempt to list all (at most $t(n)$) distinct graphs isomorphic to $U_n$. We can do this by picking $n^2 t(n)$ random permutations $\pi$ from $S_n$ and permuting $U_n$ according to $\pi$. If this is a graph we have not yet seen, then we add it to the list. With high probability, we will have seen all graphs. In each of these graphs, we count how many cases the first $k$ vertices form a $k$-clique. As shown in Section \ref{section:8.3}, the number of $k$-cliques in this graph is a simple function of this number.

When the isomorphism class is of size above the critical threshold, we can, of course, use the querying procedure to $O$ and obtain good probabilistic guarantees over the randomness of $O^{H_n}_{1 / 2 + \epsilon}$.

\subsubsection{For the Rare-Case Hardness of Counting on Multigraphs}
\label{section:1.3.2}

We will discuss the overview of the proof for the problem of counting Hamiltonian cycles on directed multigraphs. The techniques to prove the analogous results counting the number of unique cliques of size $\lfloor n / 2 \rfloor$ are very similar.

\subsubsection*{$\ETH$-Hardness of Computing the Number of Hamiltonian Cycles Modulo $p$ on a Directed Multigraph}

Note that due to the $O(n + m)$-space reduction from $3\SAT$ on $n$ variables and $m$ clauses to the problem of deciding whether there is a clique of size $\lfloor n / 2 \rfloor$ in an undirected multigraph (Appendix \ref{appendix:A}) or deciding whether there is a Hamiltonian cycle in a directed multigraph, along with the \textit{Sparsification Lemma} of \cite{Impagliazzo2001} (Lemma \ref{lemma:6}), neither of these problems should have $2^{o(n)}$-time algorithms under the \textit{Exponential Time Hypothesis} ($\ETH$) \citep{Impagliazzo2001b}, the hypothesis that $3\SAT$ on $n$ variables requires $2^{\gamma n}$-time for some $\gamma > 0$. We show, due to a randomized reduction from the decision problems (Lemmas \ref{lemma:7} and \ref{lemma:8}) that we cannot count for growing $p$, the number of unique cliques of size $\lfloor n / 2 \rfloor$ in an undirected multigraph or Hamiltonian cycles in a directed multigraph in $2^{o(n)}$-time under $\rETH$; however, since the algorithm for $3\SAT$ would be randomized in the case of an algorithm for these problems.

\subsubsection*{Hardness Amplification Using the STV List Decoder}

The \textit{STV List Decoder} of \cite{Sudan2001} (Lemma \ref{lemma:2}) is a potent tool for error correction. Formally, we speak more about it in Section \ref{section:2.3}, but in essence, given an oracle that is barely, but sufficiently correct on some polynomial $f$ of degree at most $d$, the STV list decoder gives us some number of machines $M$ computing polynomials of degree at most $d$, one of which is our function of interest. We use this list decoder to obtain a probabilistic algorithm correct on all inputs from an algorithm that is correct on a small, vanishing fraction of instances. We are not the first to use the STV list decoder to prove hardness results. Our usage of it is inspired by its usage in \cite{Goldreich2018}. \cite{Goldenberg2020} shows one more such application of this tool to amplify hardness.

\textbf{Identifying the Correct Machine.} On the problem of amplifying from a barely correct algorithm, we use the STV list decoder (Lemma \ref{lemma:2}), which gives us some machines $M$, all of which compute polynomials of degree upper bounded by the degree of our function of interest. So, we iterate through each machine and test whether it computes our function of interest. A rough outline of this test is as follows.

\begin{enumerate}

\item Given a machine $M$, we first test whether it computes a ``valid'' multigraph counting function. The techniques we use here are the pigeonhole principle based techniques for counting Hamiltonian cycles and interpolation techniques for counting half-cliques.

\item Given that the function is promised to compute a ``valid'' multigraph counting function, how do we know if it is invariant under graph isomorphism? The test relies on straightforward ideas: Lagrange's theorem \citep{Herstein1975}, the idea for finite groups that the order of a subgroup $H$ (of $G$) must divide the order of $G$ and the somewhat silly fact that the smallest integer larger than $1$ is $2$. Suppose we have a counting function $H_{n, p}$ on multigraphs. Let $\Pi \left( H_{n, p} \right)$ be the subgroup of $S_n$ such that permuting the vertices and edges to the input graph of $H_{n, p}$, for any input graph, does not change the output. If $H_{n, p}$ is invariant under graph isomorphism, then $\Pi \left( H_{n, p} \right)$ is $S_n$. However, if $\Pi \left( H_{n, p} \right)$ is not $S_n$, then it is at most half the size of $S_n$. Indeed, this is precisely the insight we use. We pick a random graph and a random permutation from $S_n$. For sufficiently large $n$, the probability that the function $H_{n, p}$ does not change throughout this operation is close to $\left| \Pi \left( H_{n, p} \right) \right| / |S_n|$. If $H_{n, p}$ is indeed invariant under graph isomorphism, then $\left| \Pi \left( H_{n, p} \right) \right| / |S_n| = 1$ and otherwise, $\left| \Pi \left( H_{n, p} \right) \right| / |S_n| \leq 1 / 2$, and we reject with a probability of roughly $1 / 2$. 

\item In this step, we try to identify our functions of interest, guaranteed that the machine computes an invariant function under graph isomorphism. For both problems, we classify all graph counting functions based on insight from conjugacy classes and use that to our advantage. For the problem of counting Hamiltonian cycles, the insight is that this function is the only one that places zero weight on any cycle cover other than the Hamiltonian cycles. In the case of counting half-cliques, the argument is more complicated.

\end{enumerate}

\section{Preliminaries}
\label{section:2}

\subsection{Notations}
\label{section:2.1}

$\mathcal{P}[\mathcal{E}]$ denotes the probability of the event $\mathcal{E}$ occurring. $\mathbb{N}$ is the set of natural numbers $\{\, 1, 2, 3, \ldots \,\}$. $\mathbb{R}$ is the set of real numbers. $\mathbb{Z}_p$ refers to the field of size $p$, where $p$ is a prime number. $\mathbb{F}$ denotes a finite field. If an algorithm has a time complexity of $O^{*}(T(n))$, where $T$ is super-polynomial, then the time complexity is bounded by $T(n)h(n)$ for a polynomial $h$. The notation $[n]$ is for the set $\{\, 1, 2, \ldots, n \,\}$. For a set $S$ and an integer $n$, $\displaystyle \binom{S}{n}$ is the set of all subsets of $S$ with $n$ elements. For a real number $r$, $\lfloor r\rfloor$ and $\lceil r\rceil$ are the smallest integer not larger than $r$ and the largest integer not smaller than $r$, respectively. $K_n$ is the complete simple undirected graph on $n$ vertices with an edge between any pair of vertices. The functions $\poly(n)$ and $\polylog(n)$ denote any function bounded from above by a polynomial in $n$ and a polynomial in the logarithm of $n$, respectively. $\tilde{O} (T(n))$ means $O(T(n) \polylog(n))$. The notation $x \gets_r S$ means that the variable $x$ takes on a random value uniformly sampled from a finite set $S$. $(x_i)_{i \in [n]}$ is a short representation of the list $X = (x_1, x_2, \ldots, x_n)$. Similarly, $\left( \left( x_{i,j} \right)_{j \in [n]} \right)_{i \in [m]}$ is a short representation of the list $(x_{1,1}, \ldots, x_{1, n}, x_{2, 1}, \ldots x_{2, n}, \ldots, x_{m, 1}, \ldots, x_{m, n})$. $X.x_i$ is for the variable $x_i$ within the ordered list $X$. Group theoretic notation is defined when used and often restated where required. $\Aut \left( U_n \right)$ refers to the automorphism group of a simple $n$ vertex undirected graph $U_n$.

\subsection{The Schwartz-Zippel Lemma}
\label{section:2.2}

The analogue of the fundamental theorem of algebra in the multilinear setting is the following lemma of \cite{Schwartz1980}, \cite{Zippel1979}, and \cite{Demillo1978}. The lemma has seen many forms across papers; here, we present the form we use.

\begin{lemma}\textbf{The Schwartz-Zippel Lemma.}\\
\label{lemma:1}
 
Given a multivariate polynomial $f:\mathbb{F}^m \to \mathbb{F}$ of degree $d$, given any subset $\mathcal{D} \subset \mathbb{F}$,
\begin{equation*}
\mathbb{P}_{x \leftarrow_{r}\mathcal{D}^m}[f(x) = 0] \leq \frac{d}{|\mathcal{D}|}.
\end{equation*}
\end{lemma}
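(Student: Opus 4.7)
The plan is to prove the lemma by induction on the number of variables $m$, following the standard proof strategy. Implicit in the statement (otherwise the bound is vacuous) is that $f$ is nonzero and has total degree $d$; I will treat it as such throughout.

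For the base case $m = 1$, the polynomial $f$ is a nonzero univariate polynomial of degree $d$ over the field $\mathbb{F}$, and hence has at most $d$ roots. Therefore the probability that a uniformly random $x \in \mathcal{D}$ satisfies $f(x) = 0$ is at most $d/|\mathcal{D}|$, as required.

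For the inductive step, assume the lemma for polynomials in $m - 1$ variables. Given $f(x_1, \ldots, x_m)$ of total degree $d$, I would group terms by the exponent of $x_1$ and write
$$f(x_1, \ldots, x_m) = \sum_{i=0}^{k} x_1^{i}\, g_i(x_2, \ldots, x_m),$$
where $k$ is the largest integer with $g_k \not\equiv 0$. Since $f$ has total degree $d$, we get $k \leq d$ and $g_k$ has total degree at most $d - k$. Sample $x = (x_1, \ldots, x_m)$ uniformly from $\mathcal{D}^m$ and split the event $\{f(x) = 0\}$ according to whether $g_k(x_2, \ldots, x_m) = 0$. By the inductive hypothesis applied to the nonzero polynomial $g_k$ in $m - 1$ variables, the first sub-event has probability at most $(d - k)/|\mathcal{D}|$. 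Conditioned on $g_k(x_2, \ldots, x_m) \neq 0$, the restriction $f(\,\cdot\,, x_2, \ldots, x_m)$ is a nonzero univariate polynomial in $x_1$ of degree exactly $k$, so the base case gives a conditional probability at most $k/|\mathcal{D}|$ that it vanishes at a uniformly random $x_1 \in \mathcal{D}$.

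A union bound combines the two cases to yield
$$\mathbb{P}_{x \leftarrow_r \mathcal{D}^m}[f(x) = 0] \;\leq\; \frac{d - k}{|\mathcal{D}|} + \frac{k}{|\mathcal{D}|} \;=\; \frac{d}{|\mathcal{D}|},$$
completing the induction. There is no real obstacle here; the proof is routine. The only subtlety worth stating clearly is that the degrees of $g_k$ (at most $d - k$) and of $f$ as a univariate in $x_1$ (exactly $k$) partition the ``budget'' $d$, which is precisely why the two probability bounds add to $d/|\mathcal{D}|$ rather than something weaker.
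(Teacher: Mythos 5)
Your proof is correct and is the standard textbook induction on the number of variables; however, the paper does not prove this lemma at all — it is stated as a cited result, attributed to Schwartz, Zippel, and DeMillo–Lipton, and used as a black box throughout. So there is no paper proof to compare against. Your argument is sound: the base case via the root-counting bound for univariate polynomials, the decomposition $f = \sum_{i=0}^{k} x_1^i g_i$ with $k$ the top exponent so that $g_k \not\equiv 0$ has degree at most $d-k$, and the split on whether $g_k$ vanishes, combined by a union bound, all go through. One small presentational note: what you call a ``union bound'' is really the inequality $\mathbb{P}[f=0] \leq \mathbb{P}[g_k=0] + \mathbb{P}[f=0 \mid g_k \neq 0]$, which follows from $\mathbb{P}[f=0, g_k\neq 0] \leq \mathbb{P}[f=0 \mid g_k\neq 0]$; it is worth writing out this conditioning step explicitly since calling it a union bound over non-exclusive events could mislead a reader. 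You also correctly flagged the implicit hypothesis that $f \not\equiv 0$, which the paper's statement omits and which is needed for the bound to be meaningful.
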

 
The Schwartz-Zippel lemma is the most crucial algebraic tool used in this paper to make statements of a combinatorial nature.

\subsection{The STV List-Decoder}
\label{section:2.3}

The problem of list decoding asks whether, given an oracle $O$ computing a specific function $f$, but on a tiny fraction of inputs, one can recover the function $f$? This is a problem with important applications in practice and theory, especially in the amplification of hardness and the theory of pseudorandomness \citep{Vadhan2012}. In their paper constructing pseudorandom generators without using the XOR lemma, \cite{Sudan2001} proved the following list-decoding lemma, allowing a robust hardness amplification. This list decoder is the final product of a long line of work in coding theory, each implying stronger and stronger amplifications of hardness \citep{Gemmell1992, Sudan1996, Sudan2001}. 

\begin{lemma}\textbf{The STV List Decoder.}\\
\label{lemma:2}

Given any oracle $O$ that computes a polynomial $p : \mathbb{F}^n \to \mathbb{F}$ of degree $d$ correctly on over an $\epsilon > \sqrt{2d / |\mathbb{F}|}$-fraction of instances, in $\poly(n, d, 1 / \epsilon, \log |\mathbb{F}|)$-time, we can produce $O(1 / \epsilon)$ randomized oracle machines (with oracle access to $O$), all of which compute some multivariate polynomial from $\mathbb{F}^n$ to $\mathbb{F}$ of degree $d$, one of which computes $f$. Moreover, each machine runs in $\poly(n, d, 1 / \epsilon, \log |\mathbb{F}|)$-time and disagrees with the polynomial it intends to compute with a probability of at most $1 / 2^{q(n)}$ for some polynomial $q$.
\end{lemma}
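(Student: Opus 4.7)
The plan is to reduce multivariate polynomial reconstruction to the univariate Reed-Solomon list-decoding problem, which is solved by Sudan's univariate list decoder \cite{Sudan1996}. The underlying geometric idea is that for any two points $a, x \in \mathbb{F}^n$, the restriction of $p$ to the line $\ell(t) = (1-t) a + t x$ is a univariate polynomial of degree at most $d$; if $O$ agrees sufficiently with $p$ along $\ell$, the univariate decoder outputs a short list of candidate univariate polynomials containing $p \circ \ell$, from which $p(x) = (p \circ \ell)(1)$ can be extracted. The two components of the construction I would build are (i) an ``advice'' mechanism that tags the correct univariate polynomial within the recovered list, and (ii) enough randomness on each query curve so that the univariate decoder succeeds with the required exponentially small error.

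To build the $O(1/\epsilon)$ machines, I would first sample a uniformly random anchor $a \in \mathbb{F}^n$ together with a uniformly random auxiliary point $a' \in \mathbb{F}^n$, and then run Sudan's univariate decoder on the oracle restricted to the line through $a$ and $a'$. By the Johnson bound this yields at most $r = O(1/\epsilon)$ candidate univariate polynomials of degree $d$, giving candidate values $b_1, \ldots, b_r$ for $p(a)$. For each $i \in [r]$, define machine $M_i$ to hardwire the pair $(a, b_i)$ and, on input $x$, to (a) construct a random curve $C$ of degree $k = \Theta(\log n)$ with $C(0) = a$, $C(1) = x$, and $C$ passing through $k - 1$ freshly sampled uniform random points of $\mathbb{F}^n$; (b) query $O(C(t))$ for every $t \in \mathbb{F}$; (c) run Sudan's univariate list decoder on $t \mapsto O(C(t))$, now decoding polynomials of degree $dk$; and (d) output the value at $t = 1$ of the unique candidate (if any) whose value at $t = 0$ equals $b_i$. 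By construction, at most $r = O(1/\epsilon)$ non-trivial machines are produced, each in $\poly(n, d, 1/\epsilon, \log |\mathbb{F}|)$ time.

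For correctness I would argue two claims. First, with probability at least a constant over the initial draw of $(a, a')$, the agreement of $O$ with $p \circ \ell$ on the initial line exceeds the Johnson threshold, so Sudan's decoder recovers $p \circ \ell$ and $p(a)$ appears among the $b_i$'s; amplifying by $O(\log(1/\delta))$ independent initial draws makes this hold with high probability. Second, conditioning on the ``good'' index $i^\star$ with $b_{i^\star} = p(a)$, I would show $M_{i^\star}(x) = p(x)$ with probability $1 - 1/2^{q(n)}$ on every $x$. The extra $k-1$ uniformly random anchor points of $C$ make the samples $\{C(t)\}_{t \in \mathbb{F}}$ a $k$-wise independent family of uniform random variables in $\mathbb{F}^n$; the expected fraction of $t$ with $O(C(t)) = p(C(t))$ is $\epsilon$, and a Chernoff-style tail bound under $k$-wise independence yields actual agreement at least $\epsilon - o(\epsilon)$ except with probability $2^{-\Omega(k)} = 1/2^{q(n)}$. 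On this event, Sudan's decoder recovers $p \circ C$ provided $\epsilon - o(\epsilon) > \sqrt{2dk/|\mathbb{F}|}$, which follows from the hypothesis $\epsilon > \sqrt{2d/|\mathbb{F}|}$ with a small slackness absorbed into the curve degree $k$ and the $o(\epsilon)$ term; the unique candidate with value $b_{i^\star}$ at $t = 0$ is then $p \circ C$, so the machine outputs $(p \circ C)(1) = p(x)$.

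The main obstacle I expect is the simultaneous balancing of three parameters under the single slackness $\epsilon > \sqrt{2d/|\mathbb{F}|}$: the Johnson cap of $O(1/\epsilon)$ machines, the per-query error $1/2^{q(n)}$, and the $\poly(n, d, 1/\epsilon, \log|\mathbb{F}|)$ runtime. Driving the error down to $1/2^{q(n)}$ forces curves of degree $k = \Theta(\log n)$ rather than lines, which inflates the effective univariate degree from $d$ to $dk$ and raises Sudan's threshold accordingly; calibrating $k$, the $k$-wise-independent moment bound, and the field-size requirements so that this inflated threshold remains dominated by $\epsilon$ is precisely the delicate technical content of \cite{Sudan2001}, and a routine adaptation of that analysis with the parameters above delivers the stated lemma.
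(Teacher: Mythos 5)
The paper does not prove Lemma~\ref{lemma:2}; it is cited verbatim from \cite{Sudan2001}, so there is no internal proof to compare your reconstruction against. Your approach --- restrict the oracle to low-degree curves through a fixed anchor, run Sudan's univariate list decoder, and disambiguate via an advice value --- is indeed the skeleton of the Sudan--Trevisan--Vadhan argument and of its use in \cite{Goldreich2018}. However, the parameter calibration in your sketch does not deliver the claimed failure bound, and the remark that ``a routine adaptation'' closes the gap understates the issue.

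Concretely, with a curve of degree $k = \Theta(\log n)$, the $k$-wise-independence Chernoff bound you invoke gives per-input failure probability $2^{-\Omega(k)} = n^{-O(1)}$, not $2^{-q(n)}$ for a genuine polynomial $q$; you need $k = \Theta(q(n))$ to get that directly. But that choice of $k$ raises the univariate degree to $d\,q(n)$ and hence raises the list-decoding threshold to $\sqrt{2 d\,q(n)/|\mathbb{F}|}$, a factor $\Theta(\sqrt{q(n)})$ above what the hypothesis $\epsilon > \sqrt{2d/|\mathbb{F}|}$ guarantees. The slack in the hypothesis is not large enough to absorb a polynomial inflation of the degree, so the high-degree-curve route conflicts with the stated field-size constraint. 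The standard resolution, which your write-up does not mention, is to use curves of degree $O(1)$ or $O(\log)$ so that each local-decoding call succeeds with probability at least $2/3$ over its own randomness, and then amplify the per-input error to $2^{-q(n)}$ by $O(q(n))$ independent repetitions and a majority vote; since the advice $(a, b_i)$ is hardwired, it is shared across repetitions, so the majority converges to the same target polynomial. A second point you assert without justification is that the list recovered on the random curve $C$ contains a \emph{unique} candidate agreeing with $b_i$ at $t = 0$; collisions between $p \circ C$ and another degree-$dk$ candidate at the fixed point $t = 0$ must be bounded (e.g., over the curve's randomness, or by using two advice points), and this is part of the argument that is not routine. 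Apart from these two gaps --- the missing amplification step and the unaddressed advice-ambiguity --- your decomposition matches the standard STV construction.
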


Due to this lemma, we will refer to the list decoder as the ``STV list decoder.'' \cite{Goldreich2018} also used this to show the rare-case hardness of the function counting $t$-cliques on multigraphs.

\subsection{The Symmetric Group $S_n$}
\label{section:2.4}

The symmetric group, $S_n$, is the group of permutations of the set $[n]$ or the bijections $\pi:[n] \to [n]$ under function composition \citep{Herstein1975}. For any $j \in [n]$, if we consider the set $\{\, \pi^i(j) \mid i \in \{\, 0 \,\} \cup [n] \,\}$, we observe that it is a set having at most $n$ elements because using the \textit{Pigeonhole Principal}, there are at least two repeated values. The result is a cycle. Thus, we can uniquely decompose each permutation $\pi \in S_n$ as a set of disjoint cycles. Two permutations $\pi_1 \in S_n$ and $\pi_2 \in S_n$, are said to have similar cycle decompositions if the lengths of their constituent cycles are identical.

We define the conjugacy classes of a group as follows.

\begin{definition}
\label{def:2}

\textit{\textbf{Conjugacy Classes of a Group.} \\
We say that an element $b$ of a group $G$ is conjugate to $a \in G$ if there is an element $g \in G$ such that $b = g^{-1} a g$. We represent this relation as $a \sim b$. The set $C(a) = \{b \in H| a \sim b\}$ is called the conjugacy class of $a$ in $G$ under the relation $\sim$.}

\end{definition}

It is easy to see that the relation $\sim$ is an equivalence relation, following directly from the group axioms and that the conjugacy classes partition the group $G$. Two permutations $\pi_1 \in S_n$ and $\pi_2 \in S_n$ are conjugates if and only if both have a similar cycle decomposition. Therefore, the number of conjugacy classes of $S_n$ is the number of partitions of $n$. We denote by $C_n$, the conjugacy class of $S_n$ having all permutations $\pi$, which are cycles of length $n$.

We define the subgraphs induced by a permutation as follows.

\begin{definition}
\label{def:3}

\textit{\textbf{Induced Subgraphs.} \\
Given a directed multigraph $D = (V, E)$ on $n$ vertices ($V = [n]$), and a permutation $\pi \in S_n$, we define the subgraph of $D$ induced by $\pi$, $D_\pi$, as 
\begin{equation*}
D_\pi = (V, E_\pi),
\end{equation*}
where
\begin{equation*}
E_\pi = \{\, (i, \pi(i)) \mid i \in [n] \text{ and } (i, \pi(i)) \in E  \,\}.
\end{equation*}
}

\end{definition}

Now, using the induced subgraphs, we define the cycle cover of a multigraph.

\begin{definition}
\label{def:4}

\textit{\textbf{Cycle Cover of a Multigraph.} \\
Given a directed multigraph $D = (V, E)$ on $n$ vertices ($V = [n]$), and a permutation $\pi \in S_n$, we say that the induced subgraph $D_\pi$ is a cycle cover of $D$ if 
\begin{equation*}
|E_\pi| = n.
\end{equation*}
That is, if each permutation mapping $i \to \pi(i)$ for $i \in [n]$ is present as an edge in $D$. The permutation $\pi$ decomposes $D_\pi$ as a set of disjoint cycles.}

\end{definition}

We give the usual definition of directed graph isomorphism.

\begin{definition}
\label{def:5}

\textit{\textbf{Directed Graph Isomorphism.} \\
Two directed graphs, $D_1 = (V, E_1)$ and $D_2 = (V, E_2)$, where $V = [n]$, are called isomorphic (denoted as $D_1 \sim D_2$) if there exists a permutation $\pi \in S_n$ such that
\begin{equation*}
(i, j) \in E_1 \iff (\pi(i), \pi(j)) \in E_2, \quad \forall (i, j) \in [n]^2.
\end{equation*}
That is, by relabeling the vertices of $D_1$ by applying the permutation $\pi \in S_n$, we can get $D_2$.}

\end{definition}

Now, we present and prove a vital lemma concerning the cycle covers of a multigraph.

\begin{lemma}
\label{lemma:3}

\textbf{Any Two Cycle Covers of a Directed Multigraph Corresponding to the Same Conjugacy Class of $S_n$ are Isomorphic.} \\
Let $\pi_1 \in S_n$ and $\pi_2 \in S_n$ be such that both $D_{\pi_1}$ and $D_{\pi_2}$ are cycle covers of $D$. Then
\begin{equation*}
\pi_1 \sim \pi_2 \iff D_{\pi_1} \sim D_{\pi_2}.
\end{equation*}

\begin{proof}

\begin{equation*}
\begin{split}
& D_{\pi_1} \sim D_{\pi_2} \\
& \implies \text{ $\pi_1$ and $\pi_2$ have similar cycle decomposition} \\
& \implies \pi_1 \sim \pi_2.
\end{split}
\end{equation*}
\begin{equation}
\label{eq:1}
\begin{split}
& \pi_1 \sim \pi_2 \\
& \implies \exists \pi \text{ such that } \pi_2 = \pi^{-1} \pi_1 \pi \\
\end{split}
\end{equation}
Let $D = (V, E)$, such that $V = [n]$. From Equation \eqref{eq:1}, we have
\begin{equation}
\label{eq:2}
\begin{split}
E_{\pi_2} & = \{\, (i, \pi_2(i)) \mid i \in [n]  \,\} \\
& = \left \{ \, \left( i, \pi^{-1} \pi_1 \pi (i) \right) \mid i \in [n]  \, \right \} \\
& = \left \{\, \left( i, \pi \left( \pi_1 \left( \pi^{-1} (i) \right) \right) \right) \mid i \in [n]  \, \right \}.
\end{split}
\end{equation}
Applying $\pi^{-1}$ to Equation \eqref{eq:2}, we get
\begin{equation}
\label{eq:3}
\begin{split}
E_{\pi_2 \pi^{-1}} & = \left \{\, \left( \pi^{-1}(i), \pi_2 \pi^{-1} (i) \right) \mid i \in [n]  \, \right \} \\
& = \left \{ \, \left( \pi^{-1} (i), \pi^{-1} \pi_1 \pi \pi^{-1} (i) \right) \mid i \in [n]  \, \right \} \\
& = \left \{\, \left( \pi^{-1} (i), \pi_1 \left( \pi^{-1} (i) \right) \right) \mid i \in [n]  \, \right \} \\
& = \{\, (j, \pi_1(j)) \mid j \in [n] \,\} \\
& = E_{\pi_1},
\end{split}
\end{equation}
where, $j = \pi^{-1}(i)$. Equation \eqref{eq:3} implies that
\begin{equation*}
D_{\pi_1} \sim D_{\pi_2}.
\end{equation*}

\end{proof}

\end{lemma}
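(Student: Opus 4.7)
The plan is to prove the two directions of the biconditional separately, using the characterization from Section 2.4 that conjugacy classes of $S_n$ are exactly the classes of permutations sharing the same cycle type.

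For the forward implication $D_{\pi_1} \sim D_{\pi_2} \implies \pi_1 \sim \pi_2$, I would argue structurally. Since both $D_{\pi_1}$ and $D_{\pi_2}$ are cycle covers of $D$, each is a disjoint union of directed cycles whose lengths are precisely the cycle lengths in the disjoint cycle decompositions of $\pi_1$ and $\pi_2$, respectively. Any directed graph isomorphism between $D_{\pi_1}$ and $D_{\pi_2}$ must send cycles to cycles of equal length, so the multisets of cycle lengths agree. By the discussion in Section 2.4, this means $\pi_1$ and $\pi_2$ have identical cycle decompositions and hence lie in the same conjugacy class of $S_n$.

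For the reverse implication $\pi_1 \sim \pi_2 \implies D_{\pi_1} \sim D_{\pi_2}$, I would start from the definition of conjugacy: there exists $\pi \in S_n$ with $\pi_2 = \pi^{-1} \pi_1 \pi$, and then exhibit $\pi$ (or $\pi^{-1}$) as the relabeling realizing the graph isomorphism. Concretely, the edge set is $E_{\pi_1} = \{(j,\pi_1(j)) : j \in [n]\}$. Substituting $j = \pi(i)$ and using $\pi^{-1}\pi_1\pi = \pi_2$ rewrites this as $\{(\pi(i),\pi(\pi_2(i))) : i \in [n]\}$. This exhibits exactly the condition of Definition 5 for $D_{\pi_1} \sim D_{\pi_2}$ via the relabeling $i \mapsto \pi(i)$.

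The main obstacle — really the only delicate point — is to keep careful track of whether the conjugating permutation $\pi$ or its inverse $\pi^{-1}$ is the one implementing the vertex relabeling, and to verify that both inclusions of the edge sets hold (not just one direction), so that the directed-graph-isomorphism condition in Definition 5 is satisfied as an ``if and only if.'' This is automatic once one substitutes carefully, because $\pi$ is a bijection and composition of bijections is a bijection, so the correspondence between $E_{\pi_1}$ and $E_{\pi_2}$ is bijective by construction.
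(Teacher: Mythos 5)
Your proof is correct and follows essentially the same strategy as the paper: the forward direction is the cycle-type matching argument, and the reverse direction exhibits the conjugating permutation as the vertex relabeling that realizes the graph isomorphism. The only cosmetic difference is that you rewrite $E_{\pi_1}$ in terms of $\pi_2$ via the substitution $j = \pi(i)$, while the paper rewrites $E_{\pi_2}$ and then applies $\pi^{-1}$ to recover $E_{\pi_1}$; both manipulations are the same change of variables, and your remark that bijectivity of $\pi$ gives the two-sided edge correspondence for Definition~\ref{def:5} is exactly the point that makes this rigorous.
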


\subsection{Group Actions}
\label{section:2.5}

For a group $G$ and a set $X$, a group action \citep{Smith2015} is a function $\alpha:X \times G \to X$ satisfying the following axioms:

\begin{enumerate}

\item $\alpha(x, e) = x$ for every $x \in X$, where $e$ is the identity of $G$.

\item $\alpha(x, ab) = \alpha(\alpha(x, a), b)$ for every $x \in X$ and $a, b \in G$.

\end{enumerate}

Informally, group actions induce symmetries on $X$. In this work, $X$ is the set of $n$-vertex undirected graphs, and $G$ is usually the symmetric group on $n$ vertices, $S_n$. We will discuss group actions in the context of actions on graphs. Given a simple undirected graph $U_n$ on $n$ vertices, $\Aut \left( U_n \right)$ is the subgroup of $S_n$ such that all actions of permutations from $\Aut \left( U_n \right)$ fix $U_n$. That is, $\alpha(U_n, \sigma) = U_n$ if and only if $\sigma \in \Aut \left( U_n\right)$. The set $\mathcal{C}_n$ of a graph represented by $U_n$ is $\{\, \alpha \left( U_n, \sigma \right) | \sigma \in S_n \,\}$. We call it the isomorphism class of the graph, but in the language of group actions, this is also the orbit of an element. Now, the most critical insight from the theory of group actions is the following simple consequence of the \textit{Orbit Stabilizer Theorem}.

\begin{lemma}
\label{lemma:4}

For every simple undirected $n$-vertex graph represented by $U_n$,
\begin{equation*}
\left| \Aut \left( U_n \right) \right| \left| \mathcal{C}_n \right| = n!, 
\end{equation*}
where $\mathcal{C}_n$ is the orbit or isomorphism class of $U_n$.

\end{lemma}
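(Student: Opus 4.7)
The plan is to recognize this statement as a direct application of the Orbit–Stabilizer Theorem to the natural action of $S_n$ on $n$-vertex labeled simple undirected graphs, as set up in Section \ref{section:2.5}. Concretely, take $X$ to be the set $\{0,1\}^{\binom{n}{2}}$ of adjacency matrices of $n$-vertex graphs, take $G = S_n$, and let $\alpha : X \times S_n \to X$ be the action that, given a graph $U_n$ and a permutation $\sigma \in S_n$, produces the graph $\alpha(U_n,\sigma)$ whose edge set is $\{(\sigma(i),\sigma(j)) : (i,j) \in E(U_n)\}$. First I would verify the two group-action axioms: $\alpha(U_n,e)=U_n$ is immediate since the identity permutation relabels nothing, and $\alpha(U_n,\sigma\tau) = \alpha(\alpha(U_n,\sigma),\tau)$ follows from $(\sigma\tau)(i) = \tau(\sigma(i))$ applied edge-by-edge.

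Next, I would match the two sides of the Orbit–Stabilizer identity to the objects in the statement. By the definition of $\Aut(U_n)$ recalled in Section \ref{section:2.5}, the stabilizer subgroup $\mathrm{Stab}(U_n) := \{\sigma \in S_n : \alpha(U_n,\sigma) = U_n\}$ coincides with $\Aut(U_n)$, i.e.\ with the permutations preserving the adjacency matrix. By the definition given just before the lemma, the orbit $\mathrm{Orb}(U_n) := \{\alpha(U_n,\sigma) : \sigma \in S_n\}$ coincides with the isomorphism class $\mathcal{C}_n$ of $U_n$ (the set of distinct graphs isomorphic to $U_n$, counted as adjacency matrices).

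I would then invoke the Orbit–Stabilizer Theorem, which states that for any group action of a finite group $G$ on a set $X$ and any $x \in X$,
\begin{equation*}
|\mathrm{Orb}(x)| \cdot |\mathrm{Stab}(x)| = |G|.
\end{equation*}
The standard proof, which I would sketch for completeness, is to define the map $\phi : G/\mathrm{Stab}(x) \to \mathrm{Orb}(x)$ by $\sigma\,\mathrm{Stab}(x) \mapsto \alpha(x,\sigma)$, and verify it is well-defined and bijective: two permutations $\sigma,\tau$ yield the same image iff $\sigma\tau^{-1} \in \mathrm{Stab}(x)$, which is exactly the condition for them to lie in the same left coset. Specializing $G = S_n$, $x = U_n$, $\mathrm{Stab}(U_n)=\Aut(U_n)$, $\mathrm{Orb}(U_n)=\mathcal{C}_n$, and $|S_n|=n!$ yields $|\Aut(U_n)|\,|\mathcal{C}_n| = n!$, as required.

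The only conceptual subtlety, and the one I would be most careful about, is the distinction between labeled and unlabeled graphs: the counting identity crucially treats graphs as adjacency matrices in $\{0,1\}^{\binom{n}{2}}$ (labeled objects), so that $\mathcal{C}_n$ counts distinct matrices rather than isomorphism-types. This matches the convention adopted in the paper's footnote that two graphs are regarded as different when their adjacency matrices differ, and it is what makes the action of $S_n$ on $X$ well-defined and the bijection $\phi$ above meaningful. With this convention fixed, no additional technical obstacle arises; the lemma is essentially a one-line translation of Orbit–Stabilizer.
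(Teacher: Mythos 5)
Your proposal is correct and follows exactly the route the paper intends: the paper simply states Lemma~\ref{lemma:4} as a ``simple consequence of the Orbit Stabilizer Theorem'' with an informal gloss (automorphism group as a subgroup of $S_n$, cosets corresponding to isomorphic adjacency matrices) and gives no further proof. You fill in precisely those details — setting up the action of $S_n$ on $\{0,1\}^{\binom{n}{2}}$, identifying stabilizer with $\Aut(U_n)$ and orbit with $\mathcal{C}_n$, and noting the labeled-graph convention that makes $|\mathcal{C}_n|$ count distinct adjacency matrices — so this matches the paper's approach in full.
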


This is the single most helpful lemma for our results in Section \ref{section:8}. The picture to remember is that the automorphism group is a subgroup of $S_n$, and the cosets correspond to isomorphic graphs to our original graph. We give one more useful straightforward lemma below.

\begin{lemma}
\label{lemma:5}

For every $n$ vertex simple, undirected graph represented by $U_n$,
\begin{equation*}
\Aut \left( U_n \right) \cong \Aut \left( \overline{U_n} \right),
\end{equation*}
and hence, 
\begin{equation*}
\left| \Aut \left( U_n \right) \right| = \left| \Aut \left( \overline{U_n} \right) \right|.
\end{equation*}

\end{lemma}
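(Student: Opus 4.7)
The plan is to prove something slightly stronger than the stated isomorphism: I will show that the two automorphism groups are in fact \emph{equal} as subgroups of $S_n$, from which the isomorphism and the equality of cardinalities follow immediately. This is cleaner than attempting to construct an explicit isomorphism between two abstract groups, because conjugation-by-complementation does nothing here; the same permutations fix $U_n$ and $\overline{U_n}$.

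First I would unpack the definition. Let $U_n$ have vertex set $[n]$ and edge set $E$, so $\overline{U_n}$ has edge set $\binom{[n]}{2} \setminus E$. A permutation $\pi \in S_n$ lies in $\Aut(U_n)$ precisely when, for every pair $\{i,j\} \in \binom{[n]}{2}$, we have $\{i,j\} \in E \iff \{\pi(i),\pi(j)\} \in E$. Since $\pi$ is a bijection on $\binom{[n]}{2}$ (taking $\{i,j\}$ to $\{\pi(i),\pi(j)\}$), the condition ``preserves $E$'' is logically equivalent to ``preserves the complement $\binom{[n]}{2}\setminus E$.'' Hence $\pi \in \Aut(U_n) \iff \pi \in \Aut(\overline{U_n})$, which gives the set-theoretic equality $\Aut(U_n) = \Aut(\overline{U_n})$ inside $S_n$.

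From equality as subgroups, the identity map witnesses $\Aut(U_n) \cong \Aut(\overline{U_n})$, and trivially $|\Aut(U_n)| = |\Aut(\overline{U_n})|$. I do not anticipate any genuine obstacle here; the statement is essentially a tautology once one observes that a bijection of vertices preserves a relation on pairs iff it preserves the complementary relation. The only thing worth being careful about is that $U_n$ is undirected and simple (no loops, no multi-edges), so that ``complement'' is well-defined as the graph on $[n]$ with edge set $\binom{[n]}{2}\setminus E$ and the symmetric argument applies cleanly; this matches the assumption in the lemma statement.
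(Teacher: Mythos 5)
Your proof is correct, and it is cleaner than the lemma's literal statement: you establish the stronger fact that $\Aut(U_n)$ and $\Aut(\overline{U_n})$ are the \emph{same} subgroup of $S_n$, since a permutation of $[n]$ preserves an edge relation on $\binom{[n]}{2}$ if and only if it preserves the complementary relation. The paper itself supplies no proof for this lemma, evidently treating it as standard, so there is nothing to contrast against; your argument is exactly the one a reader would reconstruct, and observing the equality of subgroups (rather than merely exhibiting an abstract isomorphism) is the right way to present it.
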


\subsection{Property Testing}
\label{section:2.6}

\cite{Goldreich2017} describes that ``property testing is concerned with the design of super-fast randomized algorithms for approximate decision-making.'' Approximate decision-making, in this context, means distinguishing objects with a property from objects far from having the property\footnote{When ``far from having the property'' is strengthened to not having the property, this task is generally not tractable. For example, asking if a $3$CNF formula is satisfiable, or, asking if $x \in L$ or not, where $L$ is undecidable. Even distinguishing $\SAT$ formulas that have exactly one satisfying assignment from those that have none in randomized polynomial-time would imply $\NP = \RP$, due to the famous theorem of \cite{Valiant1986}.}. The textbook example is \textit{Blum Luby Rubinfeld (BLR)} linear testing \citep{Blum1993}, where one wants to distinguish oracles computing linear functions in $\mathbb{Z}_2$ from oracles that compute functions that are far from linear for randomly sampled strings $x$ and $y$ from the uniform distribution and reject otherwise. The test is straightforward, to merely accept if $f(x) + f(y) = f(x + y)$. \textit{Probabilistically Checkable Proofs (PCP)} \citep{Dinur2007, Feige1991} are also an example of this. Showing that any proof that is verifiable in polynomial-time can be turned into a proof that is verifiable by a small number of randomly chosen tests was a breakthrough in the field. Under reasonable cryptographic assumptions, the zero-knowledge proof for $3\COL$ relies on several simple tests \citep{Goldwasser1985, Blum1986, Goldreich1991}.

In this paper, in Section \ref{section:6}, we show the positive algorithmic result that given an oracle machine $M$ promised to compute some polynomial $H_{n, p}: \mathbb{Z}^{n^2}_p \to \mathbb{Z}_p$ of degree at most $n$ and sufficiently large $p$, we can check quickly whether $H_{n, p}$ is the polynomial computing the number of Hamiltonian cycles over $\mathbb{Z}_p$ in a directed multigraph. Moreover, we have a series of simple tests that obtain this result. We show similar results for the half-clique polynomial as well in Section \ref{section:7}. The ``sufficiently far apart'' clause is taken care of by the Schwartz-Zippel Lemma (\ref{lemma:1}): Distinct polynomials over large finite fields seldom align.

Note that in the modern sense, property testing refers to quick sublinear-time decision-making. In this work, for the results on counting directed Hamiltonian cycles and half-cliques, given some oracle along with some guarantees, we use polynomially many queries to make our decisions, which is sublinear, given the number of queries we make is polylogarithmic in the size of the domain - $\poly(n)$ many queries while the domain is of size $2^{\poly(n)}$.

\subsection{$\ETH$ for $3\SAT$ with $O(n)$ Clauses}
\label{section:2.7}

In our pursuit for tight reductions, the standard reduction from a $3\SAT$ instance with $n$ variables and $m$ clauses to instances of our graph problems with $O(n+m)$ vertices would not be sufficient without some modifications. The modification that gets us over the line is the \textit{Sparsification Lemma} \citep{Impagliazzo2001} which is used very frequently in the world of fine-grained complexity to get tight and optimal reductions.

\begin{lemma}
\label{lemma:6}

\textit{\textbf{The Sparsification Lemma.}}\\
For every $\epsilon > 0$, we have an algorithm taking any $k\SAT$ instance $\phi$ on $n$ variables and outputting $t$ $k\SAT$ instances $\left( \phi_i \right)_{i = 1}^t$ such that $t = O^{*}(2^{\epsilon n})$, each $\phi_i$ has $O(n)$ clauses, and $\phi$ is satisfiable if and only if at least one of $\left( \phi_i \right)_{i = 1}^t$ is satisfiable. Moreover, this algorithm runs in $O^{*}(2^{\epsilon n})$-time.

\end{lemma}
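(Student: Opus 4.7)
\medskip

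\noindent\textbf{Proof proposal.} The plan is to follow the Impagliazzo--Paturi--Zane strategy of iteratively ``sparsifying'' a $k$SAT instance by recursively branching on sets of literals that occur together in too many clauses. The output of the procedure will be a disjunction of many sparse $k$SAT formulas $\phi_1, \ldots, \phi_t$ such that $\phi$ is satisfiable iff at least one $\phi_i$ is. The branching is driven by thresholds $f_k > f_{k-1} > \cdots > f_1$ (to be chosen as a function of $\epsilon$ and $k$), where $f_j$ will be the maximum tolerated ``frequency'' of a $j$-set of literals.

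\medskip

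\noindent\textbf{The recursive step.} Say that a set $Y$ of $j$ literals is \emph{$f_j$-frequent} in a formula $\psi$ if there are at least $f_j$ clauses of $\psi$ each of which contains every literal in $Y$. While some $j$-set $Y \subseteq \mathrm{Lit}(\psi)$ is $f_j$-frequent (picking $j$ as large as possible), I would branch into two sub-instances:
\begin{enumerate}
\item The \emph{all-false branch}: force every literal of $Y$ to $0$. Under this constraint, every clause containing $Y$ has its $|Y|$ literals from $Y$ removed, turning $k$-clauses into $(k{-}j)$-clauses, after which $Y$ is no longer $f_j$-frequent. (If any clause collapses to the empty clause, the branch is infeasible.)
\item The \emph{at-least-one-true branch}: add the new clause $\bigvee_{y \in Y} y$ to $\psi$, recording the constraint that at least one literal in $Y$ is $1$.
\end{enumerate}
When no $j$-set with $j \in \{1,\dots,k\}$ is $f_j$-frequent in the current formula, I stop and output it as one of the leaves $\phi_i$. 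Correctness (the ``iff'' in the lemma) is immediate since the two branches exhaustively partition the space of satisfying assignments (either all literals of $Y$ are false or at least one is true), and the modifications in each branch preserve satisfiability restricted to that partition.

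\medskip

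\noindent\textbf{Bounding the number of clauses per leaf.} At a leaf, no $j$-set is $f_j$-frequent for any $j$. This gives a recursive bound on the number of clauses: the number of $k$-clauses is at most $n \cdot f_k$ (each literal appears in at most $f_1$ clauses overall, and more generally a counting argument on $(j{-}1)$-subsets of the clauses pushes the bound down to $O(n)$ with the constant hidden in the $O$ depending on $f_1,\dots,f_k$). Hence, choosing $f_1,\dots,f_k$ as suitable constants (depending on $\epsilon$ and $k$), every leaf formula has $O(n)$ clauses as required.

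\medskip

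\noindent\textbf{Bounding the number of leaves.} This is the main obstacle, and it is where the choice of thresholds has to be tuned carefully. The key observation is that along any root-to-leaf path, the ``at-least-one-true'' branches add clauses of decreasing lengths, and the ``all-false'' branches shrink clause lengths; a careful potential/weight argument (assigning a cost that depends on $j$ to each branching step and summing along the path) shows that no path can perform too many branchings before the formula becomes sparse. By choosing $f_k, \ldots, f_1$ to grow fast enough in a layered way, I can ensure that the total number of leaves of the recursion tree is at most $2^{\epsilon n}$. The running time of producing the tree is dominated by the number of leaves times the polynomial-time cost of locating a frequent set and performing a branching step, giving $O^{*}(2^{\epsilon n})$ overall. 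The delicate part of the argument is the simultaneous choice of the thresholds $f_j$ so that both the leaf-count bound and the $O(n)$-clauses bound hold for the same parameters; I would carry out this optimization last, working backwards from $\epsilon$.
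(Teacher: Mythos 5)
First, a framing point: the paper does not prove the Sparsification Lemma at all---it imports it as a black box from \cite{Impagliazzo2001}---so there is no in-paper proof to line your attempt up against. What follows evaluates your sketch against the known Impagliazzo--Paturi--Zane argument.

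Your general plan (recursive branching on a set of literals $Y$ that appears as a common sub-clause of many clauses, stopping when no such $Y$ is frequent) is the right one, and the ``iff'' correctness claim is fine since the two branches partition the assignment space. But there is a genuine gap in your second branch, and it is not cosmetic. Adding the clause $\bigvee_{y\in Y}y$ alone accomplishes nothing: you have introduced one more clause and removed none, so $Y$ is still $f_j$-frequent in the resulting formula and the algorithm would branch on the very same $Y$ again, forever. The step you have omitted is that once $\bigvee_{y\in Y}y$ is present, every clause $C$ with $Y\subseteq C$ is \emph{subsumed} by the new clause and must be \emph{deleted}. That deletion is what makes progress along branch~2 and is the load-bearing step of the IPZ procedure; without it the recursion does not terminate, and the bounds on clauses per leaf and on the number of leaves cannot even be formulated. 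A smaller issue in the same direction: IPZ actually locate a \emph{sunflower}---a large family of clauses with pairwise-disjoint residues around the common heart $Y$---and the disjointness enters their counting; ``many clauses containing $Y$'' is weaker and not obviously interchangeable.

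The second gap is that the leaf count $t = O^{*}(2^{\epsilon n})$, which \emph{is} the Sparsification Lemma, is not established. You are right that the thresholds $f_1,\dots,f_k$ must be tuned jointly and that some potential/weight argument is needed along root-to-leaf paths, but that argument is exactly where all the work in \cite{Impagliazzo2001} sits. As written, your sketch names no potential function, does not relate the cost of a level-$j$ branching step to a quantified decrease in the potential, and does not show how the branching factor of $2$ per step composes with the path-length bound to give $2^{\epsilon n}$ leaves. To repair the sketch you would need to (i) restore the subsumed-clause deletions in branch~2, and (ii) carry out the leaf-count induction explicitly---or, as this paper does, simply cite \cite{Impagliazzo2001}.
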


Due to this, we have the following very useful corollary.

\begin{corollary}
\label{corollary:1}

If $\ETH$ is true, then for all $\delta > 0$, there is no algorithm taking $2^{o(n)}$-time deciding $3\SAT$ with $O(n)$ clauses, where $n$ is the number of variables.
\end{corollary}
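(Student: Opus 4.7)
The plan is a standard contradiction argument that uses the Sparsification Lemma (Lemma \ref{lemma:6}) as a black box; the statement should be read as ``there is no $2^{o(n)}$-time algorithm for $3\SAT$ with $O(n)$ clauses'' (the quantifier ``for all $\delta > 0$'' is vestigial and plays no role in the proof). First, I would assume for contradiction that such an algorithm $A$ exists, running in time $2^{o(n)}$ on $3\SAT$ instances with $n$ variables and $O(n)$ clauses. The goal is then to build, for every $\epsilon > 0$, an algorithm for general $3\SAT$ running in time $2^{\epsilon n + o(n)}$, which contradicts $\ETH$.

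Next, given an arbitrary $\epsilon > 0$ and a general $3\SAT$ instance $\phi$ on $n$ variables, I would apply the Sparsification Lemma with parameter $\epsilon/2$ to produce, in time $O^{*}\bigl(2^{(\epsilon/2) n}\bigr)$, a family $\left(\phi_i\right)_{i=1}^{t}$ of at most $t = O^{*}\bigl(2^{(\epsilon/2) n}\bigr)$ instances on $n$ variables, each with $O(n)$ clauses, such that $\phi$ is satisfiable if and only if some $\phi_i$ is. I would then run $A$ on each $\phi_i$ and output YES if any accepts. The total running time is
\begin{equation*}
O^{*}\bigl(2^{(\epsilon/2) n}\bigr) \cdot 2^{o(n)} \;+\; O^{*}\bigl(2^{(\epsilon/2) n}\bigr) \;=\; 2^{(\epsilon/2) n + o(n)} \;\leq\; 2^{\epsilon n}
\end{equation*}
for all sufficiently large $n$, where the $2^{o(n)}$ factor comes from $A$'s running time on inputs of size $O(n)$ (in particular $o(n)$ in the number of clauses is also $o(n)$ in the number of variables, since the clauses are $O(n)$).

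Finally, since $\epsilon > 0$ was arbitrary, this yields a $3\SAT$ algorithm running in time $2^{\epsilon n}$ for every $\epsilon > 0$, which directly contradicts $\ETH$, i.e., the existence of some $\gamma > 0$ below which no $3\SAT$ algorithm can go. Hence the assumed algorithm $A$ cannot exist, establishing the corollary.

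The main (and only) subtlety is the bookkeeping for the polynomial factors hidden in $O^{*}$ and for the fact that each $\phi_i$ still has $n$ variables (not fewer) so that applying $A$ on it really does take $2^{o(n)}$ time in the same parameter; both are immediate from the statement of Lemma \ref{lemma:6}, so there is no substantial obstacle beyond invoking the lemma correctly.
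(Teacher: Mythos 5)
Your proposal is correct and matches the paper's own proof: both invoke the Sparsification Lemma to turn a general $3\SAT$ instance into $O^{*}(2^{\epsilon n})$ linear-size instances, run the hypothetical $2^{o(n)}$-time algorithm on each, and contradict $\ETH$ since $\epsilon$ is arbitrary. The paper states this in two sentences; your version just adds the (correct) bookkeeping about $O^{*}$ factors and the observation that the constant in $O(n)$ clauses does not spoil the $2^{o(n)}$ bound, which the paper also notes immediately after the corollary.
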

\begin{proof}
Due to the Sparsification Lemma (\ref{lemma:6}), for any $3\SAT$ instance, we can get $O^{*}(2^{\epsilon n})$ linear instances in $O^{*}(2^{\epsilon n})$-time. If we could solve each of these in $2^{o(n)}$-time, then we would have an $O^{*}(2^{\epsilon n + o(n)})$-time algorithm for $3\SAT$ for every $\epsilon > 0$, violating $\ETH$.
\end{proof}

Note that the constant $c$ we get from the $cn = O(n)$ clauses from the algorithm of the Sparsification lemma does not hinder us since $2^{o(cn)} = 2^{o(n)}$ for all constant $c > 0$.

\section{Problem Formulation}
\label{section:3}

This section defines and gives formulations for the following two problems.

\begin{enumerate}

\item Counting the number of \textit{Hamiltonian cycles} over $Z_p$ in a \textit{directed multigraph}.

\item Counting the number of \textit{half-cliques} over $Z_p$ in an \textit{undirected multigraph}.

\end{enumerate}

\subsection{Counting the Number of Hamiltonian Cycles Over $Z_p$ in a Directed Multigraph}
\label{section:3.1}

Let $D_{n, p} = \left( V_n, E_{n, p} \right)$ be a directed multigraph. The vertex set of $D_{n, p}$ is given as $V_n = [n]$. The adjacency matrix of $D_{n, p}$ is given as an ordered $n^2$-tuple,
\begin{equation}
\label{eq:4}
E_{n, p} = \left( \left( e_{(i, j)} \right)_{j \in [n]} \right)_{i \in [n]}, \quad e_{(i, j)} \in \mathbb{Z}_p, \quad \forall (i, j) \in [n]^2.
\end{equation}

We consider the problem of counting over $\mathbb{Z}_p$, the number of unique Hamiltonian cycles in $D_{n, p}$, with the number of edges from the vertex $i$ to the vertex $j$ from $\mathbb{Z}_p$. Using Equation \eqref{eq:4}, consider the polynomial $HCY_{n, p}: \mathbb{Z}_p^{n^2} \to \mathbb{Z}_p$ defined as
\begin{equation}
\label{eq:5}
\HCY_{n, p} \left( E_{n, p} \right) = \sum_{\sigma \in C_n} \prod_{i \in [n]} e_{(i, \sigma(i))}, 
\end{equation}
where $C_n$ is the set of cycles of length $n$ in the symmetric group $S_n$. One can easily see that the polynomial $\HCY_{n, p}$ is exactly equal to the number of Hamiltonian cycles modulo $p$ on the directed multigraph $D_{n, p}$.

We define the language $\dHAMCYCLE$ as the set of all \textit{directed graphs} having at least one Hamiltonian cycle, and the language $\dHAMCYCLE_p$ as the set of all \textit{directed multigraphs} having a non-zero number of Hamiltonian cycles over $\mathbb{Z}_p$.

\begin{definition}
\label{def:6}

\textit{\textbf{The Language $\dHAMCYCLE$.} \\
Let $D_n = \left( V_n, E_n \right)$ be a directed graph with $V_n = [n]$. The language $\dHAMCYCLE$ is defined as 
\begin{equation*}
\dHAMCYCLE = \{\, D_n \mid n \in \mathbb{N}, \exists \pi \in C_n, \left( D_n \right)_\pi \text{ is a cycle cover of } D_n \,\}.
\end{equation*}
}

\end{definition}

\begin{definition}
\label{def:7}

\textit{\textbf{The Language $\dHAMCYCLE_p$.} \\
Let $D_{n, p} = \left( V_n, E_{n, p} \right)$ be a directed multigraph with $V_n = [n]$ such that the number of vertices from any vertex $i$ to any vertex $j$ is from $\mathbb{Z}_p$. The language $\dHAMCYCLE_p$ is defined as 
\begin{equation*}
\dHAMCYCLE_p = \{\, D_{n, p} \mid n \in \mathbb{N}, |\{\, \pi \mid \pi \in C_n, \left( D_{n, p} \right)_\pi \text{ is a cycle cover of } D_{n, p} \,\}| \not \equiv 0 \mod p \,\}.
\end{equation*}
}

\end{definition}

\subsection{Counting the Number of Half-Cliques Over $Z_p$ in an Undirected Multigraph}
\label{section:3.2}

Let $U_{n, p} = \left( V_n, F_{n, p} \right)$ be an undirected multigraph. The vertex set of $U_{n, p}$ is given as $V_n = [n]$. The adjacency matrix of $U_{n, p}$ is given as an ordered $\displaystyle \binom{n}{2}$-tuple,
\begin{equation}
\label{eq:6}
F_{n, p} = \left( \left( e_{\{\, i, j \,\}} \right)_{j \in [n]} \right)_{1 \leq i < j}, \quad e_{\{\, i, j \,\}} \in \mathbb{Z}_p, \quad \forall (i, j) \in [n]^2.
\end{equation}

We consider the problem of counting over $\mathbb{Z}_p$, the number of cliques of size $\lfloor n / 2 \rfloor$ in $U_{n, p}$, with the number of edges between any two vertices $i$ and $j$ from $\mathbb{Z}_p$. Each clique is a collection of $\lfloor n / 2 \rfloor$ vertices and one edge between each pair of vertices. Using Equation \eqref{eq:6}, consider the polynomial $\HCL_{n, p}: \mathbb{Z}_p^{\binom{n}{2}} \to \mathbb{Z}_p$ defined as
\begin{equation}
\label{eq:7}
\HCL_{n, p} \left( F_{n, p} \right) = \sum_{S \in \binom{[n]}{\lfloor n / 2 \rfloor}} \prod_{\substack{(i, j) \in S^2 \\ 1 \leq i < j}} e_{\{\, i, j \,\}}, 
\end{equation}
where $S$ is a subset of $[n]$ of size $\lfloor n / 2 \rfloor$. We can easily see that the polynomial $\HCL_{n, p}$ is exactly equal to the number of cliques of size $\lfloor n / 2 \rfloor$ in the undirected multigraph $U_{n, p}$.

We define the language $\HALF$ as the set of all \textit{undirected graphs} having at least one half-clique, and the language $\HALF_p$ as the set of all \textit{undirected multigraphs} having a non-zero number of half-cliques over $\mathbb{Z}_p$.

\begin{definition}
\label{def:8}

\textit{\textbf{The Language $\HALF$.} \\
Let $U_n = \left( V_n, F_n \right)$ be an undirected graph with $V_n = [n]$. The language $\HALF$ is defined as 
\begin{equation*}
\HALF = \left \{\, U_n \mid n \in \mathbb{N}, K_{\lfloor n / 2 \rfloor} \text{ is a subgraph of } U_n \, \right \}.
\end{equation*}
}

\end{definition}

\begin{definition}
\label{def:9}

\textit{\textbf{The Language $\HALF_p$.} \\
Let $U_{n, p} = \left( V_n, F_{n, p} \right)$ be an undirected multigraph with $V_n = [n]$ such that the number of vertices between any vertices $i$ and $j$ is from $\mathbb{Z}_p$. The language $\HALF_p$ is defined as 
\begin{equation*}
\HALF_p = \left \{\, U_{n, p} \mid n \in \mathbb{N}, |\{\, K_{\lfloor n / 2 \rfloor} \mid K_{\lfloor n / 2 \rfloor} \text{ is a subgraph of } U_{n, p} \,\}| \not \equiv 0 \mod p \, \right \}.
\end{equation*}
}

\end{definition}

\section{$\NP$-Hardness of Counting Problems on Multigraphs Under Randomized Reductions}
\label{section:4}

We now show randomized reductions from the $\NP$-complete version of our problems ($\dHAMCYCLE$ and $\HALF$) to the ``counting on multigraphs'' version of our problems ($\dHAMCYCLE_p$ and $\HALF_p$), respectively. As can be seen in both cases, counting modulo $p$ is equivalent to evaluating a nice multilinear polynomial over $\mathbb{Z}_p$ (Equations \eqref{eq:5} and \eqref{eq:7}, respectively).

\subsection{$\NP$-Hardness of $\dHAMCYCLE_p$ Under Randomized Reductions}
\label{section:4.1}

Suppose that, as defined in Section \ref{section:2}, we have a directed multigraph $D_{n, p} = \left( V_n, E_{n, p} \right)$ on $n$ vertices, with an edge list $E_{n, p}$ of length $n^2$, each entry being an integer from $\mathbb{Z}_p$, which can also be seen as a string in $\mathbb{Z}_p^{n^2}$. Now, consider the polynomial $\HCY_{n, p}$ as defined in Equation \eqref{eq:5}. One can see that counting the number of unique Hamiltonian cycles modulo $p$ is precisely equivalent to evaluating the polynomial $\HCY_{n, p}$. If they both take the same input formats, an algorithm or circuit evaluating the polynomial is also an algorithm or circuit counting on graphs. We will now prove that doing either is $\NP$-hard under randomized reduction from $\dHAMCYCLE$.

\begin{lemma}
\label{lemma:7}

For any prime $p > n + 1$, there is a polynomial-time randomized reduction from an instance $D_n = \left( V_n, E_n \right)$ of $\dHAMCYCLE$ on $n$ vertices to a directed multigraph $D_{n, p} = \left( V_n, E_{n, p} \right)$ on $n$ vertices as defined above, such that:
\begin{enumerate}
\item If $D_n$ does not have a directed Hamiltonian cycle, then $D_{n, p}$ has $0$ Hamiltonian cycles modulo $p$ with a probability of $1$.
\item If $D_n$ has a directed Hamiltonian cycle, then $D_{n, p}$ has $0$ Hamiltonian cycles modulo $p$ with a probability of at most $n / (p-1)$.
\end{enumerate}

\end{lemma}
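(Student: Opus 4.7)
The plan is to construct the multigraph $D_{n, p}$ by treating the polynomial $\HCY_{n, p}$ of Equation~\eqref{eq:5} as a probe of the Hamiltonian structure of $D_n$, and then appeal directly to the Schwartz-Zippel Lemma (Lemma~\ref{lemma:1}). Concretely, given $D_n = (V_n, E_n)$, I will define $D_{n, p} = (V_n, E_{n, p})$ by the randomized rule
\begin{equation*}
e_{(i, j)} \gets_r \mathbb{Z}_p \setminus \{\, 0 \,\} \text{ if } (i, j) \in E_n, \qquad e_{(i, j)} = 0 \text{ otherwise}.
\end{equation*}
This construction clearly runs in polynomial time and produces a multigraph whose nonzero-edge support is exactly $E_n$.

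The ``no'' direction is essentially by inspection. If $D_n \notin \dHAMCYCLE$, then for every $\sigma \in C_n$ at least one pair $(i, \sigma(i))$ fails to lie in $E_n$, so the corresponding product in Equation~\eqref{eq:5} has a factor equal to $0$. Hence $\HCY_{n, p}(E_{n, p}) = 0$ deterministically, matching the first bullet.

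For the ``yes'' direction, I would view $\HCY_{n, p}$, after hard-wiring the non-edge variables to $0$, as a polynomial $Q$ in the $|E_n|$ remaining indeterminates $\{\, e_{(i, j)} \mid (i, j) \in E_n \,\}$. The key observation is that $Q$ is a nonzero polynomial of total degree $n$: each surviving $\sigma \in C_n$ that picks only edges of $E_n$ contributes the monomial $\prod_{i \in [n]} e_{(i, \sigma(i))}$, and distinct permutations produce distinct monomials because $\sigma$ is recoverable from the set $\{\, (i, \sigma(i)) \,\}_{i \in [n]}$, so no cancellation is possible. Since $D_n \in \dHAMCYCLE$ guarantees at least one such $\sigma$, $Q$ is a nontrivial polynomial of degree $n$.

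The conclusion then falls out from Lemma~\ref{lemma:1} applied with $\mathcal{D} = \mathbb{Z}_p \setminus \{\, 0 \,\}$, so that $|\mathcal{D}| = p - 1$: sampling the edge-variable values uniformly from $\mathcal{D}$ makes $Q$ evaluate to $0$ with probability at most $n / (p - 1)$, and this is exactly $\HCY_{n, p}(E_{n, p})$. I expect no real obstacle beyond verifying the no-cancellation claim; the only subtle point is that we must sample from $\mathbb{Z}_p^{\ast}$ rather than all of $\mathbb{Z}_p$ both to hit the advertised bound and to ensure that the sampled multigraph faithfully preserves the edge set of $D_n$ (no ``accidentally deleted'' edges), and the hypothesis $p > n + 1$ is what keeps the bound $n / (p - 1)$ strictly less than $1$.
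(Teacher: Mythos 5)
Your proposal is essentially the same as the paper's proof: both set non-edge variables to $0$, sample edge variables uniformly from $\mathbb{Z}_p \setminus \{\, 0 \,\}$, observe that the restricted polynomial vanishes identically when $D_n$ has no Hamiltonian cycle, and invoke the Schwartz-Zippel Lemma (\ref{lemma:1}) with $\mathcal{D} = \mathbb{Z}_p^\ast$ to get the $n/(p-1)$ bound in the other case. Your explicit remark that the surviving monomials are distinct (so no cancellation can make $Q$ vanish identically) is a welcome clarification of a step the paper states only implicitly, but it does not change the argument's substance or route.
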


\begin{proof}

The randomized reduction proceeds as follows. For each $(i, j) \in [n]^2$ such that there is no edge from $i$ to $j$ in $D_n$, for the edge list $E_{n, p}$ of $D_{n, p}$, we set $e_{(i, j)} = 0$. Whenever there is an edge from $i$ to $j$, we select a random value $r$ from $1$ to $p-1$ and set $e_{(i, j)} = r$.
    
Notice that if $D_n$ has no Hamiltonian cycles, there are none in for $D_{n, p}$. For every $\sigma \in C_n$, there exists an $i \in [n]$ such that $e_{(i, \sigma(i))} = 0$. This makes $\HCY_{n, p} \left( E_{n, p} \right)$ uniformly $0$ since every monomial has a variable set to $0$. With a probability of $1$, the evaluation of $\HCY_{n, p} \left( E_{n, p} \right)$ is $0$.
    
On the other hand, if $D_n$ does contain a Hamiltonian cycle, then there is a $\sigma \in C_n$ such that for all $i \in [n]$, $e_{(i, \sigma(i))} \neq 0$. The value of $\HCY_{n, p} \left( E_{n, p} \right)$ is not uniformly $0$ considering the randomness. Here, we use the Schwartz-Zippel Lemma (\ref{lemma:1}), showing that 
\begin{equation*}
\mathcal{P}_{\text{non-zero entries of }E_{n, p} \leftarrow_r [p - 1]^m} \left[ \HCY_{n, p} \left( E_{n, p} \right) = 0 \right] \leq \frac{n}{p - 1},
\end{equation*}
where $m$ is the number of edges in $D_n$, completing our proof.

\end{proof}

Already, notice that if there is a polynomial-time randomized algorithm counting the number of unique Hamiltonian cycles on directed multigraphs modulo $p$ for $p > n + 1$, then immediately, $\NP = \RP$. Due to the reduction in \cite{Arora2009}, there is a reduction from $3\SAT$ on $n^\prime$ variables and $m^\prime = O \left( n^\prime \right)$ clauses to a $\dHAMCYCLE$ instance on $O\left( n^\prime + m^\prime \right) = O \left( n^\prime \right) = n$ vertices. The best algorithms to solve $\dHAMCYCLE$ take $2^{O(n)}$-time, and due to the Sparsification Lemma (\ref{lemma:2}) and Corollary \ref{corollary:1}, if $\ETH$ is true, there should be no algorithm deciding $\dHAMCYCLE$ in $2^{n^{1-\delta}} = 2^{o(n)} = 2^{o \left( n^\prime \right)}$-time for any $\delta > 0$. 

We can prove stronger results for the problem of counting unique directed Hamiltonian cycles modulo $p$. As we will see in Section \ref{section:6}, for $p > 2^{\polylog(n)}$, one can not hope to have an algorithm that is correct on even a $1/2^{\polylog(n)}$-fraction of instances in $2^{n^{1-\epsilon}}$-time for any $\epsilon > 0$. In other words, the naive algorithm that takes $O^{*}(n!) = O^{*}(2^{n\log n})$-time to try every permutation and add the number of cycles in these positions is optimal, in one sense.

\subsection{$\NP$-Hardness of $\HALF_p$ Under Randomized Reductions}
\label{section:4.2}

The proof we present in this section proceeds very similarly to the proof for counting unique directed Hamiltonian cycles (Lemma \ref{lemma:7}). We are given an undirected multigraph $U_{n, p} = \left( V_n, F_{n, p} \right)$ on $n$ vertices such that $F_{n, p} \in \mathbb{Z}_p^{\binom{n}{2}}$. Suppose we have the polynomial $HCL_{n, p} \left( F_{n, p} \right)$ as defined in Equation \eqref{eq:7}. Computing $HCL_{n, p} \left( F_{n, p} \right)$ does not only evaluate this polynomial but also counts the number of unique half-cliques modulo $p$. Similar to Lemma \ref{lemma:7}, we state and prove that $\HALF_p$ is $\NP$-hard under randomized reductions due to a reduction from the $\NP$-complete language $\HALF$.

\begin{lemma}
\label{lemma:8}

For any prime $p > \displaystyle \binom{n}{2} + 1$, there is a polynomial-time randomized reduction from an instance $U_n = \left( V_n, F_n \right)$ of $\HALF$ on $n$ vertices to an instance $U_{n, p} = \left( V_n, F_{n, p} \right)$ of $\HALF_p$ on $n$ vertices as defined above, such that:

\begin{enumerate}

\item If $U_n$ does not have a clique of size $\lfloor n/2 \rfloor$, then $U_{n, p}$ has $0$ unique $\lfloor n/2 \rfloor$-sized cliques modulo $p$ with a probability of $1$.

\item If $U_n$ has a clique of size $\lfloor n/2 \rfloor$, then $U_{n, p}$ has $0$ unique $\lfloor n/2 \rfloor$-sized cliques modulo $p$ with a probability of at most $\displaystyle \binom{n}{2} / (p-1)$.

\end{enumerate}

\end{lemma}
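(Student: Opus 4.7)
The plan is to mirror the randomized reduction used in Lemma \ref{lemma:7}, replacing the role of the Hamiltonian cycle polynomial $\HCY_{n,p}$ with the half-clique polynomial $\HCL_{n,p}$ defined in Equation \eqref{eq:7}. Given an instance $U_n = (V_n, F_n)$ of $\HALF$, I would construct $U_{n,p} = (V_n, F_{n,p})$ as follows: for each unordered pair $\{i,j\}$ with $1 \leq i < j \leq n$, if $\{i,j\} \notin F_n$, set $e_{\{i,j\}} = 0$; otherwise, sample $e_{\{i,j\}}$ uniformly and independently from $[p-1]$. This gives a polynomial-time randomized construction that outputs a valid multigraph instance.

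Next, I would argue the two cases of the lemma separately. If $U_n$ has no $\lfloor n/2 \rfloor$-clique, then for every $S \in \binom{[n]}{\lfloor n/2 \rfloor}$ there is some pair $\{i,j\} \subset S$ with $\{i,j\} \notin F_n$, so $e_{\{i,j\}} = 0$ and the corresponding monomial in Equation \eqref{eq:7} vanishes. Since this is true for every subset $S$, $\HCL_{n,p}(F_{n,p}) \equiv 0 \pmod{p}$ deterministically. If, on the other hand, $U_n$ does contain a clique $S^*$ of size $\lfloor n/2 \rfloor$, then the monomial indexed by $S^*$ involves only variables that are drawn uniformly from $[p-1]$, so $\HCL_{n,p}$, restricted to its free (non-zero) edge variables, is not the identically zero polynomial over $\mathbb{Z}_p$. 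Its total degree in the free variables is at most $\binom{\lfloor n/2 \rfloor}{2} \leq \binom{n}{2}$.

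The final step is to bound the probability of a false negative via the Schwartz-Zippel Lemma (Lemma \ref{lemma:1}) applied to the restricted polynomial with domain $\mathcal{D} = [p-1]$:
\begin{equation*}
\mathcal{P}\left[\HCL_{n,p}(F_{n,p}) \equiv 0 \pmod{p}\right] \leq \frac{\binom{n}{2}}{p-1},
\end{equation*}
which is exactly the bound claimed in the lemma. The condition $p > \binom{n}{2} + 1$ ensures this probability is strictly less than $1$, giving a meaningful one-sided error.

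There is no significant obstacle here; the proof is a structural twin of Lemma \ref{lemma:7}, with the symmetric-group sum over $C_n$ of cycle-length $n$ replaced by the sum over subsets $S \in \binom{[n]}{\lfloor n/2 \rfloor}$, and the degree bound $n$ replaced by $\binom{n}{2}$. The only point worth being careful about is the degree accounting, since $\HCL_{n,p}$ is a polynomial in $\binom{n}{2}$ variables rather than $n^2$; one must confirm that the Schwartz-Zippel bound $d/|\mathcal{D}|$ applies with $d = \binom{\lfloor n/2 \rfloor}{2}$ (which is loosened to $\binom{n}{2}$ for a clean statement), and that the coefficient of each squarefree monomial is $1$ so that the restricted polynomial is genuinely non-zero whenever a combinatorial half-clique exists.
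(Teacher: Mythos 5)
Your proposal is correct and follows essentially the same approach as the paper: assign $0$ to missing edges, assign uniformly random non-zero values to present edges, observe vanishing is deterministic in the no-clique case, and apply Schwartz-Zippel in the yes-clique case. Your remarks about the actual degree being $\binom{\lfloor n/2 \rfloor}{2}$ and about the monomial coefficients being $1$ are accurate refinements but do not change the argument.
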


\begin{proof}

We construct the edge list $F_{n, p}$ of $U_{n, p}$ as follows. For every $(i, j)$, with $1 \leq i < j \leq n$ and $\{\, i, j \,\} \in F_n$, we choose a random non-zero value from $\mathbb{Z}_p$ for the entry $e_{\{\, i, j \,\}}$ in $F_{n, p}$. If $\{\, i, j \,\} \not\in F_{n, p}$, then we assign the value $0$ to $e_{\{\, i, j \,\}}$.

Notice that if $U_n$ does not have a clique of size $\lfloor n/2 \rfloor$, then for every set $S \in \displaystyle \binom{[n]}{\lfloor n/2 \rfloor}$, we have $\prod_{(i, j) \in S^2, 1 \leq i < j}^n e_{\{\, i, j \,\}} = 0$, resulting in $\HCL_{n, p} \left( F_{n, p} \right)$ being uniformly zero. However, if $U_n$ does contain a half-clique, then there is at least one set $S \in \displaystyle \binom{[n]}{\lfloor n/2 \rfloor}$ such that $\prod_{(i, j) \in S^2, 1 \leq i < j} e_{\{\, i, j \,\}} \neq 0$. Since the polynomial $\HCL_{n, p} \left( F_{n, p} \right)$ is not uniformly $0$, due to the Schwartz-Zippel Lemma (\ref{lemma:1}), we have
\begin{equation*}
\mathcal{P}_{\text{non-zero entries of } F_{n, p} \leftarrow_r (p - 1)^m} \left[ \HCL_{n, p} \left( F_{n, p} \right) = 0 \right] \leq \frac{\displaystyle \binom{n}{2}}{p-1},
\end{equation*}
where $m$ is the number of edges in $U_n$.

\end{proof}

We include a proof in Appendix \ref{appendix:A} showing that $\HALF$ is $\NP$-complete, with a reduction from $\CLIQUE$. There is a reduction from $3\SAT$ to $\CLIQUE$, taking a $3\SAT$ instance with $n$ variables and $m$ clauses and giving us a $\CLIQUE$ instance with $O(n+m)$ vertices. We also have a reduction from a $\CLIQUE$ instance with $n$ vertices to a $\HALF$ instance with $O(n)$ vertices. Due to the reduction from $3\SAT$ to $\HALF$, giving us a graph of size $O(n)$ from a $3\SAT$ instance with $n$ variables and $O(n)$ clauses, along with our discussion on the $\ETH$-hardness of $3\SAT$ on $O(n)$ clauses (Section \ref{section:2.7}), there should be no $2^{o(n)}$-time algorithm for $\HALF$ and, subsequently, for $\HALF_p$. However, note that there is a naive $O^* \left( 2^{n} \right)$-time\footnote{Interestingly, ${n \choose \lfloor n/2 \rfloor} = \Theta \left( 2^n / \sqrt{n} \right)$, due to Stirling's approximation, but this does not give us any real savings in analysis.} algorithm for counting unique half-cliques modulo $p$. We show in Section \ref{section:7} that for sufficiently large $p$, growing as $\Theta \left( 2^{\polylog(n)} \right)$, any $2^{o(n)}$-time algorithm can be correct on a $\left( 1 / 2^{\polylog(n)} \right)$-fraction of instances unless $\rETH$ is not true.

\section{Warmup: Hardness Against Circuits}
\label{section:5}

Before moving on to the main results, we will demonstrate a simple technique to show rare-case hardness against polynomial-sized circuit families \citep{Tejas2024, Cai1999} for these problems and many others. We stress that this result works for these problems and many others. Suppose one has a polynomial $h$ which is ``$\NP$-hard'' to evaluate due to a randomized polynomial-time reduction from $\NP$, similar to those in Lemmas \ref{lemma:7} and \ref{lemma:8}. In that case, this result applies to $h$ as well. We now state the theorem and prove it.

\begin{theorem}
\label{theorem:5}

\textbf{Rare-Case Hardness of $\HCY_{n, p}$ and $\HCL_{n, p}$ Against Polynomial-Sized Circuit Families.} \\
If $\NP \not\subset \PPOLY$, then for every $\alpha > 0$, for every prime $p$ such that $\Omega(2^{\polylog(n)}) \leq p \leq O(2^{\poly(n)})$, for any circuit family $\{\, C_n \,\}_{n \in \mathbb{N}}$\footnote{$C_n$ takes the input $E_{n, p}$ of size $\left| E_{n, p} \right| = n^2 \log p$, where $n$ is the number of vertices in the multigraph.} that attempts to count the number of unique Hamiltonian cycles in directed multigraphs over $\mathbb{Z}_p$, given $E_{n, p}$ (Equation \eqref{eq:4}), we have
\begin{equation*}
\mathcal{P}_{E_{n, p} \leftarrow_r \mathbb{Z}_p^{n^2}} \left[ C_n \left( E_{n, p} \right) = \HCY_{n, p} \left( E_{n, p} \right) \right] < \frac{1}{n^{\alpha}},
\end{equation*}
and similarly, for any circuit family $\left \{\, C^\prime_n \, \right \}_{n \in \mathbb{N}}$\footnote{$C^\prime_n$ takes the input $F_{n, p}$ of size $\left| F_{n, p} \right| = \binom{n}{2} \log p$, where $n$ is the number of vertices in the multigraph.} that attempts to count the number of unique half-cliques on undirected multigraphs over $\mathbb{Z}_p$, given $F_{n, p}$ (Equation \eqref{eq:6}), we have
\begin{equation*}
\mathcal{P}_{F_{n, p} \leftarrow_r \mathbb{Z}_p^{n \choose 2}} \left[ C^\prime_n(F_{n, p}) = \HCL_{n, p} \left( F_{n, p} \right) \right] < \frac{1}{n^{\alpha}},
\end{equation*}
for infinitely many $n$, depending on $\alpha$ and the size-bound of the circuit family.

\end{theorem}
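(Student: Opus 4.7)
The plan is to prove the contrapositive of Theorem~\ref{theorem:5}: assuming the statement fails, I would derive $\dHAMCYCLE \in \PPOLY$ (respectively $\HALF \in \PPOLY$), contradicting $\NP \not\subset \PPOLY$. I will focus on the $\HCY$ case; the $\HCL$ case is entirely analogous. Concretely, suppose for some polynomial size-bound $s$, some $\alpha > 0$, and every sufficiently large $n$, there is a circuit $C_n$ of size at most $s(n)$ with $\mathcal{P}_{E_{n,p}}[C_n(E_{n,p}) = \HCY_{n,p}(E_{n,p})] \geq 1/n^\alpha$. The strategy chains the STV list decoder (Lemma~\ref{lemma:2}) with the randomized $\NP$-hardness reduction of Lemma~\ref{lemma:7}, and derandomizes via non-uniformity.

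Since $\HCY_{n,p}$ has degree $n$ by Equation~\eqref{eq:5} and $p = \Omega(2^{\polylog(n)})$, the STV radius condition $1/n^\alpha > \sqrt{2n/p}$ holds with vast slack for large $n$. Feeding $C_n$ as oracle, Lemma~\ref{lemma:2} returns $O(n^\alpha)$ randomized circuits of size $\poly(n) \cdot s(n)$, each computing some degree-$n$ polynomial over $\mathbb{Z}_p$, one of which, call it $M^*_n$, agrees with $\HCY_{n,p}$ on any fixed input with error at most $1/2^{q(n)}$ for a fixed polynomial $q$. Independent repetition and majority voting yield a polynomial-size randomized circuit $\widetilde{M}_n$ whose per-input error is below $1/2^{n^2+2}$. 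Now compose $\widetilde{M}_n$ with Lemma~\ref{lemma:7}: given an $n$-vertex $\dHAMCYCLE$ instance $D_n$, randomize the present edges to uniform nonzero $\mathbb{Z}_p$-values to obtain $D_{n,p}$, evaluate $\widetilde{M}_n$ on the resulting edge list, and accept iff the answer is nonzero. The per-input error is at most $n/(p-1) + 1/2^{n^2+2} \leq 1/2^{n^2+1}$ for large $n$. Since there are only $2^{n^2}$ possible $n$-vertex directed graphs, a union bound produces a single fixing of all random bits --- those of the reduction, of $\widetilde{M}_n$, and of the choice among the $O(n^\alpha)$ STV outputs --- that is simultaneously correct on every instance. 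Hardwiring this fixing as non-uniform advice yields a polynomial-size deterministic circuit for $\dHAMCYCLE$ for every sufficiently large $n$, and the finitely many exceptional $n$ are absorbed as advice, giving $\dHAMCYCLE \in \PPOLY$. The $\HCL_{n,p}$ case follows the same template using Lemma~\ref{lemma:8} and noting that $\HCL_{n,p}$ has degree $\binom{\lfloor n/2 \rfloor}{2} = O(n^2)$, which still satisfies the STV radius condition comfortably against $p = \Omega(2^{\polylog(n)})$.

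This is a standard application of the STV-plus-$\NP$-hardness template of \cite{Cai1999} and \cite{Tejas2024}, so there is no deep obstacle. The only real care is parameter bookkeeping: verifying the STV radius against the correct degree (linear for $\HCY$, quadratic for $\HCL$), amplifying the per-input error of $\widetilde{M}_n$ below $2^{-n^2}$ so a single non-uniform fixing handles every input simultaneously, and checking that the Schwartz--Zippel error from the randomized reduction stays far below the same threshold. All of these slack inequalities hold with room to spare whenever $p \geq 2^{\polylog(n)}$, so no new idea is required beyond the standard template.
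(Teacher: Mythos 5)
Your overall approach is the same as the paper's: chain the STV list decoder on the hypothesized circuit, pick out the machine computing $\HCY_{n,p}$ (resp.\ $\HCL_{n,p}$), compose with the randomized $\NP$-hardness reductions of Lemma~\ref{lemma:7} (resp.\ Lemma~\ref{lemma:8}), and derandomize à la \cite{Adleman1978} by hardwiring the correct index and the lucky random string as non-uniform advice. The one cosmetic difference is that the paper packages the identification of the correct STV machine as a list of witness pairs $(E_i, \HCY_{n,p}(E_i))$ rather than just the index $j$; both are legitimate advice, and yours is marginally leaner.

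However, there is a genuine quantitative gap in your error accounting. You write that the per-input error is ``at most $n/(p-1) + 1/2^{n^2+2} \leq 1/2^{n^2+1}$ for large $n$,'' and then union-bound over the $2^{n^2}$ directed $n$-vertex graphs. This inequality requires $p - 1 \gtrsim n\,2^{n^2}$, but the theorem only guarantees $p = \Omega\!\left(2^{\polylog(n)}\right)$. For $p$ near the bottom of the allowed range, $n/(p-1)$ is hugely larger than $2^{-n^2}$, so the single Schwartz--Zippel trial of Lemma~\ref{lemma:7} does \emph{not} have the per-input error you need, and Adleman's union bound cannot be applied directly. The fix is standard but must be said: the reduction of Lemma~\ref{lemma:7} is one-sided (a no-instance always maps to $0$), so you repeat the randomization of $D_{n,p}$ independently $\Theta(n^2)$ times and accept if any trial yields a nonzero count. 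This drives the completeness error to $(n/(p-1))^{\Theta(n^2)} < 2^{-n^2}$ whenever $p > 2n$, after which the union bound and Adleman fixing go through. The same amplification is needed in the $\HCL$ case using the error bound $\binom{n}{2}/(p-1)$ from Lemma~\ref{lemma:8}. The paper elides this under ``ideas similar to the theorem of \cite{Adleman1978},'' but since you wrote down a concrete inequality, you have to include the repetition that makes it true across the whole range of $p$.
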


\begin{proof}

We will prove the theorem for $\HCY_{n, p}$, the polynomial counting the number of unique Hamiltonian cycles on directed multigraphs over $\mathbb{Z}_p$. Nevertheless, without loss of generality, this extends perfectly to $\HCL_{n, p}$ as well.

Suppose for some $\alpha > 0$ and $\beta > 0$, we have a circuit family $\{\, C_n \,\}_{n \in \mathbb{N}}$, where the circuit $C_n$ corresponds to the counting problem on $\HCY_{n, p}$ for multigraphs with $n$ vertices. Also, assume that $C_n$ has at most $n^{\beta}$ gates and for all but finitely many $n$'s, that is, for sufficiently large $n$, we have
\begin{equation*}
\mathcal{P}_{E_{n, p} \leftarrow_r \mathbb{Z}_p^{n^2}} \left[ C_n \left( E_{n, p} \right) = \HCY_{n, p} \left( E_{n, p} \right) \right] \geq \frac{1}{n^{\alpha}}.
\end{equation*}

Suppose that we provide the circuit as advice to a Turing machine under the non-uniform model of computation (an equivalent definition of $\PPOLY$ \citep{Karp1982}). Using the STV list decoder\footnote{Recall that given a $1/\epsilon$-fraction of correctness and a target degree $d$ on $m$ variables, the STV list decoder takes $\poly(m, d, \log p, 1/\epsilon)$-time.} (Lemma \ref{lemma:2}), in $\poly(n)$-time, we can get $k = O \left( n^{\alpha} \right)$ machines $(M_i)_{i = 1}^k$, each computing a polynomial, one of which is of interest to us. Suppose that $j \in [k]$ corresponds to $M_i$ such that $M_j$ computes $\HCY_{n, p}$ with high probability. In fact, due to the theorem of \cite{Adleman1978}, we can fix the random string to be always correct. Now, we add some more advice to help us find $M_j$. We have a string
\begin{equation*}
\left( \left( E_i, \HCY_{n, p} \left( E_i \right) \right)_{i = 1}^{j - 1}, \left( E_i, \HCY_{n, p} \left( E_i \right) \right)_{i = j + 1}^k \right),
\end{equation*}
such that for every $i \neq j$, $M_{i} \left( E_i \right) \neq \HCY_{n, p} \left( E_i \right)$, but $M_j \left( E_i \right) = \HCY_{n, p} \left( E_i \right)$ for all $i$, and for all inputs in general. This string helps us identify $j$, and we can now use this machine for all our computations.

Our reductions in Lemmas \ref{lemma:7} and \ref{lemma:8} extend to non-uniform computation due to ideas similar to the theorem of \cite{Adleman1978}. We can fix a random string so that the reduction is sound and complete on all inputs. Hence, such circuit families would imply that $\NP \subset \PPOLY$.

The proof extends perfectly to $\HCL_{n, p}$, and any other polynomials one can show ``$\NP$-hardness'' similar to the lemmas in Section \ref{section:4}.

\end{proof}

\section{Hamiltonian Cycle Counting is Hard for Algorithms}
\label{section:6}

In this section, first, we will show that if we have an oracle machine $M$ that computes a polynomial $H_{n, p}: \mathbb{Z}_p^{n^2} \to \mathbb{Z}_p$ that is unknown to us, then we have a fast test that can tell us with high probability whether $H_{n, p}$ is $\HCY_{n, p}$. Following that, in Subsection \ref{section:6.2}, we use it to the STV list decoder (Lemma \ref{lemma:2}) to produce $O(1 / \epsilon)$ machines $M_i$ computing polynomials that are $O(\epsilon)$-close to our oracle machine $M$. These machines have exponentially small probability of error. We will assume in Subsection \ref{section:6.1} that $M$ is always correct for logical simplicity. Due to the $2^{\polylog(n)} / {2^{\Omega(n)}}$ union bound on probability, even with the error from the machine that the STV list decoder gives us, with a probability of $1-2^{-\Omega(n)}$, it behaves identically to a deterministic machine that makes no mistakes.

\subsection{Generalized Permanents, Isomorphic Polynomials, and Cycle Covers}
\label{section:6.1}

Before diving into the specifics of our test or even describing the strategy, we will try to generalize a function that looks very similar to our function, $\HCY_{n, p}$. The permanent, $\PER_{n, p}$, on $E_{n, p} = \left( \left( e_{(i, j)} \right)_{j \in [n]} \right)_{i \in [n]} \in \mathbb{Z}_p^{n^{2}}$ over $\mathbb{Z}_p$ is defined as
\begin{equation*}
\PER_{n, p} \left( E_{n, p} \right) = \sum_{\sigma \in S_n} \prod_{i \in [n]} e_{(i, \sigma(i))}.
\end{equation*}
We can see that there is an uncanny similarity to our function, $\HCY_{n, p}$:
\begin{equation*}
\HCY_{n, p} \left( E_{n, p} \right) = \sum_{\sigma \in C_n} \prod_{i \in [n]} e_{(i, \sigma(i))}.
\end{equation*}

\cite{Valiant1979} proved that if one could evaluate the permanent on $0/1$ matrices in polynomial-time, then $\P^{\SHARPP} = \P$. More specifically, he proved that $\P^{\SHARPP} = \P^{\PER}$. The permanent on $0/1$ matrices count the number of cycle covers in a graph $G$ modulo $p$, which Valiant used to prove his theorem. We will now generalize the permanent in the following way:

\begin{definition}
\label{def:10}

\textit{\textbf{The Generalized Permanent Function.} \\
Suppose we have a list $A_{n, p} \in \mathbb{Z}_p^{n!}$ such that for each $\sigma \in S_n$, we have an entry $a_{\sigma} \in \mathbb{Z}_p$ in $A_{n, p}$, then the generalized permanent, $\GPR_{n, p}: \mathbb{Z}_p^{n!} \times \mathbb{Z}_p^{n^2} \to \mathbb{Z}_p$, on $E_{n, p}$ with respect to $A_{n, p}$ is defined as}
\begin{equation*}
\GPR_{n, p} \left( A_{n, p}, E_{n, p} \right) = \sum_{\sigma \in S_n} a_{\sigma} \prod_{i \in [n]} e_{(i, \sigma(i))}.
\end{equation*}

\end{definition}

One can see that
\begin{equation*}
\PER_{n, p} \left( E_{n, p} \right) = \GPR_{n, p} \left( \left( 1 \right)_{i \in [n!]}, E_{n, p} \right).
\end{equation*}

The generalized permanent offers us a way to change the ``weights'' of cycle covers. Notice that the set of generalized permanent polynomials form a vector space of dimension $n!$ over $\mathbb{Z}_p$, which we will denote by $\mathbb{V}_{\GPR_p}$:

\begin{definition}
\label{def:11}

\textit{\textbf{The Generalized Permanent Polynomial Vector Space.} \\
We define the vector space $\mathbb{V}_{\GPR_{n, p}}$ over $\mathbb{Z}_p$, with the usual operations of polynomial addition and scalar multiplication as}
\begin{equation*}
\mathbb{V}_{\GPR_{n, p}} = \left \{\, \GPR_{A_{n, p}} \mid A_{n, p} \in \mathbb{Z}_p^{n!}, \GPR_{A_{n, p}} \left( E_{n, p} \right) = \GPR_{n, p} \left( A_{n, p}, E_{n, p} \right) \, \right\}.
\end{equation*}

\end{definition}

\begin{definition}
\label{def:12}

\textit{\textbf{Permutation of a Directed Multigraph.} \\
Suppose we have the formal list
\begin{equation*}
E_{n, p} = \left( \left( e_{(i, j)} \right)_{j \in [n]} \right)_{i \in [n]}.
\end{equation*}
We define the formal list}
\begin{equation}
\label{eq:8}
\pi \left( E_{n, p} \right) = \left( \left( e_{(\pi(i), \pi(j))} \right)_{j \in [n]} \right)_{i \in [n]}, \quad \forall \pi \in S_n.
\end{equation}

\end{definition}
This is like an isomorphic graph that differs from our original graph by the permutation $\pi^{-1}$, and permuting the edge list accordingly.

\begin{definition}
\label{def:13}

\textit{\textbf{The Subgroup Induced by $H_{n, p}$.} \\
We define the set of permutations, $\Pi \left( H_{n, p} \right)$, as
\begin{equation}
\label{eq:9}
\Pi \left( H_{n, p} \right) = \{\, \pi \in S_n | H_{n, p} \left( \pi \left( E_{n, p} \right) \right) = H_{n, p} \left( E_{n, p} \right) \,\}, \quad \forall H_{n, p} \in \mathbb{V}_{\GPR_{n, p}},
\end{equation}
in the context of both being the same formal polynomials. It is easy to see from the group axioms that $\Pi \left( H_{n, p} \right)$ is always a group and, more particularly, a subgroup of $S_n$.}

\end{definition}

We can define a subspace of $\mathbb{V}_{\GPR_p}$, which we denote by $\mathbb{I}_{\GPR_p}$, of ``very nice'' polynomials. 
\begin{definition}
\label{def:14}

\textit{\textbf{The Subspace of Isomorphic Polynomials.} \\
Using Equation \eqref{eq:9}, we define the set of functions, $\mathbb{I}_{\GPR_{n, p}}$, as
\begin{equation*}
\mathbb{I}_{\GPR_{n, p}} = \left \{\, H_{n, p} \mid H_{n, p} \in \mathbb{V}_{\GPR_{n, p}}, \Pi \left( H_{n, p} \right) = S_n \, \right \}.
\end{equation*}
One can see that $\mathbb{I}_{\GPR_p}$ is a subspace of $\mathbb{V}_{\GPR_p}$.}

\end{definition}

The motivation for this is as follows. We can see $H_{n, p} \in \mathbb{I}_{\GPR_{n, p}}$ as a function that counts weighted unique cycle covers. $\Pi \left( H_{n, p} \right)$ is the set of ways we can take an isomorphic graph and get the same weight as the original graph for any edge list. If $H_{n, p} \in \mathbb{I}_{\GPR_{n, p}}$, the function $H_{n, p}$ weighs all isomorphic graphs equally. $\HCY_{n, p}$ is one of these functions. 

From Equation \eqref{eq:5}, we notice that the function $\HCY_{n, p}$ sums over the conjugacy class $C_n$. Now we generalize the function $\HCY_{n, p}$ into a function $H_{n, p}^{C(\gamma)}$ that sums over the conjugacy class of $\gamma$.

\begin{definition}
\label{def:15}

\textit{\textbf{The Function $H_{n, p}^{C(\gamma)}$.} \\
For all $\gamma \in S_n$, we define the function $H_{n, p}^{C(\gamma)}$ as}
\begin{equation}
\label{eq:10}
H_{n, p}^{C(\gamma)} \left( E_{n, p} \right) = \sum_{\sigma \in C(\gamma)} \prod_{i \in [n]} e_{(i, \sigma(i))}.
\end{equation}

\end{definition}

We will also benefit greatly from the following characterization of $\mathbb{I}_{\GPR_{n, p}}$.
\begin{lemma}
\label{lemma:9}

The set of functions
\begin{equation*}
\left \{\, H_{n, p}^{C(\gamma)} \mid \gamma \in S_n \, \right \}
\end{equation*}
forms a basis for $\mathbb{I}_{\GPR_{n, p}}$. 

\end{lemma}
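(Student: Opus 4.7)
The plan is to show two things: (i) each $H_{n,p}^{C(\gamma)}$ actually lies in $\mathbb{I}_{\GPR_{n,p}}$, and (ii) regarded as a set (so $H_{n,p}^{C(\gamma_1)}$ and $H_{n,p}^{C(\gamma_2)}$ are identified whenever $\gamma_1 \sim \gamma_2$), these polynomials are linearly independent and span $\mathbb{I}_{\GPR_{n,p}}$. The key computational input is to track how the substitution $E_{n,p} \mapsto \pi(E_{n,p})$ permutes the monomial basis $\left\{ \prod_{i \in [n]} e_{(i,\sigma(i))} : \sigma \in S_n \right\}$ of $\mathbb{V}_{\GPR_{n,p}}$.

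First I would compute the action explicitly. Writing $H_{n,p}(E_{n,p}) = \sum_{\sigma \in S_n} a_\sigma \prod_{i \in [n]} e_{(i,\sigma(i))}$ and substituting using Equation \eqref{eq:8}, I get
\begin{equation*}
H_{n,p}\!\left(\pi(E_{n,p})\right) = \sum_{\sigma \in S_n} a_\sigma \prod_{i \in [n]} e_{(\pi(i),\pi(\sigma(i)))}.
\end{equation*}
Substituting $k = \pi(i)$ on the inner product and then re-indexing by $\tau = \pi \sigma \pi^{-1}$ (a bijection of $S_n$) gives
\begin{equation*}
H_{n,p}\!\left(\pi(E_{n,p})\right) = \sum_{\tau \in S_n} a_{\pi^{-1}\tau\pi} \prod_{k \in [n]} e_{(k,\tau(k))}.
\end{equation*}
Since the monomials $\prod_k e_{(k,\tau(k))}$ are a basis of $\mathbb{V}_{\GPR_{n,p}}$, the condition $H_{n,p} \in \mathbb{I}_{\GPR_{n,p}}$ is equivalent to $a_\tau = a_{\pi^{-1}\tau\pi}$ for all $\tau, \pi \in S_n$; that is, the coefficient sequence $(a_\sigma)_\sigma$ is constant on each conjugacy class of $S_n$.

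From here, (i) is immediate: $H_{n,p}^{C(\gamma)}$ has coefficient $1$ on every $\sigma \in C(\gamma)$ and $0$ elsewhere, so its coefficient sequence is constant on conjugacy classes. For (ii), linear independence follows because different conjugacy classes are disjoint subsets of $S_n$, so the polynomials $H_{n,p}^{C(\gamma_1)}$ and $H_{n,p}^{C(\gamma_2)}$ have disjoint monomial supports whenever $\gamma_1 \not\sim \gamma_2$. Spanning follows because, by the coefficient characterization above, any $H_{n,p} \in \mathbb{I}_{\GPR_{n,p}}$ with common value $c_\gamma$ on the class $C(\gamma)$ can be written as $H_{n,p} = \sum_{[\gamma]} c_\gamma H_{n,p}^{C(\gamma)}$, where the sum is over a set of conjugacy class representatives.

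No single step is a serious obstacle; the only care needed is in the index substitution that turns $\pi(E_{n,p})$-invariance into conjugation-invariance of the coefficients, and in keeping the counting honest by summing over conjugacy class representatives rather than over all of $S_n$. As a corollary one also obtains $\dim \mathbb{I}_{\GPR_{n,p}} = p(n)$, the number of integer partitions of $n$, which will be relevant later.
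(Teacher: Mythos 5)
Your proof is correct and follows essentially the same route as the paper: the central computation in both cases is that the substitution $E_{n,p} \mapsto \pi(E_{n,p})$ sends the coefficient vector $(a_\sigma)_\sigma$ to $(a_{\pi^{-1}\sigma\pi})_\sigma$, so invariance forces constancy on conjugacy classes. Your version is a bit tighter organizationally, since you extract that coefficient characterization once as a standalone equivalence and then read off membership of $H_{n,p}^{C(\gamma)}$, linear independence (disjoint monomial supports), and spanning as immediate consequences, whereas the paper performs the same change-of-variables twice, once to verify $H_{n,p}^{C(\gamma)}\in\mathbb{I}_{\GPR_{n,p}}$ and again inside the contradiction argument for spanning.
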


\begin{proof}

First, we will prove that $H_{n, p}^{C(\gamma)} \in \mathbb{I}_{\GPR_{n, p}}$. For all $(\gamma, \pi) \in S_n^2$, from Equations \eqref{eq:8} and \eqref{eq:10}, we have
\begin{equation}
\label{eq:11}
\begin{split}
H_{n, p}^{C(\gamma)} \left( \pi \left( E_{n, p} \right) \right) & = H_{n, p}^{C(\gamma)} \left( \left( \left( e_{(\pi(i), \pi(j))} \right)_{j \in [n]} \right)_{i \in [n]} \right) \\
& = \sum_{\sigma \in C(\gamma)} \prod_{i \in [n]} e_{(\pi(i), \pi( \sigma(i)))} \\
& = \sum_{\sigma \in C(\gamma)} \prod_{j \in [n]} e_{\left( j, \pi \left( \sigma \left( \pi^{-1}(j)) \right) \right) \right)} \\
& = \sum_{\sigma \in C(\gamma)} \prod_{j \in [n]} e_{\left( j, \pi^{-1} \sigma \pi(j) \right)} \\
& = \sum_{\pi \sigma^\prime \pi^{-1} \in C(\gamma)} \prod_{j \in [n]} e_{\left( j, \sigma^\prime (j) \right)} \\
& = \sum_{\sigma^\prime \in \pi^{-1} C(\gamma) \pi} \prod_{j \in [n]} e_{\left( j, \sigma^\prime (j) \right)} \\
& = \sum_{\sigma^\prime \in C(\gamma)} \prod_{j \in [n]} e_{\left( j, \sigma^\prime (j) \right)} \\
& = H_{n, p}^{C(\gamma)} \left( E_{n, p} \right) \\
& \implies \Pi \left( H_{n, p}^{C(\gamma)} \right) = S_n \\
& \implies H_{n, p}^{C(\gamma)} \in \mathbb{I}_{\GPR_{n, p}},
\end{split}
\end{equation}
where, we have used the change of variables $j = \pi(i)$ and $\sigma^\prime = \pi^{-1} \sigma \pi$, and from Section \ref{section:2.4}, since $C(\gamma)$ is a conjugacy class, we have $\pi^{-1} C(\gamma) \pi = C(\gamma)$. As a special case of Equation \eqref{eq:11}, by taking $\gamma$ as a cycle of length $n$, we get the result that $\HCY_{n, p} \in \mathbb{I}_{\GPR_{n, p}}$.

From Equation \eqref{eq:11}, we have $H_{n, p}^{C(\gamma)} \in \mathbb{I}_{\GPR_{n, p}}$. From Lemma \ref{lemma:3}, $H_{n, p}^{C(\gamma)}$ counts the number of unique subgraphs isomorphic to $D_\gamma$. It is easy to see that when $\gamma$ is not conjugate to $\delta \in S_n$, $H_{n, p}^{C(\gamma)}$ and $H_{n, p}^{C(\delta)}$ are linearly independent. Our strategy now is to show that if $H_{n, p} \left( E_{n, p} \right) = \GPR_{n, p} \left( A_{n, p}, E_{n, p} \right) \in \mathbb{I}_{\GPR_{n, p}}$, and we have a $\gamma$ such that $(\sigma_1, \sigma_2) \in C(\gamma)^2$ and $a_{\sigma_1} \neq a_{\sigma_2}$, we arrive at a contradiction. Let 
\begin{equation}
\label{eq:12}
\sigma_2 = \pi^{-1} \sigma_1 \pi.
\end{equation}
We have 
\begin{equation}
\label{eq:13}
\begin{split}
H_{n, p} \left( E_{n, p} \right) = \sum_{\sigma \in S_n} a_\sigma \prod_{i \in [n]} e_{(i, \sigma(i))}.
\end{split}
\end{equation}
Since $H_{n, p} \in \mathbb{I}_{\GPR_{n, p}}$, we also have
\begin{equation}
\begin{split}
\label{eq:14}
H_{n, p} \left( E_{n, p} \right) & = H_{n, p} \left( \pi \left( E_{n, p} \right) \right) \\  
& = \sum_{\sigma \in S_n} a_\sigma \prod_{i \in [n]} e_{(\pi(i), \pi(\sigma(i)))} \\
& = \sum_{\sigma \in S_n} a_\sigma \prod_{j \in [n]} e_{\left( j, \pi \left( \sigma \left( \pi^{-1}(j) \right) \right) \right)} \\
& = \sum_{\sigma \in S_n} a_\sigma \prod_{j \in [n]} e_{\left( j, \pi^{-1} \sigma \pi(j) \right)},
\end{split}
\end{equation}
where, $j = \pi(i)$. From Equation \eqref{eq:12}, the monomial corresponding to $a_{\sigma_1}$ in Equation \eqref{eq:14} is
\begin{equation*}
\prod_{j \in [n]} e_{\left( j, \pi^{-1} \sigma_1 \pi(j) \right)} = \prod_{j \in [n]} e_{(j, \sigma_2 \pi(j))},  
\end{equation*}
which is the same as the monomial corresponding to $a_{\sigma_2}$ in Equation \eqref{eq:13}, implying that
\begin{equation*}
a_{\sigma_1} = a_{\sigma_2}.
\end{equation*}

\end{proof}

More informally, our function $H_{n, p}$ should equally weigh all cycle covers of the same conjugacy class. The weight assigned to the induced graph $D_{\sigma}$ should be balanced for all $\sigma$ in the same conjugacy class.

\subsection{Fast Test to Identify Polynomials in $\mathbb{V}_{\GPR_{n, p}}$}
\label{section:6.2}

We are now ready to obtain a rough outline of our tests to see if our oracle machine $M$ computes a function $H_{n, p} \in \mathbb{V}_{\GPR_{n, p}}$.

\begin{enumerate}

\item We first show that if $M$ computes a polynomial $H_{n, p}$ of degree at most $n$ over our domain and range, then we have a probabilistic test to check if $H_{n, p} \in \mathbb{V}_{\GPR_{n, p}}$.

\item If we know that $H_{n, p} \in \mathbb{V}_{\GPR_{n, p}}$, then we have a probabilistic test to check if $H_{n, p} \in \mathbb{I}_{\GPR_{n, p}}$.

\item If we know that $H_{n, p} \in \mathbb{I}_{\GPR_{n, p}}$, then we have a probabilistic test to check if $M$ computes $H_{n, p}$.

\end{enumerate}

We outline our first test in Algorithm \ref{alg:1}.

\begin{algorithm}
\caption{$\text{GenPerm}\left( M, n, p \right)$}
\label{alg:1}
\Comment{The Generalized Permanent Test \hspace{10.1cm}} \\
\Comment{Input: $n$ is the number of vertices in the directed multigraph} \\
\Comment{Input: $p > n$ is a prime} \\
\Comment{Input: The oracle machine $M$ computes a polynomial $H_{n, p}: \mathbb{Z}_p^{n^2} \to \mathbb{Z}_p$} \\
\Comment{Output: \texttt{ACCEPT} if $H_{n, p} \in \mathbb{V}_{\GPR_{n, p}}$, \texttt{REJECT} otherwise} \\
\Comment{All computations are done over $\mathbb{Z}_p$}
\begin{algorithmic}
\For{each $i \in [n]$}
    \State $E_{n, p} \gets (0)_{l = 1}^{n^2}$
    \For{each $j \neq i \in [n]$} 
    \Comment{The $i$'th row of $E_{n, p}$ is $0$}
        \For{each $k \in [n]$}
            \State $e_{(j, k)} \gets_{r} \mathbb{Z}_p$
        \EndFor
    \EndFor
    \State $g \gets M \left( E_{n, p} \right)$
    \If{$g \neq 0$}
        \State \Return \texttt{REJECT}
    \EndIf
    \State $E_{n, p} \gets (0)_{l = 1}^{n^2}$
    \For{each $j \neq i \in [n]$}
    \Comment{The $i$'th column of $E_{n, p}$ is $0$}
        \For{each $k \in [n]$}
            \State $e_{(k, j)} \gets_{r} \mathbb{Z}_p$
        \EndFor
    \EndFor
    \State $g \gets M \left( E_{n, p} \right)$
    \If{$g \neq 0$}
        \State \Return \texttt{REJECT}
    \EndIf
\EndFor
\State \Return \texttt{ACCEPT}
\end{algorithmic}
\end{algorithm}

We now prove that this test works with very high probability.

\begin{lemma}
\label{lemma:10}

Provided that $M$ computes a polynomial $H_{n, p}: \mathbb{Z}^{n^2}_{p} \to \mathbb{Z}_p$ and $p > n$,

\begin{enumerate}

\item If $H_{n, p} \in \mathbb{V}_{\GPR_{n, p}}$, then $\text{GenPerm}(M, n, p)$ returns \texttt{ACCEPT} with a probability of $1$.

\item If $H_{n, p} \not \in \mathbb{V}_{\GPR_{n, p}}$, then $\text{GenPerm}(M, n, p)$ returns \texttt{ACCEPT} with probability a probability of at most $n / p$.

\end{enumerate}

\end{lemma}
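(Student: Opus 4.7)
The plan is to handle the two cases independently: a routine polynomial-identity calculation for completeness, and a two-step soundness argument (a Schwartz--Zippel reduction followed by an elementary monomial count). I will tacitly use the hypothesis $\deg H_{n,p} \leq n$ stated at the opening of Subsection~\ref{section:6.2}; without it soundness fails, since $\prod_{(i,j)} e_{(i,j)}$ passes every row- and column-zeroing check deterministically yet lies outside $\mathbb{V}_{\GPR_{n,p}}$.

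For Item~1, I would write $H_{n,p} = \sum_{\sigma \in S_n} a_\sigma \prod_{i \in [n]} e_{(i, \sigma(i))}$ and observe that zeroing row $i$ sends $e_{(i, \sigma(i))} \mapsto 0$ in every monomial, while zeroing column $j$ sends $e_{(\sigma^{-1}(j), j)} \mapsto 0$ in every monomial; in either case the substitution yields the zero polynomial, so the oracle returns $0$ on \emph{every} input and Algorithm~\ref{alg:1} returns \texttt{ACCEPT} with probability $1$.

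For Item~2, I would argue the contrapositive. Let $H^{\text{row}}_i$ denote the polynomial obtained from $H_{n,p}$ by substituting $0$ for each variable in row $i$, and define $H^{\text{col}}_i$ analogously; each is a polynomial in the remaining $n^2 - n$ variables of degree at most $n$. If even one of the $2n$ polynomials $\{H^{\text{row}}_i, H^{\text{col}}_i\}_{i \in [n]}$ is not identically zero, then by Schwartz--Zippel (Lemma~\ref{lemma:1}) applied with $\mathcal{D} = \mathbb{Z}_p$ the corresponding loop iteration returns $0$ on its random sample with probability at most $n/p$, so the test reaches \texttt{ACCEPT} with probability at most $n/p$ (the test rejects as soon as a single iteration produces a nonzero value).

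The final step, which I view as the actual content of the lemma, is to show that if \emph{every} $H^{\text{row}}_i$ and $H^{\text{col}}_i$ is identically zero then $H_{n,p} \in \mathbb{V}_{\GPR_{n,p}}$. Expanding $H_{n,p}$ in the monomial basis, any monomial containing no variable from row $i$ survives the row-$i$ substitution, so by linear independence of distinct monomials its coefficient must be $0$; similarly for columns. Hence every monomial of $H_{n,p}$ uses at least one variable from each of the $n$ rows and each of the $n$ columns, and therefore involves at least $n$ distinct variables. Combined with $\deg H_{n,p} \leq n$, this forces each monomial to have exactly $n$ variables, each of degree $1$, whose row indices exhaust $[n]$ and whose column indices exhaust $[n]$ --- i.e.\ a product $\prod_{i \in [n]} e_{(i, \sigma(i))}$ for some $\sigma \in S_n$. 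So $H_{n,p} \in \mathbb{V}_{\GPR_{n,p}}$, completing the contrapositive. The main obstacle is conceptual rather than technical: recognizing that the degree-$n$ hypothesis is indispensable to the monomial-counting step; once supplied from context, the remainder is a one-line Schwartz--Zippel application and a straightforward bookkeeping argument.
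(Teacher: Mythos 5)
Your proof is correct and follows the paper's approach essentially verbatim: both treat completeness by direct substitution, and treat soundness via the contrapositive that if $H_{n,p}$ vanishes identically under every row- and column-zeroing then every monomial must use a variable from each row and each column, which combined with the degree-$\leq n$ hypothesis and a pigeonhole count forces each monomial to have the form $\prod_{i\in[n]} e_{(i,\sigma(i))}$, while the remaining case is a single Schwartz--Zippel bound of $n/p$ on the one iteration whose substituted polynomial is not identically zero. Your only additions are making explicit the monomial-basis step (that a monomial avoiding row $i$ would survive the row-$i$ substitution, so its coefficient must vanish) and flagging $\prod_{(i,j)} e_{(i,j)}$ as the counterexample showing the degree hypothesis is indispensable --- both worthwhile clarifications, but not a different route from the paper.
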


\begin{proof}

To prove the first assertion, assume that the polynomial $H_{n, p}$ computed by $M$ is indeed in $\mathbb{V}_{\GPR_{n, p}}$. Since for every $i$, every monomial contains a variable of the form $e_{(i, j)}$, which remains $0$ throughout the test, $M$ always returns $0$ on these inputs since every monomial evaluates to $0$. Similar reasoning holds when $e_{(j, i)}$ is set to $0$ for all $j$. The test always \texttt{ACCEPT}s when $H_{n, p} \in \mathbb{V}_{\GPR_{n, p}}$.
  
Assume that $H_{n, p} \left( E_{n, p} \right)$ is uniformly $0$ in all $E_{n, p}$ constructed throughout this test. If for each $i \in [n]$, each monomial contains a variable of the form $e_{(i, j)}$ for some $j \in [n]$, and similarly for each $i \in [n]$, each monomial contains a variable of the form $e_{(j, i)}$ for some $j \in [n]$, then, since the degree of $H_{n, p}$ is at most $n$, due to the pigeonhole principle, every monomial must be of the form $a \prod_{i = 1}^n e_{(i, \sigma(i))}$ for some $\sigma \in S_n$ and some $a \in \mathbb{Z}_p$, implying that $H_{n, p} \in \mathbb{V}_{\GPR_{n, p}}$.
  
The contrapositive of the previous paragraph is that if $H_{n, p} \not \in \mathbb{V}_{\GPR_{n, p}}$, then on at least one of the edge lists, $E_{n, p}$, that we constructed, $H_{n, p} \left( E_{n, p} \right)$ is not uniformly $0$. Due to the Schwartz-Zippel Lemma (\ref{lemma:2}),
\begin{equation*}
\mathcal{P}_{E_{n, p} \leftarrow_r \mathbb{Z}_p^{n^2}} \left[ H_{n, p} \left( E_{n, p} \right) = 0 \right] \leq \frac{n}{p},
\end{equation*}
causing the test to return \texttt{REJECT} with a probability of at least $1 - n / p$ on at least one of the steps.
\end{proof}

Moreover, one might also note that this test takes $O \left( n^3 \right)$-time and $O(n)$-queries to $M$. We can get exponentially small error probability from the test by repetition, as is usual for probabilistic algorithms.

\subsection{Fast Test to Identify Polynomials in $\mathbb{I}_{\GPR_{n, p}}$}
\label{section:6.3}

We now present a straightforward test, Algorithm \ref{alg:2}, to check if the polynomial $H_{n, p} \in \mathbb{V}_{\GPR_{n, p}}$ computed by $M$ is also in $\mathbb{I}_{\GPR_{n, p}}$ by making just two queries to $M$. By simply testing this on a random graph, and on a random graph isomorphic to it, we can check if $M$ weighs isomorphic graphs equally, with high probability, of course.

\begin{algorithm}
\caption{$\text{IsoPerm} \left( M, n, p \right)$}
\label{alg:2}
\Comment{The Isomorphic Polynomial Test \hspace{10.2cm}} \\
\Comment{Input: $n$ is the number of vertices in the directed multigraph} \\
\Comment{Input: $p > n$ is a prime} \\
\Comment{Input: The oracle machine $M$ computes a polynomial $H_{n, p}:\mathbb{Z}_p^{n^2} \to \mathbb{Z}_p \in \mathbb{V}_{\GPR_{n, p}}$} \\
\Comment{Output: \texttt{ACCEPT} if $H_{n, p} \in \mathbb{I}_{\GPR_{n, p}}$, \texttt{REJECT} otherwise} \\
\Comment{All computations are done over $\mathbb{Z}_p$}
\begin{algorithmic}
\State $E_{n, p} \gets_r \mathbb{Z}_p^{n^2}$
\State $\pi \gets_r S_n$
\State $E^\prime_{n, p} \gets \pi \left( E_{n, p} \right)$ 
\Comment{Using Equation \eqref{eq:8}}
\State $g \gets M \left( E_{n, p} \right)$
\State $g^\prime \gets M \left( E^\prime_{n, p} \right)$
\If{$g \neq g^\prime$}
    \State \Return \texttt{REJECT}
\EndIf
\State \Return \texttt{ACCEPT}
\end{algorithmic}
\end{algorithm}

We now show that this test is correct.
\begin{lemma}
\label{lemma:11}

Given that the polynomial $H_{n, p}$ computed by $M$ is in $\mathbb{V}_{\GPR_{n, p}}$ and $p > n$,

\begin{enumerate}

\item If $H_{n, p} \in \mathbb{I}_{\GPR_{n, p}}$, then $\text{IsoPerm} \left( M, n, p \right)$ returns \texttt{ACCEPT} with a probability of $1$.

\item If $H_{n, p} \not \in \mathbb{I}_{\GPR_{n, p}}$, then $\text{IsoPerm} \left( M, n, p \right)$ returns \texttt{ACCEPT} with a probability of at most $1 / 2 + n / p$.

\end{enumerate}

\end{lemma}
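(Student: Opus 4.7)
The plan is to handle the two cases of the lemma separately, using the definition of $\mathbb{I}_{\GPR_{n, p}}$ for completeness and combining Lagrange's theorem with the Schwartz-Zippel Lemma for soundness.

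For the first (completeness) claim, I would note that $H_{n, p} \in \mathbb{I}_{\GPR_{n, p}}$ means $\Pi(H_{n, p}) = S_n$ by definition. Hence for the randomly chosen $\pi \gets_r S_n$ and $E_{n, p} \gets_r \mathbb{Z}_p^{n^2}$, we always have $H_{n, p}(\pi(E_{n, p})) = H_{n, p}(E_{n, p})$, so $g = g'$ and the test always returns \texttt{ACCEPT}.

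For the second (soundness) claim, the key observation is that $\Pi(H_{n, p})$ is a subgroup of $S_n$ (as noted when it was defined), so by Lagrange's theorem, if $\Pi(H_{n, p}) \neq S_n$ then $|\Pi(H_{n, p})| \leq |S_n|/2 = n!/2$. Therefore a uniformly random $\pi \in S_n$ lies outside $\Pi(H_{n, p})$ with probability at least $1/2$. Conditioning on $\pi \notin \Pi(H_{n, p})$, the formal polynomial
\begin{equation*}
Q_\pi(E_{n, p}) := H_{n, p}(E_{n, p}) - H_{n, p}(\pi(E_{n, p}))
\end{equation*}
is, by definition of $\Pi(H_{n, p})$, not identically zero as a polynomial in the $n^2$ variables of $E_{n, p}$, and since $H_{n, p} \in \mathbb{V}_{\GPR_{n, p}}$ it has degree at most $n$ (and the variable substitution $E_{n, p} \mapsto \pi(E_{n, p})$ preserves degree). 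Applying the Schwartz-Zippel Lemma (Lemma~\ref{lemma:1}) with $\mathcal{D} = \mathbb{Z}_p$ gives
\begin{equation*}
\mathcal{P}_{E_{n, p} \gets_r \mathbb{Z}_p^{n^2}}\bigl[Q_\pi(E_{n, p}) = 0\bigr] \leq \frac{n}{p}.
\end{equation*}

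Combining these, the probability that the test returns \texttt{ACCEPT} is at most
\begin{equation*}
\mathcal{P}[\pi \in \Pi(H_{n, p})] + \mathcal{P}[\pi \notin \Pi(H_{n, p})] \cdot \frac{n}{p} \leq \frac{1}{2} + \frac{1}{2}\cdot\frac{n}{p} \leq \frac{1}{2} + \frac{n}{p},
\end{equation*}
as required. The only real subtlety here is ensuring that after the permutation substitution, $Q_\pi$ remains a polynomial of degree at most $n$ to which Schwartz-Zippel can be applied; this is immediate because $\pi$ merely relabels the variables of a degree-$n$ polynomial. Everything else is a direct invocation of Lagrange's theorem together with Schwartz-Zippel, so I do not anticipate any genuine obstacle.
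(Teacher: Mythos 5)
Your proof is correct and follows essentially the same approach as the paper: Lagrange's theorem on the subgroup $\Pi(H_{n,p})$ to obtain the $\leq 1/2$ bound, and Schwartz--Zippel on the difference polynomial when $\pi \notin \Pi(H_{n,p})$ to obtain the $n/p$ term. You are slightly more careful in stating the law of total probability (the paper's Equations (15)--(16) label the two terms as conditional probabilities, which is a mild abuse of notation), but the argument and the final bound are the same.
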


\begin{proof}

The first assertion is true by Definitions \ref{def:13} and \ref{def:14}. When $H_{n, p} \not \in \mathbb{I}_{\GPR_{n, p}}$, that is, $\Pi \left( H_{n, p} \right) \neq S_n$, from Equation \eqref{eq:9} and using Lagrange's theorem \citep{Herstein1975}, given that $\left| \Pi \left( H_{n, p} \right) \right|$ divides $|S_n|$, we have
\begin{equation}
\label{eq:15}
\begin{split}
P_1 & = \mathcal{P}_{\substack{\pi \leftarrow_r S_n \\ E_{n, p} \leftarrow_r \mathbb{Z}_p^{n^2}}} \left[ H_{n, p} \left( E_{n, p} \right) = H_{n, p} \left( \pi \left( E_{n, p} \right) \right) \mid \text{these are formally the same polynomials} \right] \\
& = \frac{|\Pi \left( H_{n, p} \right)|}{|S_n|} \\
& \leq \frac{1}{2}.
\end{split}
\end{equation}

Using the Schwartz-Zippel Lemma (\ref{lemma:1}), we also have
\begin{equation}
\label{eq:16}
\begin{split}
P_2 & = \mathcal{P}_{\substack{\pi \leftarrow_r S_n \\ E_{n, p} \leftarrow_r \mathbb{Z}_p^{n^2}}} \left[ H_{n, p} \left( E_{n, p} \right) = H_{n, p} \left( \pi \left( E_{n, p} \right) \right) \mid \text{ these are not formally the same polynomials} \right] \\
& \leq \frac{n}{p}.
\end{split}
\end{equation}

From Equations \eqref{eq:15} and \eqref{eq:16}, we get
\begin{equation*}
\begin{split}
\mathcal{P}_{\pi \leftarrow_r S_n, E_{n, p} \leftarrow_r \mathbb{Z}_p^{n^2}} \left[ H_{n, p} \left( E_{n, p} \right) = H_{n, p} \left( \pi \left( E_{n, p} \right) \right) \right] & = P_1 + P_2 \\
& \leq \frac{1}{2} + \frac{n}{p}
\end{split}
\end{equation*}
    
Using this, we have the probability that both evaluations are the same is at most $1 / 2 + n / p$.

\end{proof}

Once again, by repetition, we can make the error probability of this test exponentially small.

\subsection{Fast Test to Identify the $\HCY_{n, p}$ Polynomials}
\label{section:6.4}

We now present the last puzzle in determining whether $M$ computes the $\HCY_{n, p}$ polynomial. The observation is that $C_n$, as a conjugacy class, contains only cyclic permutations with cycles of length $n$ and that no other conjugacy class contains a cycle of length $n$.

We now know that the polynomial $H_{n, p}$ that $M$ computes is in $\mathbb{I}_{\GPR_{n, p}}$. We need to show that $a_{\sigma}$ in the $\GPR_{n, p}$ representation of $H_{n, p}$ is $0$ if $\sigma$ has a cycle of length less than $n$. We do not even need to test till $n$. It is sufficient for us to test for cycles of size up to $\lfloor n / 2 \rfloor$. Each conjugacy class corresponds to an integer partition of $n$: Ways of representing $n$ as a sum of positive numbers, uniquely up to changing the order. $C_n$ is the conjugacy class that corresponding the integer partition $n = n$. All other conjugacy classes have at least two terms, at least one of which, due to the pigeonhole principle, can be at most $\lfloor n / 2 \rfloor$. We present the test in Algorithm \ref{alg:3} that returns \texttt{ACCEPT} if $a_{\sigma} = 0$ for all $\sigma \in S_n$ containing a cycle of length $k$, and \texttt{REJECT} with high probability if not.

\begin{algorithm}
\caption{$\text{NoCycle} \left( M, n, p, k \right)$}
\label{alg:3}
\Comment{The No Cycle of Length $k$ Test \hspace{10.5cm}} \\
\Comment{Input: $n$ is the number of vertices in the directed multigraph} \\
\Comment{Input: $p > n$ is a prime} \\
\Comment{Input: $k \in [n] - [1]$} \\
\Comment{Input: The oracle machine $M$ computes a polynomial $H_{n, p}: \mathbb{Z}_p^{n^2} \to \mathbb{Z}_p \in \mathbb{I}_{\GPR_{n, p}}$} \\
\Comment{Output: \texttt{ACCEPT} if $a_{\sigma} = 0$ for all $\sigma \in S_n$ with cycles of length $k$, \texttt{REJECT} otherwise} \\
\Comment{All computations are done over $\mathbb{Z}_p$}
\begin{algorithmic}
\State $E_{n, p} \gets (0)_{l = 1}^{n^2}$
\For{each $1 \leq i < k$}
\Comment{The vertices in $[k]$ make a cycle}
    \State $e_{(i, i+1)} \gets 1$
\EndFor
\State $e_{(k, 1)} \gets 1$
\For{each $i > k$ and $j > k$}
\Comment{All the edges in $[n] - [k]$ are random in number}
    \State $e_{(i, j)} \gets_r \mathbb{Z}_p$
\EndFor
\State $g \gets M \left( E_{n, p} \right)$
\If{$g \neq 0$}
    \State \Return \texttt{REJECT}
\EndIf 
\State \Return \texttt{ACCEPT}
\end{algorithmic}
\end{algorithm}

Let us now prove that this test works.
\begin{lemma}
\label{lemma:12}

If $H_{n, p} \in \mathbb{I}_{\GPR_{n, p}}$, then we have the following.

\begin{enumerate}

\item If $a_{\sigma} = 0$ for all $\sigma \in S_n$ with a cycle of length $k$, then $\text{NoCycle} \left( M, n, p, k \right)$ returns \texttt{ACCEPT} with a probability of $1$.

\item If there is a $\sigma \in S_n$ with a cycle of length $k$ such that $a_{\sigma} \neq 0$, then $\text{NoCycle} \left( M, n, p, k \right)$ returns \texttt{ACCEPT} with a probability of at most $n / p$.

\end{enumerate}

\end{lemma}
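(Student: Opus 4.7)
The plan is to analyze which permutations $\sigma \in S_n$ can contribute a non-zero monomial when $H_{n,p}$ is evaluated on the constructed edge list $E_{n,p}$, and then exploit the invariance of $H_{n,p}$ under isomorphism (Lemma \ref{lemma:9}) to control the coefficients. By construction, for $i \in [k]$ the only non-zero outgoing edge is $e_{(i, (i \bmod k) + 1)} = 1$, while for $i > k$ the non-zero outgoing edges go only to vertices in $[n] - [k]$. Therefore, $\prod_{i \in [n]} e_{(i, \sigma(i))}$ is non-zero only if $\sigma$ restricted to $[k]$ equals the $k$-cycle $\rho = (1\ 2\ \cdots\ k)$ and $\sigma$ maps $[n] - [k]$ onto itself. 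Every such $\sigma$ therefore has $\rho$ as one of its cycles and, in particular, has a cycle of length $k$.

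For the first assertion, if $a_\sigma = 0$ for every $\sigma \in S_n$ containing a cycle of length $k$, then every $\sigma$ whose monomial survives on $E_{n,p}$ already satisfies $a_\sigma = 0$, so $H_{n,p}(E_{n,p}) = 0$ with probability $1$ and $\text{NoCycle}$ returns \texttt{ACCEPT}.

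For the second assertion, write $H_{n,p}(E_{n,p}) = \sum_{\tau \in S_{[n] - [k]}} a_{\rho \cdot \tau} \prod_{i \in [n] - [k]} e_{(i, \tau(i))}$, where $\rho \cdot \tau$ denotes the permutation acting as $\rho$ on $[k]$ and as $\tau$ on $[n] - [k]$. This is a formal polynomial of degree $n - k \leq n$ in the random entries $\{e_{(i,j)} : i, j \in [n] - [k]\}$. The key step is to show this polynomial is not identically zero. Since $H_{n,p} \in \mathbb{I}_{\GPR_{n,p}}$, Lemma \ref{lemma:9} tells us that $a_\sigma$ depends only on the conjugacy class of $\sigma$, i.e., only on its cycle type. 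By hypothesis, some $\sigma^* \in S_n$ with a cycle of length $k$ satisfies $a_{\sigma^*} \neq 0$; write its cycle type as $(k, c_1, c_2, \ldots)$ where $c_1 + c_2 + \cdots = n - k$. Choose any $\tau^* \in S_{[n] - [k]}$ with cycle type $(c_1, c_2, \ldots)$; then $\rho \cdot \tau^*$ has the same cycle type as $\sigma^*$, so $a_{\rho \cdot \tau^*} = a_{\sigma^*} \neq 0$, making the polynomial non-trivial. Applying the Schwartz-Zippel Lemma (Lemma \ref{lemma:1}) to the uniformly random entries in $\mathbb{Z}_p$ bounds the probability that the polynomial vanishes by $(n - k)/p \leq n/p$, which is exactly the stated bound.

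The only subtle point, and therefore the main thing worth verifying carefully, is the use of invariance to guarantee that some $\rho \cdot \tau$ lies in the conjugacy class of the witnessing $\sigma^*$; the degenerate boundary case $k = n$, in which no random variables are introduced and the test value equals $a_\rho$, can be handled separately (both claims hold trivially since the probability upper bound $n/p$ is still valid). The remaining steps are direct consequences of the structure of $E_{n,p}$ and standard applications of the Schwartz-Zippel bound already used earlier in the paper.
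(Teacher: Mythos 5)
Your proof is correct and follows essentially the same approach as the paper: identify that the only permutations contributing non-zero monomials on the constructed $E_{n, p}$ are those acting as the $k$-cycle $\rho$ on $[k]$ and arbitrarily on $[n] - [k]$, use the conjugacy-class invariance of the coefficients guaranteed by Lemma~\ref{lemma:9} to exhibit a non-vanishing coefficient $a_{\rho \cdot \tau^*}$ when some $a_{\sigma^*} \neq 0$, and apply the Schwartz--Zippel bound. Your extra care with the boundary case $k = n$ and the explicit rewriting as a sum over $\tau \in S_{[n]-[k]}$ is a slightly more explicit presentation of the same argument, but the underlying decomposition and key lemma are identical.
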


\begin{proof}

Using Lemma \ref{lemma:9}, taking only one $\gamma$ per conjugacy class, we can write $H_{n, p} \left( E_{n, p} \right)$ as
\begin{equation*}
\begin{split}
H_{n, p} \left( E_{n, p} \right) & =  \GPR_{n, p} \left( A_{n, p}, E_{n, p} \right) \\
& = \sum_{C(\gamma) \text{ is a conjugacy class}} a_{\gamma} \sum_{\sigma \in C(\gamma)} \prod_{i \in [n]}e_{(i, \sigma(i))} \\
& = \sum_{C(\gamma) \text{ is a conjugacy class}} a_{\gamma} \sum_{\sigma \in C(\gamma)} \prod_{i \in [k]} e_{(i, \sigma(i))} \prod_{i \in [n] - [k]} e_{(i, \sigma(i))} \\
& = \sum_{C(\gamma) \text{ is a conjugacy class with no cycles of length } k} a_{\gamma} \sum_{\sigma \in C(\gamma)} \prod_{i \in [k]} e_{(i, \sigma(i))} \prod_{i \in [n] - [k]} e_{(i, \sigma(i))} \\
& + \sum_{C(\gamma) \text{ is a conjugacy class with a cycle of length } k} a_{\gamma} \sum_{\sigma \in C(\gamma)} \prod_{i \in [k]} e_{(i, \sigma(i))} \prod_{i \in [n] - [k]} e_{(i, \sigma(i))} \\
& = \sum_{C(\gamma) \text{ is a conjugacy class with a cycle of length } k} a_{\gamma} \sum_{\sigma \in C(\gamma)} \prod_{i \in [k]} e_{(i, \sigma(i))} \prod_{i \in [n] - [k]} e_{(i, \sigma(i))},
\end{split}
\end{equation*}
since if the conjugacy class does not have a cycle of length $k$, then $\prod_{i \in [k]} e_{(i, \sigma(i))} = 0$, no matter which $\gamma$ we take. If a conjugacy class has a cycle of length $k$, then we will choose any $\gamma$ from that conjugacy class such that $\gamma(i) = i + 1$ for $1 \leq i < k$ and $\gamma(k) = 1$, and we will have $\prod_{i \in [k]} e_{(i, \gamma(i))} = 1$. This will make $H_{n, p} \left( E_{n, p} \right)$ identically non-zero.  

If $a_{\sigma} = 0$ for all $\sigma \in S_n$ with a cycle of length $k$, then $H_{n, p} \left( E_{n, p} \right)$, over our distribution is uniformly $0$, proving our first assertion. If any of the $a_{\sigma}$'s are left in the remaining sum is non-zero, then due to the Schwartz-Zippel Lemma (\ref{lemma:1}), we have
\begin{equation*}
\mathcal{P}_{E_{n, p} \leftarrow_r \mathcal{D}} \left[ H_{n, p} \left( E_{n, p} \right) = 0 \right] \leq \frac{n}{p},
\end{equation*}
where, $\mathcal{D}$ is the random distribution in Algorithm \ref{alg:3}.

\end{proof}

Now, we will prove that there is a test certifying that $M$ computes $\HCY_{n, p}$.
\begin{lemma}
\label{lemma:13}

Given an oracle machine $M$ computing a polynomial $H_{n, p}: \mathbb{Z}_p^{n^2} \to \mathbb{Z}_p$ of degree at most $n$ and $p > n$, in polynomial-time, we have a test such that,

\begin{enumerate}

\item If $H_{n, p} = \HCY_{n, p}$, then with a probability of $1$, the test accepts.

\item If $H_{n, p} \neq \HCY_{n, p}$, then the test accepts with a probability of at most $1 / 2^n$.

\end{enumerate}

\end{lemma}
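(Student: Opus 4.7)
The plan is to cascade the three subtests already constructed (Algorithms \ref{alg:1}--\ref{alg:3}) with one additional explicit query, each subtest repeated sufficiently often so that a union bound yields total one-sided error at most $1/2^n$. Concretely, I would (i) invoke $\text{GenPerm}(M,n,p)$ $r$ times to certify $H_{n,p} \in \mathbb{V}_{\GPR_{n,p}}$; (ii) invoke $\text{IsoPerm}(M,n,p)$ $r$ times to certify $H_{n,p} \in \mathbb{I}_{\GPR_{n,p}}$; (iii) for each $k \in \{1,2,\ldots,\lfloor n/2 \rfloor\}$, invoke $\text{NoCycle}(M,n,p,k)$ $r$ times, extending Algorithm \ref{alg:3} to $k=1$ by placing a self-loop $e_{(1,1)} \gets 1$ and drawing all $e_{(i,j)}$ with $i,j \geq 2$ uniformly from $\mathbb{Z}_p$ (the analysis mirrors Lemma \ref{lemma:12}); and finally (iv) query $M$ once on the explicit edge list $E^{\ast}_{n,p}$ realizing the single Hamiltonian cycle $1 \to 2 \to \cdots \to n \to 1$ with all other entries set to $0$, accepting only if $M(E^{\ast}_{n,p}) = 1$. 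The overall test accepts iff every repetition of every subtest accepts and step (iv) returns $1$.

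Completeness is immediate: by Lemma \ref{lemma:9}, $\HCY_{n,p} \in \mathbb{I}_{\GPR_{n,p}} \subset \mathbb{V}_{\GPR_{n,p}}$, so GenPerm and IsoPerm never reject on $\HCY_{n,p}$ by Lemmas \ref{lemma:10} and \ref{lemma:11}; the only non-zero conjugacy-class coefficient of $\HCY_{n,p}$ is on $C_n$, so every NoCycle stage accepts by Lemma \ref{lemma:12}; and the explicit query returns $1$ by direct evaluation. For soundness, assume $H_{n,p} \neq \HCY_{n,p}$ and divide into four exhaustive cases. In case (a), $H_{n,p} \notin \mathbb{V}_{\GPR_{n,p}}$, repeated GenPerm amplifies the per-trial soundness $n/p$. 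In case (b), $H_{n,p} \in \mathbb{V}_{\GPR_{n,p}} \setminus \mathbb{I}_{\GPR_{n,p}}$, repeated IsoPerm amplifies the per-trial soundness $1/2 + n/p$. In case (c), $H_{n,p} \in \mathbb{I}_{\GPR_{n,p}}$ with some coefficient $c_\gamma \neq 0$ for $\gamma \notin C_n$; the cycle-type of any such $\gamma$ is a partition of $n$ into at least two parts whose minimum is some $k^{\ast} \in \{1,\ldots,\lfloor n/2 \rfloor\}$, so the NoCycle stage at $k=k^{\ast}$ catches this with per-trial soundness $n/p$. The residual case (d), $H_{n,p} = c \cdot \HCY_{n,p}$ with $c \in \mathbb{Z}_p \setminus \{1\}$, is caught deterministically by step (iv).

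Picking $r = \Theta(n)$ drives the per-subtest failure probability below $2^{-2n}$, so a union bound over the $O(n)$ invocations yields total soundness error below $2^{-n}$; this is valid provided $p > 2n$, which is the regime of interest (the intended applications take $p = \Theta(2^n)$, where all the bounds are comfortable). The main obstacle, beyond routine bookkeeping, is two-fold: first, Lemma \ref{lemma:12} as stated addresses only $k \geq 2$, yet ruling out all non-Hamiltonian conjugacy classes genuinely requires handling fixed points, because partitions such as $(n-1,1)$ have a single small part that is itself a fixed point, making the $k=1$ extension of NoCycle unavoidable; second, eliminating every non-$C_n$ coefficient only forces $H_{n,p} = c \cdot \HCY_{n,p}$, not $H_{n,p} = \HCY_{n,p}$, so the final explicit single-cycle query is precisely what distinguishes $\HCY_{n,p}$ from its nonzero scalar multiples and closes the identification.
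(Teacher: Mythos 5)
Your proposal is correct and follows essentially the same route as the paper: repeat GenPerm, IsoPerm, and NoCycle for $k = 1, \ldots, \lfloor n/2 \rfloor$ to drive the soundness error of each stage exponentially low, union-bound over the $\operatorname{poly}(n)$ subtests, and finish with one explicit query on the single $n$-cycle graph to rule out nonzero scalar multiples $c \cdot \HCY_{n,p}$ with $c \neq 1$. The one place your writeup improves on the paper's exposition is the explicit treatment of $k = 1$: Algorithm~\ref{alg:3} is documented with input range $k \in [n] - [1]$ (i.e., $k \geq 2$), yet the paper's own proof of Lemma~\ref{lemma:13} invokes it at $k = 1$ without remark, and that case is genuinely needed to kill coefficients on cycle types with fixed points such as $(n-1, 1)$ or $(n-2,1,1)$; your self-loop extension is precisely what running the pseudocode at $k = 1$ produces, and Lemma~\ref{lemma:12}'s argument goes through for it unchanged.
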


\begin{proof}

First, we are promised that $M$ computes a polynomial $H_{n, p}: \mathbb{Z}_p^{n^2} \to \mathbb{Z}_p$ of degree at most $n$. We run $\text{GenPerm}(M, n, p)$ (Algorithm \ref{alg:1}) polynomially many times so that if $H_{n, p} \in \mathbb{V}_{\GPR_{n, p}}$, then this test always returns \texttt{ACCEPT}, but if $H_{n, p} \not \in \mathbb{V}_{\GPR_{n, p}}$, then with a probability of greater than $1 - 2^{-n^2}$, the test returns \texttt{REJECT} at least once. If all returns are \texttt{ACCEPT}, then we move forward and repeat the same with $\text{IsoPerm}(M, n, p)$ (Algorithm \ref{alg:2}). If $H_{n, p} \in \mathbb{I}_{\GPR_{n, p}}$, then with a probability of $1$, the result is always \texttt{ACCEPT}. If $H_{n, p} \not \in \mathbb{I}_{\GPR_{n, p}}$, then the probability that all the returns are \texttt{ACCEPT} is at most $2^{-n^2}$. We go ahead only if all returns are \texttt{ACCEPT}.

Now, we are reasonably sure that $H_{n, p} \in \mathbb{I}_{\GPR_{n, p}}$. We need to certify that for every $\sigma \in S_n - C_n$, $a_{\sigma} = 0$, and that $a_{\sigma} = 1$ if $\sigma \in C_n$. We run $\text{NoCycle}(M, n, p, k)$ (Algorithm \ref{alg:3}) on $M$ with $k$ from $1$ to $\lfloor n / 2 \rfloor$. We repeat each one polynomially many times to get an error probability of $2^{-n^2}$ for each iteration. If all repetitions of all iterations return \texttt{ACCEPT}, then we proceed forward. Otherwise, we reject it.

Now, all that is left to determine if our polynomial $H_{n, p}$ is $\HCY_{n, p}$ is to determine if $a_{\sigma} = 1$ for all $\sigma \in C_n$. We construct $E_{n, p}$ such that $e_{(i, i+1)} = 1$ for all $i \in [n - 1]$, and $e_{(n, 1)} = 1$, and all other entries are set to $0$. One can see that $a_{\sigma} = M \left( E_{n, p} \right)$. If $M \left( E_{n, p} \right) = 1$, we accept; otherwise, we reject.

If $H_{n, p} = \HCY_{n, p}$, all tests will pass with a probability of $1$. If that is not the case, then due to the union bound, the probability that all tests will accept is at most
\begin{equation*}
\frac{\poly(n)}{2^{n^2}} \leq 2^{-n}.
\end{equation*}
Moreover, we only use $\poly(n)$-time and $\poly(n)$-queries to $M$.

\end{proof}

In the last step, even if we find out that $H_{n, p} = a \HCY_{n, p}$ for some $a \neq 0$, a quick algorithm computing $H_{n, p}$ should not exist since this is also an oracle from which we can decide $\dHAMCYCLE$.

\subsection{Main Results for Hamiltonian Cycle Counting}
\label{section:6.5}

We are now ready to state our main results for counting unique Hamiltonian cycles modulo $p$ on directed multigraphs.

\begin{theorem}
\label{theorem:6}

Given a prime $p = 2^{n^{O(1)}} > 2n$, and an oracle $O$ counting the number of unique Hamiltonian cycles modulo $p$ on $n$-vertex directed multigraphs correctly on more than an $\epsilon$-fraction of instances (with $\epsilon = 1 / 2^{o(n)}$), where the number of edges varies from $0$ to $p-1$, then, in $\poly(n, \log p, 1/\epsilon)$ time and queries to $O$, we have a probabilistic algorithm that computes the number of unique Hamiltonian cycles modulo $p$ on $n$-vertex directed multigraphs with an error probability of less than $O \left( 2^{-\sqrt{n}} \right)$.

\end{theorem}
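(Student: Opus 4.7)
The plan is to combine the STV list decoder (Lemma \ref{lemma:2}) with the identification test for $\HCY_{n, p}$ (Lemma \ref{lemma:13}). Because $\HCY_{n, p}$ is a polynomial of degree $n$ in $n^2$ variables over $\mathbb{Z}_p$, and the oracle $O$ is correct on an $\epsilon$-fraction with $\epsilon = 1/2^{o(n)}$ while $p = 2^{n^{O(1)}}$, the STV threshold $\sqrt{2n/p}$ is exponentially smaller than $\epsilon$, so Lemma \ref{lemma:2} applies. It produces $k = O(1/\epsilon) = 2^{o(n)}$ randomized oracle machines $M_1, \ldots, M_k$, each computing some polynomial of degree at most $n$ with per-query error $1/2^{q(n)}$ for a polynomial $q$ of our choosing, and with the promise that one of them, say $M_{j^{\ast}}$, computes $\HCY_{n, p}$.

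Next, I would iterate through each $M_i$ and run the identification test of Lemma \ref{lemma:13}, after boosting the error probability of each sub-test (Algorithms \ref{alg:1}, \ref{alg:2}, \ref{alg:3}, together with the final ``$a_\sigma = 1$ for $\sigma \in C_n$'' check) by enough repetitions so that the combined false-accept probability on any fixed $M_i$ is at most, say, $2^{-2n}$. By Lemma \ref{lemma:13}, if the polynomial computed by $M_i$ is $\HCY_{n, p}$ the test accepts with probability $1$, and otherwise accepts with probability at most $2^{-2n}$. Picking $q$ large enough that $\poly(n)/2^{q(n)}$ is also at most $2^{-2n}$, the internal STV randomness of $M_i$ contributes only a negligible additive error across the $\poly(n)$ queries issued by the test. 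The algorithm then keeps any index $i$ whose test accepts (necessarily including $j^{\ast}$ with high probability) and finally queries the surviving machine on the actual input $E_{n, p}$ to obtain the answer.

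For the error analysis, a union bound over the $2^{o(n)}$ candidate machines, the $\poly(n)$ queries per identification test, and the single final evaluation query yields total failure probability at most
\begin{equation*}
2^{o(n)} \cdot \left( 2^{-2n} + \poly(n)/2^{q(n)} \right) + 1/2^{q(n)} = 2^{-\Omega(n)},
\end{equation*}
which comfortably beats $O(2^{-\sqrt{n}})$. The total runtime and query complexity are $\poly(n, \log p, 1/\epsilon)$: STV runs in that time, and we run $2^{o(n)} = \poly(1/\epsilon)$ copies of a polynomial-time test using $\poly(n)$ queries each.

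The main obstacle is less a single hard step than careful bookkeeping of error: one must fix the STV reliability parameter $q(n)$ (and the repetition counts within Lemma \ref{lemma:13}) \emph{before} running the list decoder so that the per-query error $1/2^{q(n)}$ is small compared to the number of queries issued across all candidate machines and all sub-tests, and one must invoke the remark after Lemma \ref{lemma:13} to rule out the possibility that some $M_i$ computes a nonzero scalar multiple $a \cdot \HCY_{n, p}$ with $a \neq 1$; the final evaluation on the all-ones Hamiltonian cycle (in the last step of the proof of Lemma \ref{lemma:13}) distinguishes this case and can be reused here so that the test outputs the unique $M_{j^{\ast}}$ computing $\HCY_{n, p}$ itself.
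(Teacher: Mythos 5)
Your proposal is correct and follows essentially the same route as the paper: apply the STV list decoder to obtain $O(1/\epsilon)$ candidate machines, run the identification procedure of Lemma \ref{lemma:13} (boosted by repetition) on each to locate the one computing $\HCY_{n,p}$, then query it on the input, with a union bound closing the argument. Your bookkeeping of the STV reliability parameter $q(n)$ and the explicit handling of scalar multiples $a\,\HCY_{n,p}$ via the all-ones Hamiltonian cycle evaluation are more detailed than the paper's write-up but reflect the same underlying steps, and your error bound $2^{-\Omega(n)}$ is in fact sharper than the stated $O(2^{-\sqrt{n}})$.
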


\begin{proof}

Here, we use the STV list decoder (Lemma \ref{lemma:2}) and have access to oracle machines $(M_i)_{i = 1}^t$, where $t = O(1 / \epsilon)$, each of which computes a polynomial $H_{n, p} : \mathbb{Z}_p^{n^2} \to \mathbb{Z}_p$ of degree at most $n$ with an error probability of at most $2^{-n^2}$. Note that due to the union bound, with a probability of $1- 2^{-n^{1.99}}$, all these machines compute all queries correctly, allowing us to use the tests of previous subsections. We can use the union bound to discard all cases where these machines do not behave identically to deterministically, always-correct machines.
    
We can run all our tests on each machine and pick the machine $M_i$ that computes $\HCY_{n, p}$ with high probability. The probability that any one thing goes wrong is at most $2^{-n}$. Due to the union bound, the probability that at least one of the undesired events occurs is at most
\begin{equation*}
\frac{t}{2^n} = \frac{O \left( 1 / \epsilon \right)}{2^n} = \frac{O \left( 2^{o(n)} \right)}{2^n} = O \left( 2^{-\sqrt{n}} \right).
\end{equation*}
Moreover, this algorithm takes $\poly(n, \log(p), 1/\epsilon)$-time, including the STV runtime.

\end{proof}

We can state the following corollary based on the well-believed structural conjecture that $\NP \not\subset \BPP$.

\begin{corollary}
\label{cor:2}

If $\NP \not \subset \BPP$, then for all $\alpha > 0$ and for all primes $p = \Theta \left( 2^{n^{O(1)}} \right)$, no polynomial-time randomized algorithm can count the number of unique Hamiltonian cycles modulo $p$ on $n$-vertex directed multigraphs with a probability of correctness greater than $2/3$ on even a $1/n^{\alpha}$-fraction of instances.

\end{corollary}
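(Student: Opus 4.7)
The plan is to proceed by contrapositive: assuming a polynomial-time randomized algorithm $A$ that correctly counts unique Hamiltonian cycles modulo $p$ on a $1/n^{\alpha}$-fraction of instances with probability greater than $2/3$, I will construct a randomized polynomial-time algorithm for the $\NP$-complete language $\dHAMCYCLE$, forcing $\NP \subset \BPP$.

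First, I apply standard amplification: run $A$ a polylogarithmic number of times on each input with independent randomness and take the majority vote, yielding an algorithm $A'$ whose error probability on the ``good'' set $G$ (of fractional size at least $1/n^{\alpha}$) is driven down to at most $1/n^{3\alpha}$ by a Chernoff bound, at only a polynomial slowdown.

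The second step converts $A'$ into a fixed oracle suitable for Theorem~\ref{theorem:6}. Fix the random string $r$ of $A'$; then $A'_r$ is a deterministic function. Applying Markov's inequality to the expected number of inputs in $G$ misclassified by $A'_r$ (whose expectation is at most $|G|/n^{3\alpha}$), with probability at least $1 - 2/n^{2\alpha}$ over $r$ the oracle $A'_r$ agrees with $\HCY_{n,p}$ on at least $|G|/2$ instances, i.e.\ on at least a $1/(2n^{\alpha})$-fraction of all inputs. Since $1/(2n^{\alpha}) = 1/2^{o(n)}$, Theorem~\ref{theorem:6} applies, and its algorithm (taking $A'_r$ as oracle) computes $\HCY_{n,p}$ correctly on any input with error probability $O(2^{-\sqrt{n}})$. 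A union bound over the random fixation of $r$ and the internal randomness of Theorem~\ref{theorem:6} yields an overall randomized polynomial-time algorithm for $\HCY_{n,p}$ with success probability at least $9/10$ on every input.

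Finally, I compose this with the randomized reduction of Lemma~\ref{lemma:7}: given $D_n$, construct the random multigraph $D_{n,p}$ by assigning uniform non-zero $\mathbb{Z}_p$-weights to each edge of $D_n$. Then $\HCY_{n,p}(E_{n,p}) = 0$ with certainty if $D_n \notin \dHAMCYCLE$, and $\HCY_{n,p}(E_{n,p}) \neq 0$ except with probability $n/(p-1)$ otherwise, which is negligible for $p = \Theta(2^{n^{O(1)}})$. Running the $\HCY_{n,p}$-solver on $D_{n,p}$ and outputting ``yes'' iff the result is non-zero decides $\dHAMCYCLE$ in randomized polynomial time with a constant-gap success probability, amplifiable to arbitrary certainty. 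Since $\dHAMCYCLE$ is $\NP$-complete, we conclude $\NP \subset \BPP$, contradicting the hypothesis. The main obstacle is the second step: translating the promised ``$1/n^{\alpha}$-fraction with probability $>2/3$'' guarantee for a randomized algorithm into the fixed-oracle guarantee Theorem~\ref{theorem:6} requires, without losing more than a constant factor in the correctness fraction; the amplification-plus-Markov argument commits to a single deterministic approximator via one global sampling of $r$, which exactly threads this needle.
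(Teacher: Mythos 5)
Your proof is correct and follows essentially the same route as the paper: compose Theorem~\ref{theorem:6} with the randomized reduction of Lemma~\ref{lemma:7}. The one place you go beyond the paper's one-line proof is the amplify-then-fix-randomness-via-Markov step that converts the hypothesis ``randomized algorithm correct w.p.\ $>2/3$ on a $1/n^{\alpha}$-fraction'' into the deterministic-oracle form Theorem~\ref{theorem:6} actually assumes; the paper silently elides this (standard) conversion, and your explicit treatment of it is sound (your Markov bound of $1-2/n^{2\alpha}$ is a harmless under-claim of the sharper $1-2/n^{3\alpha}$), so no gap remains.
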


\begin{proof}

Due to the reduction from $\dHAMCYCLE$ to $\dHAMCYCLE_p$ in Lemma \ref{lemma:7}, and Theorem \ref{theorem:6}, we would have a polynomial-time randomized algorithm for $\dHAMCYCLE$, implying that $\NP \subset \BPP$.

\end{proof}

We now state our most important result for counting unique Hamiltonian cycles modulo $p$ on directed multigraphs.

\begin{theorem3}
\label{theorem:7}

If there is a randomized algorithm computing the number of Hamiltonian cycles modulo $p$ on $n$-vertex directed multigraphs for a prime $p = \Theta \left( 2^{n^{c}} \right)$ on a $1 / g(n)$-fraction of instances in $2^{o(n)}$-time for any $c \geq 1$, and $g(n) = 2^{o(n)}$, then $\rETH$ is false.

\end{theorem3}
\begin{proof}

Due to Lemma \ref{lemma:6}, Corollary \ref{corollary:1}, Lemma \ref{lemma:7}, and the reduction in \cite{Arora2009}, $3\SAT$ on $O(n)$ clauses reduces to counting the number of unique Hamiltonian cycles modulo $p$ on a directed multigraph $D_{n, p} = \left( V_n, E_{n, p} \right)$ with $\left| V_n \right| = O(n)$. Due to Theorem \ref{theorem:6}, if such an algorithm exists, we can decide $3\SAT$ in $2^{o(n)}$-time, violating $\rETH$.

\end{proof}

For context, it is crucial to realize that simply printing $0$, oblivious of the input, is a strategy that is successful on at least an $\Omega(1/2^{n})$-fraction of instances for $p = \Theta(2^n)$. Notice that we can write the function $\HCY_{n, p} \left( E_{n, p} \right)$ as
\begin{equation*}
\HCY_{n, p} \left( E_{n, p} \right) = e_{(1, 2)} H_1 \left( E_{n, p} \right) + H_2 \left( E_{n, p} \right),
\end{equation*}
where, the functions $H_1$ and $H_2$ are independent of the variable $e_{(1, 2)}$.

Notice that, due to the Schwartz-Zippel Lemma (\ref{lemma:1}), choosing $E_{n, p} - \{\, e_{(1, 2)} \,\}$ at random from $\mathbb{Z}_p^{n^2 - 1}$ results in $H_1 \left( E_{n, p} \right)$ being $0$ with a probability of at most $(n - 1) / p$. With a probability of greater than $1 - (n - 1) / p$, $H_1 \left( E_{n, p} \right)$ evaluates to a non-zero value. Since the coefficient of $e_{(1,2)}$ is non-zero with a probability of at least $1 - (n - 1) / p$, it can ``influence'' the value of $\HCY_{n, p} \left( E_{n, p} \right)$. Given that the coefficient of $e_{(1, 2)}$ is non-zero, the value of $\HCY_{n, p} \left( E_{n, p} \right)$ is $0$ with a probability of $1 / p$ over the choices of $e_{(1, 2)} \in \mathbb{Z}_p$. Hence, the probability that $\HCY_{n, p} \left( E_{n, p} \right) = 0$ on a random input is at least
\begin{equation*}
\frac{1}{p}\left(1-\frac{n-1}{p}\right) = 1/p - o\left(1/p\right) = \Omega(1/2^{n}).
\end{equation*}

These results show that there is a somewhat ``sharp'' increase in the plausible correctness of an algorithm trying to compute $\HCY_{n, p}$ as the time complexity approaches the worst-case complexity. 

\section{Half-Clique Counting is Hard for Algorithms}
\label{section:7}

In this section, similar to the case with counting directed Hamiltonian cycles, we will show that, given an oracle $M$ computing a polynomial $H_{n, p}: \mathbb{Z}_p^{n \choose 2} \to \mathbb{Z}_p$, in $\poly(n)$-time, we can decide if $H_{n, p} = \HCL_{n, p}$. As was the case with Section \ref{section:6}, for simplicity, we will assume that $M$ computes $H_{n, p}$ correctly with a probability of $1$. Since the STV list decoder (Lemma \ref{lemma:2}) has an exponentially small error, $M$ makes a mistake on at least one query with an exponentially small error due to the union bound. Hence, with a very high probability $M$ behaves identically to a machine that is always correct. In fact, in expectation, one has to make exponentially many queries to $M$ for it to make a mistake.

\subsection{Subgraph Counting and Isomorphic Polynomials}
\label{section:7.1}

In this subsection, we want to define some useful abstractions for ourselves, similar to Subsection \ref{section:6.1}. In place of $\GPR_{n, p}: \mathbb{Z}_p^{n!} \times \mathbb{Z}_p^{n^2} \to \mathbb{Z}_p$, we have the \textit{Multi-Linear Polynomial}, $\MLP_{n, p}$, defined as follows.
\begin{definition}
\label{def:16}

\textit{\textbf{The Multi-Linear Polynomial Function.} \\
We define the polynomial $\MLP_{n, p} : \mathbb{Z}_p^{2^{\binom{n}{2}}} \times \mathbb{Z}_p^{\binom{n}{2}} \to \mathbb{Z}_p$ as
\begin{equation}
\label{eq:17}
\MLP_{n, p} \left( A_{n, p}, F_{n, p} \right) = \sum_{S \subset F_{n, p}} a_S \prod_{e_{\{\, i, j \,\}} \in S} e_{\{\, i, j \,\}},
\end{equation}
where $A_{n, p}$ is the list of all $a_{S}$, each being a coefficient corresponding to the subset $S$ of edges.}

\end{definition}

In this case, the analogue of the generalized permanent vector space, $\mathbb{V}_{\GPR_{n, p}}$, is simply the vector space of multi-linear polynomials over $Z_p$, $\mathbb{V}_{\MLP_{n, p}}$ of dimension $2^{\binom{n}{2}}$. From Equation \eqref{eq:17}, we define the vector space $\mathbb{V}_{\MLP_{n, p}}$ as follows.
\begin{definition}
\label{def:17}

\textit{\textbf{The Multi-Linear Polynomial Vector Space.} \\
We define the vector space $\mathbb{V}_{\MLP_{n, p}}$ over $\mathbb{Z}_p$ with the usual operations of polynomial addition and scalar multiplication as}
\begin{equation*}
\mathbb{V}_{\MLP_{n, p}} = \left \{\, \MLP_{A_{n, p}} \mid A_{n, p} \in \mathbb{Z}_p^{2^{\binom{n}{2}}}, \MLP_{A_{n, p}} \left( F_{n, p} \right) = \MLP_{n, p} \left( A_{n, p}, F_{n, p} \right) \, \right\}.
\end{equation*}

\end{definition}

Every multi-linear polynomial on $F_{n, p}$ is a subgraph counting polynomial, similar to the class $\mathbb{V}_{\GPR_{n, p}}$ of polynomials that counted cycle covers. We define below the action of a permutation $\pi \in S_n$ on the edge list $F_{n, p}$.

\begin{definition}
\label{def:18}

\textit{\textbf{Permutation of an Undirected Multigraph.} \\
Let $\pi \in S_n$. Given the edge list $F_{n, p} = \left( \left( e_{\{\, i, j \,\}} \right)_{j \in [n] - [1]} \right)_{1 \leq i < j}$, we define $\pi(F_{n, p})$ as}
\begin{equation*}
\pi(F_{n, p}) = \left( \left( e_{\{\, \pi(i), \pi(j) \,\}} \right)_{j \in [n] - [1]} \right)_{1 \leq i < j}.
\end{equation*}

\end{definition}

\begin{definition}
\label{def:19}

\textit{\textbf{The Subgroup Induced by $H_{n, p}$.} \\
We define the subgroup induced by $H_{n, p}$, $\Pi \left( H_{n, p} \right)$, as the subgroup of $S_n$ such that for all $\pi \in \Pi \left( H_{n, p} \right)$, $H_{n, p} \left( F_{n, p} \right)$ and $H_{n, p} \left( \pi \left( F_{n, p} \right) \right)$ are formally the same polynomials:}
\begin{equation}
\label{eq:18}
\Pi \left( H_{n, p} \right) = \left \{\, \pi \mid \pi \in S_n, H_{n, p} \left( F_{n, p} \right) = H_{n, p} \left( \pi \left( F_{n, p} \right) \right) \, \right\}.
\end{equation}

\end{definition}

By slightly changing Definition \ref{def:14}, we get a subspace, $\mathbb{I}_{\MLP_{n, p}}$, of $\mathbb{V}_{\MLP_{n, p}}$ as follows.

\begin{definition}
\label{def:20}

\textit{\textbf{The Subspace of Isomorphic Polynomials.} \\
Using Equation \eqref{eq:18}, we define the vector space $\mathbb{I}_{\MLP_{n, p}}$ as}
\begin{equation*}
\mathbb{I}_{\MLP_{n, p}} = \left \{\, H_{n, p} \mid H_{n, p} \in \mathbb{V}_{\MLP_{n, p}}, \Pi \left( H_{n, p} \right) = S_n \, \right\}.
\end{equation*}

\end{definition}

It can be seen that $H_{n, p} \in \mathbb{I}_{\MLP_{n, p}}$ if and only if $\Pi \left( H_{n, p} \right) = S_n$. Informally, this can be seen as $H_{n, p}$ weighing isomorphic graphs equally. We will go further in categorizing the polynomials in $\mathbb{V}_{\MLP_{n, p}}$. Suppose we have a subset $S$ of $F_{n, p}$. We define the polynomial $H_{n, p}^S$ for counting the number of subgraphs at position $S$ as follows. 

\begin{definition}
\label{def:21}

\textit{\textbf{Subgraph Counting Polynomials.} \\
Let $S \subset F_{n, p}$. We define the function $H_{n, p}^S: \mathbb{Z}_p^{\binom{n}{2}} \to \mathbb{Z}_p$ as}
\begin{equation}
\label{eq:19}
H_{n, p}^S \left( F_{n, p} \right) = \prod_{e_{\{\, i, j \,\}} \in S} e_{\{\, i, j \,\}}.
\end{equation}

\end{definition}

However, if we want to count the number of subgraphs isomorphic to $S$, using Equation \eqref{eq:19}, the following polynomial is more useful.
\begin{definition}
\label{def:22}

\textit{\textbf{Isomorphic Subgraphs Counting Polynomials.} \\
Let $S \subset F_{n, p}$. We define the function $H_{n, p}^{I(S)}: \mathbb{Z}_p^{\binom{n}{2}} \to \mathbb{Z}_p$ as}
\begin{equation}
\label{eq:20}
\begin{split}
H_{n, p}^{I(S)} \left( F_{n, p} \right) & = \sum_{\Aut(S)\sigma \subset S_n} H_{n, p}^{\sigma(S)} \left( F_{n, p} \right) \\
& = \sum_{\Aut(S)\sigma \subset S_n} \prod_{e_{\{\, i, j \,\}} \in S} e_{\{\, \sigma(i), \sigma(j) \,\}}.
\end{split}
\end{equation}

\end{definition}

We make the following assertion about the structure of $\mathbb{I}_{\MLP_{n, p}}$.
\begin{lemma}
\label{lemma:14}

$\left \{\, H_{n, p}^{I(S)} \mid S \subset F_{n, p} \, \right\}$ forms a basis for $\mathbb{I}_{\MLP_{n, p}}$.

\end{lemma}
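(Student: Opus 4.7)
The plan is to reinterpret the coset sum in Equation \eqref{eq:20} as an orbit sum over subgraphs of $K_n$, after which the claim reduces to standard invariant-theoretic bookkeeping: each isomorphism-class polynomial is invariant, the isomorphism-class polynomials span the invariants, and they are linearly independent because distinct orbits involve disjoint monomials.

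First, I would show $H_{n,p}^{I(S)}(F_{n,p}) = \sum_{S' \in \mathcal{O}(S)} \prod_{e \in S'} e$, where $\mathcal{O}(S) = \{\sigma(S) : \sigma \in S_n\}$ is the orbit of $S$ under the $S_n$-action from Definition \ref{def:18}. By the Orbit-Stabilizer Theorem (the same tool highlighted in Section \ref{section:1.3.1}), the map $\sigma \mapsto \sigma(S)$ descends to a bijection between cosets of $\Aut(S)$ in $S_n$ and elements of $\mathcal{O}(S)$, so the sum in Equation \eqref{eq:20} contains exactly one monomial $\prod_{e \in S'} e$ per $S' \in \mathcal{O}(S)$. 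A consequence is that $S \sim S'$ implies $H_{n,p}^{I(S)} = H_{n,p}^{I(S')}$, so the family $\{H_{n,p}^{I(S)} : S \subset F_{n,p}\}$ contains one distinct polynomial per isomorphism class of subgraph of $K_n$, which is what the basis claim is about.

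Next, I would verify membership $H_{n,p}^{I(S)} \in \mathbb{I}_{\MLP_{n,p}}$. For any $\pi \in S_n$, the substitution $F_{n,p} \mapsto \pi(F_{n,p})$ replaces each monomial $\prod_{e \in S'} e$ by the monomial supported on a different subgraph in $\mathcal{O}(S)$; since the map $S' \mapsto \pi^{-1}(S')$ is a bijection of $\mathcal{O}(S)$ to itself, the orbit sum is unchanged, giving $\pi \in \Pi(H_{n,p}^{I(S)})$ and hence $\Pi(H_{n,p}^{I(S)}) = S_n$. For spanning, take an arbitrary $H \in \mathbb{I}_{\MLP_{n,p}}$ and write $H(F_{n,p}) = \sum_{S \subset F_{n,p}} a_S \prod_{e \in S} e$ in the monomial basis. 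Comparing the coefficient of each monomial in the identity $H(F_{n,p}) = H(\pi(F_{n,p}))$ forces $a_{\pi(S)} = a_S$ for all $\pi$ and $S$, so $a$ is constant on $S_n$-orbits. Bundling monomials by orbit yields $H = \sum_{\mathcal{O}} a_\mathcal{O}\, H_{n,p}^{I(S_\mathcal{O})}$ for any chosen representatives $S_\mathcal{O} \in \mathcal{O}$.

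Finally, linear independence is immediate: distinct orbits are disjoint subsets of $2^{F_{n,p}}$, so distinct $H_{n,p}^{I(S_\mathcal{O})}$ have pairwise disjoint monomial supports, and any vanishing $\mathbb{Z}_p$-linear combination in the monomial basis of $\mathbb{V}_{\MLP_{n,p}}$ forces all coefficients to vanish. The one point that requires genuine care is the coset-to-orbit translation in the opening step, and in particular checking that the summand in Equation \eqref{eq:20} is well defined on cosets of $\Aut(S)$; everything else is routine coefficient matching in the monomial basis together with the orbit-stabilizer correspondence.
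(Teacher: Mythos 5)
Your proof is correct and follows essentially the same route as the paper: show each $H_{n,p}^{I(S)}$ is invariant by reindexing the coset sum under a permutation, then show any invariant $H$ has coefficients constant on $S_n$-orbits by matching monomials in $H(F_{n,p}) = H(\pi(F_{n,p}))$. The orbit-stabilizer reformulation of the coset sum is a cleaner way of saying what Equation \eqref{eq:20} already encodes, and your explicit linear-independence argument via disjoint monomial supports makes precise a point the paper leaves implicit in calling it a basis rather than merely a spanning set.
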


\begin{proof}

First, we will prove that $H_{n, p}^{I(S)} \in \mathbb{I}_{\MLP_{n, p}}$. From Equation \eqref{eq:20}, we have
\begin{equation*}
\begin{split}
H_{n, p}^{I(S)} \left( \pi \left( F_{n, p} \right) \right) & = \sum_{\Aut(S)\sigma \subset S_n} H_{n, p}^{\sigma(S)} \left( \left( \left( e_{\{\, \pi(i), \pi(j) \,\}} \right)_{j \in [n] - 1} \right)_{1 \leq i < j} \right) \\
& = \sum_{\Aut(S)\sigma \subset S_n} \prod_{e_{\{\, i, j \,\}} \in S} e_{\{\, \pi(\sigma(i)), \pi(\sigma(j)) \,\}} \\
& = \sum_{\Aut(S)\sigma \subset S_n} \prod_{e_{\{\, i, j \,\}} \in S} e_{\{\, \sigma \pi (i), \sigma \pi (j) \,\}} \\
& = \sum_{\Aut(S)\sigma^\prime \pi^{-1} \subset S_n} \prod_{e_{\{\, i, j \,\}} \in S} e_{\{\, \sigma^\prime (i), \sigma^\prime (j) \,\}} \\
& = \sum_{\Aut(S)\sigma^\prime \subset S_n \pi} \prod_{e_{\{\, i, j \,\}} \in S} e_{\{\, \sigma^\prime (i), \sigma^\prime (j) \,\}} \\
& = \sum_{\Aut(S)\sigma^\prime \subset S_n} \prod_{e_{\{\, i, j \,\}} \in S} e_{\{\, \sigma^\prime (i), \sigma^\prime (j) \,\}} \\
& = H_{n, p}^{I(S)} \left( F_{n, p} \right) \\
& \implies H_{n, p}^{I(S)} \in \mathbb{I}_{\MLP_{n, p}},
\end{split}
\end{equation*}
where $\sigma^\prime = \sigma \pi$, and we have used the fact that $S_n \pi = S_n$.

To prove that the functions $H_{n, p}^{I(S)}$ form a basis for $\mathbb{I}_{\MLP_{n, p}}$ is equivalent to saying that for any $H_{n, p} \in \mathbb{V}_{\MLP_{n, p}}$, $H_{n, p} \in \mathbb{I}_{\MLP_p}$ if and only if in the $\MLP_{A_{n, p}}$ representation of $H_{n, p}$, $S \sim S^\prime \implies a_{S} = a_{S^\prime}$. Since $H_{n, p} \in \mathbb{I}_{\MLP_p}$, from Definition \ref{def:20}, we have
\begin{equation}
\label{eq:21}
\begin{split}
H_{A_{n, p}} \left( \pi \left( F_{n, p} \right) \right) & = H_{n, p} \left( A_{n, p}, \pi \left( F_{n, p} \right) \right) \\
& = H_{n, p} \left( A_{n, p}, \pi \left( \left( \left( e_{\{\, i, j \,\}} \right)_{j \in [n] - [1]} \right)_{1 \leq i < j} \right) \right) \\
& = \sum_{S \subset F_{n, p}} a_S \prod_{e_{\{\, i, j \,\}} \in S} e_{\{\, \pi(i), \pi(j) \,\}} \\
& = \sum_{S \subset F_{n, p}} a_S \prod_{\{\, e_{\pi^{-1}(k), \pi^{-1}(l) \,\}} \in S} e_{\{\, k, l \,\}} \\
& = \sum_{S \subset F_{n, p}} a_S \prod_{e_{\{\, k, l \,\}} \in \pi(S)} e_{\{\, k, l \,\}} \\
& = \sum_{\pi^{-1} \left( S^\prime \right) \subset F_{n, p}} a_S \prod_{e_{\{\, k, l \,\}} \in S^\prime} e_{\{\, k, l \,\}} \\
& = \sum_{S^\prime \subset \pi \left( F_{n, p} \right)} a_S \prod_{e_{\{\, k, l \,\}} \in S^\prime} e_{\{\, k, l \,\}} \\
& = \sum_{S^\prime \subset F_{n, p}} a_S \prod_{e_{\{\, k, l \,\}} \in S^\prime} e_{\{\, k, l \,\}},
\end{split}
\end{equation}
where $k = \pi(i)$, $l = \pi(j)$, $\pi(S) = S^\prime$, and we have used the fact that $\pi$ only changes the ordering of the elements of $F_{n, p}$. Comparing Equations \eqref{eq:17} and \eqref{eq:21}, we get that $a_S = a_{S^\prime}$, proving our lemma.

\end{proof}

\subsection{Fast Test to Identify Polynomials in $V_{\MLP_{n, p}}$}
\label{section:7.2}

In this subsection, we intend to give a test to decide, given an oracle machine $M$ promised to compute a polynomial $H_{n, p}: \mathbb{Z}_p^{n \choose 2} \to \mathbb{Z}_p$, whether $H_{n, p}$ is multi-linear. This proof is not as ``slick'' as that for counting directed Hamiltonian cycles (Section \ref{section:6.2}), as the pigeonhole principle argument does not extend. We give our test in Algorithm \ref{alg:4}. The intuition for this is that the ``partial derivative'' of $H_{n, p}$ with $e_{\{\, i,j \,\}}$ as the only degree of freedom is constant if $H_{n, p}$ is linear in $e_{\{\, i,j \,\}}$, but it is far from the case when the degree of $e_{\{\, i,j \,\}}$ in $H_{n, p}$ is more than one.

\begin{algorithm}
\caption{$\text{MultiLinear}\left( M, n, p, e_{\{\, i, j \,\}} \right)$}
\label{alg:4}
\Comment{The Multi-linear in $e_{\{\, i, j \,\}}$ Test \hspace{10.6cm}} \\
\Comment{Input: $n$ is the number of vertices in the undirected multigraph}\\
\Comment{Input: $p > n (n - 1)$ is a prime} \\
\Comment{Input: for $j \in [n] - 1$, $1 \leq i < j$, $e_{\{\, i, j \,\}}$ is a variable} \\
\Comment{Input: The oracle machine $M$ computes a polynomial $H_{n, p}: \mathbb{Z}_p^{n \choose 2} \to \mathbb{Z}_p$} of degree at most $\displaystyle \binom{n}{2} $\\
\Comment{Output: \texttt{ACCEPT} if $H_{n, p} \in \mathbb{V}_{\MLP_{n, p}}$, \texttt{REJECT} otherwise} \\
\Comment{All computations are done over $\mathbb{Z}_p$}
\begin{algorithmic}
\State $F_{n, p} = \left( \left( e_{\{\, i, j \,\}} \right)_{j \in [n] - 1} \right)_{1 \leq i < j} \gets 0^{n \choose 2}$
\For{all $\left( i^\prime, j^\prime \right) \neq (i, j)$ in $F_{n, p}$}
    \State $e_{\{\, i^\prime, j^\prime \,\}} \gets_r \mathbb{Z}_p$
\EndFor
\State $a_1 \gets_r \mathbb{Z}_p$
\State $a_2 \gets_r \mathbb{Z}_p$
\State $F_1, F_2, F^\prime_1, F^\prime_2 \gets F_{n, p}$
\State $F_1.e_{\{\, i,j \,\}} \gets a_1$
\State $F_2.e_{\{\, i,j \,\}} \gets a_2$
\State $F^\prime_1.e_{\{\, i,j \,\}} \gets a_1 + 1$
\State $F^\prime_2.e_{\{\, i, j \,\}} \gets a_2 + 1$ \\
\Comment{Comparing the finite differences $\Delta_{\{\, i, j \,\}} H_{n, p} \left( F_{n, p} \right)$ at $e_{\{\, i,j \,\}} = a_1$ and $e_{\{\, i,j \,\}} = a_2$}
\If{$M \left( F^\prime_1 \right) - M (F_1) \neq M \left( F^\prime_2 \right) - M (F_2)$}
    \State \Return \texttt{REJECT}
\EndIf
\State \Return \texttt{ACCEPT}
\end{algorithmic}
\end{algorithm}
\begin{lemma}
\label{lemma:15}

For a prime $p > n(n - 1)$, given an oracle machine $M$ computing a polynomial $H_{n, p}: \mathbb{Z}_{p}^{n \choose 2} \to \mathbb{Z}_p$ of degree at most $\displaystyle \binom{n}{2}$, $\text{MultiLinear}\left( M, n, p, e_{\{\, i, j \,\}} \right)$ takes polynomial-time such that

\begin{enumerate}

\item If $H_{n, p}$ is multilinear, then $M$ passes the test with a probability of $1$.

\item If $H_{n, p}$ is not multilinear, then $M$ passes the test with a probability of at most $2^{-n^2}$.

\end{enumerate}

\end{lemma}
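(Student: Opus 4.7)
The plan is to interpret Algorithm \ref{alg:4} as estimating the second-order finite difference of $H_{n,p}$ with respect to the distinguished variable $e_{\{i,j\}}$, and to use Schwartz-Zippel to argue that this quantity is identically zero only when $H_{n,p}$ is linear in $e_{\{i,j\}}$. For part~(1), if $H_{n,p}$ is multilinear I would write $H_{n,p}(F_{n,p}) = e_{\{i,j\}} \cdot A + B$, where $A, B$ are polynomials in the other edge variables. Since the entries of $F_1, F_1', F_2, F_2'$ agree outside the distinguished coordinate, $M(F_k') - M(F_k) = A$ for both $k=1,2$ regardless of $a_1, a_2$, so the test accepts with probability $1$.

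For part~(2), suppose $H_{n,p}$ has degree $d$ with $2 \le d \le \binom{n}{2}$ when viewed as a polynomial in $e_{\{i,j\}}$. Writing $H_{n,p} = \sum_{k=0}^{d} e_{\{i,j\}}^{k} P_k$ with $P_d \not\equiv 0$, the one-step difference evaluates to
\begin{equation*}
\Delta_a H_{n,p} \;=\; \sum_{k=1}^{d} \bigl((a+1)^k - a^k\bigr) P_k,
\end{equation*}
whose leading coefficient in $a$ is $d \cdot P_d$. The hypothesis $p > n(n-1) \ge d$ ensures that $d$ is a unit in $\mathbb{Z}_p$, so $d \cdot P_d \not\equiv 0$ as a polynomial in the remaining variables. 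Consequently
\begin{equation*}
g(a_1, a_2, \text{other vars}) \;:=\; \Delta_{a_1} H_{n,p} - \Delta_{a_2} H_{n,p}
\end{equation*}
is a nonzero polynomial of total degree at most $\binom{n}{2}-1$. Applying the Schwartz-Zippel Lemma (Lemma \ref{lemma:1}) over the uniformly sampled entries of $F_{n,p}$ together with $a_1, a_2$, the probability that $g$ evaluates to zero is bounded by $\binom{n}{2}/p < 1/2$.

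To reach the claimed $2^{-n^2}$ soundness, I would amplify by running the single-shot test $\Theta(n^2)$ times with fresh independent randomness (mirroring the amplification used inside the proof of Lemma \ref{lemma:13}) and rejecting whenever any individual invocation rejects; completeness is preserved since part~(1) holds with probability~$1$ per round. The main obstacle I anticipate is verifying that the leading term $d \cdot P_d$ survives modulo $p$, which is exactly where the prime hypothesis $p > n(n-1)$ is essential. A minor secondary point worth stating carefully in the write-up is that a single call to MultiLinear inspects only the fixed coordinate $e_{\{i,j\}}$, so the ``not multilinear'' clause of the lemma should be read as ``not linear in $e_{\{i,j\}}$''; global multilinearity is then tested by invoking Algorithm~\ref{alg:4} once per edge variable and combining by union bound.
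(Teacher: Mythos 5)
Your proof is correct and follows essentially the same route as the paper: decompose $H_{n,p}$ by its degree in $e_{\{\,i,j\,\}}$, use the finite difference to isolate the higher-degree part (whose leading coefficient $d\cdot P_d$ is nonzero since $2 \le d \le \binom{n}{2} < p$), invoke Schwartz--Zippel, and then amplify by repetition and union-bound over all $\binom{n}{2}$ edge variables. The only packaging difference is that you apply Schwartz--Zippel once to the combined polynomial $g(a_1,a_2,\ldots)$ in all variables, whereas the paper first conditions on the leading coefficient $H_m$ being nonzero and then counts roots of the resulting univariate difference; both give the same bound up to constants.
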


\begin{proof}

Note that, for some $m \in \mathbb{N} \cup \{\, 0 \,\}$, we can write the polynomial $H_{n, p}$ as
\begin{equation}
\label{eq:22}
H_{n, p} \left( F_{n, p} \right) = \sum_{k = 0}^m H_k \left( F_{n, p} \right) e_{\{\, i, j \,\}}^k,
\end{equation}
by grouping it into coefficients of $e_{\{\, i,j \,\}}$, where each $H_k$ for $k \in \{\, 0 \,\} \cup [m]$ is a polynomial independent of $e_{\{\, i, j \,\}}$. By applying the finite difference operator $\Delta_{\{\, i, j \,\}}$\footnote{$\Delta_{\{\, i, j \,\}} f \left( \ldots e_{\{\, i, j \,\}} \ldots \right) = f \left( \ldots e_{\{\, i, j \,\}} + 1 \ldots \right) - f \left( \ldots e_{\{\, i, j \,\}} \ldots \right).$} to Equation \eqref{eq:22}, we get
 \begin{equation*}
\begin{split}
\Delta_{\{\, i, j \,\}} H_{n, p} \left( F_{n, p} \right) & = \sum_{k = 0}^m H_k \left( F_{n, p} \right) \left( \left( e_{\{\, i, j \,\}} + 1 \right)^k - e_{\{\, i, j \,\}}^k \right) \\
& = H_1 \left( F_{n, p} \right) + \sum_{k = 2}^m H_k \left( F_{n, p} \right) \left( \left( e_{\{\, i, j \,\}} + 1 \right)^k - e_{\{\, i, j \,\}}^k \right).
\end{split}
\end{equation*}

For any $m > 1$, if the degree of $e_{\{\, i,j \,\}}$ in $H_{n, p}$ is $m$, then $H_m \left( F_{n, p} \right)$ is not uniformly $0$, and it is $0$ with a probability of at most
\begin{equation}
\label{eq:23}
\frac{\displaystyle {n \choose 2} - m}{p} \leq \frac{\displaystyle {n \choose 2}}{p},
\end{equation}
due to the Schwartz-Zippel Lemma (\ref{lemma:1}). If $H_m \left( F_{n, p} \right)$ is not $0$, then $\Delta_{\{\, i,j \,\}} H_{n, p} \left( F_{n, p} \right)$ is a polynomial of degree $m - 1$ in $e_{\{\, i, j \,\}}$, with the leading coefficient $m$. Since $m$ is at most $\displaystyle {n \choose 2}$, for any $p > \displaystyle {n \choose 2}$, the univariate polynomial $\Delta_{\{\, i,j \,\}} H_{n, p} \left( F_{n, p} \right)$ in $e_{\{\, i,j \,\}}$ is of degree $m - 1$.

With $a_1$ and $a_2$ chosen uniformly and independently from $\mathbb{Z}_p$, the probability that
\begin{equation*}
M \left( F^\prime_1 \right) - M (F_1) = M \left( F^\prime_2 \right) - M (F_2)
\end{equation*}
in $\text{MultiLinear}\left( M, n, p, e_{i, j} \right)$ is at most
\begin{equation}
\label{eq:24}
\frac{m - 1}{p} \leq \frac{\displaystyle \binom{n}{2} - 1}{p},
\end{equation}
due to the fundamental theorem of algebra.

From Equations \eqref{eq:23} and \eqref{eq:24}, using the union bound, if the degree of $e_{\{\, i,j \,\}}$ is greater than $1$ in the polynomial $H_{n, p}$, the probability that $\text{MultiLinear}\left( M, n, p, e_{\{\, i, j \,\}} \right)$ returns \texttt{ACCEPT} is at most
\begin{equation*}
\frac{\displaystyle {n \choose 2}}{p} + \frac{\displaystyle \binom{n}{2} - 1}{p} < \frac{n(n - 1)}{p}.
\end{equation*}

On the other hand, if the degree of $e_{\{\, i,j \,\}}$ is at most $1$, then $\Delta_{\{\, i,j \,\}} H_{n, p} \left( F_{n, p} \right)$ is always a constant, meaning that $\text{MultiLinear}\left( M, n, p, e_{\{\, i, j \,\}} \right)$ always returns \texttt{ACCEPT}.

For $p > n(n - 1)$, we can repeat this algorithm polynomially many times for each $e_{\{\, i,j \,\}}$ to reduce the error to an exponentially small probability. Even after union bounding over the $\displaystyle {n \choose 2}$ variables, we can have error probability at most $2^{-n^2}$.

\end{proof}

\subsection{Fast Test to Identify Polynomials in $\mathbb{I}_{\MLP_{n, p}}$}
\label{section:7.3}

In this section, we will show that given an oracle machine $M$ promised to compute a multi-linear polynomial $H_{n, p}: \mathbb{Z}_{p}^{n \choose 2} \to \mathbb{Z}_p$, we can determine if $H_{n, p} \in \mathbb{I}_{\MLP_{n, p}}$ very quickly. Note that this is the same as the test in section \ref{section:6.3}. For completeness, we describe the test again in Algorithm \ref{alg:5}.

\begin{algorithm}
\caption{IsoMult$\left( M, n, p \right)$}
\label{alg:5}
\Comment{The Isomorphic Polynomial Test \hspace{10.2cm}} \\
\Comment{Input: $n$ is the number of vertices in the undirected multigraph} \\
\Comment{Input: $p > n(n - 1)$ is a prime} \\
\Comment{Input: The oracle machine $M$ computes a polynomial $H_{n, p}: \mathbb{Z}_p^{n \choose 2} \to \mathbb{Z}_p \in \mathbb{V}_{\MLP_{n, p}}$} of degree at most $\displaystyle \binom{n}{2}$ \\
\Comment{Output: \texttt{ACCEPT} if $H_{n, p} \in \mathbb{I}_{\MLP_{n, p}}$, \texttt{REJECT} otherwise} \\
\Comment{All computations are done over $\mathbb{Z}_p$}
\begin{algorithmic}
\State $F_{n, p} \gets_r \mathbb{Z}^{n \choose 2}_p$
\State $\pi \gets_r S_n$
\State $F^\prime_{n, p} \gets \pi \left( F_{n, p} \right)$ \Comment{Using Definition \ref{def:18}}
\State $g \gets M \left( F_{n, p} \right)$
\State $g^\prime \gets M \left( F^\prime_{n, p} \right)$
\If{$g \neq g^\prime$}
    \State \Return \texttt{REJECT}
\EndIf
\State \Return \texttt{ACCEPT}
\end{algorithmic}
\end{algorithm}

We now prove the probabilistic correctness of this test.

\begin{lemma}
\label{lemma:16}

Given that the polynomial $H_{n, p}$ computed by $M$ is in $\mathbb{V}_{\MLP_{n, p}}$ and $p > n(n-1)$ is a prime,

\begin{enumerate}

\item If $H_{n, p} \in \mathbb{I}_{\MLP_{n, p}}$, then the test returns \texttt{ACCEPT} with a probability of $1$.

\item If $H_{n, p} \not \in \mathbb{I}_{\MLP_{n, p}}$, then the test returns \texttt{ACCEPT} with a probability of at most $1 / 2 + \displaystyle {n \choose 2} / p$.

\end{enumerate}

\end{lemma}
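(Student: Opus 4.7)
The plan is to mirror the strategy of Lemma \ref{lemma:11} (the directed analogue), adapting the Schwartz--Zippel bound to the multilinear polynomial setting where the total degree is at most $\binom{n}{2}$ rather than $n$.

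First, I would dispatch the completeness direction. If $H_{n,p} \in \mathbb{I}_{\MLP_{n,p}}$, then by Definition \ref{def:20} we have $\Pi(H_{n,p}) = S_n$, so by Definition \ref{def:19}, for every $\pi \in S_n$ the formal polynomials $H_{n,p}(F_{n,p})$ and $H_{n,p}(\pi(F_{n,p}))$ are identical. In particular, for every choice of $F_{n,p} \in \mathbb{Z}_p^{\binom{n}{2}}$, the values $g$ and $g'$ computed in Algorithm \ref{alg:5} agree, and the test returns \texttt{ACCEPT} with probability $1$.

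For the soundness direction, suppose $H_{n,p} \notin \mathbb{I}_{\MLP_{n,p}}$, so $\Pi(H_{n,p})$ is a proper subgroup of $S_n$. I would split the acceptance probability by conditioning on whether $\pi \in \Pi(H_{n,p})$ or not. Using Lagrange's theorem \citep{Herstein1975}, $\lvert \Pi(H_{n,p}) \rvert$ divides $\lvert S_n \rvert$, and since the subgroup is proper, $\lvert \Pi(H_{n,p}) \rvert \leq \lvert S_n \rvert / 2$. Thus, analogously to Equation \eqref{eq:15},
\begin{equation*}
Q_1 = \mathcal{P}_{\substack{\pi \leftarrow_r S_n \\ F_{n,p} \leftarrow_r \mathbb{Z}_p^{\binom{n}{2}}}}\!\left[ g = g' \mid H_{n,p}(F_{n,p}) \text{ and } H_{n,p}(\pi(F_{n,p})) \text{ are formally the same} \right] \leq \frac{1}{2}.
\end{equation*}
In the complementary case, $H_{n,p}(F_{n,p}) - H_{n,p}(\pi(F_{n,p}))$ is a non-zero polynomial of total degree at most $\binom{n}{2}$ in $F_{n,p}$ (since $H_{n,p}$ lies in $\mathbb{V}_{\MLP_{n,p}}$ and permuting variables preserves the degree bound). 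Applying the Schwartz--Zippel Lemma (\ref{lemma:1}) to this difference over the uniform distribution on $\mathbb{Z}_p^{\binom{n}{2}}$, the probability that they coincide is at most $\binom{n}{2}/p$. Summing the two contributions gives the claimed bound $1/2 + \binom{n}{2}/p$.

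The only subtlety, and the one worth naming explicitly, is ensuring that the degree bound used in Schwartz--Zippel is genuinely $\binom{n}{2}$ rather than something larger. This follows because every $H_{n,p} \in \mathbb{V}_{\MLP_{n,p}}$ is multilinear in the $\binom{n}{2}$ variables of $F_{n,p}$ (Definition \ref{def:17}), so its total degree is at most $\binom{n}{2}$, and permuting variables under $\pi$ preserves both multilinearity and the total degree. Given the hypothesis $p > n(n-1) > 2\binom{n}{2}$, the Schwartz--Zippel bound $\binom{n}{2}/p$ is strictly below $1/2$, so the two-part estimate is meaningful and the test can be boosted by independent repetition as in the proof strategy of Section \ref{section:6}.
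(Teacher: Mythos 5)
Your proof is correct and mirrors the paper's own argument exactly: the same split on whether $\pi \in \Pi(H_{n,p})$, Lagrange's theorem to get the $\leq 1/2$ contribution, and Schwartz--Zippel (with the multilinear degree bound $\binom{n}{2}$) to get the $\binom{n}{2}/p$ contribution. One tiny slip worth fixing: $n(n-1) = 2\binom{n}{2}$ with equality, not strict inequality, though the conclusion $\binom{n}{2}/p < 1/2$ still holds since $p > n(n-1)$.
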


\begin{proof}

The first assertion is true by Definition \ref{def:20}. When $H_{n, p} \not \in \mathbb{I}_{\MLP_p}$, that is, when $\Pi \left( H_{n, p} \right) \neq S_n$, from Equation \eqref{eq:18} and using Lagrange's theorem \citep{Herstein1975}, given that $|\Pi \left( H_{n, p} \right)|$ divides $|S_n|$, we have
\begin{equation}
\label{eq:25}
\begin{split}
P_1 & = \mathcal{P}_{\substack{\pi \leftarrow_r S_n \\ F_{n, p} \leftarrow_r \mathbb{Z}_p^{\binom{n}{2}}}} \left[ H_{n, p} \left( F_{n, p} \right) = H_{n, p} \left( \pi \left( F_{n, p} \right) \right) \mid \text{ these are formally the same polynomials} \right] \\
& = \frac{|\Pi \left( H_{n, p} \right)|}{|S_n|} \\
& \leq \frac{1}{2}.
\end{split}
\end{equation}

Using the Schwartz-Zippel Lemma (\ref{lemma:1}), we also have
\begin{equation}
\label{eq:26}
\begin{split}
P_2 & = \mathcal{P}_{\substack{\pi \leftarrow_r S_n \\ F_{n, p} \leftarrow_r \mathbb{Z}_p^{\binom{n}{2}}}} \left[ H_{n, p} \left( F_{n, p} \right) = H_{n, p} \left( \pi \left( F_{n, p} \right) \right) \mid \text{ these are not formally the same polynomials} \right] \\
& \leq  \frac{\displaystyle \binom{n}{2}}{p}.
\end{split}
\end{equation}

From Equations \eqref{eq:25} and \eqref{eq:26}, we get
\begin{equation*}
\begin{split}
\mathcal{P}_{\substack{\pi \leftarrow_r S_n \\ F_{n, p} \leftarrow_r \mathbb{Z}_p^{\binom{n}{2}}}} \left[ H_{n, p} \left( F_{n, p} \right) = H_{n, p} \left( \pi \left( F_{n, p} \right) \right) \right] & = P_1 + P_2 \\
& \leq \frac{1}{2} + \frac{\displaystyle \binom{n}{2}}{p}.
\end{split}
\end{equation*}
    
Using this, we have that the probability that both evaluations are the same is at most $1 / 2 + \displaystyle \binom{n}{2} / p$.

\end{proof}

By repetition, we can lower the probability of error to at most $2^{-n^2}$ in polynomially many iterations.

\subsection{Fast Test to Identity the $\HCL_{n, p}$ Polynomials}
\label{section:7.4}

So far, we have given tests not specific to the problems of counting cliques of size $\lfloor n / 2 \rfloor$. The next test is this section's main feature and explicitly uses the combinatorial properties
of the polynomial $\HCL_{n, p}$. We present the test in Algorithm \ref{alg:6} to distinguish all other polynomials in $\mathbb{I}_{\MLP_{n, p}}$ that are not $\HCL_{n, p}$ from the polynomial $\HCL_{n, p}$.

\begin{algorithm}
\caption{$\text{Is}\HCL\left( M, n, p \right)$}
\label{alg:6}
\Comment{The $\HCL_{n, p}$ Test \hspace{13cm}} \\
\Comment{Input: $n$ is the number of vertices in the undirected multigraph} \\
\Comment{Input: $p > n (n - 1)$ is a prime} \\
\Comment{Input: The oracle machine $M$ computes a polynomial $H_{n, p}: \mathbb{Z}_p^{n \choose 2} \to \mathbb{Z}_p$ of degree at most $\displaystyle {{\lfloor n / 2 \rfloor} \choose 2}$} \\
\Comment{Output: \texttt{ACCEPT} if $H_{n, p} = \HCL_{n, p}$, \texttt{REJECT} otherwise} \\
\Comment{All computations are done over $\mathbb{Z}_p$}
\begin{algorithmic}
\State $F_{n, p} \gets 0^{n \choose 2}$
\For{each $1 < i < n$}
    \For{each $i < j \leq n$}
        \State $e_{\{\, i, j \,\}} \gets_r \mathbb{Z}_p$
    \EndFor
\EndFor
\State $F_1, F_2 \gets F_{n, p}$
\Comment{The vertex $1$ is isolated in $F_1$}
\For{each $1 < j < \displaystyle \left \lfloor \frac{n}{2} \right \rfloor$}
    \State $F_2.e_{\{\, 1,j \,\}} \gets 1$ 
    \Comment{The vertex $1$ has degree $\displaystyle \left \lfloor \frac{n}{2} \right \rfloor - 2$ in $F_2$}
\EndFor
\State $g_1 \gets M \left( F_1 \right)$
\State $g_2 \gets M \left( F_2 \right)$
\If{$g_1 \neq g_2$}
    \State \Return \texttt{REJECT}
\EndIf
\State \Return \texttt{ACCEPT}
\end{algorithmic}
\end{algorithm}

We present some intuition for this in the combinatorial view before proceeding to the proof, which will mainly be algebraic. Each multi-linear polynomial $H_{n, p}$ is a counting function that counts unique subgraphs $S$ when given an edge list $F_{n, p}$ as input, with some weight multiplier for each $S$. Given that we know $H_{n, p} \in \mathbb{I}_{\MLP_{n, p}}$, $H_{n, p}$ counts, with some weight, unique copies of subgraphs that are isomorphic to each $S$, where $S$ is a subgraph of $K_n$, unique up to isomorphism.

On $n$ vertices, we can construct the following two multigraphs described by the edge lists $F_1$ and $F_2$. $F_1$ is a random graph on the set of vertices $[n] - [1]$ with each ``number of edges'' chosen from $\mathbb{Z}_p$. The degree of the vertex $1$ in this graph, the number of edges having non-zero value involving the vertex $1$ is $0$. In $F_2$, we start with $F_1$ and add $\lfloor n / 2 \rfloor - 2$ edges from the vertex $1$ to the set of vertices $[\lfloor n / 2 \rfloor - 1] - [1]$.

Our idea here is that $H_{n, p} \left( F_2 \right) - H_{n, p} \left( F_1 \right)$ is a counting function that counts exactly the subgraphs $H_{n, p}$ intends to count that involves the vertex $1$. $\HCL_{n, p} \left( F_2 \right) - \HCL_{n, p} \left( F_1 \right)$ is uniformly $0$ since the degree of the vertex $1$ (or any vertex for that matter) a subgraph counted by $\HCL_{n, p}$ is at least $\lfloor n / 2 \rfloor - 1$. Suppose that $H_{n, p} = \MLP_{A_{n, p}}$ and there is a graph $S$ with $a_{S} \neq 0$ and a vertex $k$ in $S$ such that $0 < \text{deg}(k) < \lfloor n / 2 \rfloor - 1$. Since $H_{n, p}$ is invariant under graph isomorphism, that means there is a graph $S^\prime$ such that $a_{S^\prime} \neq 0$ and $0 < \text{deg}(1) < \lfloor n / 2\rfloor - 1$ in $S^\prime$. In such a case, $H_{n, p} \left( F_2 \right) - H_{n, p} \left( F_1 \right)$ is not uniformly $0$, and due to the Schwartz-Zippel Lemma (\ref{lemma:1}), it is in fact, seldom $0$. Hence, with high probability, any $H_{n, p}$ with non-zero weights to subgraphs containing vertices of non-zero degree less than $\lfloor n / 2 \rfloor - 1$ fails the test. We can specifically ask the STV list decoder (Lemma \ref{lemma:2}) to list machines computing polynomials of degree at most $\displaystyle {{\lfloor n / 2 \rfloor} \choose 2}$. Taking this into account and the fact that every counted vertex has degree at least $\lfloor n / 2 \rfloor - 1$, which implicitly says that at least $\lfloor n / 2 \rfloor$ vertices are involved in each subgraph, each counted subgraph must exactly be a copy of $K_{\lfloor n / 2 \rfloor}$. Taking the invariance under graph isomorphism into account, unless $H_{n, p} = \alpha \HCL_{n, p}$ for some $\alpha \in \mathbb{Z}_p$, $H_{n, p}$ fails the test with high probability. It is easy to determine $\alpha$ by testing $M$ on a trivial $\lfloor n / 2 \rfloor$-clique graph. We now formalize this and prove it.

\begin{lemma}
\label{lemma:17}

Given an oracle machine $M$ promised to compute a polynomial $H_{n, p}: \mathbb{Z}_p^{n \choose 2} \to \mathbb{Z}_p \in \mathbb{I}_{\MLP_{n, p}}$ of degree at most $\displaystyle {{ \lfloor n / 2 \rfloor} \choose 2}$, we have a test such that:

\begin{enumerate}

\item If $H_{n, p} = \HCL_{n, p}$, then $M$ passes the test with a probability of $1$.

\item If $H_{n, p} \neq \HCL_{n, p}$, then $M$ passes the test with a probability at most $\displaystyle \displaystyle {n \choose 2} / p$.

\end{enumerate}

\end{lemma}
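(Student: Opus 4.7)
The plan is to view Algorithm \ref{alg:6} as a polynomial-identity test for $\Delta(F^{-1}) := H_{n, p}(F_2) - H_{n, p}(F_1)$, regarded as a polynomial in the random entries $F^{-1}$ (the edges supported on $\{\, 2, \ldots, n \,\}$ that are shared between $F_1$ and $F_2$). Its total degree is bounded by $\binom{\lfloor n / 2 \rfloor}{2} \leq \binom{n}{2}$, so once $\Delta$ is shown to be a non-zero formal polynomial, the Schwartz-Zippel Lemma (Lemma \ref{lemma:1}) forces $\mathcal{P}\bigl[\Delta(F^{-1}) = 0\bigr] \leq \binom{n}{2}/p$. I take the ``test'' in the statement to be Algorithm \ref{alg:6} together with a single deterministic query on the canonical input $F^*$ whose only non-zero entries are the edges of $K_{\lfloor n / 2 \rfloor}$ on $\{\, 1, \ldots, \lfloor n / 2 \rfloor \,\}$; this extra query, mentioned in the intuition preceding the algorithm, is needed because Algorithm \ref{alg:6} is blind to shifts of the form $\alpha \HCL_{n, p} + \beta$, and it contributes no error since $\HCL_{n, p}(F^*) = 1$ is known in closed form.

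For the completeness direction, when $H_{n, p} = \HCL_{n, p}$ both $F_1$ and $F_2$ leave the vertex $1$ with fewer than $\lfloor n / 2 \rfloor - 1$ non-zero neighbors -- none in $F_1$, and only $V_L := \{\, 2, \ldots, \lfloor n / 2 \rfloor - 1 \,\}$, of size $\lfloor n / 2 \rfloor - 2$, in $F_2$. Any $\lfloor n / 2 \rfloor$-clique containing vertex $1$ would require vertex $1$ to have degree $\lfloor n / 2 \rfloor - 1$ within the clique, which is impossible, so such cliques contribute $0$ in both inputs; cliques avoiding vertex $1$ use only the shared random entries and hence contribute identically, giving $\HCL_{n, p}(F_1) = \HCL_{n, p}(F_2)$ with probability $1$, while the canonical-input check matches the known value deterministically. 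For soundness, Lemma \ref{lemma:14} decomposes $H_{n, p} = \sum_{[S]} c_{[S]} H_{n, p}^{I(S)}$ over iso classes with $|E(S)| \leq \binom{\lfloor n / 2 \rfloor}{2}$. A double-counting argument shows that any non-empty $S$ with at most $\binom{\lfloor n / 2 \rfloor}{2}$ edges whose vertices all have degree $0$ or at least $\lfloor n / 2 \rfloor - 1$ must equal $K_{\lfloor n / 2 \rfloor}$: at least $\lfloor n / 2 \rfloor$ non-isolated vertices are needed, each of degree $\geq \lfloor n / 2 \rfloor - 1$, forcing $|E(S)| \geq \binom{\lfloor n / 2 \rfloor}{2}$ with equality only at $K_{\lfloor n / 2 \rfloor}$. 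Hence if $H_{n, p}$ is not of the form $\alpha \HCL_{n, p} + \beta$, there exists $[S^*] \notin \{\, [\emptyset], [K_{\lfloor n / 2 \rfloor}] \,\}$ with $c_{[S^*]} \neq 0$ whose representative $S^*$ contains a vertex $v^*$ with $1 \leq \deg_{S^*}(v^*) \leq \lfloor n / 2 \rfloor - 2$.

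The remaining and most delicate task is to show $\Delta \not\equiv 0$ in this situation. Expanding $H_{n, p} = \sum_T a_T \prod_{e \in T} e$ with $a_T = c_{[T]}$, and splitting each $T$ as $T = T_0 \cup \mathrm{star}(1 \to U)$ where $T_0$ lives on $\{\, 2, \ldots, n \,\}$ and $U \subseteq [n] \setminus \{\, 1 \,\}$, one obtains $\Delta = \sum_{T_0} \bigl( \sum_{\emptyset \neq U \subseteq V_L} c_{[T_0 \cup \mathrm{star}(1 \to U)]} \bigr) \prod_{e \in T_0} e$. My plan is to induct on the maximal edge count among non-trivial iso classes with non-zero weight: letting $[S^*]$ be such a maximal class, I pick $T_0$ to be an embedding of $S^* - v^*$ in $\{\, 2, \ldots, n \,\}$ whose $N_{S^*}(v^*)$-neighbors lie in $V_L$, so that one summand of the inner parenthesised sum equals $c_{[S^*]} \neq 0$ while every other summand either has strictly smaller edge count (and hence vanishes inductively) or matches $[S^*]$ in a controlled way. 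The main obstacle is the possibility that different summands from the same maximal layer cancel; I plan to kill this by exploiting the $S_{n - 1}$-action on $\{\, 2, \ldots, n \,\}$: the assumption $\Delta \equiv 0$ must hold simultaneously for every relabelling of $V_L$, and iterating the resulting system of linear relations against the $S_{n - 1}$-orbits of $T_0$ eliminates such cancellations. Once $\Delta \not\equiv 0$ is established, Schwartz-Zippel converts this into acceptance probability at most $\binom{n}{2}/p$, and the deterministic query on $F^*$ disposes of the residual $\alpha \HCL_{n, p} + \beta$ case at zero additional cost, completing the stated bound.
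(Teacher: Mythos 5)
Your proposal follows the same route as the paper's proof: expand $\Delta := H_{n,p}(F_2) - H_{n,p}(F_1)$ as a polynomial in the shared random entries, use an extremal isomorphism class $[S^*]$ with a low-degree vertex, apply Schwartz--Zippel once $\Delta \not\equiv 0$ is established, and resolve the remaining degrees of freedom with a canonical query. You have correctly put your finger on the step the paper elides: Equation~\eqref{eq:27} is a \emph{sum}, so exhibiting one subgraph $S^*$ with $a_{[S^*]} \neq 0$ whose vertex-$1$ edges lie in $V_L$ does not by itself show $\Delta$ is non-zero, because the coefficient of each monomial $\prod_{e \in T_0} e$ is the aggregate $\sum_{\emptyset \neq U \subseteq V_L} a_{[T_0 \cup \mathrm{star}(1 \to U)]}$ and different $U$'s can cancel. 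The paper's proof of Lemma~\ref{lemma:17} simply asserts ``Hence, Equation~\eqref{eq:27} is not uniformly $0$'' without ruling this out, and its Claim~\ref{claim:1} is stated for ``$H_{n,p} \neq \alpha \HCL_{n,p}$'' while its proof only covers the case where some non-empty class other than $[K_{\lfloor n/2 \rfloor}]$ carries a non-zero coefficient (so, e.g., $H_{n,p} = \HCL_{n,p} + c$ slips through). Your re-phrasing in terms of ``not of the form $\alpha\HCL_{n,p}+\beta$'' is the right correction to Claim~\ref{claim:1}.

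However, your proposal does not close the gap either. The maximality argument cleanly handles $|U| > d^*$ (then $G_U := T_0^* \cup \mathrm{star}(1 \to U)$ exceeds the extremal edge count and cannot be $K_{\lfloor n/2 \rfloor}$ since vertex $1$ has degree $|U| \leq |V_L| = \lfloor n/2 \rfloor - 2 < \lfloor n/2 \rfloor - 1$, so $a_{[G_U]} = 0$), but for $|U| = d^*$ the graphs $G_U$ have the same number of edges as $S^*$, need not be isomorphic to $S^*$, and nothing you say rules out a cancelling combination of their coefficients; ``matches $[S^*]$ in a controlled way'' is a placeholder, not an argument, and the $S_{n-1}$-averaging is stated only as an intention --- it is not obvious it produces new independent relations, since isomorphism-invariance of $a$ already collapses $S_{n-1}$-orbits. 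Separately, one deterministic query on $F^*$ imposes only the single linear condition $\alpha + \beta = 1$; the polynomial $2\HCL_{n,p} - 1$ has $\Delta \equiv 0$ and evaluates to $1$ on $F^*$, so it passes your test with probability one yet is not $\HCL_{n,p}$ --- an additional query on the all-zero input is needed to fix $\beta = 0$. The key missing piece, supplied by neither your proposal nor the paper, is a complete proof that $\Delta \not\equiv 0$ whenever $H_{n,p}$ is not an affine function of $\HCL_{n,p}$.
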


\begin{proof}

We can write $H_{n, p} \left( F_{n, p} \right)$ as
\begin{equation*}
H_{n, p} \left( F_{n, p} \right) = \sum_{\substack{S \subset F_{n, p} \\ e_{\{\, 1,j \,\}} \not \in S \text{ for any } j }} a_S H_{n, p}^S \left( F_{n, p} \right) + \sum_{\substack{S \subset F_{n, p} \\ e_{\{\, 1,j \,\}} \in S \text{ for some } j }} a_S H_{n, p}^S \left( F_{n, p} \right).
\end{equation*}

As per our construction of $F_1$ and $F_2$,
\begin{equation}
\label{eq:27}
H_{n, p} \left( F_2 \right) - H_{n, p} \left( F_1 \right) = \sum_{\substack{S \subset F_{n, p} \\ e_{\{\, 1,j \,\}} \in S \text{ for some } j}} a_S H_{n, p}^S \left( F_2 \right).
\end{equation}

The following claim will be useful to us.
\begin{claim}
\label{claim:1}

If $H_{n, p} \neq \alpha \HCL_{n, p}$ for any $\alpha \in \mathbb{Z}_p$, then there is an $S$ with $a_S \neq 0$ such that, in $S$, $0 < \deg(1) < \lfloor n / 2 \rfloor - 1$.

\end{claim}

\begin{proof}

Notice that if there is any non-empty $S^\prime \subset F_{n, p}$ at all with $a_{S^\prime} \neq 0$ and $S^\prime$ not isomorphic to $K_{\lfloor n / 2 \rfloor}$ with the number of edges in $S^\prime$ being at most $\displaystyle {{\lfloor n / 2 \rfloor} \choose 2}$, then there must be a vertex $k$ in $S^\prime$ with non-zero degree and degree less than $\lfloor n / 2 \rfloor - 1$. Suppose for the sake of contradiction that every vertex $k$ with non-zero degree must have degree at least $\lfloor n / 2 \rfloor - 1$. There are at least $\lfloor n / 2 \rfloor$ such vertices since the vertex $k$, on its own qualifies for this, and is adjacent to $\lfloor n / 2 \rfloor - 1$ other vertices with non-zero degree. This construction is exactly $K_{\lfloor n / 2 \rfloor}$, showing that $K_{\lfloor n / 2 \rfloor} \subset S^\prime$. If this containment were strict, then $S^\prime$ would have more than $\displaystyle {{\lfloor n / 2\rfloor} \choose 2}$ edges, contradicting the assumption that the degree of $H_{n, p}$ is at most $\displaystyle {{\lfloor n / 2 \rfloor} \choose 2}$. Due to Lemma \ref{lemma:14}, if any such $S^\prime$ with $a_{S^\prime} \neq 0$ exists with any of the non-zero degree vertices of degree less than $\lfloor n / 2 \rfloor - 1$, then there is an $S \sim S^\prime$ with the degree of the vertex $1$ less than $\lfloor n / 2 \rfloor - 1$.

\end{proof}

In lieu of this, $H_{n, p} \left( F_2 \right) - H_{n, p} \left( F_1 \right)$ is uniformly $0$ if and only if $H_{n, p}$ is a multiple of $\HCL_{n, p}$. It is easy to see that if $H_{n, p}$ is a multiple of $\HCL_{n, p}$, then $H_{n, p} \left( F_2 \right) - H_{n, p} \left( F_1 \right)$ is uniformly $0$, since the $S$ in Equation \eqref{eq:27} is a clique of size $\lfloor n / 2 \rfloor$. If the polynomial $H_{n, p}$ is not a multiple of $\HCL_{n, p}$ and $H_{n, p} \in \mathbb{I}_{\MLP_{n, p}}$, then due to Claim \ref{claim:1}, there is an $S$ such that the degree of the vertex $1$ in $S$ is non-zero, less than $\lfloor n / 2 \rfloor - 1$, and $a_S \neq 0$. Due to the invariance of $H_{n, p}$ under graph isomorphism, we can pick $S$ such that all edges from vertex $1$ are adjacent to the set of vertices $[\lfloor n / 2 \rfloor - 1] - [1]$. Hence, Equation \eqref{eq:27} is not uniformly $0$. Due to the Schwartz-Zippel Lemma (\ref{lemma:1}), we have 
\begin{equation*}
\mathcal{P}_{(F_1, F_2) \gets_r \mathcal{D}} \left[ H_{n, p} \left( F_2 \right) - H_{n, p} \left( F_1 \right) = 0 \right] \leq \frac{\displaystyle {{\lfloor n / 2 \rfloor} \choose 2}}{p},
\end{equation*}
where $\mathcal{D}$ is the probability distribution in Algorithm \ref{alg:6}. Once $M$ passes the test, we can be reasonably sure that $H_{n, p} = \alpha \HCL_{n, p}$. To check whether $\alpha = 1$, one can test $M$ with $F_{n, p}$ set to the simple complete graph $K_{\lfloor n / 2 \rfloor}$ on the set of vertices $[\lfloor n / 2 \rfloor]$.

\end{proof}

Once again, we can reduce the error probability to exponentially low merely by repetition.

We can state the following lemma, that follows from Lemmas \ref{lemma:15}, \ref{lemma:16}, and \ref{lemma:17}, and the repetition of each of these tests polynomially many times.

\begin{lemma}
\label{lemma:18}

There is a test such as that given an oracle machine $M$ promised to compute a polynomial $H_{n, p}: \mathbb{Z}_p^{n \choose 2} \to \mathbb{Z}_p$ of degree at most $\displaystyle {{\lfloor n / 2 \rfloor} \choose 2}$, with $p > n(n-1)$:

\begin{enumerate}

\item If $H_{n, p} = \HCL_{n, p}$, then $M$ passes the test with a probability of $1$.

\item If $H_{n, p} \neq \HCL_{n, p}$, then $M$ passes the test with a probability of at most $2^{-n^2}$

\end{enumerate}

\end{lemma}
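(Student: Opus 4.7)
The plan is to chain together the three subtests developed in Lemmas \ref{lemma:15}, \ref{lemma:16}, and \ref{lemma:17}, and then append a single deterministic query to pin down the scalar multiple. Concretely, the composite procedure will accept iff all of the following succeed: (i) for every edge variable $e_{\{\, i, j \,\}}$, the test $\text{MultiLinear}(M,n,p,e_{\{\, i, j \,\}})$ accepts; (ii) $\text{IsoMult}(M,n,p)$ accepts; (iii) $\text{Is}\HCL(M,n,p)$ accepts; and (iv) $M$ returns $1$ when queried on the adjacency list encoding $K_{\lfloor n/2 \rfloor}$ on the vertex set $[\lfloor n/2 \rfloor]$ (with all remaining entries set to $0$). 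Each of stages (i)--(iii) will be amplified by independently repeating the underlying test polynomially many times and union bounding, driving the one-sided soundness error of each stage down to at most $2^{-n^2}/3$; these amplification arguments are exactly those already indicated in the proofs of Lemmas \ref{lemma:15}--\ref{lemma:17}.

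For the completeness direction, suppose $H_{n,p} = \HCL_{n,p}$. Then $H_{n,p}$ is multilinear, lies in $\mathbb{I}_{\MLP_{n,p}}$, and has scalar multiple $\alpha = 1$; by part (1) of Lemmas \ref{lemma:15}, \ref{lemma:16}, and \ref{lemma:17}, stages (i)--(iii) each accept with probability $1$, and the canonical $K_{\lfloor n/2 \rfloor}$ graph has exactly one clique of size $\lfloor n/2 \rfloor$, so stage (iv) also accepts deterministically.

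For soundness, I will argue by case analysis on $H_{n,p}$ under the assumption $H_{n,p} \neq \HCL_{n,p}$. If $H_{n,p} \notin \mathbb{V}_{\MLP_{n,p}}$, stage (i) rejects with probability at least $1 - 2^{-n^2}/3$ by Lemma \ref{lemma:15}. If $H_{n,p} \in \mathbb{V}_{\MLP_{n,p}} \setminus \mathbb{I}_{\MLP_{n,p}}$, then stage (i) accepts deterministically and stage (ii) rejects with probability at least $1 - 2^{-n^2}/3$ by Lemma \ref{lemma:16}. If $H_{n,p} \in \mathbb{I}_{\MLP_{n,p}}$ and is not a scalar multiple of $\HCL_{n,p}$, then by Lemma \ref{lemma:17}, together with the degree hypothesis $\binom{\lfloor n/2 \rfloor}{2}$ from the statement of Lemma \ref{lemma:18}, stage (iii) rejects with probability at least $1 - 2^{-n^2}/3$. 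Finally, if $H_{n,p} = \alpha \HCL_{n,p}$ with $\alpha \neq 1$, stage (iv) rejects deterministically because $\alpha \cdot 1 \neq 1$. A union bound over the three randomized stages gives overall soundness error at most $2^{-n^2}$. There is no substantive obstacle beyond careful case analysis; all of the conceptual content lives in the preceding lemmas, and Lemma \ref{lemma:18} is essentially their assembly.
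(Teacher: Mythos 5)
Your proposal is correct and takes essentially the same route as the paper, which gives no separate proof and simply records that Lemma~\ref{lemma:18} follows by chaining Lemmas~\ref{lemma:15}, \ref{lemma:16}, and \ref{lemma:17} with polynomial repetition for amplification. The only presentational difference is that you split off the scalar check on the canonical $K_{\lfloor n/2\rfloor}$ instance as an explicit fourth stage, whereas the paper folds that check into the final sentence of the proof of Lemma~\ref{lemma:17}; the content is the same, and your case analysis (not in $\mathbb{V}_{\MLP_{n,p}}$, in $\mathbb{V}_{\MLP_{n,p}}\setminus\mathbb{I}_{\MLP_{n,p}}$, in $\mathbb{I}_{\MLP_{n,p}}$ but not a scalar multiple, a scalar multiple $\alpha\HCL_{n,p}$ with $\alpha\neq 1$) is exhaustive and correctly invokes the completeness/soundness guarantees of the respective subtests.
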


\subsection{Main Results for Half-Clique Counting}
\label{section:7.5}

We are now ready to state the main results on the half-clique counting problem modulo $p$.
\begin{theorem}
\label{theorem:8}

Given a prime $p = 2^{n^{O(1)}} > n(n-1)$, and an oracle $O$ counting the number of unique cliques of size $\lfloor n / 2 \rfloor$ modulo $p$ correctly on more than an $\epsilon$-fraction of instances (for any $\epsilon = 1 / 2^{o(n)}$) on $n$-vertex multigraphs, where the number of edges varies from $0$ to $p - 1$, then in $\poly(n, \log p, 1/\epsilon)$ time and queries to $O$, we have a probabilistic algorithm that computes the number of unique cliques of size $\lfloor n / 2 \rfloor$ modulo $p$ with an error probability of less than $2^{-n}$.

\end{theorem}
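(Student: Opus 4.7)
The plan is to mirror the proof of Theorem \ref{theorem:6} for Hamiltonian cycle counting, replacing the Hamiltonian-cycle testing suite of Section \ref{section:6} with the half-clique testing suite of Section \ref{section:7} and in particular Lemma \ref{lemma:18}. First I would feed the oracle $O$ to the STV list decoder (Lemma \ref{lemma:2}), targeting the degree bound $d = \binom{\lfloor n/2 \rfloor}{2} = O(n^2)$ that $\HCL_{n,p}$ satisfies. Since $p = 2^{n^{O(1)}}$ is exponential in $n$, the condition $\epsilon > \sqrt{2d/|\mathbb{F}|}$ is comfortably met for every $\epsilon = 1/2^{o(n)}$. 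The list decoder then returns, in $\poly(n, \log p, 1/\epsilon)$ time, a list of $t = O(1/\epsilon)$ randomized oracle machines $(M_i)_{i=1}^{t}$, each computing some multilinear polynomial of degree at most $\binom{\lfloor n/2 \rfloor}{2}$ from $\mathbb{Z}_p^{\binom{n}{2}}$ to $\mathbb{Z}_p$, and the guarantee is that one of them, say $M_{i^\ast}$, computes $\HCL_{n,p}$. Each $M_i$ has per-query error probability at most $2^{-q(n)}$ for some polynomial $q$, so by the union bound over the $\poly(n, 1/\epsilon) = 2^{o(n)}$ queries we will make, with probability $1 - 2^{-\Omega(n^{1.99})}$ every machine behaves exactly as a deterministic, always-correct evaluator of its polynomial.

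Next I would run, for each $M_i$, the composite $\HCL_{n,p}$ identification test of Lemma \ref{lemma:18}: the multilinearity test (Algorithm \ref{alg:4}), the isomorphism-invariance test (Algorithm \ref{alg:5}), and the $\HCL_{n,p}$ structure test (Algorithm \ref{alg:6}), each repeated polynomially many times so that, on the one hand, $M_{i^\ast}$ passes with probability $1$ (under the deterministic-behavior conditioning above), and, on the other hand, any $M_i$ whose underlying polynomial is not $\HCL_{n,p}$ passes with probability at most $2^{-n^2}$. The algorithm picks the first machine $M_{i}$ passing all tests and uses it to answer the actual input.

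By a further union bound, the probability that some non-$\HCL_{n,p}$ machine accidentally passes the composite test is bounded by $t \cdot 2^{-n^2} = O(1/\epsilon) \cdot 2^{-n^2} = 2^{o(n) - n^2}$, which is absorbed into a clean $2^{-n}$ failure bound together with the STV-reliability bound. The total runtime and query count are $\poly(n, \log p, 1/\epsilon)$, since STV itself runs in this budget, each of the three tests uses $\poly(n, \log p)$ operations per repetition, only $\poly(n)$ repetitions are needed for each of the $O(1/\epsilon)$ machines, and each query is evaluated in $\poly(n, \log p)$ time by $M_i$ (which internally calls $O$).

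The main obstacle is purely bookkeeping: ensuring that the degree requested from the STV list decoder matches the actual degree of $\HCL_{n,p}$ (so that $M_{i^\ast}$ is indeed in the returned list), and that the probability-of-error budgets across (i) the STV machines failing, (ii) a wrong $M_i$ passing Algorithms \ref{alg:4}--\ref{alg:6}, and (iii) the Schwartz--Zippel collisions within those tests all aggregate to at most $2^{-n}$. No genuinely new algebraic or combinatorial input is needed beyond what Sections \ref{section:7.2}--\ref{section:7.4} already supply; the argument is the half-clique analogue of Theorem \ref{theorem:6}.
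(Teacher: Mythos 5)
Your proposal matches the paper's proof essentially line for line: invoke the STV list decoder with target degree $\binom{\lfloor n/2 \rfloor}{2}$, condition on the returned machines behaving deterministically, run the composite test of Lemma~\ref{lemma:18} (Algorithms~\ref{alg:4}--\ref{alg:6}) on each of the $O(1/\epsilon)$ candidates, and union-bound the failure events to $O(1/\epsilon)\cdot 2^{-n^2} < 2^{-n}$. No discrepancy in approach or in the error accounting.
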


\begin{proof}

Similar to the proof of Theorem \ref{theorem:6}, we ask the STV list decoder (Lemma \ref{lemma:2}) to return polynomials of degree at most $\displaystyle {{\lfloor n / 2 \rfloor} \choose 2}$. The STV list decoder returns $O(1/\epsilon)$ oracle machines $M$, each of which computes a polynomial $H_{n, p}$ according to our requirements. One of these machines compute $\HCL_{n, p}$ with high probability. Using the tests outlined above, we can identify $M$ and start using it for our algorithm. The union bound implies that the sum of all errors possible is
\begin{equation*}
\frac{O(1 / \epsilon)}{2^{n^2}} = \frac{2^{o(n)}}{2^{n^2}} = \frac{1}{2^{\omega(n)}} < \frac{1}{2^n}.
\end{equation*}

\end{proof}

We show the following corollary of this theorem.
\begin{theorem4}
\label{cor:3}

\textbf{Theorem \ref{thm:4} Restated.} \\
If there is an algorithm computing the number of half-cliques modulo $p$ on $n$-vertex multigraphs for a prime $p = \Theta \left( 2^{n^{c}} \right)$ on a $\displaystyle 1 / g(n)$-fraction of instances in $2^{o(n)}$-time for any $c \geq 1$, and $g(n) = 2^{o(n)}$, then $\rETH$ is false.

\end{theorem4}

\begin{proof}

Due to Theorem \ref{theorem:8}, if such an algorithm exists, then in $2^{o(n)}$ time and queries to the algorithm, we can compute the number of half-cliques modulo $p$ with high probability. Due to the randomized reduction from $\HALF$ to $\HALF_p$ (Lemma \ref{lemma:8}) and the $\ETH$-hardness of $\HALF$ (Appendix  \ref{appendix:A}), we would have a $2^{o(n)}$-time randomized algorithm for $3\SAT$, violating $\rETH$.

\end{proof}

On the other hand, note that it is easy to have
\begin{equation*}
\frac{1}{2^{O(n)}} = \frac{1}{p}-o \left( \frac{1}{p^{1.999}} \right)
\end{equation*}
correctness if $p = 2^{cn}$, simply by returning $0$ without even reading the input. We represent $\HCL_{n, p} \left( F_{n, p} \right)$ as
\begin{equation*}
\HCL_{n, p} \left( F_{n, p} \right) = H_1 \left( F_{n, p} \right) + e_{\{\, 1, 2 \,\}} H_2 \left( F_{n, p} \right),
\end{equation*}
where $H_1 \left( F_{n, p} \right)$ and $H_2 \left( F_{n, p} \right)$ are independent of the variable $e_{\{\, 1, 2 \,\}}$. Choosing $F_{n, p} - \{\, e_{\{\, 1, 2 \,\}} \,\}$ at random from $\mathbb{Z}_p^{\binom{n}{2} - 1}$, $H_2 \left( F_{n, p} \right)$ is non-zero with a probability of at least $1 - \left( \displaystyle {n \choose 2} - 1 \right) / p$, due to the Schwartz-Zippel Lemma (\ref{lemma:1}), and there is a unique $e_{\{\, 1,2 \,\}} \in \mathbb{Z}_p$ such that $\HCL_{n, p} \left( F_{n, p} \right) = 0$. In view of this, $\HCL_{n, p} \left( F_{n, p} \right)$ has at least $\left( 1 - \left( \displaystyle {n \choose 2} - 1 \right) / p \right) p^{\binom{n}{2} - 1}$ roots, meaning that for at least a
\begin{equation*}
\frac{1 - \displaystyle \frac{\displaystyle {n \choose 2} - 1}{p}}{p} = \frac{1}{p} - o \left( \frac{1}{p^{1.999}} \right) = \Omega \left( \frac{1}{2^{cn}} \right)
\end{equation*}
fraction of instances $F_{n, p}$, $\HCL_{n, p} \left( F_{n, p} \right) = 0$.\footnote{Note that this was this easy for us to show since $\HCL_{n, p}$ is multilinear. If every variable had degree at least two, we might have had to use heavy tools from algebraic geometry such as the Lang-Weil bound \citep{LW54}.}

In other words, assuming $\rETH$, any algorithm that has even a marginally smaller exponent than the worst-case algorithm, cannot hope to be correct on even a $1 / 2^{o(n)}$-fraction of instances, while an algorithm that refuses to read the input and simply outputs $0$ is correct at least on a
\begin{equation*}
\frac{0.99}{p} = \Omega \left( \frac{1}{2^{cn}} \right)
\end{equation*}
fraction of instances.

\section{Hardness Amplification for Parity $k$-Clique Counting}
\label{section:8}

In this section, we consider the question of average-case hardness for counting problems on simple graphs. Consider the problem of deciding the parity of the number of half-cliques in a simple $n$-vertex undirected graph $U_n$. This is the same as evaluating the polynomial $\HCL_{n, 2}$ over $\mathbb{Z}_2$. Tools such as the STV list decoder (Lemma \ref{lemma:2}) and the Schwartz-Zippel Lemma (\ref{lemma:1}) are powerless in this scenario since these tools inherently require that the size of the field is large. Such questions have been considered by \cite{Boix2019}. In the case of simple graphs, they show that with a $k-\CLIQUE$ counting oracle that has an error fraction that is at most $O \left( 1 / (\log k)^{k \choose 2} \right)$, we can have a quick probabilistic algorithm using essentially the runtime of the oracle multiplied by a factor of $O \left( (\log k)^{k \choose 2} \right)$. \cite{Goldreich2020} improves upon the error rate, providing a reduction requiring $O(n^2)$ time, $e^{O(k^2)}$ queries, and the oracle is correct on a $1-2^{-k^2}$ fraction of instances.

We recall the definition of the random experiment $O^{H_n}_{c}$ from Subsection \ref{section:1.1.2}.

\begin{definition1}
\label{def:23}

Given any set $\mathbb{D}$ and a function $H_n: \{\, 0,1 \,\}^{n \choose 2} \to \mathbb{D}$ defined over $n$-vertex simple undirected graphs that is invariant under graph isomorphism, the random experiment $O^{H_n}_{c}$ selects a set $S \subset \{\, 0,1 \,\}^{n \choose 2}$ of size $c 2^{n \choose 2}$ with uniform probability and gives an oracle $O$ that correctly answers queries for computing $H_n$ on the set $S$. The other answers of $O$ can be selected adversarially, randomly, or to minimize the time complexity,  $T_O$, of the fastest deterministic algorithm implementing it.

\end{definition1}

For $\epsilon > 0$, we show that no matter how these values in $\{\, 0,1 \,\}^{n \choose 2} - S$ are chosen, with a probability of greater than $1 - 2^{-\Omega(n^2)}$, the oracle $O$ sampled from $O^{H_n}_{1/2 + \epsilon}$ can be error-corrected to provide correct answers to $H_n$ for a simple $n$-vertex undirected graph $U_n$ when $U_n$ is not a highly symmetric graph, using $O(1 / \epsilon^2)$ queries to $O$. When we say ``not a highly symmetric graph'', we mean that the automorphism group of $U_n$, $\Aut \left( U_n \right)$ is of size at most $n! \epsilon^2 / n^3$. It was proved by \cite{Polya1937} and \cite{Erdos1963} that at least a $1-2^{-0.99n}$-fraction of $n$ vertex simple undirected graphs have a trivial automorphism group. When the graph is highly symmetric, with a very large set of symmetries, one can imagine that computing isomorphism-invariant functions\footnote{This is true for at least such as $k-\CLIQUE$ where the difficulty of the problem increases with the decrease in symmetry in the graph.} on highly symmetric graphs is easy and can be done fairly quickly. We show that this is the case for counting $k$-cliques on highly symmetric graphs in time polynomial in $n$ and the size of the isomorphism class of $U_n$. We show that paired with a quick test to see whether a graph $U_n$ has a large automorphism group, from almost all oracles $O$ sampled from $O^{H_n}_{1 / 2 + \epsilon}$, where $H_n$ computes the number of $k$-cliques (or the parity, or detects them), we have a probabilistic algorithm computing $H_n$ in $O \left( \left( n^{8 + o(1)}/\epsilon^{2 + o(1)} + T_O \right) /\epsilon^2 \right)$-time.

A rough intuitive sketch of the algorithm is:

\begin{enumerate}

\item Check whether the automorphism group of $U_n$ is ``large.''

\item If it indeed is ``large,'' since it is a highly symmetric graph, we compute $H_n$ very quickly by listing all graphs isomorphic to $U_n$ and utilizing the symmetry to reduce the number of redundant computations. In this case, it is helpful that the isomorphism class of $U_n$ is small.

\item If the automorphism group is not ``large,'' then the isomorphism class of $U_n$ is sufficiently large, and we can use our error correction procedure. The procedure is straightforward: Choose a random graph isomorphic to $U_n$ and query $O$ on that graph. The probabilistic method guarantees that all sufficiently large isomorphism classes get sufficient correctness to amplify from with high probability over the randomness of $O^{H_n}_{1/2+\epsilon}$.

\end{enumerate}

For the specialized case for constant $k > 2$ and constant $\epsilon > 0$, to reduce the time complexity of the reduction to $\tilde{O} \left( n^2 \right)$, for sufficiently large $n$, we provide a classification of all graphs with an automorphism group of size $\omega \left( n! / n^3 \right)$. There are twelve highly symmetric cases, and in $\tilde{O} \left( n^2 \right)$-time, we can not only check if a graph is isomorphic to one of these cases but also compute the number of $k$-cliques. When the automorphism group is of size $O \left( n! / n^3 \right)$, we have good guarantees for the error correction procedure. What follows from this is a $\tilde{O}(n^2)$-time randomized reduction to almost any oracle that is correct for parity-counting on a $1 / 2 + \epsilon$-fraction of instances for $\epsilon > 0$. In Section \ref{section:8.5}, we discuss how this partially addresses an open question of \cite{Goldreich2020} and a possibly fruitful direction toward full resolution.

The following is the rough organization of this result. In Subsection \ref{section:8.1}, we discuss the error correction procedure. In Subsection \ref{section:8.2}, we provide a quick approximate test to check if a graph has a very large automorphism group and a specialized result for the case where the automorphism graph is of size $\omega \left( n! / n^3 \right)$. In Subsection \ref{section:8.3}, we provide a quick algorithm to count the number of $k$-cliques on these highly symmetric graphs with very large automorphism groups. In Subsection \ref{section:8.4}, we combine these results to show that most oracles from the experiment $O^{H_n}_{1 / 2 + \epsilon}$ provide a good probabilistic algorithm when $H_n$ is isomorphism-invariant and trivial to compute from the count of $k$-cliques. 

\subsection{Error Correction via Querying for Isomorphic Graphs}
\label{section:8.1}

Our method of error correction is straightforward. As shown in Algorithm \ref{alg:7}, given an $n$-vertex undirected simple graph $U_n$, we simply permute the vertices and edges based on a random permutation $\pi$ sampled uniformly from $S_n$ to obtain $U_n^\prime$. We query the oracle $O$ on $U_n^\prime$ and output that answer. If we want a higher probability of correctness, we repeat this many times and output the answer we received most of the time.

\begin{algorithm}
\caption{$H_n\text{-Amplifier}\left(O, n, U_n \right)$}
\label{alg:7}
\Comment{The $H_n$-Amplifying Algorithm \hspace{10.5cm}} \\
\Comment{Input: $O$ is an average-case oracle for $H_n: \{\, 0, 1 \,\}^{n \choose 2} \to \mathbb{D}$, where $H_n$ is a function that is invariant under graph isomorphism} \\
\Comment{Input: $n$ is the number of vertices in the undirected simple graph $U_n$} \\
\Comment{Input: $U_n$ is a formal list of edges for an undirected simple graph with $n$ vertices} \\
\Comment{Output: $H_n \left( U_n \right)$}
\begin{algorithmic}
\State $\pi \gets_{r} S_n$
\State $U^\prime_n \gets \pi \left( U_n \right)$ 
\Comment{Using Definition \ref{def:18}}
\State \Return $O \left( U^\prime_n \right)$ 
\end{algorithmic}
\end{algorithm}

If we randomly sample from $O^{H_n}_{1/2+\epsilon}$, due to the tail bounds, any sufficiently large subset of $\{0,1\}^{n \choose 2}$ is very likely to have least a $1 / 2 + \epsilon / 2$-fraction of correctness over $O$ with high probability over the randomness of $O^{H_n}_{1 / 2 + \epsilon}$. The following lemma shows that this is true for any sufficiently large isomorphism classes of graphs, but it is true for any subsets of sufficiently large size.

\begin{lemma}
\label{lemma:19}

Given an oracle $O$ randomly sampled from $O^{H_n}_{1 / 2 + \epsilon}$, for any $\epsilon > 0$, any isomorphism class $\mathcal{C}_n$ of $n$-vertex undirected simple graphs has over a $1 / 2 + \epsilon / 2$-fraction of correctness in $O$ with a probability of greater than $1 - e^{-\epsilon^2 |\mathcal{C}_n| / 8}$.

\end{lemma}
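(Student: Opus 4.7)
The plan is to identify $|S \cap \mathcal{C}_n|$ as a hypergeometric random variable and invoke a standard tail bound. First I would fix the isomorphism class $\mathcal{C}_n$ and write $m = |\mathcal{C}_n|$ and $N = 2^{\binom{n}{2}}$. Because $H_n$ is invariant under graph isomorphism (this is baked into Definition \ref{def:1}), every graph in $\mathcal{C}_n$ shares the same value of $H_n$, so $O$ is correct on each element of $S \cap \mathcal{C}_n$ regardless of how $O$'s outputs on the complement of $S$ are chosen. Hence the fraction of $\mathcal{C}_n$ on which $O$ is correct is at least $|S \cap \mathcal{C}_n|/m$, and it suffices to lower bound this ratio.

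Next I would note that since $S$ is sampled uniformly among subsets of $\{0,1\}^{\binom{n}{2}}$ of size $(1/2+\epsilon) N$, the random variable $X := |S \cap \mathcal{C}_n|$ is hypergeometric: by symmetry it is distributed as the number of ``hits'' when drawing $m$ balls without replacement from an urn of $N$ balls containing $(1/2+\epsilon)N$ hits. In particular, $\mathbb{E}[X] = (1/2+\epsilon)\, m$.

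Finally I would apply Hoeffding's inequality in its sampling-without-replacement form (Hoeffding showed that the moment generating function is dominated by the binomial case, so the usual Chernoff bounds carry over verbatim):
\begin{equation*}
\mathcal{P}\!\left[\tfrac{X}{m} \leq \tfrac{1}{2}+\tfrac{\epsilon}{2}\right] = \mathcal{P}\!\left[\mathbb{E}[X] - X \geq \tfrac{\epsilon}{2}\, m\right] \leq \exp\!\left(-2 \cdot (\epsilon/2)^2 \, m\right) = e^{-\epsilon^{2} m / 2},
\end{equation*}
which is already stronger than the claimed bound $e^{-\epsilon^{2} m / 8}$. Alternatively, one could apply multiplicative Chernoff $\mathcal{P}[X \leq (1-\delta)\mu] \leq e^{-\delta^{2}\mu/2}$ with $\mu = (1/2+\epsilon) m$ and $\delta = \epsilon/(1+2\epsilon)$, then use $(1/2+\epsilon)/(2(1+2\epsilon)^{2}) \geq 1/(4(1+2\epsilon)) \geq 1/8$ on the range $\epsilon \in (0,1/2]$ to recover exactly the form stated in the lemma.

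There is essentially no obstacle here beyond bookkeeping; the only mild subtlety is justifying that hypergeometric tails are at least as tight as binomial ones, which is classical. The reason the lemma is phrased with the slightly weaker constant $1/8$ is presumably to streamline the downstream union bound over isomorphism classes in Subsection \ref{section:8.2}, where this statement will be combined with the orbit-stabilizer theorem (Lemma \ref{lemma:4}) and the fact that almost all graphs lie in an isomorphism class of size $n!$ to conclude that, with probability $1 - 2^{-\Omega(n^{2})}$ over $O^{H_n}_{1/2+\epsilon}$, every sufficiently large class simultaneously has the desired correctness fraction.
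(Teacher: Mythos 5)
Your proof is correct and in fact yields a slightly stronger constant than the paper's. The route is genuinely different in one meaningful respect. The paper defines indicator variables $X_i$ for ``$O$ is correct on $U_n^i$'', asserts that these are negatively correlated ``in the worst case'', and then applies a multiplicative Chernoff bound $\mathcal{P}[X \le (1-\delta)\mathbb{E}[X]] \le e^{-\delta^2\mathbb{E}[X]/2}$ with $\delta = \epsilon/(1+2\epsilon)$, arriving at $e^{-\epsilon^2|\mathcal{C}_n|/(4(1+2\epsilon))}\le e^{-\epsilon^2|\mathcal{C}_n|/8}$. The negative-correlation claim is delicate because the $X_i$ depend not only on $S$ but also on the adversarial choice of $O$'s answers outside $S$, so their joint law is not simply a sampling-without-replacement model; the paper does not elaborate on this point. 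You sidestep it entirely: you lower bound the correctness count by the purely combinatorial quantity $|S \cap \mathcal{C}_n|$, observe that this is exactly hypergeometric (drawing $m$ of $N = 2^{\binom{n}{2}}$ elements without replacement, with $(1/2+\epsilon)N$ ``good'' ones), and invoke Hoeffding's classical result that sampling-without-replacement tails are dominated by the with-replacement ones, obtaining $e^{-\epsilon^2 m/2}$. This is cleaner (the distribution is identified exactly, no correlation argument is needed) and tighter. Your alternative multiplicative-Chernoff computation with $\delta = \epsilon/(1+2\epsilon)$ reproduces the paper's bound exactly; the only tiny slip is that $(1/2+\epsilon)/(2(1+2\epsilon)^2) = 1/(4(1+2\epsilon))$ is an equality, not an inequality, before you use $\epsilon \le 1/2$ to conclude $\ge 1/8$. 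Both approaches support the same downstream union bound, so the difference is stylistic rather than load-bearing, but your version is the one I would keep.
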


\begin{proof}

Consider the random variables $\left( X_i \right)_{i \in [|\mathcal{C}_n|]}$. The random variable $X_i$ attains the value of $1$ if the oracle $O$ is correct on $U_n^i \in \mathcal{C}_n$, and $0$ otherwise. $X = \sum_{i \in [|\mathcal{C}_n|]}X_i$ is the random variable counting $O$'s correct instances over $\mathcal{C}_n$.

Using Definition \ref{def:23}, from the symmetry of the distribution in $O^{H_n}_{1 / 2 + \epsilon}$, we have
\begin{equation}
\label{eq:28}
\frac{1}{2} + \epsilon \leq \mathcal{E} \left[ X_i \right] \leq 1.
\end{equation}

Applying the linearity of expectation in Equation \eqref{eq:28}, we get 
\begin{equation}
\label{eq:29}
\left( \frac{1}{2} + \epsilon \right) |\mathcal{C}_n| \leq \mathcal{E}[X] \leq \left| \mathcal{C}_n \right|.
\end{equation} 

Since all the variables $X_i$ are negatively correlated in the worst-case, using Equation \eqref{eq:29}, we can use the following Chernoff bound \citep{Mitzenmacher2005} to tail-bound the random variable $X$:
\begin{equation}
\label{eq:30}
\mathcal{P}_{O \gets_r O^{H_n}_{1 / 2 + \epsilon}} \left[ X \leq (1 - \delta) \mathcal{E}[X] \right] \leq e^{-\delta^2 \mathcal{E}(X) / 2}.
\end{equation}

In particular, taking the value of
\begin{equation*}
1 - \delta = \frac{1 + \epsilon}{1 + 2 \epsilon} \iff \delta = \frac{\epsilon}{1 + 2 \epsilon}
\end{equation*}
in Equation \eqref{eq:30}, and using Equation \eqref{eq:29}, we get
\begin{equation*}
\mathcal{P}_{O \gets_r O^{H_n}_{1 / 2 + \epsilon}} \left[ X \leq \left( \frac{1}{2} + \frac{\epsilon}{2} \right) \left| \mathcal{C}_n \right| \right] \leq e^{-\epsilon^2 \left| \mathcal{C}_n \right| / 4 (1 + 2 \epsilon)} \leq e^{-\epsilon^2 \left| \mathcal{C}_n \right| / 8}.
\end{equation*}

Hence, with high probability over the randomness of $O^{H_n}_{1/2+\epsilon}$, $O$ is sufficiently correct for large isomorphism classes $\mathcal{C}_n$.

\end{proof}

Now, simply by the union bound, we can show that all sufficiently large isomorphism classes have a sufficient amount of correctness from $O$ with high probability over the randomness of $O^{H_n}_{1 / 2 + \epsilon}$. Since almost all graphs with $n$ vertices have the maximum isomorphism class size, with high probability over the randomness of $O^{H_n}_{ 1 / 2 + \epsilon}$, this error correction technique is successful for almost all graphs.

\begin{lemma}
\label{lemma:20}

Given an oracle $O$ randomly sampled from $O^{H_n}_{1 / 2 + \epsilon}$, for any $\epsilon = \omega \left( n^{3 / 2} / \sqrt{n!} \right)$ (or a constant $\epsilon > 0$), with a probability of $1 - o(1)$ over the randomness of $O^{H_n}_{1 / 2 + \epsilon}$, we have a probabilistic algorithm solving $H_n$ correctly on a $1 - o(1)$-fraction of instances with a $1 / 2 + 1 / \poly(n)$ probability of correctness on the correct instances with $O \left( 1 / \epsilon^2 \right)$ ($1$ for constant $\epsilon$) queries to $O$. With $\poly(n)$ many repetitions of the algorithm, we can reduce the error on these correct instances to $2^{-\poly(n)}$.

\end{lemma}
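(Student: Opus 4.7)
The plan is to combine Lemma \ref{lemma:19} with a union bound over isomorphism classes of $n$-vertex simple graphs, and then invoke the classical Polya-Erdős theorem to argue that inputs belonging to small isomorphism classes form a vanishing fraction of $\{0,1\}^{\binom{n}{2}}$.

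First I would fix a threshold $T = C \binom{n}{2}/\epsilon^2$ for a sufficiently large constant $C$. Lemma \ref{lemma:19} tells me that any particular class $\mathcal{C}_n$ with $|\mathcal{C}_n| \geq T$ fails to achieve a $1/2 + \epsilon/2$-fraction of correctness in $O$ with probability at most $e^{-\epsilon^2 T/8}$, and trivially there are at most $2^{\binom{n}{2}}/T$ such classes. The union bound then yields a total failure probability at most $(2^{\binom{n}{2}}/T) \cdot e^{-\epsilon^2 T/8}$, which collapses to $2^{-\Omega(n^2)} = o(1)$ once $C$ is taken large enough (say $C \geq 16$). I would condition on this good event.

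Next I would appeal to the Polya-Erdős theorem that all but an $O(n^2/2^n)$-fraction of $n$-vertex undirected simple graphs have trivial automorphism group, and hence belong to the maximal isomorphism class of size $n!$. The hypothesis $\epsilon = \omega(n^{3/2}/\sqrt{n!})$ guarantees $T = o(n!/n) < n!$, so every graph in a class of size below $T$ must have a nontrivial automorphism; therefore inputs in small classes account for only an $o(1)$-fraction. On any other input $U_n$, I run Algorithm \ref{alg:7} $k$ times and return the majority answer. Because a uniformly random $\pi \in S_n$ produces a uniformly random element of the orbit of $U_n$ (by Lemma \ref{lemma:4}), each query to $O$ is independently correct with probability at least $1/2 + \epsilon/2$; taking $k = O(1/\epsilon^2)$ (or $k = 1$ when $\epsilon$ is constant) and Chernoff-bounding the majority yields a success probability of at least $1/2 + 1/\poly(n)$. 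Iterating the whole procedure $\poly(n)$ times with a final vote then drives the error down to $2^{-\poly(n)}$ in the standard way.

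The hardest part of executing this plan is juggling the two competing constraints on $T$: it must be \emph{large} enough that the per-class Chernoff tail from Lemma \ref{lemma:19} can absorb the crude cardinality $2^{\binom{n}{2}}/T$ in the union bound, while simultaneously being \emph{small} enough that $T = o(n!)$, so that Polya-Erdős still certifies a $1 - o(1)$-fraction of inputs in classes above the threshold. The quantitative assumption $\epsilon = \omega(n^{3/2}/\sqrt{n!})$ is exactly what is needed to make both requirements coexist; any weaker hypothesis would either defeat the union bound or push $T$ past $n!$ and vacate the Polya-Erdős step.
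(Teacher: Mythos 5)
Your proposal is correct and follows essentially the same route as the paper's proof: you invoke Lemma~\ref{lemma:19} per class, union-bound over large isomorphism classes with a threshold of order $n^2/\epsilon^2$, invoke the P\'olya--Erd\H{o}s theorem to certify that all but an $o(1)$-fraction of graphs land in maximal ($n!$-sized) classes exceeding the threshold, and amplify with majority over Algorithm~\ref{alg:7} queries; your slightly sharper cardinality count of $2^{\binom{n}{2}}/T$ large classes is a minor refinement but not a different argument.
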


\begin{proof}

From Lemma \ref{lemma:19}, we know that given an oracle $O$ randomly sampled from $O^{H_n}_{1 / 2 + \epsilon}$, with a probability of greater than $1 - e^{\epsilon^2 |\mathcal{C}_n| / 8}$, $O$ is correct on at least a $ 1 / 2 + \epsilon / 2$-fraction of instances. Given a graph $U_n^i \in \mathcal{C}_n$, with a probability of greater than $1 - e^{\epsilon^2 |\mathcal{C}_n| / 8}$ (over the randomness of $O^{H_n}_{1 / 2 + \epsilon}$), $H_n\text{-Amplifier}\left(O, n, U_n \right)$ (Algorithm \ref{alg:7}) computes $H_n \left( U_n^i \right)$ correctly with at least a $1 / 2 + \epsilon / 2$ probability of correctness. This is due to the invariance of $H_n$ under graph isomorphism. In fact, by repeating this process $O \left( 1 / \epsilon^2 \right)$ times and returning the majority outcome as the answer, due to the Chernoff bound, once again, we get the correct answer with exponentially small error probability.

We can show that with high probability (over the distribution $O^{H_n}_{1 / 2 + \epsilon}$), each isomorphism class $\mathcal{C}_n$ with $\left| \mathcal{C}_n \right| > n^3 / \epsilon^2$ has over a $1 / 2 + \epsilon / 2$-fraction of correctness over $O$. Due to the union bound and Lemma \ref{lemma:19}, the probability that at least one isomorphism class $\mathcal{C}_n$ of size at least $100n^2 / \epsilon ^ 2$ is correct on less than a $1 / 2 + \epsilon/2$-fraction of instances is 
\begin{equation*}
\begin{split}
\sum_{\left| \mathcal{C}_n \right| > 100 n^2 / \epsilon^2} \frac{1}{e^{\epsilon^2 \left| \mathcal{C}_n \right| / 8}} & \leq \sum_{\left| \mathcal{C}_n \right| > 100 n^2 / \epsilon^2} \frac{1}{2^{\epsilon^2 \left| \mathcal{C}_n \right| / 8}} \\
& \leq \sum_{\left| \mathcal{C}_n \right| > 100 n^2 / \epsilon^2} \frac{1}{2^{100 n^2 / 8}} \\
& \leq \frac{2^{n \choose 2}}{2^{100n^2/8}} \\
& = O \left( 2^{-\Omega \left( n^2 \right)} \right),
\end{split}
\end{equation*}
since there are at most $2^{n \choose 2}$ such graphs with $n$ vertices, we have at most that many isomorphism classes. Hence, we have shown that all large isomorphism classes get sufficient coverage with high probability. It was proved by \cite{Polya1937} and \cite{Erdos1963} that with a probability of $1 - {n \choose 2} 2^{-n - 2}(1 + o(1))$, a random undirected simple graph $U_n$ of $n$ vertices has a trivial automorphism, group $\Aut \left( U_n \right)$. Since the isomorphism class $\mathcal{C}_n$ can be seen as ``isomorphic'' to $S_n / \Aut \left( U_n \right)$, with a high probability over the random choices of $U_n$, the isomorphism class $\mathcal{C}_n$ of $U_n$ has a size
\begin{equation*}
\left| \mathcal{C}_n \right| = \left| S_n \right| / |\Aut \left( U_n \right)| = n!.
\end{equation*}

Since all isomorphism classes of size greater than $100 n^2 / \epsilon^2 = o(n!)$, and subsequently, all isomorphism classes of size $n!$ are covered with a high probability, with a probability of $1 - o(1)$ over $O^{H_n}_{1 / 2 + \epsilon}$, Algorithm \ref{alg:7} (or the repeat-majority version of it) is probabilistically correct on a $1 - O\left( \displaystyle {n \choose 2} 2^{-n} \right)$-fraction of instances.

\end{proof}

\subsection{Testing the Size of $\Aut \left( U_n \right)$}
\label{section:8.2}

Now that we have shown that for undirected simple graphs with large isomorphism classes (equivalently, ``slightly smaller'' automorphism groups than $S_n$), we can error-correct; we must know when we can and cannot use the error correction procedure. Algorithm \ref{alg:8} tests whether the automorphism group of $U_n$ is larger than $n! / t(n)$ or smaller than $n! / t(n)^{1 + o(1)}$ for any $t(n) = \omega(1)$. This algorithm estimates $Aut \left( U_n \right) n t(n) / n!$ and given that we have given enough independent samples, the bulk of the weights of the normal distribution from the central limit theorem viewpoint are on either side of the $n / 2$-point in either case.

\begin{algorithm}
\caption{$\text{Aut-Size-Test} \left( n, U_n, t(n) \right)$}
\label{alg:8}
\Comment{The Automorphism-Size-Test Algorithm \hspace{8.8cm}} \\
\Comment{Input: $n$ is the number of vertices in the undirected simple graph $U_n$} \\
\Comment{Input: $U_n$ is a formal list of edges for an undirected simple graph with $n$ vertices} \\
\Comment{Input: $t(n):\mathbb{N}\to\mathbb{N}$ is an $\omega(1)$ function}\\
\Comment{Output: \texttt{ACCEPT} if $ \left| \Aut \left( U_n \right) \right| \geq \displaystyle \frac{n!}{t(n)}$, \texttt{REJECT} if $ \left| \Aut \left( U_n \right) \right| \leq \displaystyle \frac{n!}{t(n)^{1+\alpha}}$ for any constant $\alpha > 0$}
\begin{algorithmic}
\State $I \gets 0$
\State $C \gets 0$
\While{$I < t(n)n$}
    \State $\pi \gets_r S_n$
    \If{$\pi \left( U_n \right) = U_n$}
        \State $C \gets C + 1$
    \EndIf
    \State $I \gets I + 1$
\EndWhile
\If{$C < \displaystyle \frac{n}{2}$}
    \State \Return \texttt{REJECT}
\EndIf
\State \Return \texttt{ACCEPT}
\end{algorithmic}
\end{algorithm}

We now prove the correctness and time complexity of this algorithm.

\begin{lemma}
\label{lemma:21}

Given an undirected simple graph $U_n$ with $n$ vertices, for every constant $k > 0$, we have an $O \left( t(n)n^{3} \right)$-time algorithm such that given a function $t(n): \mathbb{N} \to \mathbb{N}$ such that $t(n) = \omega(1)$:

\begin{enumerate}

\item If $\left| \Aut \left( U_n \right) \right| \geq n! / t(n)$, then the algorithm accepts with a probability greater than $1 - e^{n / 4}$.

\item If $\left| \Aut \left( U_n \right) \right| \leq n! / t(n)^{1 + \alpha}$ for any constant $\alpha > 0$, then the algorithm rejects with a probability of at least $1 - 2^{n / 2}$.
  
\end{enumerate}

\end{lemma}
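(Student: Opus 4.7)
The key observation is that for a uniformly random $\pi \in S_n$, $\mathcal{P}[\pi(U_n) = U_n] = |\Aut(U_n)|/n!$ by the definition of the automorphism group, so the algorithm's counter $C$ is a sum of $N = nt(n)$ i.i.d.\ Bernoulli$(p)$ indicators with $p = |\Aut(U_n)|/n!$ and $\mathcal{E}[C] = pN = p \cdot n \cdot t(n)$. The two cases then reduce to one-sided Chernoff tail bounds around the decision threshold $n/2$.

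For completeness, assume $|\Aut(U_n)| \geq n!/t(n)$, so $p \geq 1/t(n)$ and $\mu := \mathcal{E}[C] \geq n$. The standard multiplicative Chernoff lower-tail bound with $\delta = 1/2$ gives
\begin{equation*}
\mathcal{P}[C < n/2] \leq \mathcal{P}[C \leq \mu/2] \leq e^{-\mu/8} \leq e^{-n/8},
\end{equation*}
which is the desired exponentially small failure probability (up to the cosmetic constant in the exponent). For soundness, assume $|\Aut(U_n)| \leq n!/t(n)^{1+\alpha}$, so $\mu \leq n/t(n)^{\alpha}$. Since $t(n) = \omega(1)$, we have $\mu = o(n)$, and in particular $\mu < n/(4e)$ for all sufficiently large $n$. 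The standard upper-tail Chernoff bound $\mathcal{P}[C \geq a] \leq (e\mu/a)^{a}$ with $a = n/2$ yields
\begin{equation*}
\mathcal{P}[C \geq n/2] \leq \left(\frac{2e}{t(n)^{\alpha}}\right)^{n/2} = 2^{-(n/2)(\alpha \log_2 t(n) - O(1))} \leq 2^{-n/2}
\end{equation*}
for all sufficiently large $n$, since $\alpha \log_2 t(n) \to \infty$. Hence the algorithm rejects with probability at least $1 - 2^{-n/2}$.

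For the runtime, sampling a uniformly random $\pi \in S_n$ (e.g.\ via Fisher--Yates) costs $O(n)$ time, and computing $\pi(U_n)$ and comparing its adjacency representation against $U_n$ costs $O(n^2)$ time per iteration. With $N = nt(n)$ iterations, the total running time is $O(n^3 t(n))$, as claimed.

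The only mildly delicate point is the soundness case, because we are bounding a large deviation very far from the mean (by a factor of $t(n)^{\alpha}/2$), so the additive/symmetric Chernoff form is too weak; we must use the multiplicative form $(e\mu/a)^a$ (equivalently the Poisson tail bound), and rely crucially on $t(n) = \omega(1)$ to drive the base of the exponent below $1$. Everything else is a routine Chernoff application combined with the elementary identity $\mathcal{P}_{\pi \leftarrow_r S_n}[\pi(U_n) = U_n] = |\Aut(U_n)|/n!$.
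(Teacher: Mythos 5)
Your proposal is correct and follows the same plan as the paper: use the identity $\mathcal{P}_{\pi \leftarrow_r S_n}[\pi(U_n) = U_n] = |\Aut(U_n)|/n!$, view $C$ as a sum of i.i.d.\ Bernoulli indicators with mean $\mathcal{E}[C] = nt(n)\cdot|\Aut(U_n)|/n!$, and apply Chernoff tail bounds in both directions around the threshold $n/2$, plus the $O(n^2)$-per-iteration runtime accounting. The one place where you are slightly more careful than the paper is the soundness case: the paper simply cites ``the Chernoff bound'' to conclude $\mathcal{P}[C \geq n/2] \leq 2^{-n/2}$, while you correctly observe that because $n/2$ exceeds $\mathcal{E}[C] \leq n/t(n)^\alpha$ by an unbounded factor, the symmetric sub-Gaussian form is too weak and one needs the multiplicative upper-tail bound $\mathcal{P}[C \geq a] \leq (e\mu/a)^a$, together with $t(n)=\omega(1)$ to push the base $2e/t(n)^\alpha$ below $1/2$ for large $n$; this is exactly the right refinement. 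Your constant $e^{-n/8}$ versus the paper's $e^{-n/4}$ in the completeness case comes from using the $\delta^2\mu/2$ form rather than the tighter $(e^{-\delta}/(1-\delta)^{1-\delta})^\mu$ form, and as you say this is cosmetic.
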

\begin{proof}

Let us prove the first assertion. If $\left| \Aut \left( U_n \right) \right| \geq n! / t(n)$, then the probability that a randomly sampled $\pi \in S_n$ belongs to $\Aut \left( U_n \right)$ is at least $1 / t(n)$. Due to the linearity of expectation, the variable $C$ in Algorithm \ref{alg:8} has an expectation of
\begin{equation}
\label{eq:31}
\mathcal{E}[C] \geq n. 
\end{equation}
Since the sampling is independent, due to the Chernoff bound \citep{Mitzenmacher2005}, and Equation \eqref{eq:31}, the probability of rejection is exponentially small:
\begin{equation*}
\begin{split}
\mathcal{P} \left[ C < \frac{n}{2} \right] & \leq \mathcal{P} \left[ C < \frac{\mathcal{E}(C)}{2} \right] \\
& < e^{-\mathcal{E}(C) / 4} \\
& \leq e^{-n / 4}.
\end{split}
\end{equation*}

On the other hand, suppose that $\left| \Aut \left( U_n \right) \right| \leq n! / t(n)^{1 + \alpha}$. The probability that a randomly sampled permutation $\pi \in S_n$ belongs to $\Aut \left( U_n \right)$ is at most $1 / t(n)^{1 + \alpha}$. Hence, due to the linearity of expectation, we have
\begin{equation}
\label{eq:32}
\mathcal{E}[C] \leq \frac{n}{t(n)^{\alpha}}.
\end{equation}
Due to the Chernoff bound  \citep{Mitzenmacher2005}, once again, and using Equation \eqref{eq:32}, the probability of acceptance is exponentially small for sufficiently large $n$, and depending on $t(n)$ and $\alpha$:
\begin{equation*}
\mathcal{P} \left[ C \geq \frac{n}{2} \right] \leq 2^{-n / 2}.
\end{equation*}
In this step, it is crucial that $t(n)$ is eventually larger than every constant, otherwise $\alpha$ would have to be lower bounded.

Algorithm \ref{alg:8} also runs in $O \left( n^3 t(n) \right)$-time since the while loop runs for $n t(n)$ times, the complexity of sampling from $S_n$ is $O(n \log n)$ and the operations on $U_n$ take $O \left( n^2 \right)$-time.

\end{proof}

\subsubsection{A Classification of Graphs With $\left| \Aut \left( U_n \right) \right| = \omega \left( n! / n^3 \right)$}
\label{section:8.2.1}

In this subsection, first we will prove Lemma \ref{lemma:22} that gives the properties of the graphs with $\left| \Aut \left( U_n \right) \right| = \omega \left( n! / n^3 \right)$ in terms of the number of partitions based on the degree of vertices, the size of such partitions, and the degree distribution of the vertices.

\begin{lemma}
\label{lemma:22}

For sufficiently large $n$, any graph $U_n$ with $\left| \Aut \left( U_n \right) \right| = \omega \left( n! / n^3 \right)$ satisfies the following properties.

\begin{enumerate}

\item If the vertices of $U_n$ are partitioned based on degree, then there are at most three partitions.

\item No partition can be simultaneously larger than $2$ and smaller than $n-2$.

\item The degree of any vertex $v$ can be in the set $\{\, 0, 1, 2, n-2, n-1 \,\}$.

\end{enumerate}

\end{lemma}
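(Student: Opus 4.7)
The plan is to convert the hypothesis $|\Aut(U_n)| = \omega(n!/n^3)$ into the equivalent inequality $|\mathcal{C}_n| = n!/|\Aut(U_n)| = o(n^3)$ via Lemma~\ref{lemma:4}, and then push this upper bound on the orbit size through two applications of the ``$\Aut(U_n)$ lies inside a product of symmetric groups'' bound from orbit--stabilizer. Since every automorphism of $U_n$ preserves vertex degrees, $\Aut(U_n)$ embeds into $\prod_d S_{P_d}$, where $P_d$ is the set of vertices of degree $d$, so
\[
|\mathcal{C}_n|\;\ge\;\binom{n}{p_{d_1},\ldots,p_{d_k}},
\]
where $p_{d_i}=|P_{d_i}|$ runs over the non-empty degree classes. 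All three claims are read off from the fact that this multinomial must be $o(n^3)$.

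For claim~1, I would use the standard observation that, among all compositions of $n$ into $k$ positive parts, the multinomial $\binom{n}{p_1,\ldots,p_k}$ is minimised by $(n-k+1,1,\ldots,1)$, giving the lower bound $n(n-1)\cdots(n-k+2)=\Theta(n^{k-1})$; consequently $k\ge 4$ forces $|\mathcal{C}_n|=\Omega(n^3)$, contradicting the hypothesis. For claim~2, I would enumerate the surviving compositions: for $k=2$, $\binom{n}{a}=o(n^3)$ forces the smaller part $a\in\{1,2\}$ (since already $\binom{n}{3}=\Theta(n^3)$); for $k=3$, the composition $(1,2,n-3)$ gives multinomial $n(n-1)(n-2)/2=\Theta(n^3)$, so the only surviving three-part composition is $(1,1,n-2)$. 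In every surviving case every partition size lies in $\{1,2\}\cup\{n-2,n-1,n\}$, which is exactly claim~2.

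For claim~3 the plan is to apply orbit--stabilizer a second time, now to the action on a single vertex. For a vertex $v$ of degree $d$, its stabilizer is contained in $S_{N(v)}\times S_{V\setminus (N(v)\cup\{v\})}$, so $|\Aut(U_n)_v|\le d!\,(n-1-d)!$; combining this with the orbit-size bound $|\mathrm{orb}(v)|\le p_d$ yields
\[
\binom{n-1}{d}\;\le\;\frac{p_d}{n}\,|\mathcal{C}_n|\;=\;o\!\left(p_d\, n^2\right).
\]
When $v$ lies in a small partition ($p_d\le 2$, by claim~2), this immediately gives $\binom{n-1}{d}=o(n^2)$, forcing $d\in\{0,1,n-2,n-1\}$. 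When $v$ lies in the unique large partition ($p_d\ge n-2$), the bound is weaker and one must refine the degree partition further by the adjacency pattern of the large-partition vertices to the at most two ``exceptional'' vertices outside it. This subdivides $P_d$ into at most four sub-classes; re-applying the multinomial bound on the refined partition forces the adjacency pattern to be essentially constant, which in turn forces the induced subgraph on $P_d$ to be nearly empty or nearly complete and pins $d$ down to the claimed set.

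I expect the main obstacle to lie in that large-partition case of claim~3: the elementary stabilizer bound alone admits spurious values of $d$, and the adjacency-pattern refinement is what either eliminates them or channels them into the specific structural types that resurface in the full classification Lemma~\ref{lemma:23}. Carrying this refinement out while simultaneously keeping track of self-complementarity (Lemma~\ref{lemma:5}) and of the degree-sum identity $\sum_d d\,p_d=2|E|$ is where I expect the bulk of the delicate but routine case analysis to live.
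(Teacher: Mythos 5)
Your approach is, modulo phrasing, the paper's own. Claims~1 and~2 in the paper are proved by showing that if the hypothesis fails then $|\Aut(U_n)|/n!\le 1/\binom{n}{\alpha}=O(1/n^3)$; you instead bound $|\mathcal{C}_n|=n!/|\Aut(U_n)|$ from below by a multinomial coefficient, which is the same inequality read in the other direction via Lemma~\ref{lemma:4}. Your enumeration of the surviving compositions for $k\le 3$ is correct.

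For claim~3 you have manufactured a difficulty that is not there, and the rescue you propose would not work. Plugging the trivial estimate $p_d\le n$ into your own display gives
\[
\binom{n-1}{d}\;\le\;\frac{p_d}{n}\,|\mathcal{C}_n|\;\le\;|\mathcal{C}_n|\;=\;o(n^3)
\]
with no case split at all, and since $\binom{n-1}{3}=\Theta(n^3)$ this rules out exactly $d\in\{3,\ldots,n-4\}$, i.e.\ it gives $d\in\{0,1,2,n-3,n-2,n-1\}$. The large-partition refinement you envision is therefore unnecessary; it is also futile, because the stated set $\{0,1,2,n-2,n-1\}$ in Lemma~\ref{lemma:22} is a slip for $\{0,1,2,n-3,n-2,n-1\}$. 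The classification in Lemma~\ref{lemma:23} itself contains $K_{n-2}$ with two isolated vertices and $K_{n-2}$ with a disjoint edge, both of which have clique vertices of degree $n-3$ and automorphism group of order $2(n-2)!=\omega(n!/n^3)$, so $d=n-3$ cannot be excluded. The paper's own proof of claim~3 carries the matching slip: it invokes $1/\binom{n-1}{m}\le 1/\binom{n-1}{3}$ for all $3\le m\le n-3$, but $\binom{n-1}{n-3}=\binom{n-1}{2}<\binom{n-1}{3}$, so the step fails at $m=n-3$ and the proof really establishes $d\notin\{3,\ldots,n-4\}$. Correct the target set, and the one-line bound above finishes claim~3 with no refinement, no dichotomy on $p_d$, and no appeal to claim~2.
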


\begin{proof}

Suppose that we have $m \geq 4$ partitions of size $(\alpha_i)_{i \in [m]}$, in ascending order, with each $\alpha_i \geq 1$. The probability that $\pi \in S_n$ is in $\Aut \left( U_n \right)$ is bounded from above by
\begin{equation*}
\frac{\prod_{i \in [m]} \alpha_i!}{n!} \leq \frac{(\sum_{i \in [m-1]} \alpha_i)! \alpha_m!}{n!},
\end{equation*}
since $\pi$ is not allowed to permute vertices across partitions. Since $\alpha_m \geq n / m$ due to the pigeonhole principle, and $\alpha_m \leq n - (m - 1)$ due to each $\alpha_i$ being positive, we have that
\begin{equation*}
\frac{\left( \sum_{i \in [m-1]} \alpha_i \right)! \alpha_m!}{n!} = \frac{1}{\displaystyle {n \choose \alpha_m}} \leq \frac{1}{\displaystyle {n \choose 3}} = O \left( \frac{1}{n^3} \right).
\end{equation*}
If this is the case, then $\left| \Aut \left( U_n \right) \right|$ is upper bounded by $O\left( n! / n^3 \right)$, leading to a contradiction. This proves the first statement of the lemma.

If there is a partition of size $\alpha$, then the probability that $\pi \in S_n$ is in $\Aut \left( U_n \right)$ is upper bounded by $1 / \displaystyle {n \choose \alpha} = O \left( 1 / n^3 \right)$ for the forbidden range. This implies the second statement of the lemma.

Let us consider that the degree $m$ of $v$ in $U_n$ is greater than $2$ and less than $n-2$. Let the neighbors of $v$ be $(u_i)_{i \in [m]}$. The probability that $\pi \in S_n$ is in $\Aut \left( U_n \right)$ is bounded from above by
\begin{equation*}
\frac{n (m!) (n-1-m)!}{n!} = \frac{1}{\displaystyle {{n-1}\choose m}} \leq \frac{1}{\displaystyle {{n -1} \choose 3}} = O(1/n^3),
\end{equation*}
since $n$ is the maximum number of vertices $v$ could map to, $m!$ is the number of ways the neighbors of $v$ could distribute themselves among the neighbors of the image of $v$, and $(n-m-1)!$ is the number of ways the remaining vertices can distribute. Also, $m$ is between $3$ and $n-3$. Due to a similar argument as before, this implies the third statement of the lemma.

\end{proof}

Now, using Lemma \ref{lemma:22}, we will prove Lemma \ref{lemma:23} that gives the structure of the graphs with $\left| \Aut \left( U_n \right) \right| = \omega \left( n! / n^3 \right)$.

\begin{lemma}
\label{lemma:23}

Only the following graphs have $\left| \text{Aut} \left( U_n \right) \right| = \omega \left( n! / n^3 \right)$.

\begin{enumerate}

\item $K_n$ and its complement.

\item $K_n$ with one edge missing and its complement.

\item $K_{n - 1}$ with an isolated vertex and its complement.

\item $K_{n - 1}$ with one vertex of degree $1$ adjacent to it and its complement.

\item $K_{n - 2}$ with two isolated vertices and its complement.

\item $K_{n - 2}$ with two vertices of degree $1$ adjacent to each other and its complement.

\end{enumerate}

\end{lemma}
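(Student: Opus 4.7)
My plan is to prove Lemma 23 by applying Lemma 22 to severely restrict the degree structure of any candidate graph $U_n$, then performing a case analysis on the partition-size signature and, within each case, identifying exactly which graphs attain $|\Aut(U_n)| = \omega(n!/n^3)$.

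First, I would invoke Lemma 22 together with Lemma 5, which ensures $|\Aut(\overline{U_n})| = |\Aut(U_n)|$, so that the set of candidate graphs is closed under complementation. Lemma 22 says that for $n$ large, each vertex has degree in $\{0,1,2,n-2,n-1\}$, there are at most three degree partitions, and each partition has size $\leq 2$ or $\geq n-2$. Since $2(n-2) > n$ for $n \geq 5$, at most one partition can have size $\geq n-2$, and partition-size signatures summing to $n$ collapse into exactly four possibilities for sufficiently large $n$: $(n)$, $(n-1,1)$, $(n-2,2)$, and $(n-2,1,1)$.

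Next, I would handle each case by combining Lemma 4 (the orbit-stabilizer bound) with the $\omega(n!/n^3)$ threshold. In case $(n)$, all vertices are equivalent; the threshold eliminates uniform degrees $1$, $2$, and $n-2$ by direct Stirling-based comparison to the automorphism counts of perfect matchings, disjoint unions of cycles, and their complements, leaving only $K_n$ and $\overline{K_n}$. In cases $(n-1,1)$, $(n-2,2)$, and $(n-2,1,1)$, the threshold forces the induced action of $\Aut(U_n)$ on the large part to be essentially $S_{n-1}$ or $S_{n-2}$, which in turn forces each ``special'' vertex (i.e., a vertex in a small partition) to be adjacent to either all or none of the large part --- otherwise a transposition of two large-part vertices with differing adjacency patterns would fail to be an automorphism --- and forces the large part itself to be $K_m$ or $\overline{K_m}$ for the same reason. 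Enumerating the remaining Boolean choices (presence or absence of the edge between the special vertices; complete or empty adjacency to the large part; complete or empty internal structure on the large part) gives a small explicit list of configurations; after removing duplicates that fall into previously handled cases (e.g., $K_n$ or the star), exactly the twelve graphs in the statement remain.

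The main obstacle I anticipate is the case $(n-2,2)$, where the Boolean enumeration produces eight configurations whose orbit partitions must each be verified to actually match the claimed $(n-2,2)$ signature --- several configurations collapse into $K_n$, $\overline{K_n}$, or the $(n-1,1)$ case, and it is easy to double-count unless one is careful. A related subtlety is that graphs such as $K_{n-2} \cup 2\overline{K_1}$ and $K_{n-2} \cup K_2$ contain vertices of degree $n-3$, which is not in the degree set permitted by Lemma 22 as literally stated; this I would handle by applying Lemma 22 to the complement instead (whose vertex degrees lie in $\{2, n-2, n-1\}$) and transferring the conclusion back via Lemma 5. A direct verification using Lemma 4 shows that $|\Aut|$ equals $n!$, $(n-1)!$, $(n-2)!$, or $2(n-2)!$ for each of the twelve graphs, all of which are $\omega(n!/n^3)$, confirming that the classification is both complete and sharp.
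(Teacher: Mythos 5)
Your proposal follows the same high-level route as the paper (invoke Lemma~\ref{lemma:22} to restrict to the four partition signatures, then case analysis), but the structural simplification you claim has a genuine gap. You assert that the $\omega(n!/n^3)$ threshold forces the restricted action of $\Aut(U_n)$ on the large partition to be ``essentially $S_{n-1}$ or $S_{n-2}$'', and hence forces each special vertex to be adjacent to \emph{all or none} of the large part, justifying this via the transposition argument. That argument runs backwards: if a special vertex $v$ has $0 < d < m$ neighbors in the large part, then the restricted action of $\Aut(U_n)$ must \emph{stabilize} the $d$-subset of neighbors, so it is contained in $S_d \times S_{m-d}$ and therefore does \emph{not} contain the offending transposition --- there is no contradiction. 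The only consequence is the bound $|\Aut(U_n)| \leq (\text{\#special vertices})!\cdot d!(m-d)!$, and for the $(n-1,1)$ signature with $d = 1$ this gives $(n-2)!$, which \emph{is} $\omega(n!/n^3)$. So degree-$1$ (and by complementation degree-$(m-1)$) adjacency to the large part, and similarly degree-$2$ adjacency in some sub-cases, survive the threshold and must be killed by separate structural reasoning --- exactly what the paper does in its long case analysis (e.g., showing the $(n-1,1)$, degree-$2$, $d=n-2$ graph has $|\Aut| = 2\cdot 2^{(n-4)/2}((n-4)/2)! = O(n!/n^3)$, or that the $d=1$ graph in the degree-$2$ regime is structurally impossible). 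Your Boolean enumeration over ``complete or empty adjacency'' would silently skip these intermediate-degree configurations.

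The same weakness infects the claim that the large part must induce $K_m$ or $\overline{K_m}$: a large subgroup of $S_m$ preserving a regular graph need not act as $S_m$, so the transposition argument alone does not apply, and one would need either a deeper group-theoretic fact (Bochert-type bounds on subgroup indices) or the kind of elementary bookkeeping the paper substitutes. The paper sidesteps both issues by organizing the outer case split on the \emph{degree} of the largest partition ($0$, $1$, or $2$, recovering $n-2$ and $n-1$ by Lemma~\ref{lemma:5}), which pins down the internal structure of the large part directly (empty, perfect matching / union of paths, union of cycles) rather than leaving it to a symmetry argument. If you keep your signature-first organization, you must explicitly enumerate and dispose of the sub-cases with special-vertex degree in $\{1, 2, m-2, m-1\}$ and with the large part a matching or a cycle union --- at which point you have essentially reconstructed the paper's case tree.
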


\begin{proof}

Using Lemma \ref{lemma:23}, the only possible partition sizes based on degree we can have are $(n)$, $(n-1, 1)$, $(n-2, 1, 1)$ and $(n-2, 2)$. Now, by a case-by-case analysis, we will determine which graphs can have such large automorphism groups. Since $\Aut \left( U_n \right) = \Aut \left( \overline{U_n} \right)$, we will categorize by the degree of the largest partition and assume that the degree is less than or equal to $2$. This way, we will either allow a graph and its complement or reject both. We will also assume that $n$ is sufficiently large, say $n \geq 100$.

\textbf{\textit{Case 1: The Largest Partition Degree is $0$.}} \\
Now, for the $(n)$ partition, the graph is either empty or the complete graph $K_n$. Clearly,
\begin{equation*}
\left| \Aut \left( U_n \right) \right| = n! = \omega \left( \frac{n!}{n^3} \right),
\end{equation*}
in both cases, so we allow both.

When the partition is $(n-1, 1)$, this is technically not allowed since even the vertex of the partition of size $1$ must have degree zero, meaning such a partition with these degrees cannot exist.

When the partition is $(n-2, 1, 1)$, this cannot exist since the partitions of size $1$ must have the same degree.

When the partition is $(n-2, 2)$, the only allowed case is that both the vertices in the partition of size $2$ are adjacent. Otherwise, they would also have degree $0$, and we would have $(n)$ again. The other case is $K_n$ with one edge missing. Both of them have
\begin{equation*}
\left| \Aut \left( U_n \right) \right| = 2 (n-2)! = \frac{n!}{O(n^2)} = \omega \left( \frac{n!}{n^3} \right),
\end{equation*}
hence, we allow them both.

From this case, we allow the graphs as described in statements 1 and 2 of the lemma.

\textbf{\textit{Case 2: The Largest Partition Degree is $1$.}} \\
For the partition type $(n)$, this is only allowed when $n$ is even due to the handshake lemma. When so, the vertices arrange themselves in pairs. Visually, we have $n / 2$ ``sticks''. We can permute these sticks in $(n / 2)!$ ways and flip them in $2^{n / 2}$ ways. In particular, the size of the automorphism group is
\begin{equation*}
\left| \Aut \left( U_n \right) \right| = \left( \frac{n}{2} \right)! \cdot 2^{n / 2} \leq \frac{n!}{n^3} = O \left( \frac{n!}{n^3} \right),
\end{equation*}
for sufficiently large $n$. Hence, we reject this case.

For the partition type $(n-1, 1)$, we have the following possibilities: The vertex in the partition of size $1$ may have possible degrees $n-1$, $n-2$, $2$, or $0$.

\begin{enumerate}

\item The vertices in the $n - 1$-partition are all adjacent to the vertex in the $1$-partition. This is allowed, with
\begin{equation*}
\left| \Aut \left( U_n \right) \right| = (n - 1)! = \frac{n!}{n} = \omega \left( \frac{n!}{n^3} \right),
\end{equation*}
and hence, we allow $K_{n-1}$ with an isolated vertex and its complement graph. This covers the case where the $1$-partition vertex has degree $n-1$.

\item If the degree of the $1$-partition vertex is $n-2$, this is disallowed for the following reason: The vertex in the $n-1$-partition not adjacent to the $1$-partition vertex must be adjacent to one of the other vertices, if it needs a degree of $1$. This creates a vertex of degree $2$ in the $n-1$-partition.

\item The degree of the $1$-partition vertex is $2$. In this case, we have two vertices $u$ and $v$ in the $n - 1$-partition that are adjacent to the $1$-partition vertex. The others are arranged similarly to the $(n)$ case for degree $1$. Here, the automorphism group size is
\begin{equation*}
\left| \Aut \left( U_n \right) \right| = 2 \left( \frac{n - 3}{2} \right)! \cdot 2^{(n - 3) / 2} = O \left( \frac{n!}{n^3} \right),
\end{equation*}
which for sufficiently large $n$ is too small; hence we reject this case when $n$ is odd. The graph is not possible when $n$ is even

\item If the degree of the $1$-partition vertex is $0$ and the others have degree $1$, this suffers from the same pitfalls as the $(n)$ case, having an automorphism group of size
\begin{equation*}
\left| \Aut \left( U_n \right) \right| = \left( \frac{n - 1}{2} \right)! \cdot 2^{(n - 1) / 2} = O \left( \frac{n!}{n^3} \right),
\end{equation*}
and hence, we reject this case as well when $n$ is odd. The graph is not possible when $n$ is even.

\end{enumerate}

For the partition type $(n - 2, 1, 1)$, let the two $1$-partition vertices be $u_1$ and $u_2$ with degrees $d_1$ and $d_2$, respectively. Without loss of generality, assume that $d_2 > d_1$. Since the unique degrees $d_1$ and $d_2$ are different, the $n - 2$-partition implicitly partitions itself into three parts: The partition that is adjacent to the vertex $u_1$ of size $\alpha_1$, the partition that is adjacent to the vertex $u_2$ of size $\alpha_2$, and the remaining vertices that pair themselves. These partitions are rigid in that no $\pi$ from the automorphism group can map vertices across the partition. Hence, assuming the correct parity for $n$, the probability that a random $\pi$ from $S_n$ is in the automorphism group is
\begin{equation*}
\begin{split}
\frac{\left| \Aut \left( U_n \right) \right|}{n!} & \leq \frac{\alpha_1! \alpha_2! \displaystyle \left( \frac{n - \alpha_1 - \alpha_2 - 2}{2} \right)! 2^{\left( n - \alpha_1 - \alpha_2 - 2 \right) / 2}}{n!} \\
& \leq \frac{d_1! d_2! \displaystyle \left( \frac{n - \alpha_1 - \alpha_2 - 2}{2} \right)! 2^{\left( n - \alpha_1 - \alpha_2 - 2 \right) / 2}}{n!} \\
& = O \left( \frac{1}{n^3} \right),
\end{split}
\end{equation*}
if $d_1$ and $d_2$ are both from the set $\{\, 0, 2 \,\}$. Therefore, $d_2$ is either $n - 1$ or $n - 2$. The value of $d_2$ cannot be $n - 1$, since then $u_2$ is connected to all the other vertices, forcing $d_1 = 1$, which is not allowed. The only possibility that remains is $d_2 = n - 2$. If $d_1 = 0$, then the vertex $u_1$ is isolated, and $u_2$ is connected to all vertices in the $n - 2$-partition. In this case, we have
\begin{equation*}
\left| \Aut \left( U_n \right) \right| = (n - 2)! = \frac{n!}{O \left( n^2 \right)} = \omega \left( \frac{n!}{n^3} \right),
\end{equation*}
so that we allow this graph. We also allow the complement of this graph, a $K_{n-1}$ with a vertex of degree $1$ adjacent to it.

If $d_2 = n - 2$ and $d_1 = 2$, then the vertex $u_2$ is connected to all but one vertex in the $n - 2$-partition, and it is also connected with the vertex $u_1$. The vertex $u_1$ is also connected with the isolated vertex in the $n - 2$-partition. In this case, we have
\begin{equation*}
\left| \Aut \left( U_n \right) \right| = (n - 3)! = \frac{n!}{O \left( n^3 \right)} = O \left( \frac{n!}{n^3} \right),
\end{equation*}
so that this graph is rejected.

For the $(n - 2, 2)$ case, we have two vertices $v_1$, and $v_2$ of degree $d$. We have the following cases.

\begin{enumerate}

\item If $d = 0$, we have a case similar to that of $(n)$ with degree $1$, where the automorphism group size is
\begin{equation*}
\left| \Aut \left( U_n \right) \right| = 2 \cdot 2^{(n - 2) / 2} \left( \frac{n - 2}{2} \right)! = O \left( \frac{n!}{n^3} \right).
\end{equation*}
We disallow this case when $n$ is even. When $n$ is odd, the graph is not possible.

\item If $d = 1$, this is not allowed since we have defined the partition class this way.

\item For $d = 2$, this implicitly partitions the $n - 2$-partition into two parts: Adjacent to a vertex of degree $2$ and not adjacent to a vertex of degree $2$. Suppose that these vertices are partitioned into partitions of size $\alpha_1$ and $\alpha_2$, respectively, the probability that $\pi \in S_n$ is in the automorphism group is
\begin{equation*}
\frac{\left| \Aut \left( U_n \right) \right|}{n!} \leq \frac{2 \cdot \alpha_1! \alpha_2!}{n!} = \frac{2}{n(n-1)} \cdot \frac{1}{\displaystyle {{n-2} \choose \alpha_1}} = O \left( \frac{1}{n^3} \right),
\end{equation*}
since $\alpha_1$ and $\alpha_2$ are necessarily positive. If they were not, we would either have $v_1$ and $v_2$ have very high degree, or degree $0$ or $1$. We reject this graph.

\item For $d = n - 2$ and $d = n - 1$, this is not allowed since at least one vertex from the $n - 2$-partition would have to have a degree larger than $1$.

\end{enumerate}

This case covers the statements 3 and 4 of the lemma.

\textbf{\textit{Case 3: The Largest Partition Degree is $2$.}} \\
For the $(n)$ case, for sufficiently large $n$, we must have $v_1, v_2, v_3, v_4, v_5$, and $v_6$ such that $v_1$ and $v_2$ are adjacent, $v_2$ and $v_3$ are adjacent, $v_4$ and $v_5$ are adjacent, and $v_5$ and $v_6$ are adjacent. If we pick a random permutation $\pi$ from $S_n$, the probability that it is in the automorphism group is
\begin{equation*}
\frac{\left| \Aut \left( U_n \right) \right|}{n!} \leq \frac{n \cdot 2 \cdot (n - 3) \cdot 2 \cdot (n - 6)!}{n!} = O \left( \frac{1}{n^4} \right),
\end{equation*}
since $v_2$ can map to at most $n$ vertices, $v_1$ and $v_3$ can only swap their positions as a neighbor of $v_2$; similarly, $v_5$ can map to at most $n - 3$ vertices, $v_4$ and $v_6$ can only swap their positions as a neighbor of $v_5$, and the remaining vertices can map freely to give an upper bound. Hence, we reject this case.

When we have the partition type $(n - 1, 1)$, we have the following cases, based on the degree $d$ of the $1$-partition vertex $u$.

\begin{enumerate}

\item If $d = 0$, this graph suffers the same pitfalls as the $(n)$ partition case and has the automorphism group size of
\begin{equation*}
\left| \Aut \left( U_n \right) \right| \leq \frac{(n - 1) \cdot 2 \cdot (n - 4) \cdot 2 \cdot (n - 7)!}{n!} = O \left( \frac{1}{n^5} \right),
\end{equation*}
hence, we reject this case.

\item If $d = 1$, suppose $v_1$ is in the $(n-1)$-partition and adjacent to $u$. If $v_2$ is adjacent to $v_1$, we must find a $v_3$ adjacent to $v_2$ since $v_3$ cannot be adjacent to any of the vertices we already numbered. Otherwise, $u$'s degree would be too high, and a similar case would go for $v_1$ and $v_2$. Once we continue this process and reach $v_{n-1}$, this vertex has no chance of having a degree $2$ since all other vertices have their promised degrees. Such a graph does not exist.

\item If $d = 2$, we violate the definition of our partition structure.

\item If $d = n - 2$, we only have one possibility: Suppose $u$ is adjacent to $v_1$ through $v_{n - 2}$. The vertex $v_{n-1}$ is adjacent to $v_1$ and $v_2$. From $i = 1$ onwards, $v_{2i+1}$ is also adjacent to $v_{2i+2}$. The automorphism group of this graph is of the size of
\begin{equation*}
\left| \Aut \left( U_n \right) \right| = 2 \cdot 2^{(n - 4) / 2} \left( \frac{n - 4}{2} \right)! = O \left( \frac{n!}{n^3} \right),
\end{equation*}
since the vertices $v_1$ and $v_2$ can swap themselves; and all other remaining $(n - 4) / 2$ pairs can swap and rearrange themselves. We reject this case when $n$ is even. When $n$ is odd, the graph is not possible. 

\item If $d = n - 1$, then the structure would be $u$, connected to each $v_i$ and the $v_i$'s forming pairs again, like the sticks. The automorphism group is of the size of
\begin{equation*}
\left| \Aut \left( U_n \right) \right| = 2^{(n - 1) / 2} \left( \frac{n - 1}{2} \right)! = O \left( \frac{n!}{n^3} \right),
\end{equation*}
for sufficiently large $n$, and we reject this case when $n$ is odd. The graph is not possible when $n$ is even.

\end{enumerate}

When we have a partition structure $(n-2, 2)$, we have the following cases, where $d$ is the degree of the $2$-partition.

\begin{enumerate}

\item If $d = 0$, then this suffers from the same asymptotic pitfalls as the $(n)$-case for degree $2$ and we reject this case:
\begin{equation*}
\left| \Aut \left( U_n \right) \right| \leq \frac{2 \cdot (n - 2) \cdot 2 \cdot (n - 5) \cdot 2 \cdot (n - 8)!}{n!} = O \left( \frac{1}{n^6} \right).
\end{equation*}

\item If $d = 1$, suppose $u_1$ and $u_2$ are from the $2$-partition. If $u_1$ and $u_2$ are adjacent, this suffers from the same pitfall as the $(n)$-case again (as shown in case 1 above), and we reject this case. If they are not adjacent, then suppose that $v_1$ is adjacent to $u_1$. The vertex $v_1$ is adjacent to $v_2$. The vertex $v_2$ cannot be adjacent to any of the vertices we visited, so we require a new vertex $v_3$. Similarly, we go on until $v_{n - 2}$. The vertex $v_{n - 2}$ must be adjacent to $u_2$, since all the others already have the promised degree. This resulting graph has an automorphism group size of $2$: Only reflectional symmetry. Another alternative is one chain from $u_1$ to $u_2$, and a cover of cycles. Once again, the $u_1$, $u_2$ component with the chain only has reflectional symmetry, so we have an automorphism group of size
\begin{equation*}
\left| \Aut \left( U_n \right) \right| \leq 2 \cdot (n-3)! = \frac{n!}{O \left( n^3 \right)} = O \left( \frac{n!}{n^3} \right),
\end{equation*}
and we reject this case.

\item The case of $d = 2$ is again not allowed.

\item If $d = n - 2$, we have two cases:

\begin{itemize}

\item If $u_1$ and $u_2$ are not adjacent, then they are connected to each vertex of the $n - 2$-partition. This graph has an automorphism group of size
\begin{equation*}
\left| \Aut \left( U_n \right) \right| = 2 \cdot (n - 2)! = \frac{n!}{O \left( n^2 \right)} = \omega \left( \frac{n!}{n^3} \right),
\end{equation*}
and we accept this and its complement: $K_{n-2}$ with the other component being an edge.

\item If $u_1$ and $u_2$ are adjacent, then $u_1$ and $u_2$ are adjacent to $n-3$ vertices each in the $n - 2$-partition. The vertices $v_3$ through $v_{n - 2}$ are adjacent to both, and $v_1$ (adjacent to $u_1$) is adjacent to $v_2$ (adjacent to $u_2$). The automorphism group size is
\begin{equation*}
\left| \Aut \left( U_n \right) \right| = 2 \cdot (n - 4)! = O \left( \frac{n!}{n^4} \right),
\end{equation*}
and hence, we reject it.

\end{itemize}

\item If $d = n - 1$, then this graph is a complement of $K_{n - 2}$ along with two isolated vertices. This graph has an automorphism group size of
\begin{equation*}
\left| \Aut \left( U_n \right) \right| = 2 \cdot (n - 2)! = \frac{n!}{O \left( n^2 \right)} = \omega \left( \frac{n!}{n^3} \right), 
\end{equation*}
and we accept this and its complement.

\end{enumerate}

When we have a partition structure of $(n - 2, 1, 1)$, we reject. We can categorize this graph as follows: The vertices in the $n - 2$-partition form a cycle within the partition, there is a chain starting at $u_1$ and ending at $u_2$, or starting at $u_i$ and ending at $u_i$ (for $i = 1$ or $2$). There must be at least one such cycle containing some $u_i$, since otherwise, $d_1$ would be equal to $d_2$. Let $\alpha$ be the length of the chain and $\beta$ be the number of such isomorphic chains. The number of permutations in the automorphism group is
\begin{equation*}
\left| \Aut \left( U_n \right) \right| \leq 2^\beta \beta! ( n - \alpha \beta)! = O \left( \frac{n!}{n^3} \right),
\end{equation*}
since $\beta$ is at least $1$ and $\alpha$ is at least $3$.

This case covers the statements 5 and 6 of the lemma.

\end{proof}

Due to the above case-by-case analysis, we have the following lemma.

\begin{lemma}
\label{lemma:24}

For sufficiently large $n$, in $\tilde{O} \left( n^2 \right)$-time, given an $n$-vertex undirected simple graph $U_n$, we can check whether $\text{Aut} \left( U_n \right) = \omega \left( n! / n^3 \right)$ and also compute the number of $k$-cliques for any $k > 2$.

\end{lemma}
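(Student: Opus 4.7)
The plan is to leverage the exhaustive classification from Lemma~\ref{lemma:23}: for sufficiently large $n$, up to isomorphism there are exactly twelve graphs with $|\Aut(U_n)| = \omega(n!/n^3)$, and each has a rigid, easily-checkable structure. The algorithm will attempt to match $U_n$ against each of these twelve templates; on a match, it outputs the $k$-clique count via a precomputed closed-form expression, and otherwise it correctly declares $|\Aut(U_n)| = O(n!/n^3)$ by virtue of the classification itself.

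First I would compute the sorted degree sequence of $U_n$ by scanning the adjacency matrix once, in $O(n^2)$ time. Each of the twelve graphs from Lemma~\ref{lemma:23} has a distinctive sorted degree sequence: $K_n - e$ has $(n-1, \ldots, n-1, n-2, n-2)$; $K_{n-1}$ with an isolated vertex has $(n-2, \ldots, n-2, 0)$; $K_{n-1}$ with a pendant has $(n-1, n-2, \ldots, n-2, 1)$; $K_{n-2} \cup K_2$ has $(n-3,\ldots,n-3,1,1)$; the complements have the ``negated'' sequences; and similarly for the remaining cases. So at most one of the twelve patterns can match the sequence of $U_n$.

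If a pattern matches, I would verify in $O(n^2)$ time that $U_n$ actually equals the claimed graph via a direct structural check on the adjacency matrix (for example, for $K_n - e$, locate the two vertices of degree $n-2$ and confirm that every other pair is adjacent; for $K_{n-1}$ plus an isolated vertex, locate the degree-$0$ vertex and verify the remaining $n-1$ vertices induce a clique). Upon a successful verification, the algorithm emits the $k$-clique count using the closed form for that case, derived by a one-line combinatorial argument that for $k > 2$ every $k$-clique must sit inside the unique dense subgraph: $\binom{n}{k}$ for $K_n$; $0$ for the empty graph; $\binom{n}{k} - \binom{n-2}{k-2}$ for $K_n - e$; $\binom{n-1}{k}$ for both $K_{n-1}$ plus an isolated vertex and $K_{n-1}$ with a pendant (since the pendant has degree $1$ and so cannot lie in a $k$-clique for $k>2$); $\binom{n-2}{k}$ for $K_{n-2}$ with two isolated vertices and for $K_{n-2} \cup K_2$; $n-2$ when $k=3$ and $0$ when $k>3$ for the join of $K_2$ with $\overline{K_{n-2}}$; and $0$ for each of the remaining star-like, bipartite-like, or sparse complements when $k > 2$. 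If no pattern is verified, Lemma~\ref{lemma:23} gives $|\Aut(U_n)| = O(n!/n^3)$ and the algorithm reports this.

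Since all twelve tests and their associated structural verifications each run in $O(n^2)$ time and there are only a constant number of them, the total runtime is $\tilde{O}(n^2)$. The main technical point, rather than a real obstacle, is being careful in the verification phase: a degree sequence match alone is not always sufficient to pin down the graph for every pattern (e.g., small examples show multiple non-isomorphic graphs can share the degree sequence of $K_{n-2} \cup K_2$), so the $O(n^2)$ structural confirmation is needed before emitting a formula. With this confirmation in place the entire argument reduces to the classification of Lemma~\ref{lemma:23} plus a short case analysis, which I would write out in tabular form for clarity.
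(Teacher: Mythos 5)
Your proposal follows essentially the same strategy as the paper's own proof of this lemma: both rely on the classification of Lemma~\ref{lemma:23}, both check $U_n$ against the twelve candidate structures with an $O(n^2)$ scan (you filter by degree sequence first, the paper filters by edge count and then passes to $\overline{U_n}$; these are interchangeable), and both conclude $\left| \Aut \left( U_n \right) \right| = O(n!/n^3)$ if all checks fail.

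One substantive point worth flagging: your closed form for the complement of ``$K_{n-2}$ with two isolated vertices,'' namely the join of $K_2$ with $\overline{K_{n-2}}$, is actually the \emph{correct} one, and the paper's proof appears to contain an oversight here. In that graph two universal vertices $a, b$ are adjacent to each other and to all of the $n-2$ remaining vertices, which form an independent set; every triple $\{a, b, v\}$ is therefore a triangle and there are no others, giving exactly $n-2$ triangles for $k=3$ (and $0$ for $k \geq 4$, since the clique number is $3$). The paper's case-5 analysis asserts $0$ $k$-cliques for this graph when $U'_n = \overline{U_n}$, which is incorrect for $k=3$. Your formula fixes this without affecting the lemma's runtime or the downstream use in Theorem~\ref{thm:10}. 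The remaining eleven cases you list agree with the paper, and your observation that a degree-sequence match alone is insufficient and must be followed by an $O(n^2)$ structural verification is the same precaution the paper takes (e.g., for the pendant-attached case).
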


\begin{proof}

Our algorithm will proceed as follows. If the number of edges in $U_n$ is larger than $\displaystyle {n \choose 2} / 2$, then we check if it is one of the large-clique structures. If not, then we compute $\overline{U_n}$ and check for one of the large clique structures. Both counting edges and computing the complement of $U_n$ requires $O \left( n^2 \right)$-time. Hence, assuming $U^\prime_n = U_n$ or $\overline{U_n}$ with at least $\displaystyle {n \choose 2} / 2$ edges, our algorithm proceeds as follows.

\begin{enumerate}

\item \textit{Checking if $U^\prime_n$ is $K_n$:} Simply check if every entry in $U^\prime_n$ is $1$ confirms this. If this test is passed, if $U^\prime_{n} = U_{n}$, then the number of $k$-cliques is $\displaystyle {n \choose k}$. If $U^\prime_{n} = \overline{U_n}$, then the number of $k$-cliques is $0$. If $U^\prime_{n}$ does not pass this test, then we move to the next test.

\item \textit{Checking if $U^\prime_n$ is $K_n$ with one missing edge:} Simply checking if exactly one entry in $U^\prime_n$ is $0$ confirms this. If the test is passed and $U^\prime_n = U_n$, then the number of $k$-cliques is $\displaystyle {n \choose k} - {{n - 2} \choose {k - 2}}$, since the subtracted number is the number of $k$-cliques that, in $K_n$, would contain the excluded edge. If $U^\prime_n = \overline{U_n}$, then the number of $k$-cliques is $0$. If this test fails, then we move to the next test.

\item \textit{Checking if $U^\prime_n$ is $K_{n - 1}$ with an isolated vertex:} It suffices to check if $n - 1$ vertices have degree $n - 2$ and one has degree $0$. If $U^\prime_n$ passes this test and $U^\prime_n = U_n$, then the number of $k$-cliques is $\displaystyle {{n - 1} \choose k}$. If $U^\prime_n = \overline{U_n}$, then the number of $k$-cliques is $0$. If this test fails, then we move to the next test.

\item \textit{Checking if $U^\prime_n$ is $K_{n - 1}$ with one vertex of degree $1$ adjacent to it:} First, we count the degrees of the vertices. If there is agreement with the expected number of vertices of each degree, then we are done, since all vertices with degree $n - 2$ must form a $K_{n - 2}$ subgraph, all adjacent to the vertex of degree $n - 1$, since the vertex of degree $n - 1$ is already adjacent to the vertex of degree $1$. If this test passes and $U^\prime_n = U_n$, then the number of $k$-cliques is $\displaystyle {{n - 1} \choose k}$. If $U^\prime_n = \overline{U_n}$, then the number of $k$-cliques is $0$. If this test fails, then we move to the next test.

\item \textit{Checking if $U^\prime_n$ is $K_{n - 2}$ with two isolated vertices:} It suffices in this case to check alignment with the expected degrees of the vertices. If the test passes and $U^\prime_n = U_n$, then the number of $k$-cliques is $\displaystyle {{n - 2} \choose k}$. If $U^\prime_n = \overline{U_n}$, then the number of $k$-cliques is $0$. If this test fails, then we move to the next test.

\item \textit{Checking if $U^\prime_n$ is $K_{n - 2}$ with two vertices of degree $1$ adjacent to each other:} First, we compute the degrees of the vertices and check that the vertices of degree $1$ are adjacent to each other. This forces the other $n - 2$ vertices to form an $n - 2$-clique. If this test passes and $U^\prime_n = U_n$, then the number of $k$-cliques is $\displaystyle {{n - 2} \choose k}$. If $U^\prime_n = \overline{U_n}$, then the number of $k$-cliques is $0$. If this test fails as well, and after all other tests, we know from our classification that
\begin{equation*}
\left| \Aut \left( U_n \right) \right| = O \left( \frac{n!}{n^3} \right).
\end{equation*}

\end{enumerate}

In all six cases, in $\tilde{O} \left( n^2 \right)$-time (depending on one's preferred model of computation), we can determine the appropriate classification if
\begin{equation*}
\left| \Aut \left( U_n \right) \right| = \omega \left( \frac{n!}{n^3} \right),
\end{equation*}
and also compute the number of $k$-cliques in $\tilde{O} \left( n^2 \right)$-time. If not, we can determine that 
\begin{equation*}
\left| \Aut \left( U_n \right) \right| = O \left( \frac{n!}{n^3} \right).
\end{equation*}

\end{proof}

\subsection{Counting Cliques Quickly Over Highly Symmetric Graphs}
\label{section:8.3}

Now, given that we have a test that quickly distinguishes graphs that, on a logarithmic scale, have almost the largest possible automorphism group from those that are slightly smaller, as well as a way to deal with graphs that have slightly smaller than $S_n$ itself, we now need to deal with graphs that have almost maximum-sized automorphism groups. We are in luck since, like many natural graph problems, the difficulty of detecting or counting $k$-cliques is inversely related to the size of the automorphism group of $U_n$. In particular, the algorithm we show below has time complexity $O\left( 1 / \left| \Aut \left( U_n \right) \right|^2 \right)$, keeping $n$ constant.

\begin{algorithm}
\caption{$\text{Sym-Graph-Clique-Count}\left( U_n, n, t, k \right)$}
\label{alg:9}
\Comment{The Symmetric Graph $k$-Clique Counting Algorithm \hspace{6.5cm}} \\
\Comment{Input: $n$ is the number of vertices in the undirected simple graph $U_n$} \\
\Comment{Input: $t$ is a positive integer} \\
\Comment{Input: $k$ is a positive integer} \\
\Comment{Input: $U_n$ is a list of edges for an undirected simple graph with $n$ vertices and $\left| \Aut \left( U_n \right) \right| \geq \displaystyle \frac{n!}{t}$} \\
\Comment{Output: The number of $k$-cliques in $U_n$}
\begin{algorithmic}
\State $I \gets 0$
\State $C \gets \{\, U_n \,\}$ 
\While{$I < t n^2$}
    \State $\pi \gets_r S_n$
    \If{$\pi \left( U_n \right) \not\in C$}
        \State $C \gets C \cup \{\, \pi \left( U_n \right) \,\}$
    \EndIf
    \State $I \gets I + 1$
\EndWhile
\State $m \gets 0$
\For{each $U_n^\prime \in C$}
    \If{$\left( \left( U_n^\prime.e_{\{\, i, j \,\}} \right)_{j = 1}^k \right)_{i < j} = (1)^{l \in {k \choose 2}}$}
        \State $m \gets m + 1$
    \EndIf
\EndFor
\State \Return $m \displaystyle {n \choose k} / |C|$
\end{algorithmic}
\end{algorithm}

Roughly, the algorithm proceeds by trying to list all graphs isomorphic to $U_n$ and checking how often the first $k$ vertices in each one is a clique. As can be seen in Algorithm \ref{alg:9}, when everything is computed correctly, if $m$ is the number of such graphs in $\mathcal{C}_n$, then the number of $k$-cliques is $m \displaystyle {n \choose k} / |\mathcal{C}_n|$. We attempt to list all possible graphs in $\mathcal{C}_n$ in time at least $|\mathcal{C}_n|n^2$. In particular, it is probabilistically unlikely that we will miss any of the graphs in $\mathcal{C}_n$, given this much time. Since, $\mathcal{C}_n$ is small, we can do this fairly quickly. We prove these assertions in the following lemma.

\begin{lemma}
\label{lemma:25}

Given an $n$-vertex undirected simple graph $U_n$ whose automorphism group, $\Aut \left( U_n \right)$ is promised to be of size at least $n! / t$, for any $k \in \mathbb{N}$, there is an algorithm counting the number of $k$-cliques correctly in time $O \left( t^2 n^4 \right)$ with high probability.

\end{lemma}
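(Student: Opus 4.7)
The algorithm is already spelled out as Algorithm \ref{alg:9}, so my job is to justify both its correctness and its runtime. Correctness splits into two independent pieces: (i) a combinatorial identity showing that if the collected set $C$ really equals the full isomorphism class $\mathcal{C}_n$ of $U_n$, then $m \binom{n}{k} / |C|$ is exactly the $k$-clique count; and (ii) a probabilistic argument, via coupon collector, showing $C = \mathcal{C}_n$ with high probability.

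\textbf{Combinatorial identity.} Fix the true $k$-clique count $K$ of $U_n$. Since $K$ is isomorphism-invariant, every $U_n' \in \mathcal{C}_n$ contains exactly $K$ cliques of size $k$. Now count, in two different ways, the set of pairs $(U_n', S)$ with $U_n' \in \mathcal{C}_n$ and $S \in \binom{[n]}{k}$ such that $S$ is a $k$-clique in $U_n'$. Summing over $U_n'$ gives $|\mathcal{C}_n| \cdot K$. On the other hand, by the symmetry of $\mathcal{C}_n$ under relabeling (concretely, by the transitive action of $S_n$ on $\mathcal{C}_n$), every $S \in \binom{[n]}{k}$ appears as a clique in the same number $N$ of graphs in $\mathcal{C}_n$. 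Hence $\binom{n}{k} N = |\mathcal{C}_n| K$, and in particular $N = m$ when we take $S = \{1,\dots,k\}$, which yields $K = m \binom{n}{k} / |\mathcal{C}_n|$, matching the return value once $C = \mathcal{C}_n$.

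\textbf{Collecting the full orbit.} Each iteration samples $\pi \leftarrow_r S_n$ and forms $\pi(U_n)$. By Lemma \ref{lemma:4} (orbit--stabilizer), every $U_n' \in \mathcal{C}_n$ is hit by exactly $|\Aut(U_n)|$ permutations, so $\pi(U_n)$ is uniform over $\mathcal{C}_n$. By hypothesis $|\mathcal{C}_n| = n!/|\Aut(U_n)| \le t$. For any fixed $U_n' \in \mathcal{C}_n$, the probability that $tn^2$ independent samples all miss $U_n'$ is at most $(1-1/t)^{tn^2} \le e^{-n^2}$. A union bound over the $\le t$ elements of $\mathcal{C}_n$ gives a total failure probability of at most $t \cdot e^{-n^2} = 2^{-\Omega(n^2)}$, since $t \le n!$. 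Hence with this (high) probability $C = \mathcal{C}_n$ and the returned value is exact.

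\textbf{Runtime.} The loop runs $tn^2$ times. Sampling $\pi$ and forming $\pi(U_n)$ take $O(n^2)$ time, and testing membership $\pi(U_n) \in C$ takes $O(|C| \cdot n^2) = O(t n^2)$ time since $|C| \le |\mathcal{C}_n| \le t$ throughout (adjacency-matrix comparison, or with hashing one can do better, but this bound already suffices). The loop therefore runs in $O(tn^2 \cdot tn^2) = O(t^2 n^4)$. The final pass over $C$ costs $O(|C| \cdot k^2) = O(t k^2) = O(tn^2)$, and computing $\binom{n}{k}$ is $\tilde{O}(n)$, both negligible. The total is $O(t^2 n^4)$. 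The only mildly delicate point is the coupon-collector step above, but the bound $|\mathcal{C}_n| \le t$ coming directly from orbit--stabilizer makes the union bound exponentially small, so no further work is required.
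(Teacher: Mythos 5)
Your proof is correct and follows essentially the same structure as the paper's: you argue (a) that the loop collects the entire orbit $\mathcal{C}_n$ with failure probability $O(t e^{-n^2})$, and (b) that once $C=\mathcal{C}_n$, the returned value $m\binom{n}{k}/|C|$ equals the true count, and then you bound the running time. The only cosmetic difference is in step (b): the paper routes through the intermediate quantity $\mathcal{P}_{\pi\gets_r S_n}[\text{first $k$ vertices of }\pi(U_n)\text{ form a clique}]$, proving it equals $m/|C|$ and separately that multiplying by $\binom{n}{k}$ gives the clique count, whereas you directly double-count incidences $(U_n',S)$ with $U_n'\in\mathcal{C}_n$ and $S$ a $k$-clique of $U_n'$; your symmetry claim that every $S\in\binom{[n]}{k}$ lies in the same number of graphs of $\mathcal{C}_n$ follows from the fact that conjugating by any $\sigma\in S_n$ is an incidence-preserving bijection of $\mathcal{C}_n\times\binom{[n]}{k}$. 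Both are instances of the same averaging argument, so I would call your proof a slightly streamlined presentation of the paper's, not a genuinely different route.
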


\begin{proof}

We intend to prove that Algorithm \ref{alg:9} satisfies these claims. Suppose that our graph $U_n$ has $\left| \Aut \left( U_n \right) \right| \geq n! / t$. Then, since $n! = \left| \mathcal{C}_n \right| \left| \Aut \left( U_n \right) \right|$, we would have that $\left| \mathcal{C}_n \right| \leq t$, where $\mathcal{C}_n$ is the set of all distinct graphs isomorphic to $U_n$. We prove in the following claim that we actually ``hit'' all graphs in $\mathcal{C}_n$ with high probability.

\begin{claim}
\label{claim:2}

With high probability, the set $C$ in Algorithm \ref{alg:9} is exactly $\mathcal{C}_n$.

\end{claim}

\begin{proof}

It is easy to see that $C$ is a subset of $\mathcal{C}_n$ since every adjacency list added to $C$ is a distinct permuted adjacency list of $U_n$. Our task is to show that the adjacency list of every graph in $\mathcal{C}_n$ is added to $C$ with high probability. Suppose that the graph $U_n^\prime$ is in $\mathcal{C}_n$. Since the set $\Pi \left( U_n^\prime \right) = \{\, \pi | \pi \in S_n, \pi \left( U_n \right) = U_n^\prime \,\}$ is a coset of the automorphism group $\Aut \left( U_n \right)$ of $U_n$, $\left| \Pi \left( U_n^\prime \right) \right| = \left| \Aut \left( U_n \right) \right| \geq n! / t$. The probability that $\pi \left( U_n \right) = U_n^\prime$ is $\left| \Aut \left( U_n \right) \right| / \left| S_n \right| \geq 1 / t$ for each $\pi$ sampled uniformly from $S_n$. Over $n^2t$ random samples of $\pi$, the probability that none of those samples are from $\Pi \left( U_n^\prime \right)$ is less than
\begin{equation*}
\left( 1 - \frac{1}{t} \right)^{t n^2}.
\end{equation*}
Since $(1 - 1 / t)^t < 1 / e$ for all $t > 1$, we have that the probability that $\Pi \left( U_n^\prime \right)$ is never hit is at most
\begin{equation*}
\left( 1 - \frac{1}{t} \right)^{t n^2} \leq e^{-n^2}.
\end{equation*}
Due to the union bound, the probability that there is a graph $U_n^\prime$ in $\mathcal{C}_n$ whose corresponding coset is never hit is upper bounded by
\begin{equation*}
\sum_{U_n^\prime \in \mathcal{C}_n} e^{-n^2} \leq t e^{-n^2} \leq n! e^{-n^2} = O \left( e^{-0.99n^2} \right).
\end{equation*}

\end{proof}

Now, we prove the relation between the correctly proven values of $m$, $|C|$, and the number of $k$-cliques.

\begin{claim}
\label{claim:3}

In Algorithm \ref{alg:9}, if $C = \mathcal{C}_n$, then
\begin{equation*}
\mathcal{P}_{\pi \gets_r S_n} \left[ \left( \left( e_{\{\, \pi(i), \pi(j) \,\}} \right)_{j \in [k] - [1]} \right)_{1 \leq i < j} = (1)_{l \in \binom{k}{2}} \right] = \frac{m}{|C|}.
\end{equation*} 

\end{claim}

\begin{proof}

When given a graph $U_n$, we want to compute how often the first $k$ vertices of $\pi \left( U_n \right)$ form a $k$-clique. Since $\left| \Aut \left( U_n \right) \right|$ many permutations map to the same graph in $C$, we can add $\left| \Aut \left( U_n \right) \right| / n! = 1 / |C|$ probability for each graph in $C$ that has a complete subgraph of $k$ vertices in the first $k$ vertices of the graph. In Algorithm \ref{alg:9}, the variable $m$ counts exactly the number of such graphs in $C$, and hence, the probability of interest to us is $m / |C|$.

\end{proof}

\begin{claim}
\label{claim:4}

The number of $k$-cliques in a graph $U_n$ is exactly
\begin{equation*}
\mathcal{P}_{\pi \gets_r S_n} \left[ \left( \left( e_{\{\, \pi(i), \pi(j) \,\}} \right)_{j \in [k] - [1]} \right)_{1 \leq i < j} = (1)_{l \in \binom{k}{2}} \right]  {n \choose k}.
\end{equation*}

\end{claim}

\begin{proof}

It is easy to see that the number of permutations $\pi \in S_n$ such that the vertices described by the set $\pi([k]) = (\pi(i))_{i \in [k]} \subset [n]$ induce a complete graph is 
\begin{equation*}
\mathcal{P}_{\pi \gets_r S_n} \left[ \left( \left( e_{\{\, \pi(i), \pi(j) \,\}} \right)_{j \in [k] - [1]} \right)_{1 \leq i < j} = (1)_{l \in \binom{k}{2}} \right] n!.
\end{equation*}
Since it does not matter what order the vertices are inside $\pi([k])$ and outside $\pi([k])$, respectively, $k!(n-k)!$ permutations $\pi$ describe the same set. Hence, the number of sets $S \subset [n]$ of size $k$ which induce a complete graph is exactly 
\begin{equation*}
\mathcal{P}_{\pi \gets_r S_n} \left[ \left( \left( e_{\{\, \pi(i), \pi(j) \,\}} \right)_{j \in [k] - [1]} \right)_{1 \leq i < j} = (1)_{l \in \binom{k}{2}} \right]  {n \choose k}.
\end{equation*}

\end{proof}

Hence, using Claims \ref{claim:3} and \ref{claim:4}, we obtain that the number of $k$-cliques in the graph $U_n$ is exactly $m \displaystyle {n \choose k} / |C|$, whenever $C = \mathcal{C}_n$, which is true with high probability due to claim \ref{claim:2}.

Now, to the runtime analysis, each time we sample $\pi$ and check $\pi \left( U_n \right)$ against all the other adjacency vectors in $C$, we have to make at most $t$ graph equality comparisons, which takes $O \left( tn^2 \right)$-time. Sampling $\pi$ from $S_n$ takes $n \log n$-time. Considering we repeat this process $tn^2$ times, the time complexity of this section of the algorithm is $O \left( t^2 n^4 \right)$. The proceeding section makes at most $t$ graph comparisons requiring $O \left( tn^2 \right)$-time. The time complexity of computing a binomial coefficient is at most $n^2 \polylog(n)$ \citep{Harvey2021}. Subsequently, we obtain that the time complexity of the algorithm is at $O \left( t^2 n^4 \right)$.
\end{proof}

If we are interested in the case where $t = o(n^3)$, then, due to the classification of highly symmetric graphs and analysis in Lemmas \ref{lemma:22}, \ref{lemma:23}, and \ref{lemma:24}, this can be done in $\tilde{O} \left( n^2 \right)$-time.

\subsection{Probabilistic Algorithms From Typically Generated Oracles}
\label{section:8.4}

We are now ready to prove our main theorem for this section.

\begin{theorem1}
\label{theorem:9}

For any $k \in \mathbb{N}$ (not necessarily a constant), given an $\epsilon = \omega \left( n^{3/2} / \sqrt{n!} \right)$, given an oracle $O$ sampled from $O^{H_n}_{1 / 2 + \epsilon}$, where $H_n: \{\, 0,1 \,\}^{n \choose 2} \to \mathbb{D}$ is any function defined over $n$-vertex undirected simple graphs that is invariant under graph isomorphism and can be computed in $O \left( n^{8 + o(1)} / \epsilon^{4 + o(1)} \right)$-time given the number of $k$-cliques in the graph, then with a probability of at least $1 - 2^{-\Omega \left( n^2 \right)}$ over the randomness of $O^{H_n}_{1 / 2 + \epsilon}$, we have an algorithm that, with access to $O$ computes $H_n$ with a high probability in time $O \left( \left( n^{8 + o(1)} / \epsilon^{2 + o(1)} + T_{O} \right) / \epsilon^2 \right)$, where $T_O$ is the time complexity of a hypothetical algorithm simulating the oracle $O$.

\end{theorem1}

\begin{proof}

Given a graph $U_n$ to compute $H_n$ on, we first use the automorphism group size-tester, $\text{Aut-Size-Test} \left( n, U_n, t(n) \right)$ (Algorithm \ref{alg:8}), to test whether $\left| \Aut \left( U_n \right) \right|$ is of size at least $n! / t(n)$, where $t(n) = 100n^2 / \epsilon^2$. This takes $O \left( n^3 t(n) \right) = O \left( n^5 / \epsilon^2 \right)$-time.

In the case when $\text{Aut-Size-Test} \left( n, U_n, t(n) \right)$ (Algorithm \ref{alg:8}) accepts, we run the $k$-clique counter, $\text{Sym-Graph-Clique-Count}\left( U_n, n, t, k \right)$ (Algorithm \ref{alg:9}), for fixed automorphism group sizes on $U_n$ with the parameter $t = t(n)^{1 + \alpha}$ for some constant $\alpha > 0$. This takes $O \left( t(n)^{2 + 2 \alpha} n^4 \right) = O \left( n^{8 + 4 \alpha} / \epsilon^{4 + 4 \alpha} \right)$-time. If $\text{Aut-Size-Test} \left( n, U_n, t(n) \right)$ (Algorithm \ref{alg:8}) rejects, then we use $H_n\text{-Amplifier}\left(O, n, U_n \right)$ (Algorithm \ref{alg:7}) to query $O$ on $O \left( 1 / \epsilon^2 \right)$ randomly chosen graphs that are isomorphic to $U_n$. This takes $O \left( T_O / \epsilon^2 \right)$-time.

Suppose that $U_n$ is a graph with an automorphism group $\Aut \left( U_n \right)$ of size at least $n! / t(n)$. In this case, $\text{Aut-Size-Test} \left( n, U_n, t(n) \right)$ (Algorithm \ref{alg:8}) accepts with a high probability. Then, with a high probability, $\text{Sym-Graph-Clique-Count}\left( U_n, n, t, k \right)$ (Algorithm \ref{alg:9}) computes the number of $k$-cliques correctly in $O \left( n^{8 + 4 \alpha} / \epsilon^{4 + 4 \alpha} \right)$-time. If $U_n$ has an automorphism group smaller than $n! / t(n)^{1 + \alpha}$, then with a high probability, $\text{Aut-Size-Test} \left( n, U_n, t(n) \right)$ (Algorithm \ref{alg:8}) rejects. Due to our analysis in Lemma \ref{lemma:19}, the isomorphism class, $\mathcal{C}_n$, has at least a $1 / 2 + \epsilon / 2$-fraction of correctness over $O$ (with a high probability over the randomness of $O^{H_n}_{1 / 2 + \epsilon}$) and hence, we can make $O \left( 1 / \epsilon^2 \right)$ queries to $O$ via $H_n\text{-Amplifier}\left(O, n, U_n \right)$ (Algorithm \ref{alg:7}) and take the majority. With a very high probability, we are correct.

When $U_n$ has $n! / t(n) \geq \left| \Aut \left( U_n \right) \right| \geq n! /t(n)^{1 + \alpha}$, no matter what $\text{Aut-Size-Test} \left( n, U_n, t(n) \right)$ (Algorithm \ref{alg:8}) returns, we have the correct answer with high probability. If the algorithm accepts, by choosing the parameter $t$ as $t(n)^{1 + \alpha}$, we have given $\text{Sym-Graph-Clique-Count}\left( U_n, n, t, k \right)$ (Algorithm \ref{alg:9}) sufficient time to list all graphs in the isomorphism classes of $U_n$ and compute the number of $k$-cliques. If $\text{Aut-Size-Test} \left( n, U_n, t(n) \right)$ (Algorithm \ref{alg:8}) rejects, then since $\left| \text{Aut} \left( U_n \right) \right| \leq n! / t(n) = n! \epsilon^2 /(100 n^2)$, its isomorphism class $\mathcal{C}_n$ is larger than $100 n^2 / \epsilon^2$ and with high probability over the randomness of $O^{H_n}_{1 / 2 + \epsilon}$ (Lemma \ref{lemma:19}), we have an oracle $O$ from which all isomorphism classes of that size can be error-corrected from $O$ into a probabilistic algorithm.

This algorithm takes $O \left( \left( n^{8 + 4 \alpha} / \epsilon^{2 + 4 \alpha} + T_{O} \right) / \epsilon^2 \right)$-time for every $\alpha > 0$ with a high probability over $O^{H_n}_{1 / 2 + \epsilon}$. By taking $\alpha = o(1)$, we get the $O \left( \left( n^{8 + o(1)} / \epsilon^{2 + o(1)} + T_{O} \right) / \epsilon^2 \right)$-time complexity.

\end{proof}

This already implies some progress on the open problem of \cite{Goldreich2020}, as shown below in the following corollary.

\begin{corollary}
\label{cor:4}
For any constant $\epsilon > 0$, an oracle $O$ sampled from $O^{H_n}_{1 / 2 + \epsilon}$, where $H_n$ is the function counting the $k$-clique parity in $n$-vertex undirected simple graphs, with a probability of over $1 - 2^{-\Omega \left( n^2 \right)}$, $O$ can be error-corrected to provide an $O \left( n^{8 + o(1)}+ T_O \right)$-time probabilistic algorithm for computing the parity of the number of $k$-cliques on any $n$-vertex undirected simple graph.
\end{corollary}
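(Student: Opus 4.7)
The plan is to derive the corollary as a direct specialization of Theorem \ref{theorem:9} (Theorem \ref{thm:1}) by taking $H_n$ to be the $k$-clique parity counting function and $\epsilon$ to be a constant. Three hypotheses of Theorem \ref{theorem:9} must be verified: $H_n$ is invariant under graph isomorphism, $H_n$ can be computed in $O \left( n^{8 + o(1)} / \epsilon^{4 + o(1)} \right)$-time given the number of $k$-cliques in the graph, and $\epsilon = \omega \left( n^{3/2} / \sqrt{n!} \right)$.

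First, I would observe that the number of $k$-cliques in $U_n$ depends only on the isomorphism class of $U_n$, and hence its parity (reduction modulo $2$) depends only on the isomorphism class, giving invariance of $H_n$ under graph isomorphism. Second, given the number of $k$-cliques as an integer, computing its parity takes $O(1)$-time (bit-level reduction modulo $2$), which trivially satisfies the $O \left( n^{8 + o(1)} / \epsilon^{4 + o(1)} \right)$-time requirement. Third, for any constant $\epsilon > 0$, since $\sqrt{n!}$ grows super-exponentially in $n$, the quantity $n^{3/2}/\sqrt{n!}$ tends to $0$, so $\epsilon = \omega \left( n^{3/2} / \sqrt{n!} \right)$ is satisfied for all sufficiently large $n$.

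Next, I would apply Theorem \ref{theorem:9} to obtain that, with probability at least $1 - 2^{-\Omega \left( n^2 \right)}$ over the draw of $O$ from $O^{H_n}_{1 / 2 + \epsilon}$, there is a probabilistic algorithm computing $H_n$ with high probability in time
\begin{equation*}
O \left( \left( n^{8 + o(1)} / \epsilon^{2 + o(1)} + T_{O} \right) / \epsilon^2 \right).
\end{equation*}
Substituting a constant $\epsilon$ collapses every $\epsilon$-dependent factor into a constant, so the runtime simplifies to $O \left( n^{8 + o(1)} + T_O \right)$, matching the stated bound.

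The main obstacle is essentially cosmetic rather than mathematical: the statement of Theorem \ref{theorem:9} allows $H_n$ to take values in an arbitrary set $\mathbb{D}$, and one must confirm that $\mathbb{D} = \{0, 1\}$ (parity values) poses no issue for either the error-correction step in Algorithm \ref{alg:7} (which only relies on isomorphism-invariance of $H_n$) or the symmetric-graph branch of Algorithm \ref{alg:9} (which counts $k$-cliques exactly and then reduces modulo $2$). Since both subroutines treat $H_n$'s output space opaquely--the amplifier merely returns $O$'s answer on an isomorphic copy and, for the symmetric branch, the $k$-clique count is computed explicitly before the final reduction--no adaptation is required, and the corollary follows.
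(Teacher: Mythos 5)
Your proof is correct and matches the paper's intended derivation: Corollary \ref{cor:4} is stated immediately after Theorem \ref{theorem:9} as a direct specialization with constant $\epsilon$, and you verify precisely the hypotheses (isomorphism invariance of parity, triviality of the post-processing step, the $\epsilon = \omega(n^{3/2}/\sqrt{n!})$ condition) and simplify the runtime bound exactly as the paper does implicitly.
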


Using the classification of graphs $U_n$ with $|\textit{Aut}(U_n)| = \omega(n!/n^3)$ in Lemmas \ref{lemma:22}, \ref{lemma:23}, and \ref{lemma:24}, we obtain the following theorem, partially resolving the open problem of \cite{Goldreich2020} in an ``almost always'' sense. We discuss this further in Section \ref{section:8.5}.

\begin{theorem2}
\label{thm:10}

Given any constants $k > 2$ and $\epsilon > 0$, with a probability of at least $1 - 2^{-\Omega \left( n^2 \right)}$ over the randomness of sampling $O$ from $O^{H_n}_{1 / 2 + \epsilon}$, where $H_n$ is the function counting the number of $k$-cliques modulo $2$ in an $n$-vertex undirected simple graph, we have an $\tilde{O} \left( n^2 \right)$-time randomized reduction from counting $k$-cliques modulo $2$ on all instances to counting $k$-cliques modulo $2$ correctly over the $1 / 2 + \epsilon$-fraction of instances required of $O$. Moreover, this reduction has a success probability of greater than $2 / 3$.

\end{theorem2}

\begin{proof}

Using the classification of graphs with $\left| \Aut \left( U_n \right) \right| = \omega \left( n! / n^3 \right)$, and Lemmas \ref{lemma:22}, \ref{lemma:23}, and \ref{lemma:24}, in $\tilde{O}(n^2)$-time, for sufficiently large $n$, we can check whether a graph $U_n$ has an automorphism group $\Aut \left( U_n \right)$ of size $\omega \left( n! / n^3 \right)$; if so, count the number of $k$-cliques as well. We can return this value modulo $2$. If the graph $U_n$ has an automorphism group $\Aut \left( U_n \right)$ of size $O \left( n! / n^3 \right)$, then due to the analyses in Lemmas \ref{lemma:19} and \ref{lemma:20}, with a probability of at least $1 - 2^{-\Omega \left( n^2 \right)}$ over the randomness of $O^{H_n}_{1 / 2 + \epsilon}$, each isomorphism class $\mathcal{C}_n$ of size larger than $100 n^2 / \epsilon ^2 = o \left( n^3 \right)$ has more than a $1 / 2 + \epsilon / 2$-fraction of its queries over $O$ set to the correct answer. With $1 / \epsilon^2$ queries to $O$, in $\tilde{O} \left( n^2 / \epsilon^2 \right)$-time, we can take the majority of the answers, and this is correct with a probability of at least $1 - 2^{-\Omega(n)}$.

\end{proof}

We now show some corollaries related to the problem of deciding $\HALF$ and related counting problems.

\begin{corollary}
\label{cor:5}

Under \rETH, with a probability of more than $1 - 2^{-\Omega \left( n^2 \right)}$ over the randomness of $O^{\HALF}_{1 / 2 + 1 / 2^{o(n)}}$, an oracle $O$ sampled from $O^{\HALF}_{1 / 2 + 1 / 2^{o(n)}}$ has $T_O = 2^{\gamma n}$ for some $\gamma > 0$.

\end{corollary}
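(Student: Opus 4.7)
The plan is to derive the conclusion as a contrapositive: assume some oracle $O$ in the ``good event'' of Theorem \ref{thm:1} were simulable in $2^{o(n)}$-time, and show this contradicts $\rETH$ via the $\NP$-hardness reduction from $3\SAT$ to $\HALF$. I would instantiate Theorem \ref{thm:1} with $H_n = \HALF$, $k = \lfloor n / 2 \rfloor$, and $\epsilon = 1 / 2^{o(n)}$, then chase the runtime bound.

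First, I verify the hypotheses of Theorem \ref{thm:1}. The function $\HALF : \{\,0,1\,\}^{\binom{n}{2}} \to \{\,0,1\,\}$ is invariant under graph isomorphism because the existence of a $\lfloor n / 2 \rfloor$-clique is a graph property, and given the number of $\lfloor n / 2 \rfloor$-cliques, $\HALF$ is computable in $O(1)$-time by checking positivity, which is trivially within the $O \left( n^{8 + o(1)} / \epsilon^{4 + o(1)} \right)$-time budget. Since $\sqrt{n!} = 2^{\Theta(n \log n)}$ grows far faster than $n^{3 / 2} \cdot 2^{o(n)}$, the lower bound $\epsilon = \omega \left( n^{3 / 2} / \sqrt{n!} \right)$ is satisfied by any $\epsilon = 1 / 2^{o(n)}$.

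Next, I apply Theorem \ref{thm:1}. With probability at least $1 - 2^{-\Omega \left( n^2 \right)}$ over $O^{\HALF}_{1 / 2 + \epsilon}$, there is a probabilistic algorithm $\mathcal{A}$ that, using oracle access to $O$, computes $\HALF$ correctly with high probability in time $O \left( \left( n^{8 + o(1)} / \epsilon^{2 + o(1)} + T_O \right) / \epsilon^2 \right)$. Substituting $1 / \epsilon = 2^{o(n)}$, all of $1 / \epsilon^2$, $1 / \epsilon^{2 + o(1)}$, and $1 / \epsilon^{4 + o(1)}$ collapse to $2^{o(n)}$ (since any fixed-constant multiple of an $o(n)$ quantity is still $o(n)$), so this time bound simplifies to $\left( n^{O(1)} \cdot 2^{o(n)} + T_O \right) \cdot 2^{o(n)}$. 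If, for some such $O$, we had $T_O = 2^{o(n)}$, then $\mathcal{A}$ would run in $2^{o(n)}$-time overall.

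Finally, I invoke the $\rETH$-hardness of $\HALF$ assembled in Section \ref{section:4.2} and Appendix \ref{appendix:A}: there is a chain of reductions $3\SAT \to \CLIQUE \to \HALF$ that, combined with Corollary \ref{corollary:1} from the Sparsification Lemma, takes a $3\SAT$ instance on $n$ variables and $O(n)$ clauses to a $\HALF$ instance on $O(n)$ vertices. A $2^{o(n)}$-time randomized algorithm for $\HALF$ would thus yield a $2^{o(n)}$-time randomized algorithm for $3\SAT$, violating $\rETH$. By contrapositive, under $\rETH$, every oracle $O$ in the $1 - 2^{-\Omega \left( n^2 \right)}$-probability event must satisfy $T_O \geq 2^{\gamma n}$ for some constant $\gamma > 0$. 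The only subtle step is the exponent accounting for $1 / \epsilon^{c}$; I expect no real obstacle since all constants are absorbed into $o(n)$.
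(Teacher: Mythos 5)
Your proof is correct and takes the same route the paper intends: argue by contrapositive, invoking the error-correction machinery (Theorem \ref{thm:1} / Theorem \ref{theorem:9}) to turn a hypothetical $2^{o(n)}$-time simulator for a ``good'' oracle $O$ into a $2^{o(n)}$-time randomized algorithm for $\HALF$, which then contradicts $\rETH$ via the $3\SAT \to \CLIQUE \to \HALF$ chain and the Sparsification Lemma. The paper's own proof is a one-liner that leaves this instantiation implicit; you fill in exactly the steps the paper elides — verifying that $\HALF$ is isomorphism-invariant, that $\epsilon = 1/2^{o(n)}$ satisfies $\epsilon = \omega\left(n^{3/2}/\sqrt{n!}\right)$, that $\HALF$ is $O(1)$-computable from the half-clique count, and that $1/\epsilon^{O(1)} = 2^{o(n)}$ absorbs cleanly into the $2^{o(n)}$-time budget — so your write-up is strictly more detailed but not substantively different.
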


\begin{proof}

If this is false, then there would be a $2^{o(n)}$-time randomized algorithm deciding \HALF, and, due to the \ETH-hardness of \HALF (Appendix \ref{appendix:A}), there would also be a $2^{o(n)}$-time randomized algorithm deciding $3\SAT$.

\end{proof}

\begin{corollary}
\label{cor:6}

If $H_n$ is the function counting the parity of half-cliques on $n$-vertex undirected simple graphs, then with a probability of at least $1 - 2^{\Omega \left( n^2 \right)}$, given an oracle $O$ that is correct on a $1 / 2 + 1 / 2^{o(n)}$-fraction of instances sampled from $O^{H_n}_{1 / 2 + 1 / 2^{o(n)}}$, we can error-correct from $O$ and have a probabilistic algorithm running in $2^{o(n)} T_O$-time.

\end{corollary}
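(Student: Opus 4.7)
The plan is to obtain this result as a direct application of Theorem \ref{thm:1} (equivalently, Theorem \ref{theorem:9}) to the specific case $k = \lfloor n/2 \rfloor$ and $H_n$ being the parity of the half-clique count, with $\epsilon = 1/2^{o(n)}$. The three preconditions of Theorem \ref{thm:1} need to be verified, after which the time complexity follows by plugging in and simplifying.

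First I would verify admissibility. The function $H_n$ is the composition of the half-clique count (which is invariant under graph isomorphism since isomorphism preserves $\lfloor n/2 \rfloor$-cliques) with reduction modulo $2$, so $H_n$ is invariant under graph isomorphism as required. Given the half-clique count, computing its parity takes $O(1)$-time, which is trivially within the $O(n^{8+o(1)}/\epsilon^{4+o(1)})$ budget allowed by Theorem \ref{thm:1}. For the lower bound on $\epsilon$, note that $\sqrt{n!} = 2^{\Omega(n \log n)}$ by Stirling's approximation, so $n^{3/2}/\sqrt{n!} = 2^{-\Omega(n \log n)}$, and any $\epsilon = 1/2^{o(n)}$ satisfies $\epsilon = \omega(n^{3/2}/\sqrt{n!})$ comfortably.

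Next I would invoke Theorem \ref{thm:1} to obtain, with probability at least $1 - 2^{-\Omega(n^2)}$ over $O \gets_r O^{H_n}_{1/2 + \epsilon}$, a probabilistic algorithm with access to $O$ computing $H_n$ correctly with high probability in time
\begin{equation*}
O\!\left( \left( \frac{n^{8 + o(1)}}{\epsilon^{2 + o(1)}} + T_O \right) \frac{1}{\epsilon^2} \right).
\end{equation*}
Substituting $\epsilon = 1/2^{o(n)}$ gives $1/\epsilon^{2+o(1)} = 2^{o(n)}$ and $1/\epsilon^2 = 2^{o(n)}$, so $n^{8+o(1)}/\epsilon^{2+o(1)} = 2^{o(n)}$ and the overall runtime simplifies to $(2^{o(n)} + T_O) \cdot 2^{o(n)} = 2^{o(n)} T_O$ (absorbing the pure $2^{o(n)}$ term into $2^{o(n)} T_O$ since $T_O \geq 1$).

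The only substantive thing to be careful about is matching the probability bound in the corollary statement (which appears to have a typo, as it should read $1 - 2^{-\Omega(n^2)}$) to the bound in Theorem \ref{thm:1}, and ensuring that the $o(n)$ slacks line up between the $2^{+o(n)}$ runtime overhead and the $1/2^{o(n)}$ correctness fraction; neither is an obstacle, since both $o(n)$ terms are free parameters and any $o(n)$ bound in $\epsilon$ propagates as $2^{o(n)}$ in the runtime. There is no real hard step here — the corollary is essentially an instantiation of the main theorem together with a trivial simplification, with the work in Theorem \ref{thm:1} (the automorphism-class tester, the symmetric-case counter, and the concentration bound of Lemma \ref{lemma:19}) carrying all the content.
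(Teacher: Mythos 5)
Your proposal is correct and takes exactly the paper's approach: the paper's proof of Corollary~\ref{cor:6} is a one-line instantiation of Theorem~\ref{theorem:9} with $\epsilon = 1/2^{o(n)}$ and $k = \lfloor n/2 \rfloor$, and your more detailed write-up carefully verifies the same preconditions and arithmetic simplifications that the paper leaves implicit. Your observation that the stated probability bound has a sign typo (it should read $1 - 2^{-\Omega(n^2)}$) is also correct.
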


\begin{proof}

This follows directly from setting $\epsilon$ as some $1 / 2^{o(n)}$ function and $k$ as $\lfloor n / 2\rfloor$ in Theorem \ref{theorem:9}.

\end{proof}

\begin{corollary}
\label{cor:7}

Suppose that for all $\delta > 0$, no $2^{n (1 - \delta)}$-time randomized algorithm exists for detecting cliques of size $\lfloor n / 2 \rfloor$, then with a probability of greater than $1 - 2^{-\Omega \left( n^2 \right)}$, an oracle $O$ sampled from $O^{\HALF_2}_{1 / 2 + 1 / 2^{o(n)}}$ has $T_O$ larger than $O \left( 2^{\lceil n / 2 \rceil (1 - \Delta)} \right)$ for every $\Delta > 0$.

\end{corollary}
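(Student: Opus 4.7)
The plan is to prove the contrapositive. Fix $\Delta > 0$, and suppose that with probability greater than $2^{-\Omega(n^2)}$ over the sampling of $O \sim O^{\HALF_2}_{1/2 + 1/2^{o(n)}}$, we have $T_O \leq O\left( 2^{\lceil n/2 \rceil (1-\Delta)} \right)$. Since Corollary \ref{cor:6} guarantees error-correctability with probability at least $1 - 2^{-\Omega(n^2)}$ (and this failure probability can be further suppressed by standard amplification if needed), a union bound over these two events produces an oracle $O$ that is simultaneously (a) implementable in time $O\left( 2^{\lceil n/2\rceil(1-\Delta)} \right)$ and (b) error-correctable into a randomized algorithm for $\HALF_2$ running in total time $2^{o(n)} \cdot T_O = 2^{\lceil n/2\rceil(1-\Delta) + o(n)}$.

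Next, I would apply a Valiant--Vazirani-style isolation to obtain a randomized reduction from $\HALF$ (half-clique detection) to $\HALF_2$ (parity of half-clique count) with polynomial overhead. Composing this reduction with the above algorithm for $\HALF_2$ yields a randomized algorithm for $\HALF$ on $n$-vertex graphs with running time $2^{\lceil n/2\rceil(1-\Delta) + o(n)} \cdot \poly(n) = 2^{(n/2)(1-\Delta) + o(n)}$. For any constant $\delta < 1/2 + \Delta/2$ (for instance $\delta = 1/4$), this is $O \left( 2^{n(1-\delta)} \right)$-time, directly contradicting the stated hypothesis that detecting $\lfloor n/2 \rfloor$-cliques admits no $2^{n(1-\delta)}$-time randomized algorithm.

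The main obstacle is ensuring that the Valiant--Vazirani-style reduction from $\HALF$ to $\HALF_2$ does not meaningfully inflate the number of vertices, since any nontrivial blowup $n \to n'$ would require the exponent in the running time to be re-measured against $n'$ and could undo the savings obtained in the previous step. A direct isolation argument tailored to the half-clique setting---adjoining $O(\log n)$ auxiliary vertex gadgets that implement random linear XOR constraints on the indicator vector of the candidate half-clique---keeps the vertex count at $n + O(\log n)$, which preserves the exponent up to an additive $o(n)$ term absorbed into the $2^{o(n)}$ factor. A secondary subtlety is the non-uniformity of the constructed algorithm, which depends on the specific choice of $O$ given to us by the union bound; this is addressed either by interpreting the hypothesis in the non-uniform setting or by observing that the oracle distribution $O^{\HALF_2}_{1/2 + 1/2^{o(n)}}$ is itself efficiently samplable in the relevant sense, so the algorithm can be uniformized via Adleman's theorem.
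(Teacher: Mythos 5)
Your overall structure—contraposition, error-correct via Corollary~\ref{cor:6}, then reduce half-clique \emph{detection} to half-clique \emph{parity}—matches the paper's route, but the pivotal reduction step is wrong. You assert a Valiant--Vazirani-style reduction from $\HALF$ to $\HALF_2$ with only $\poly(n)$ overhead and a vertex blowup of $n + O(\log n)$. This does not hold up: to isolate a single half-clique among up to $\binom{n}{\lfloor n/2\rfloor} = 2^{\Theta(n)}$ potential witnesses, you need $\Theta(n)$ random linear (XOR) constraints, not $O(\log n)$, and encoding each such constraint on the indicator vector of the selected $\lfloor n/2\rfloor$-subset as a \emph{graph gadget that only affects which $\lfloor n'/2\rfloor$-cliques exist} is itself a substantial construction that would not stay within a vertex blowup of $O(\log n)$; you would be left with a reduction to half-clique parity on graphs with $\Omega(n)$ extra vertices, which eats up the claimed exponent savings. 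The error also shows up in your conclusion: you arrive at a $2^{(n/2)(1-\Delta)+o(n)}$-time detection algorithm, a claim strictly stronger than what the corollary requires and with no support in the paper or its references.

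The paper instead invokes the reduction of \cite{Boix2019} (following \cite{Ball2017}), which converts a $T(n)$-time algorithm for parity of $k$-clique counts into an $O(k 2^k T(n))$-time randomized algorithm for $k$-clique detection. That multiplicative $2^k$ is exponential, not polynomial, but it is exactly what makes the argument go through here: with $k = \lfloor n/2\rfloor$ and $T = 2^{o(n)} T_O \le 2^{\lceil n/2\rceil(1-\Delta)+o(n)}$, the detection time is $2^{\lfloor n/2\rfloor + \lceil n/2\rceil(1-\Delta) + o(n)} = 2^{n - \lceil n/2\rceil\Delta + o(n)} = 2^{n(1-\Delta/2)+o(n)}$, contradicting the hypothesis for any $\delta < \Delta/2$. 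To repair your argument you should drop the claimed poly-overhead isolation and replace it with a citation to the $O(k2^kT)$ reduction of \cite{Boix2019}, redoing the exponent arithmetic accordingly; the union-bound framing and the appeal to Corollary~\ref{cor:6} are otherwise sound, and the non-uniformity concern you raise is a red herring since the statement is about the hypothetical implementation time $T_O$ of a fixed sampled oracle, not about producing a uniform algorithm.
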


\begin{proof}

It was shown by \cite{Boix2019}, inspired by a similar argument of \cite{Ball2017}, that there is a probabilistic reduction from deciding whether there is a clique of size $k$ to deciding the parity of the number of cliques of size $k$ in simple undirected graphs. Moreover, the resulting algorithm takes $O \left( k 2^k T \right)$-time, where $T$ is the complexity of computing the parity of the number of $k$-cliques. By setting the variables and using the previous corollary, we achieve this result.

\end{proof}

\subsection{Discussion: On an Open Problem by \cite{Goldreich2020}}
\label{section:8.5}

In their work, \cite{Goldreich2020} proved that there is a worst-case to average-case reduction in the case of counting $k$-cliques\footnote{They refer to it as $t$-cliques in their paper.} modulo $2$. More specifically, they proved that if there is an oracle $O$ computing the number of $k$-cliques modulo $2$ in $n$-vertex undirected simple graphs on a $1 - 2^{-k^2}$-fraction of instances, then there is a probabilistic algorithm solving the same problem on all instances with a high probability, requiring $O \left( n^2 \right)$-time and $e^{O \left( k^2 \right)}$ queries to $O$. They leave open if there is such a reduction in the case where the fraction of instances the oracle $O$ is incorrect on is any constant less than $1 / 2$, where the time used is also $\tilde{O} \left( n^2 \right)$.

In this section, we showed that this is possible for almost all oracles with a $1 / 2 - \epsilon$-error fraction for every constant $\epsilon > 0$. In particular, the probabilistic method guarantees that almost all oracles with sufficient correctness overall have amplifiable correctness within each large isomorphism class. We stress that this result holds for almost all corrupt oracles and not all corrupt oracles, displaying a tradeoff in how our techniques make a significant improvement in the error tolerance at the cost of universality.

We leave open whether similar reductions exist for all oracles $O$ with a $1 / 2 + \epsilon$-fraction of correct instances or just those that are guaranteed by the probabilistic method. If such reductions exist, based on our analysis in Lemmas \ref{lemma:19} and \ref{lemma:20}, it seems fruitful to turn our attention toward the case where oracles have a $1 / 2 + \epsilon$-fraction of correctness. However, the oracle is entirely correct in some isomorphism classes and wholly incorrect in others. These ``bottleneck oracles'' are the ones that make finding a black-box reduction\footnote{One that makes no assumption regarding the distribution of correct instances other the fraction of them.} difficult.

\subsubsection*{Limitations of our Technique} 

We turn our attention to Corollary \ref{cor:5}. For $\HALF$, the $\NP$-complete problem of determining whether an $n$-vertex undirected simple graph has a clique of size $\lfloor n / 2 \rfloor$, we showed using non-adaptive queries to the oracle $O$, that almost all oracles $O$ with a $1 / 2 + \epsilon$-fraction of correctness can be amplified to give a probabilistic algorithm for $\HALF$ with reduction time $O(n^{8 + o(1)} / \epsilon^{4 + o(1)})$ and $O \left( 1 / \epsilon^2 \right)$ queries to $O$. If our techniques could be extended, still using non-adaptive queries, to show this result for all $O$ with the same error correction algorithm, then $\PH$ would collapse due to the works of \cite{Feigenbaum1993} and \cite{Bogdanov2006}. Even an adaptive modification to show such results would imply breakthrough-level results. In this view, if there is a reduction extending to the $2^{-\Omega(n^2)}$-fraction of oracles we were not able to show a hardness amplification for the problem of \cite{Goldreich2020}, then it seems that the error correction procedure would require time complexity or query complexity increasing with $k$. In particular, it should not be a polynomial-time procedure when $k$ grows as $\lfloor n / 2 \rfloor$. This, of course, seems evident from the works of \cite{Feigenbaum1993} and \cite{Bogdanov2006}, but since our work shows a polynomial-time correction procedure whose time and query complexity do not depend on $k$ (Theorem \ref{theorem:9}), we stress that an approach leading to a full resolution of the open problem of \cite{Goldreich2020} likely ``looks different'' from our approach.

\section{Open Problems} 
\label{section:9}

The following natural unresolved research directions remain.

\begin{enumerate}

\item Is it possible to show that either of these reductions still hold under $\ETH$? Of course, derandomizing polynomial identity testing \citep{Daniel2018} would go a long way, but can one circumvent the need to derandomize by providing tests based on other ideas? Although we have not gone into the depths of the working of the STV list decoder \citep{Sudan2001}, it does require randomness. Consider an oracle $O$ that correctly evaluates a polynomial $f:\mathbb{F}^n\to\mathbb{F}$ on an $\epsilon$-fraction of instances and is $0$ on all other inputs. For any deterministic strategy to probe the input oracle, we can set up $O$ such that the first $|\mathbb{F}|^n(1-\epsilon)$ entries probed are $0$, requiring exponential time for any $\epsilon < 1$. It seems black-box methods will not work without randomness, and we will need structural insights into our problems of interest. 
      
On the other hand, going all the way with derandomization and proving $\P = \BPP$ would imply that $\rETH$ is equivalent to $\ETH$ via a padding argument, making our implications true under $\ETH$.

\item Can we find tight reductions from $k\SAT$ to the problem of counting directed Hamiltonian cycles such that a $2^{n(1-\epsilon)}$-time would falsify the randomized exponential time hypothesis ($\rSETH$) \citep{Dell2014, Stephens2019}? Informally, such reductions would imply that we need roughly $2^{n}$-time to slightly improve upon simply printing $0$. The standard polynomial-time Karp reduction from $k\SAT$ seems to require $O(n+km)$ vertices in our graph. It is also essential to keep in mind that $1.657^n$-time randomized algorithms exist for Hamiltonian cycle detection \citep{Bjorklund2014}. Hence, it is crucial that the reduction does not merely require an algorithm that checks Hamiltonicity, as is the case with our reduction. Looking for $2^{o(n)}$-time reductions requiring many evaluations of the directed Hamiltonian cycle counting function on $n$ vertices is possibly the most productive direction.
      
For the problem of counting cliques of size $\lfloor n / 2 \rfloor$, the polynomial-time Karp reduction here suffers from the same pitfalls as that for the directed Hamiltonian cycle reduction, requiring $O(n + km)$ vertices. Here, as well, it might be productive to look for reductions requiring $2^{o(n)}$-time and many evaluations.

\end{enumerate}

\section*{Acknowledgments}
We would like to thank Oded Goldreich for his insightful feedback on an early draft of this paper.

\bibliographystyle{apalike}
\bibliography{main}

\begin{thebibliography}{}

\bibitem[Abboud and Williams, 2014]{Abboud2014}
Abboud, A. and Williams, V.~V. (2014).
\newblock Popular conjectures imply strong lower bounds for dynamic problems.
\newblock In {\em 2014 IEEE 55th Annual Symposium on Foundations of Computer
  Science (FOCS)}, pages 434--443.

\bibitem[Adleman, 1978]{Adleman1978}
Adleman, L. (1978).
\newblock Two theorems on random polynomial time.
\newblock In {\em 19th Annual Symposium on Foundations of Computer Science
  (FOCS)}, pages 75--83.

\bibitem[Arora and Barak, 2009]{Arora2009}
Arora, S. and Barak, B. (2009).
\newblock {\em Computational Complexity: A Modern Approach}.
\newblock Cambridge University Press.

\bibitem[Asadi et~al., 2022]{Asadi2022}
Asadi, V.~R., Golovnev, A., Gur, T., and Shinkar, I. (2022).
\newblock Worst-case to average-case reductions via additive combinatorics.
\newblock In {\em Proceedings of the 54th Annual ACM SIGACT Symposium on Theory
  of Computing (STOC)}, pages 1566--1574.

\bibitem[Asadi et~al., 2024]{Asadi2024}
Asadi, V.~R., Golovnev, A., Gur, T., Shinkar, I., and Subramanian, S. (2024).
\newblock Quantum worst-case to average-case reductions for all linear
  problems.
\newblock In {\em Proceedings of the 2024 Annual ACM-SIAM Symposium on Discrete
  Algorithms (SODA)}, pages 2535--2567.

\bibitem[Babai, 2006]{Babai2006}
Babai, L. (2006).
\newblock Monte-carlo algorithms in graph isomorphism testing.

\bibitem[Babai, 2016]{Babai2016}
Babai, L. (2016).
\newblock Graph isomorphism in quasipolynomial time.
\newblock In {\em Proceedings of the Forty-Eighth Annual ACM Symposium on
  Theory of Computing (STOC)}, pages 684--697.

\bibitem[Ball et~al., 2017]{Ball2017}
Ball, M., Rosen, A., Sabin, M., and Vasudevan, P.~N. (2017).
\newblock Average-case fine-grained hardness.
\newblock In {\em Proceedings of the 49th Annual ACM SIGACT Symposium on Theory
  of Computing (STOC)}, pages 483--496.

\bibitem[Ben-David et~al., 1992]{Ben1992}
Ben-David, S., Chor, B., Goldreich, O., and Luby, M. (1992).
\newblock On the theory of average case complexity.
\newblock {\em Journal of Computer and System Sciences}, 44(2):193--219.

\bibitem[Bj\"{o}rklund, 2014]{Bjorklund2014}
Bj\"{o}rklund, A. (2014).
\newblock Determinant sums for undirected {H}amiltonicity.
\newblock {\em SIAM Journal on Computing}, 43(1):280--299.

\bibitem[Bj\"{o}rklund, 2016]{Bjorklund2016}
Bj\"{o}rklund, A. (2016).
\newblock Below all subsets for some permutational counting problems.
\newblock In {\em 15th Scandinavian Symposium and Workshops on Algorithm Theory
  (SWAT)}, pages 17:1--17:11.

\bibitem[Bj\"{o}rklund et~al., 2019]{Bjorklund2019A}
Bj\"{o}rklund, A., Kaski, P., and Williams, R. (2019).
\newblock Generalized kakeya sets for polynomial evaluation and faster
  computation of fermionants.
\newblock {\em Algorithmica}, 81(10):4010--4028.

\bibitem[Bj\"{o}rklund and Williams, 2019]{Bjorklund2019B}
Bj\"{o}rklund, A. and Williams, R. (2019).
\newblock Computing permanents and counting {H}amiltonian cycles by listing
  dissimilar vectors.
\newblock In {\em 46th International Colloquium on Automata, Languages, and
  Programming (ICALP)}, pages 25:1--25:14.

\bibitem[Blum, 1986]{Blum1986}
Blum, M. (1986).
\newblock How to prove a theorem so no one else can claim it.
\newblock In {\em Proceedings of the International Congress of Mathematicians},
  pages 1444--1451.

\bibitem[Blum et~al., 1993]{Blum1993}
Blum, M., Luby, M., and Rubinfeld, R. (1993).
\newblock Self-testing/correcting with applications to numerical problems.
\newblock {\em Journal of Computer and System Sciences}, 47(3):549--595.

\bibitem[Blum and Micali, 1982]{Blum1982}
Blum, M. and Micali, S. (1982).
\newblock How to generate cryptographically strong sequences of pseudo random
  bits.
\newblock In {\em 23rd Annual Symposium on Foundations of Computer Science
  (FOCS)}, pages 112--117.

\bibitem[Bogdanov and Trevisan, 2006]{Bogdanov2006}
Bogdanov, A. and Trevisan, L. (2006).
\newblock On worst‐case to average‐case reductions for {NP} problems.
\newblock {\em SIAM Journal on Computing}, 36(4):1119--1159.

\bibitem[Bogdanov and Trevisan, 2021]{Bogdanov2021}
Bogdanov, A. and Trevisan, L. (2021).
\newblock Average-case complexity.
\newblock {\em \url{https://arxiv.org/abs/cs/0606037}\/}.

\bibitem[Boix-Adser\`{a} et~al., 2019]{Boix2019}
Boix-Adser\`{a}, E., Brennan, M., and Bresler, G. (2019).
\newblock The average-case complexity of counting cliques in {E}rdös--{R}ényi
  hypergraphs.
\newblock In {\em IEEE 60th Annual Symposium on Foundations of Computer Science
  (FOCS)}, pages 1256--1280.

\bibitem[Boix-Adsera et~al., 2019]{Enric2019}
Boix-Adsera, E., Brennan, M., and Bresler, G. (2019).
\newblock The average-case complexity of counting cliques in {E}rdős-{R}ényi
  hypergraphs.
\newblock In {\em IEEE 60th Annual Symposium on Foundations of Computer Science
  (FOCS)}, pages 1256--1280.

\bibitem[Cai et~al., 1999]{Cai1999}
Cai, J., Pavan, A., and Sivakumar, D. (1999).
\newblock On the hardness of permanent.
\newblock In {\em 16th Annual Symposium on Theoretical Aspects of Computer
  Science (STACS)}, pages 90--99.

\bibitem[Calabro et~al., 2009]{Calabro2009}
Calabro, C., Impagliazzo, R., and Paturi, R. (2009).
\newblock The complexity of satisfiability of small depth circuits.
\newblock In {\em 4th International Workshop on Parameterized and Exact
  Computation (IWPEC)}, pages 75--85.

\bibitem[Dalirrooyfard et~al., 2020]{Mina2020}
Dalirrooyfard, M., Lincoln, A., and Williams, V.~V. (2020).
\newblock New techniques for proving fine-grained average-case hardness.
\newblock In {\em IEEE 61st Annual Symposium on Foundations of Computer Science
  (FOCS)}, pages 774--785.

\bibitem[Dell et~al., 2014]{Dell2014}
Dell, H., Husfeldt, T., Marx, D., Taslaman, N., and Wahl\'{e}n, M. (2014).
\newblock Exponential time complexity of the permanent and the tutte
  polynomial.
\newblock {\em ACM Trans. Algorithms}, 10(4).

\bibitem[Demillo and Lipton, 1978]{Demillo1978}
Demillo, R.~A. and Lipton, R.~J. (1978).
\newblock A probabilistic remark on algebraic program testing.
\newblock {\em Information Processing Letters}, 7(4):193--195.

\bibitem[Dinur, 2007]{Dinur2007}
Dinur, I. (2007).
\newblock The {PCP} theorem by gap amplification.
\newblock {\em Journal of the ACM}, 54(3):12--(1--44).

\bibitem[Dwork and Naor, 1993]{Dwork1993}
Dwork, C. and Naor, M. (1993).
\newblock Pricing via processing or combatting junk mail.
\newblock In {\em Advances in Cryptology --- CRYPTO' 92}, pages 139--147.

\bibitem[Erdős and Rényi, 1963]{Erdos1963}
Erdős, P. and Rényi, A. (1963).
\newblock Asymmetric graphs.
\newblock {\em Acta Mathematica Academiae Scientiarum Hungaricae}, 14:295--315.

\bibitem[Feige et~al., 1996]{Feige1991}
Feige, U., Goldwasser, S., Lov\'{a}sz, L., Safra, S., and Szegedy, M. (1996).
\newblock Interactive proofs and the hardness of approximating cliques.
\newblock {\em Journal of the ACM}, 43(2):268--292.

\bibitem[Feige and Lund, 1996]{Feige1996}
Feige, U. and Lund, C. (1996).
\newblock On the hardness of computing the permanent of random matrices.
\newblock {\em Computational Complexity}, 6:101--132.

\bibitem[Feigenbaum and Fortnow, 1993]{Feigenbaum1993}
Feigenbaum, J. and Fortnow, L. (1993).
\newblock Random-self-reducibility of complete sets.
\newblock {\em SIAM Journal on Computing}, 22(5):994--1005.

\bibitem[Gemmell and Sudan, 1992]{Gemmell1992}
Gemmell, P. and Sudan, M. (1992).
\newblock Highly resilient correctors for polynomials.
\newblock {\em Information Processing Letters}, 43(4):169--174.

\bibitem[Goldenberg and Karthik, 2020]{Goldenberg2020}
Goldenberg, E. and Karthik, C.~S. (2020).
\newblock Hardness amplification of optimization problems.
\newblock In {\em 11th Innovations in Theoretical Computer Science Conference
  (ITCS)}, pages 1:1--1:13.

\bibitem[Goldreich, 2006]{10.5555/1202577}
Goldreich, O. (2006).
\newblock {\em Foundations of Cryptography: Volume 1}.
\newblock Cambridge University Press, USA.

\bibitem[Goldreich, 2017]{Goldreich2017}
Goldreich, O. (2017).
\newblock {\em Introduction to Property Testing}.
\newblock Cambridge University Press.

\bibitem[Goldreich, 2020]{Goldreich2020}
Goldreich, O. (2020).
\newblock On counting $ t $-cliques mod 2.
\newblock {\em Electronic Colloquium on Computational Complexity}, Report No.
  104(Revision 3):1--6.

\bibitem[Goldreich, 2023]{Goldreich2023}
Goldreich, O. (2023).
\newblock On coarse and fine approximate counting of $t$-cliques.
\newblock {\em Electronic Colloquium on Computational Complexity}, Report No.
  158:1--21.

\bibitem[Goldreich et~al., 1991]{Goldreich1991}
Goldreich, O., Micali, S., and Wigderson, A. (1991).
\newblock Proofs that yield nothing but their validity or all languages in {NP}
  have zero-knowledge proof systems.
\newblock {\em Journal of the ACM}, 38(3):690--728.

\bibitem[Goldreich and Rothblum, 2018]{Goldreich2018}
Goldreich, O. and Rothblum, G. (2018).
\newblock Counting $t$-cliques: Worst-case to average-case reductions and
  direct interactive proof systems.
\newblock In {\em 2018 IEEE 59th Annual Symposium on Foundations of Computer
  Science (FOCS)}, pages 77--88.

\bibitem[Goldwasser et~al., 1985]{Goldwasser1985}
Goldwasser, S., Micali, S., and Rackoff, C. (1985).
\newblock The knowledge complexity of interactive proof-systems.
\newblock In {\em Proceedings of the Seventeenth Annual ACM Symposium on Theory
  of Computing (STOC)}, pages 291--304.

\bibitem[Harvey and van~der Hoeven, 2021]{Harvey2021}
Harvey, D. and van~der Hoeven, J. (2021).
\newblock Integer multiplication in time ${O}(n \log n)$.
\newblock {\em Annals of Mathematics}, 193:563--617.

\bibitem[Herstein, 1975]{Herstein1975}
Herstein, I.~N. (1975).
\newblock {\em Topics in Algebra}.
\newblock John Wiley \& Sons, Inc., second edition.

\bibitem[Impagliazzo, 1995]{Impagliazzo1005}
Impagliazzo, R. (1995).
\newblock A personal view of average-case complexity.
\newblock In {\em Proceedings of Structure in Complexity Theory}, pages
  134--147.

\bibitem[Impagliazzo and Paturi, 2001]{Impagliazzo2001b}
Impagliazzo, R. and Paturi, R. (2001).
\newblock On the complexity of $k$-{SAT}.
\newblock {\em Journal of Computer and System Sciences}, 62(2):367--375.

\bibitem[Impagliazzo et~al., 2001]{Impagliazzo2001}
Impagliazzo, R., Paturi, R., and Zane, F. (2001).
\newblock Which problems have strongly exponential complexity?
\newblock {\em Journal of Computer and System Sciences}, 63(4):512--530.

\bibitem[Karp and Lipton, 1982]{Karp1982}
Karp, R. and Lipton, R. (1982).
\newblock Turing machines that take advice.
\newblock {\em L'Enseignement Mathématique}, 28:191--209.

\bibitem[Karp, 1972]{Karp1972}
Karp, R.~M. (1972).
\newblock Reducibility among combinatorial problems.
\newblock In {\em Complexity of Computer Computations: Proceedings of a
  Symposium on the Complexity of Computer Computations}, pages 85--103.

\bibitem[Lang and Weil, 1954]{LW54}
Lang, S. and Weil, A. (1954).
\newblock Number of points of varieties in finite fields.
\newblock {\em American Journal of Mathematics}, 76(4):819--827.

\bibitem[Levin, 1986]{Levin1986}
Levin, L.~A. (1986).
\newblock Average case complete problems.
\newblock {\em SIAM Journal on Computing}, 15(1):285--286.

\bibitem[Li, 2023]{Li2023}
Li, B. (2023).
\newblock Computing permanents and counting {H}amiltonian cycles faster.
\newblock {\em \url{https://doi.org/10.48550/arXiv.2309.15422}\/}.

\bibitem[Luks, 1982]{Luks1982}
Luks, E.~M. (1982).
\newblock Isomorphism of graphs of bounded valence can be tested in polynomial
  time.
\newblock {\em Journal of Computer and System Sciences}, 25:42--65.

\bibitem[Lund et~al., 1992]{Lund1992}
Lund, C., Fortnow, L., Karloff, H., and Nisan, N. (1992).
\newblock Algebraic methods for interactive proof systems.
\newblock {\em Journal of the ACM}, 39(4):859--868.

\bibitem[Minahan and Volkovich, 2018]{Daniel2018}
Minahan, D. and Volkovich, I. (2018).
\newblock Complete derandomization of identity testing and reconstruction of
  read-once formulas.
\newblock {\em ACM Transactions on Computation Theory}, 10(3):10:1 -- 10:11.

\bibitem[Mitzenmacher and Upfal, 2005]{Mitzenmacher2005}
Mitzenmacher, M. and Upfal, E. (2005).
\newblock {\em Probability and Computing: Randomized Algorithms and
  Probabilistic Analysis}.
\newblock Cambridge University Press.

\bibitem[Nareddy and Mishra, 2024]{Tejas2024}
Nareddy, T. and Mishra, A. (2024).
\newblock Rare-case hard functions against various adversaries.
\newblock {\em Unpublished Manuscript}.

\bibitem[Pólya, 1937]{Polya1937}
Pólya, G. (1937).
\newblock Kombinatorische anzahlbestimmungen f\"{u}r gruppen, graphen und
  chemische verbindungen.
\newblock {\em Acta Mathematica}, 68:145--254.

\bibitem[Schwartz, 1980]{Schwartz1980}
Schwartz, J.~T. (1980).
\newblock Fast probabilistic algorithms for verification of polynomial
  identities.
\newblock {\em Journal of the ACM}, 27(4):701--717.

\bibitem[Smith, 2015]{Smith2015}
Smith, J. D.~H. (2015).
\newblock {\em Introduction to Abstract Algebra}.
\newblock CRC Press, second edition.

\bibitem[Stephens-Davidowitz and Vaikuntanathan, 2019]{Stephens2019}
Stephens-Davidowitz, N. and Vaikuntanathan, V. (2019).
\newblock {SETH}-hardness of coding problems.
\newblock In {\em 2019 IEEE 60th Annual Symposium on Foundations of Computer
  Science (FOCS)}, pages 287--301.

\bibitem[Sudan, 1996]{Sudan1996}
Sudan, M. (1996).
\newblock Maximum likelihood decoding of {R}eed {S}olomon codes.
\newblock In {\em Proceedings of 37th Conference on Foundations of Computer
  Science (FOCS)}, pages 164--172.

\bibitem[Sudan et~al., 2001]{Sudan2001}
Sudan, M., Trevisan, L., and Vadhan, S. (2001).
\newblock Pseudorandom generators without the {XOR} lemma.
\newblock {\em Journal of Computer and System Sciences}, 62(2):236--266.

\bibitem[Vadhan, 2012]{Vadhan2012}
Vadhan, S.~P. (2012).
\newblock Pseudorandomness.
\newblock {\em Foundations and Trends® in Theoretical Computer Science},
  7(1--3):1--336.

\bibitem[Valiant, 1979]{Valiant1979}
Valiant, L. (1979).
\newblock The complexity of computing the permanent.
\newblock {\em Theoretical Computer Science}, 8(2):189--201.

\bibitem[Valiant and Vazirani, 1986]{Valiant1986}
Valiant, L. and Vazirani, V. (1986).
\newblock {NP} is as easy as detecting unique solutions.
\newblock {\em Theoretical Computer Science}, 47:85--93.

\bibitem[Vassilevska~Williams, 2015]{Williams2015}
Vassilevska~Williams, V. (2015).
\newblock Hardness of easy problems: Basing hardness on popular conjectures
  such as the strong exponential time hypothesis.
\newblock In {\em 10th International Symposium on Parameterized and Exact
  Computation (IPEC)}, pages 17--29.

\bibitem[Williams, 2005]{Williams2005}
Williams, R. (2005).
\newblock A new algorithm for optimal 2-constraint satisfaction and its
  implications.
\newblock {\em Theoretical Computer Science}, 348(2):357--365.

\bibitem[Williams, 2013]{Williams2013}
Williams, R. (2013).
\newblock Improving exhaustive search implies superpolynomial lower bounds.
\newblock {\em SIAM Journal on Computing}, 42(3):1218--1244.

\bibitem[Williams, 2014]{Williams2014}
Williams, R. (2014).
\newblock Nonuniform {ACC} circuit lower bounds.
\newblock {\em Journal of the ACM}, 61(1):2:1--2:32.

\bibitem[Williams, 2018]{Williams2019}
Williams, V.~V. (2018).
\newblock On some fine-grained questions in algorithms and complexity.
\newblock {\em Proceedings of the International Congress of Mathematicians
  (ICM)}, pages 3447--3487.

\bibitem[Williams and Williams, 2018]{Williams2018}
Williams, V.~V. and Williams, R.~R. (2018).
\newblock Subcubic equivalences between path, matrix, and triangle problems.
\newblock {\em Journal of the ACM}, 65(5):27:(1--38).

\bibitem[Zippel, 1979]{Zippel1979}
Zippel, R. (1979).
\newblock Probabilistic algorithms for sparse polynomials.
\newblock In {\em Symbolic and Algebraic Computation (EUROSAM)}, pages
  216--226.

\end{thebibliography}

\appendix
\section{$\ETH$-Hardness of $\HALF$}
\label{appendix:A}

In this section, we will prove that there is a polynomial-time Karp reduction from $3\SAT$, which takes an instance with $n$ variables and $m$ clauses and gives us a $\HALF$ instance with $O(n + m)$ vertices. We do this by showing a Karp reduction from $\CLIQUE$ to $\HALF$ that takes a graph with $n$ vertices as input and gives us a graph with at most $2n$ vertices (depending on the parameter $k$ of the $\CLIQUE$ instance determining the size of the clique). This implies an $O(n+m)$-size Karp reduction from $3\SAT$ to $\HALF$ due to the famous $O(n+m)$-size reduction from $3\SAT$ to $\CLIQUE$ \citep{Arora2009}.

\begin{lemma}
\label{lemma:26}

There is a polynomial-time Karp reduction from $\CLIQUE$, taking an instance with an undirected graph $U$ on $n$ vertices and a parameter $k \leq n$, and outputting a $\HALF$ instance with an undirected graph $U^\prime$ of at most $2n$ vertices.

\end{lemma}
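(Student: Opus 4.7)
The plan is to give a direct Karp reduction that splits into two cases depending on whether $2k \le n$ or $2k > n$, in each case padding the graph so that the target clique size $\lfloor n'/2 \rfloor$ matches up exactly with the original parameter $k$. Throughout, let $(U, k)$ be the input \CLIQUE{} instance on $n$ vertices.

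In the first case ($2k \le n$), I would adjoin $s := n - 2k$ new vertices to $U$, forming a clique on the new vertices and making each new vertex adjacent to every vertex of $U$; call the resulting graph $U'$. Then $U'$ has $n' = 2n - 2k$ vertices, so $\lfloor n'/2 \rfloor = n - k = k + s$, and any maximum clique of $U'$ consists of a maximum clique of $U$ together with all $s$ new ``universal'' vertices. Hence $U$ has a $k$-clique if and only if $U'$ has a clique of size $k+s = \lfloor n'/2 \rfloor$. In the second case ($2k > n$), I would instead adjoin $s := 2k - n$ isolated vertices to $U$, giving $n' = 2k$ and $\lfloor n'/2 \rfloor = k$; the isolated vertices lie in no clique of size larger than $1$, so cliques of $U'$ coincide with cliques of $U$ and the equivalence is immediate.

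Finally, I would verify the size bound: in the first case $n' = 2n - 2k \le 2n$, and in the second case $n' = 2k \le 2n$ (using $k \le n$), so in either case $U'$ has at most $2n$ vertices. The construction is clearly polynomial-time, completing the reduction.

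There is no real obstacle here beyond bookkeeping — the only mild subtlety is choosing the right kind of padding (universal vertices versus isolated vertices) in each regime so that $\lfloor n'/2 \rfloor$ lands exactly on the desired target, and checking that both cases respect the $2n$ bound. Composed with the standard $O(n+m)$-vertex Karp reduction from $3\SAT$ to \CLIQUE{} from \cite{Arora2009}, this yields an $O(n+m)$-vertex reduction from $3\SAT$ to \HALF, which together with Corollary \ref{corollary:1} establishes the $\ETH$-hardness of \HALF{} used in the body of the paper.
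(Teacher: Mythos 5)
Your reduction is exactly the one in the paper: the same case split (isolated vertices when $2k > n$ to bring $n'$ up to $2k$, a fully connected $K_{n-2k}$ when $2k \le n$ to bring $\lfloor n'/2 \rfloor$ up to $n-k$), the same size bound $n' \le 2n$, and the same correctness argument.
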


\begin{proof}

Given $U = \left( V_n, F_m \right)$ and $k$, we construct $U^\prime = \left( V^\prime_{n^\prime}, F^\prime_{m^\prime} \right)$ as follows. If $k > \lfloor n / 2 \rfloor$, then we construct $U^\prime$ with $n^\prime = 2k$ and $m^\prime = m$ such that $U^\prime$ consists of a copy of $U$ along with $2k - n$ isolated vertices:
\begin{equation*}
V_n = [n], \quad V_{n^\prime} = [2k], \quad F^\prime_{m^\prime} = F_m.
\end{equation*}
It is easy to see that $U^\prime$ has a clique of size $\lfloor n^\prime / 2\rfloor = k$ if and only if $U$ has a clique of size $k$. 

If $k \leq \lfloor n / 2 \rfloor$, then we construct $U^\prime = \left( V^\prime_{n^\prime}, F^\prime_{m^\prime} \right)$ such that we have $n^\prime = 2n - 2k$ and $m^\prime = m + \displaystyle \binom{n - 2k}{2} + n(n - 2k)$, and $U^\prime$ consists of a copy of $U$ and the complete graph $K_{n - 2k}$. Furthermore, $U$ and $K_{n - 2k}$ in $U^\prime$ are connected through edges in all possible combinations:
\begin{equation*}
\begin{split}
& V_n = [n], \quad V^\prime_{n^\prime} = [2n - 2k], \\
& F^\prime_{n^\prime} = F_n \cup \{\, \{\, i, j \,\} \mid n < i < 2n - 2k, i < j \leq 2n - 2k \,\} \cup \{\, \{\, i, j \,\} \mid i \in [n], j \in [2n - 2k] - [n] \,\}.
\end{split}
\end{equation*}
$U^\prime$ has a clique of size $\lfloor (2n-2k) / 2 \rfloor = n - k = (n - 2k) + k$ if and only if $U$ has a clique of size $k$.

This reduction takes $O \left( n^2 \right)$-time and the number of vertices in $V^\prime$ is at most $2n$ since $2k \leq 2n$ and $2n - 2k \leq 2n$.

\end{proof}

This also gives us the following corollary about the $\ETH$-hardness of $\HALF$.

\begin{corollary}
\label{corollary:8}

If $\ETH$ is true, then there is a constant $c > 0$ such that deciding $\HALF$ on a graph with $n$ vertices requires $2^{cn}$-time.

\end{corollary}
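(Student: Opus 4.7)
The plan is to chain the two reductions already available: the standard Karp reduction from $3\SAT$ to $\CLIQUE$ and the reduction from $\CLIQUE$ to $\HALF$ provided by Lemma \ref{lemma:26}, and then invoke the linear-clause version of $\ETH$ (Corollary \ref{corollary:1}).

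First, I would recall that by Corollary \ref{corollary:1}, under $\ETH$ there is no $2^{o(n)}$-time algorithm for $3\SAT$ instances on $n$ variables with $O(n)$ clauses. Next, I would apply the textbook Karp reduction from $3\SAT$ to $\CLIQUE$ of \cite{Arora2009}: given a $3\SAT$ instance on $n$ variables and $m = O(n)$ clauses, it produces a $\CLIQUE$ instance $(U, k)$ where $U$ has $N = O(n + m) = O(n)$ vertices and runs in polynomial time. Composing this with Lemma \ref{lemma:26} yields a $\HALF$ instance on a graph $U'$ with at most $2N = O(n)$ vertices, again computable in polynomial time, such that $U'$ has a clique of size $\lfloor |V(U')|/2 \rfloor$ if and only if the original $3\SAT$ instance is satisfiable.

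For the contrapositive, suppose that for every constant $c > 0$ there were an algorithm deciding $\HALF$ on $N$-vertex graphs in $2^{cN}$ time; equivalently, a $2^{o(N)}$-time algorithm for $\HALF$. Setting $N = O(n)$ via the composed reduction above would give a $2^{o(n)} + \poly(n) = 2^{o(n)}$-time algorithm for $3\SAT$ on $n$ variables with $O(n)$ clauses, contradicting Corollary \ref{corollary:1}. Hence under $\ETH$ there must exist a constant $c > 0$ such that $\HALF$ on $n$-vertex graphs requires $2^{cn}$ time.

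There is no serious obstacle here; the only thing to be careful about is the bookkeeping of the constant blow-up factors. The $3\SAT \to \CLIQUE$ reduction yields $N \leq \alpha n$ vertices for some absolute constant $\alpha$, and Lemma \ref{lemma:26} at most doubles the vertex count, so any hypothetical $2^{o(N)}$-time algorithm for $\HALF$ transports to a $2^{o(2\alpha n)} = 2^{o(n)}$-time algorithm for linear-clause $3\SAT$. Thus the value of $c$ one obtains for $\HALF$ is $\gamma/(2\alpha)$, where $\gamma$ is any constant witnessing the $2^{\gamma n}$ lower bound on linear-clause $3\SAT$ under $\ETH$.
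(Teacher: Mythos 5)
Your proof is correct and follows essentially the same route as the paper: invoke the linear-clause form of $\ETH$ via the Sparsification Lemma, then chain the $3\SAT \to \CLIQUE$ reduction of \cite{Arora2009} with Lemma \ref{lemma:26} to transport a hypothetical $2^{o(n)}$-time $\HALF$ algorithm back to $3\SAT$. The only difference is that you spell out the two-stage reduction and the constant-factor bookkeeping explicitly, whereas the paper leaves these steps implicit; the substance is the same.
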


\begin{proof}

If $\ETH$ is true, then due to the Sparsification Lemma (\ref{lemma:6}) \citep{Impagliazzo2001}, no $2^{o(n)}$-time algorithm exists for $3\SAT$ on $n$ variables and $ln$ clauses for some sufficiently large constant $l > 0$. If $\HALF$ had a $2^{o(n)}$-time algorithm, then we would have a $2^{o(n)}$-time algorithm for $3\SAT$ on $n$ variables and $l^\prime n$ clauses for every fixed $l' > 0$.

\end{proof}

\section{An $O^{*}\left( n^{\omega k / 3} \right)$-Time Algorithm for Counting $k$-Cliques on Multigraphs}
\label{appendix:B}

This section will show an $O^{*}\left(n^{\omega k / 3} \right)$-time algorithm for counting $k$-cliques on multigraphs, provided that the number of edges between any two vertices is bounded by a prime $p = O \left( 2^{\poly(n)} \right)$. This has already been shown by \cite{Goldreich2018}. We show another one, only slightly different from the algorithm of \cite{Goldreich2018}, in that we use arithmetic over the reals for counting. We believe this algorithm provides some helpful intuition for their algorithm as well. This algorithm, unfortunately does not offer any savings over the trivial algorithm for counting half-cliques. However, it does offer savings when $k = o(n)$, and does better than $\displaystyle {n \choose k}$-time up to $k \approx 0.0499n$.

In some steps, we use finite-precision arithmetic over $\mathbb{R}$. We do this because not all elements of a field have square roots, but when necessary, we can always simulate arithmetic over fields of prime size over the reals.

\begin{algorithm}
\caption{$\text{$\#k$-\CLIQUE}\left( A, n, p \right)$}
\label{alg:10}
\Comment{The $k$-Clique-Counting Algorithm \hspace{9.9cm}} \\
\Comment{Input: $n$ is the number of vertices in the undirected multigraph represented by the adjacency matrix $A$} \\
\Comment{Input: $p = O \left( 2^{\poly(n)} \right)$ is a prime} \\
\Comment{$A$ is a symmetric $n \times n$ matrix with entries in $\mathbb{Z}_p$, $0$s in the diagonal, and $3$ divides $k$} \\
\Comment{Output: The number of unique cliques of size $k$ in the multigraph represented by $A$, modulo $p$.} \\
\Comment{Almost all computations are done over $\mathbb{R}$}
\begin{algorithmic}
\For{each $S \in \displaystyle \binom{[n]}{k / 3}$}
    \State $\mathcal{C}_{S} \gets \prod_{(i, j) \in S^2, i < j}A_{i, j}$ over $\mathbb{Z}_p$
    \State $\sqrt{\mathcal{C}_S} \gets$ the square root of $\mathcal{C}_S$ over $\mathbb{R}$ up to max$ \left \{\log^{100}(p), n^{100} \right \}$-bits of precision
\EndFor

\State $A^\prime \gets$ an $\displaystyle \binom{n}{k / 3} \times \binom{n}{k / 3}$ matrix initialized with all $0$s over $\mathbb{R}$ with rows and columns indexed by the sets $S$
\For{each $S$ and $S^\prime$ such that $S \cap S^\prime = \phi$}
    \State $A^\prime_{S, S^\prime} \gets \prod_{i \in S, j \in S^\prime\text{ or }i \in S^\prime, j \in S}A_{i, j}$ over $\mathbb{Z}_p$
    \State $A^\prime_{S, S^\prime} \gets A^\prime_{S, S^\prime}\sqrt{\mathcal{C}_{S}}\sqrt{\mathcal{C}_{S^\prime}}$ over $\mathbb{R}$
\EndFor

\State $M \gets A^{\prime 3}$ over $\mathbb{R}$
\State $m \gets \sum_{S \in \binom{[n]}{k / 3}} M_{S,S} / 3$ rounded to the nearest integer
\State $m^\prime \gets m \mod p$
\State \Return $m^\prime$
\end{algorithmic}
\end{algorithm}

We will now prove that $\text{$\#k$-\CLIQUE}\left( A, n, p \right)$ is correct and runs in the claimed time.

\begin{lemma}
\label{lemma:27}

The $k$-$\CLIQUE$ counting algorithm given in Algorithm \ref{alg:10} is correct and runs in $O^{*} \left( n^{\omega k / 3} \right)$-time where $\omega$ is the matrix multiplication constant, provided that $p$ grows as a function bounded by $2^{\poly(n)}$.

\end{lemma}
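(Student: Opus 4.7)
The plan is to adapt the classical matrix-multiplication reduction that counts $k$-cliques by listing triangles in an auxiliary graph on $(k/3)$-subsets, lifting it to the multigraph (weighted) setting. The $\sqrt{\mathcal{C}_S}$ decorations of Algorithm~\ref{alg:10} serve to redistribute the internal weight $\mathcal{C}_S=\prod_{\{i,j\}\subset S}A_{i,j}$ of each $(k/3)$-block across its two incident ``edges'' in the auxiliary triangle. Concretely, for any three pairwise disjoint $(k/3)$-subsets $S_0,S_1,S_2$, the product $A'_{S_0,S_1}A'_{S_1,S_2}A'_{S_2,S_0}$, computed over the reals, equals $\prod_{\{i,j\}\subset S_0\cup S_1\cup S_2}A_{i,j}$: each $\sqrt{\mathcal{C}_{S_r}}$ is paired with an identical copy in a neighboring factor to reassemble $\mathcal{C}_{S_r}$ (the internal edge product of block $S_r$), and the explicit cross-block products combine to give the edges between the blocks.

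\textbf{Correctness.} Expanding $\mathrm{tr}(A'^3)=\sum_{S}(A'^3)_{S,S}$ then writes the trace as a sum, over ordered cyclic triples of pairwise disjoint $(k/3)$-blocks, of the corresponding $k$-clique weight; and triples whose blocks overlap contribute $0$ since $A'_{S,S'}$ is zero on non-disjoint pairs. Each $k$-subset $T\subset[n]$ contributes its weight $\prod_{\{i,j\}\subset T}A_{i,j}$ exactly $\kappa_k$ times, where $\kappa_k$ counts ordered cyclic tripartitions of a fixed $k$-set into $(k/3)$-blocks, a combinatorial constant I would pin down explicitly (essentially $\binom{k}{k/3,k/3,k/3}$ up to the cyclic redundancy absorbed by the ``$/3$'' step); Algorithm~\ref{alg:10} is calibrated so that this integer normalization yields exactly the desired multiple of the $k$-clique weight sum. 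Because every $A_{i,j}$, every $\mathcal{C}_S$, and every inter-block product is first reduced modulo $p$, and the ring homomorphism $\mathbb{Z}\to\mathbb{Z}/p\mathbb{Z}$ commutes with $+$ and $\times$, the final $m'=m\bmod p$ equals the $\HCL$-style $k$-clique count modulo $p$, establishing correctness.

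\textbf{Precision and runtime; main obstacle.} Because $\sqrt{\mathcal{C}_S}$ is typically irrational, the entries of $A'$ and of $A'^3$ are only finite-precision reals. I would bound every intermediate real by $n^{O(k)}p^{O(1)}=2^{\poly(n)}$ in absolute value and argue that the $\max\{\log^{100}p,n^{100}\}$-bit precision kept by Algorithm~\ref{alg:10} makes the cumulative roundoff error in each entry of $A'^3$, and hence in $\mathrm{tr}(A'^3)$, strictly below $1/2$, so that rounding to the nearest integer recovers the exact value and the subsequent reduction modulo $p$ is unambiguous. The runtime is dominated by computing $A'^3$, which takes two multiplications of $N\times N$ matrices with $N=\binom{n}{k/3}=O(n^{k/3})$, costing $O(N^{\omega})=O(n^{\omega k/3})$ arithmetic operations at $\poly(n,\log p)$-bit precision each, giving the advertised $O^{*}(n^{\omega k/3})$ bound (set-up of $A'$ and the final $O(N)$ trace sum are lower-order). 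The main obstacle I anticipate is the coupling of the precision analysis with the exact combinatorial bookkeeping of $\kappa_k$: one must verify both that the pre-normalization integer $\mathrm{tr}(A'^3)$ is genuinely divisible by $3$ (so the ``$/3$'' step yields an integer) \emph{and} that the real-arithmetic precision is high enough to never let accumulated roundoff cross an integer boundary before rounding, so that the residue mod $p$ is the true one rather than shifted by a near-integer artifact.
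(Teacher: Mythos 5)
Your approach mirrors the paper's proof exactly: expand $\mathrm{tr}\bigl(A'^3\bigr)$, observe that the $\sqrt{\mathcal{C}_S}$ factors pair up (each $\sqrt{\mathcal{C}_{S}}$ appears in two of the three factors of the cyclic product, reassembling $\mathcal{C}_S$), and conclude the trace is a combinatorial multiple of the weighted $k$-clique sum; the finite-precision argument and the $O^{*}\bigl(\binom{n}{k/3}^{\omega}\bigr)$ runtime accounting are also essentially identical to the paper's.

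The gap is precisely the step you flag as your "main obstacle" and then defer. The number of ordered triples $(S_0,S_1,S_2)$ of pairwise-disjoint $(k/3)$-blocks covering a fixed $k$-set $T$ is $\binom{k}{k/3,\,k/3,\,k/3}$, not $3$: for $k=3$ it is $6$, for $k=6$ it is $90$. So $\mathrm{tr}\bigl(A'^3\bigr)$ is congruent modulo $p$ to $\binom{k}{k/3,k/3,k/3}$ times the weighted $k$-clique count, and dividing by $3$ returns $\frac{1}{3}\binom{k}{k/3,k/3,k/3}$ times the desired value — a spurious factor of $2$ for $k=3$ and $30$ for $k=6$. Your assertion that "Algorithm~\ref{alg:10} is calibrated so that this integer normalization yields exactly the desired multiple" therefore does not hold: the ``$/3$'' step absorbs only the cyclic redundancy within a single unordered tripartition, not the $\frac{1}{3}\binom{k}{k/3,k/3,k/3}$-fold multiplicity across all tripartitions of $T$, and there is no other compensating step in the algorithm. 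You correctly anticipated $\kappa_k \approx \binom{k}{k/3,k/3,k/3}$, but the hand-wave about calibration kept you from catching the mismatch. (For what it is worth, the paper's own proof carries the same error: it asserts the trace is "exactly three times the $k$-$\CLIQUE$ count," which is false for every $k$ divisible by $3$.) A correct version must post-multiply by the inverse of $\frac{1}{3}\binom{k}{k/3,k/3,k/3}$ in $\mathbb{Z}_p$ after rounding and reducing; note that one cannot instead divide the integer trace by $\binom{k}{k/3,k/3,k/3}$ before reducing, because distinct ordered tripartitions of the same $T$ produce distinct pre-reduction integer products (equal only modulo $p$), so the integer trace is guaranteed divisible by $3$ but not by $\binom{k}{k/3,k/3,k/3}$.
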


\begin{proof}

First, we justify the choice of computing over the real numbers up to some degree of precision and show that rounding to the nearest integer gives the same answer as when we use exact values and the answer is an integer.

\begin{claim}
\label{claim:5}

Let $A^\prime$ be the matrix computed in $\text{\#k-\CLIQUE}\left( A, n, p \right)$ and $B^\prime$ be the same matrix with infinite precision. Then $\sum_{S \in \binom{[n]}{k / 3}} \left( {A^\prime}^3 \right)_{S,S}$ rounded to the nearest integer is equal to $\sum_{S \in \binom{[n]}{k / 3}} \left( {B^\prime}^3 \right)_{S,S}$ provided that $\sum_{S \in \binom{[n]}{k / 3}} \left( {B^\prime}^3 \right)_{S,S}$ is an integer and
\begin{equation*}
\left| A^\prime_{S, S^\prime} - B^\prime_{S,S^\prime} \right| < \frac{1}{p^{n^3}}, \quad \forall S \in \binom{[n]}{k / 3}, \forall S^\prime \in \binom{[n]}{k / 3}.
\end{equation*}

\end{claim}

\begin{proof}

Note that the error of at most $\epsilon$ in the entries of the matrix is amplified to an error of order $\displaystyle {n \choose {k / 3}}^3 p^2 \epsilon$ in the trace provided that $\epsilon = o \left( 1 / \displaystyle {n \choose k / 3} p \right)$. For our choice of $\epsilon < 1 / p^{n^3}$, clearly
\begin{equation*}
\left| \sum_{S \in \binom{[n]}{k / 3}} {\left( A^\prime \right)}^3_{S,S} - \sum_{S \in \binom{[n]}{k / 3}}{\left( B^\prime \right)}^3_{S, S} \right| = o(1).
\end{equation*}

For all sufficiently large $n$, if $\sum_{S \in \binom{[n]}{k / 3}}{\left( B^\prime \right)}^3_{S, S}$ is an integer, then $\sum_{S \in \binom{[n]}{k / 3}}{\left( A^\prime \right)}^3_{S,S}$ rounded to the nearest integer is equal to it.

\end{proof}

We will show that for $B^{\prime3}$, the trace divided by $3$ is the correct answer, and subsequently, the trace of $A'^3$ rounded is as well. We will follow the algorithm. 

We represent
\begin{equation*}
w(S, S^\prime) = \prod_{i \in S, j \in S\prime \text{ or }i \in S^\prime, j \in S} A_{i, j},
\end{equation*}
if $S\cap S^\prime = \phi$, and $w(S) = \mathcal{C}_S$ for clarity. Using this notation, we have
\begin{equation*}
B^\prime_{S, S^\prime} = \sqrt{w(S)} \sqrt{w(S^\prime)} w(S, S^\prime),
\end{equation*}
if $S \cap S^\prime = \phi$, and $0$ otherwise. 

We have
\begin{equation*}
\left( B^{\prime2} \right)_{S, S^\prime} = \sqrt{w(S) w (S^\prime)} \sum_{S^{\prime\prime} \in \binom{[n]}{k / 3}} w(S^{\prime\prime}) w(S, S^{\prime\prime}) w(S^\prime, S^{\prime\prime}),
\end{equation*}
and subsequently
\begin{equation*}
\left( B^{\prime 3} \right)_{S, S} = w(S) \sum_{\left( S^\prime, S^{\prime\prime} \right) \in \binom{[n]}{k / 3}^2} w(S^\prime) w(S^{\prime\prime}) w(S, S^\prime) w(S^\prime, S^{\prime\prime}) w(S^{\prime\prime}, S).
\end{equation*}
And hence,
\begin{equation*}
\sum_{S \in \binom{[n]}{k / 3}} \left( B^{\prime 3} \right)_{S, S} = \sum_{\left( S, S^\prime, S^{\prime\prime} \right) \in \binom{[n]}{k / 3}^3} w(S) w(S^\prime) w(S^{\prime\prime}) w(S, S^\prime) w(S^\prime, S^{\prime\prime}) w(S^{\prime\prime}, S).
\end{equation*}

This sum is exactly three times the $k$-$\CLIQUE$ count in the multigraph defined by the adjacency matrix $A$, since each ``triangle'' is counted three times, once for each $S$ it is part of\footnote{Note that $w(S, S') = 0$ if $S \cap S' \neq \phi$}. And hence, the algorithm is correct.

Note that arithmetic over $\poly(n)$-bits and arithmetic over $\mathbb{Z}_p$ for $p = O \left( 2^{\poly(n)} \right)$ takes $\poly(n)$-time. Hence, each arithmetic operation amplifies the time by at most a polynomial factor and we can analyze the algorithm purely in terms of the number of arithmetic operations. We must take $\displaystyle {n \choose k/3}$ square roots, $O^{*} \left( \displaystyle {n \choose k / 3}^2 \right)$ time to construct $A^\prime$, $\displaystyle {n \choose k/3}^{\omega}$-time for matrix multiplications and $\displaystyle {n \choose k / 3}$-time for trace computation. Overall, this algorithm requires $O \left( \displaystyle {n \choose k/3}^{\omega} \right)$ arithmetic steps. Hence, Algorithm \ref{alg:10} requires $O^{*} \left( \displaystyle {n \choose k/3}^{\omega} \right) = O^{*}\left( n^{\omega k / 3} \right)$-time.

\end{proof}

When $k$ is not divisible by $3$, we can add $t = 3 \lceil k / 3 \rceil - k$ vertices to our graph $G$. If $t = 1$, we add one vertex $v_{n+1}$ and set the edge values $e_{\{\, i, n + 1 \,\}}$ for each $i \in [n]$ to $0$ in $G_1$, and to $1$ in $G_2$, respectively. By evaluating the $\#3 \lceil k / 3 \rceil$-$\CLIQUE$ on $G_1$ and $G_2$ and subtracting the $G_1$ result from $G_2$, we get the number of $k + 1$-sized cliques involving $v_{n+1}$, with a one-to-one correspondence with the $k$-$\CLIQUE$ count in $G$.

When $t = 2$, we use two vertices $v_{n+1}$ and $v_{n+2}$, and construct four graphs.

\begin{enumerate}

\item $G_1$ with $e_{i, n+1} = 0$, $e_{i, n+2} = 0$, and $e_{n+1, n+2} = 0$, for all $i \in [n]$.

\item $G_2$ with $e_{i, n+1} = 1$, $e_{i, n+2} = 0$, and $e_{n+1, n+2} = 0$, for all $i \in [n]$. 

\item $G_3$ with $e_{i, n+1} = 0$, $e_{i, n+2} = 1$, and $e_{n+1, n+2} = 0$, for all $i \in [n]$. 

\item $G_4$ with $e_{i, n+1} = 1$, $e_{i, n+2} = 1$, and $e_{n+1, n+2} = 1$, for all $i \in [n]$. 

\end{enumerate}

$\#3 \lceil k / 3 \rceil$-$\CLIQUE$ counting algorithm along with the \textit{Inclusion-Exclusion Principle} computes the number of cliques of size $k$ in $G$.\footnote{More, specifically, $f(G_4) - f(G_3) - f(G_2) + f(G_1)$, where $f$ is the output of the $\#3 \lceil k / 3 \rceil$-$\CLIQUE$ algorithm.}

\end{document}